\tikzset{snake it/.style={decorate, decoration=snake}}
\newcommand{\ds}{\displaystyle}
\newcommand{\op}[1]{\operatorname{#1}}
\newcommand{\im}{\mathrm{i}}
\newcommand{\bb}[1]{\mathbb{#1}}
\newcommand{\bbC}{\mathbb{C}}
\newcommand{\bbN}{\mathbb{N}}
\newcommand{\bbZ}{\mathbb{Z}}
\newcommand{\bbR}{\mathbb{R}}
\renewcommand{\cal}[1]{\mathcal{#1}}
\newcommand{\diff}{\mathrm{d}}
\newcommand{\id}{\mathds{1}}
\newcommand{\eul}{\mathrm{e}}
\newcommand{\ul}[1]{\underline{#1}}
\definecolor{blueBand}{RGB}{153,204,255} 
\definecolor{blueBandEdge}{RGB}{47,129,188} 
\definecolor{EdgeState}{RGB}{255,128,16} 
\definecolor{FermiLine}{RGB}{224,93,93} 
\definecolor{nicerGray}{RGB}{164,164,164} 
\definecolor{niceGreen}{RGB}{64,198,77}
\newcommand\brabar{\scalebox{.3}{(}\raisebox{-1.7pt}{--}\scalebox{.3}{)}} 
\tikzset{
	pics/.cd,
	vector in/.style args={#1/#2/#3}{
		code={
			\draw[#1] (0,0)  circle (#3);
			\draw[#2] (45:#3) -- (225:#3) (135:#3) -- (315:#3);
		}
	}
}
\tikzset{
	pics/vector out/.style args={#1/#2/#3}{
		code={
			\draw[#1] (0,0)  circle (#3);
			\fill[#2] (0,0)  circle (#3/4);
		}
	}
}
\theoremstyle{definition}
\newtheorem{definition}{Definition}[section]
\newtheorem{theorem}[definition]{Theorem}
\newtheorem{corollary}[definition]{Corollary}
\newtheorem{lemma}[definition]{Lemma}
\newtheorem{proposition}[definition]{Proposition}
\newtheorem{claim}[definition]{Claim}
\theoremstyle{remark}
\newtheorem{remark}[definition]{Remark}
\numberwithin{equation}{section}
\newcommand{\myitem}[1]{%
	\item[#1]\protected@edef\@currentlabel{#1}%
}
\newcommand{\thickhline}{%
	\noalign {\ifnum 0=`}\fi \hrule height 1pt
	\futurelet \reserved@a \@xhline
}
\newcolumntype{"}{@{\hskip\tabcolsep\vrule width 1pt\hskip\tabcolsep}}
\newcommand\xrowht[2][0]{\addstackgap[.5\dimexpr#2\relax]{\vphantom{#1}}}
\definecolor{xdxdff}{rgb}{0.49019607843137253,0.49019607843137253,1}
\definecolor{ccqqqq}{rgb}{0.8,0,0}
\definecolor{ttttff}{rgb}{0.2,0.2,1}
\definecolor{uuuuuu}{rgb}{0.26666666666666666,0.26666666666666666,0.26666666666666666}
\definecolor{qqzzff}{rgb}{0,0.6,1}
\title{Boundary conditions and violations of bulk-edge correspondence in a hydrodynamic model}
\date{}
\author[1]{Gian Michele Graf \thanks{gmgraf@phys.ethz.ch}}
\author[1]{Alessandro Tarantola \thanks{ataranto@phys.ethz.ch}}
\affil[1]{Institute for Theoretical Physics, ETH Z\"urich}
\begin{document}
	
\maketitle

\abstract{Bulk-edge correspondence is a wide-ranging principle that applies to topological matter, as well as a precise result established in a large and growing number of cases. According to the principle, the distinctive topological properties of matter, thought of as extending indefinitely in space, are equivalently reflected in the excitations running along its boundary, when one is present. Indices encode those properties, and their values, when differing, are witness to a violation of that correspondence. We address such violations, as they are encountered in a hydrodynamic context. The model concerns a shallow layer of fluid in a rotating frame and provides a local description of waves propagating either across the oceans or along a coastline; it becomes topological when suitably modified at short distances. The edge index is sensitive to boundary conditions, as exemplified in earlier work, hence exhibiting a violation. Here we present classification of all (local, self-adjoint) boundary conditions and a parameterization of their manifold. They come in four families, distinguished in part by the degree of their underlying differential operators. Essentially, that degree counts the degrees of freedom of the hydrodynamic field that are constrained at the boundary by way of their normal derivatives. Generally, both the correspondence and its violation are typical. Within families though, the maximally possible amount of violation is increasing with its degree. Several indices of interest are charted for all boundary conditions. A single spectral mechanism for the onset of violations is furthermore identified. The role of a symmetry is investigated.}

\section{Introduction}\label{sec:Intro}

Topological considerations in physics have risen to a prominent role in the quantum domain, notably with the Quantum Hall Effect \cite{QHESeminal}. Later developments took place in the classical domain, and among them in hydrodynamics \cite{TDV20, DMV17, Tauber19}. The present work lies in that same context.

A sweeping concept in the realm of topological matter is bulk-edge correspondence (BEC); in fact to the extent that it is often taken for granted. In essence, it states that such systems admit two topological indices, independently defined, but coinciding in value. The first one, known as the bulk index, is associated in physical terms to the infinitely extended and gapped system, and in more mathematical ones to some compact manifold, such as the Brillouin zone. The second index, i.e.\,the edge index, is by contrast associated to a system now having a spatial boundary. The index arises by way of a count of edge states, chiral or otherwise, found at energies previously lying in the gap. See the selected references \cite{ProdanSB, Avila12, BK17}, and \cite{Shapiro20} for a pedagogical review.

The correspondence was first established in the context of the Integer Quantum Hall Effect \cite{Hatsugai93PRB, Hatsugai93PRL, Halperin82, Elbau02, SchulzBaldes99}, and soon extended to a variety of other models, possibly having special symmetry properties \cite{GP13, GS18, Elgart05, BSS23}, yet dealing with independent particles. Different techniques were used and, among them, scattering theory related to Levinson's theorem \cite{RSIII}. That approach \cite{GJT21} will play some role presently.

Models enjoying a continuous translational symmetry lack a compact momentum space, thus putting the bulk index in need of attention. Some extensions in this direction appeared in the literature and were obtained with a wide range of mathematical tools, ranging from bifurcation theory for PDEs \cite{Fefferman16} to homotopy of Fredholm operators \cite{Bal19} and more \cite{MS12, Drouot19}. Notice that BEC was shown to hold in various models with \enquote{soft} boundaries, namely thick interface regions between different insulators \cite{BalInterface, RT24}.

In contrast to the aforementioned results, that aimed at extending the validity of BEC, the present work attempts to pinpoint its limits and violations. We shall do so in the context of a hydrodynamic model, known as the rotating shallow water model (SWM), thus extending the analysis of special cases \cite{TDV19,TDV20,GJT21} to the generality that is proper to the model.

The rotating shallow water model describes an incompressible fluid lying on a rotating, flat bottom. It is derived from Euler's equations by linearizing and assuming small fluid height relative to the typical wavelength. It is used to describe certain oceanic and atmospheric layers on Earth \cite{GeophysicalFluids}, where the Coriolis force plays the same role as the magnetic field in the quantum Hall effect \cite{Froehlich95}. Remarkably, it explains the stability of the eastward-propagating Kelvin equatorial wave \cite{DMV2017} by recognizing its topological nature.

The model is defined by a system of partial differential equations (PDEs) that are formally equivalent to a Schr\"odinger equation, whence a similar treatment as used for quantum mechanical systems becomes feasible. The corresponding Hamiltonian is formally that of a topological insulator of symmetry class D according to the periodic table of topological matter \cite{TenfoldWay, KitaevTable}. In order to restore the desired compactness, as hinted at before, we shall consider a variant of the model, that is characterized by odd viscosity \cite{AvronOddVis}, and which provides a small-scale regularization. Indeed, the latter in turn allows for a compactification of the (otherwise unbounded) momentum space and for a proper topological definition of the bulk index.

If BEC were to hold in the form introduced above, this bulk invariant should predict the number of chiral modes appearing when a boundary, i.e.\,a coastline in the hydrodynamic setting, is introduced. This is however not always the case. As shown in \cite{GJT21}, the number of edge modes depends on the boundary conditions imposed on the coastline, thus invalidating BEC, whose claim does not depend on details of the boundary.

Ref.\,\cite{GJT21} calls for systematic and extensive investigation on violations of bulk-edge correspondence. This work considers and answers the following questions:
\begin{enumerate}[itemindent=1em]
	\myitem{Q1} What are the most general local self-adjoint boundary conditions for this model? And what can be said about BEC for them? \label{Q1}
	
	\myitem{Q2} What is the most natural edge index, i.e.\,the best way to count edge states merging with the (bulk) continuum? \label{Q2}
	
	\myitem{Q3} Are there other integers of interest? \label{Q3}
	
	\myitem{Q4} What relations, like BEC or otherwise, exist between them? Do they hold unconditionally? \label{Q4}
\end{enumerate}
Before detailing the structure of this paper, we briefly review some recent literature. Like our work, Refs.\,\cite{TT23, Perez24, BalYu24} focus on bulk-edge correspondence in the shallow water model. Ref.\,\cite{TT23} defines an edge index via spectral flow and \enquote{averaging} over a loop of boundary conditions. This index coincides with its bulk counterpart. Ref.\,\cite{Perez24} studies topological properties of the SWM on a spherical geometry. Ref.\,\cite{BalYu24} forgoes odd-viscous regularization, and considers a Coriolis force that varies in space, producing an interface where it crosses zero. Violations of BEC are displayed for certain discontinuous profiles of the Coriolis force. Other recent works concern the 2D Dirac Hamiltonian, which is closely related to the SWM when viewing both as spin models. An ultraviolet-regularized Dirac is shown to host violations of BEC in \cite{JT24}. In fact, boundary conditions are comprehensively classified, based on whether they do, or do not, lead to bulk-edge correspondence. Finally, \cite{Cornean23} notices that BEC is restored for magnetic Dirac operators, at least with infinite-mass boundary conditions . 

Broadly speaking, this paper consists of three parts. The first part, comprising Sects.\,\ref{sec:Setup} and \ref{sec:Results}, introduces the setup and main results, while glossing over a number of technical details. The second part, comprising Sects.\,\ref{sec:PreliminaryDetails}, \ref{sec:DetailedResults}, expands on the details omitted in the first one, and culminates in the statement of broader, more precise results. The third part, comprising Sects.\,\ref{sec:Derivation} and \ref{sec:Proofs}, contains proofs, preceded by a preliminary exposition of the techniques used.

More precisely, Sect.\,\ref{subsec:Bulk} introduces the infinitely extended (bulk) shallow water model, its Hamiltonian and properties of the latter. Among them: Spectrum, particle-hole symmetry (PHS), associated Bloch bundle. A bulk index $ C_+ $ is moreover defined through a compactification of momentum space. In Sect.\,\ref{subsec:BCs}, the edge model is introduced by restricting physical space to the upper half-plane. Local boundary conditions, and associated von Neumann unitaries, are defined and grouped into four qualitatively different families DD, ND, DN, NN. A special role is played by those that preserve PHS. In Sect.\,\ref{subsec:Integers}, we define our edge index as the number $M$ of proper ($P$) and improper ($I$) edge eigenvalues merging with the bulk band of interest. Further relevant integers are $E$, the number of asymptotically flat edge eigenvalues, and the winding $B$ of the boundary condition. The indices $ (P,I,E,B) $ collectively form an index vector $ \cal{V} $.

In Sect.\,\ref{sec:Results}, results are reported in terms of the index vector $ \cal{V} $. Its values are charted in the three families DD, ND, DN, and in the particle-hole symmetric subfamily of the fourth one, NN. Violations of bulk-edge correspondence are proven \enquote{typical}, and a single violation-inducing mechanism is identified.

Sect.\,\ref{subsec:BulkDetails} provides further details on the bulk Hamiltonian, focusing on its eigenstates and the compactification of momentum space that enables the definition of the bulk index. Sect.\,\ref{subsec:ScattTheo} reviews the scattering theory of shallow-water waves. The scattering amplitude $S$ is introduced, and its poles identify edge eigenvalues. In Sect.\,\ref{subsec:TopScatt}, we connect $S$ to the bulk and edge indices at once, making it the central tool for detecting bulk-edge correspondence or lack thereof. Sect.\,\ref{subsec:DetailsBCs} contains further details on self-adjoint boundary conditions, and a parametrization of the aforementioned families and associated unitaries.

Sect.\,\ref{sec:DetailedResults} deepens the results of Sect.\,\ref{sec:Results} by directly relating parameters of the boundary conditions to values of $\cal{V}$, and extends those same results by treating the full family NN (rather than its particle-hole symmetric subfamily).

Sects.\,\ref{subsec:ComputingP}, \ref{subsec:ParabolicStates}, \ref{subsec:FlatStates}, \ref{subsec:BoundaryWinding} detail how the evaluation of the integer $P$, $I$, $E$, $B$ is performed, respectively, given certain boundary conditions. Finally, Sect.\,\ref{sec:Proofs} presents the full proofs of the results of Secs.\,\ref{sec:Results}, \ref{sec:DetailedResults}.

The appendices either contain additional results that are not essential to the main discussion, or proofs which were omitted in the interest of readability.

\section{Setup} \label{sec:Setup}

In this section, we will provide an overview of the model and introduce the notions needed in order to state the results in Sect.\,\ref{sec:Results}. We shall discuss the bulk picture, including its geometric objects, to be followed by the edge picture, including a first discussion of the possible boundary conditions, that will turn out to be of four main types.

\subsection{Bulk model} \label{subsec:Bulk}

The rotating and odd-viscous shallow water model is a 2D hydrodynamical model describing a thin layer of incompressible fluid that lies on a flat, rotating bottom extending over the whole plane. Its physical relevance, its derivation by linearization, as well as its applications will not be treated in this work. The interested reader is referred to \cite{DMV2017, TDV19, TDV20, GJT21}.

The model is governed by the following system of linear PDEs
\begin{equation} \label{eq:SWM}
	\begin{cases}
		\partial_t \eta = - \partial_x u - \partial_y v \\
		\partial_t u = - \partial_x \eta - (f + \nu \Delta) v \\
		\partial_t v = - \partial_y \eta + (f + \nu \Delta) u \,,
	\end{cases}
\end{equation}
where $ (u,v): \bbR^2 \to \bbR^2 $ is the velocity field on the $ (x,y) $-plane, $ \eta: \bbR^2 \to \bbR $ the height of the fluid w.r.t.\,to the undisturbed fluid level; while the parameters $ f,\nu > 0 $ determine the Coriolis force $ f (-v, u) $ and the (odd) viscosity force $ \nu \Delta (-v,u) $, respectively. We denote the Laplacian by $ \Delta $ and assume $ 4 \nu f < 1 $ for simplicity, as will become clear later. 

The system of equations \eqref{eq:SWM} is formally equivalent to a Schr\"odinger equation $ \im \partial_t \psi = H \psi $, with wave function $ \psi = \psi (\ul{x}) $, ($\ul{x} \coloneqq (x,y)$) and Hamiltonian given as
\begin{equation} \label{eq:HamGen}
	\psi \coloneqq 
	\begin{pmatrix}
		\eta \\
		u \\
		v
	\end{pmatrix},
	\qquad H = 
	\begin{pmatrix}
		0 & - \im \partial_x & -\im \partial_y \\
		-\im \partial_x & 0 & -\im (f + \nu (\partial_x^2 + \partial_y^2) ) \\
		-\im \partial_y & \im (f + \nu (\partial_x^2 + \partial_y^2) ) & 0
	\end{pmatrix} \,.
\end{equation}
Let us at first regard $H$ as a formal differential operator. Being translation invariant, its normal modes are of the form
\begin{equation} \label{eq:NormalModes}
	\psi = \hat{\psi} (\ul{k}, \omega) \eul^{\im (\ul{k} \cdot \ul{x} - \omega t)} \,, \qquad \underline{k} \coloneqq (k_x, k_y) \in \mathbb{R}^2 \,, 
\end{equation}
with $ \hat{\psi} $ \textit{bulk} solutions of the time-independent eigenvalue problem (see Sect.\,\ref{subsec:BulkDetails})
\begin{equation} \label{eq:BulkHam}
	H \hat{\psi} = \omega \hat{\psi} \,, \qquad \hat{\psi} = 
	\begin{pmatrix}
		\hat{\eta} \\
		\hat{u} \\
		\hat{v}
	\end{pmatrix} \,, \qquad H(\ul{k}) = 
	\begin{pmatrix}
		0 & k_x & k_y \\
		k_x & 0 & -\im (f - \nu k^2 ) \\
		k_y & \im (f - \nu k^2 ) & 0
	\end{pmatrix} \,,
\end{equation}
where $ k^2 \coloneqq | \underline{k} |^2 = k_x^2 + k_y^2 $. The differential operator $H$ is realized as a self-adjoint operator on the Hilbert space $ \mathcal{H} = L^2 (\mathbb{R}^2)^{\otimes 3} $, with domain given in terms of Sobolev spaces as $ \cal{D} (H) = H^1(\mathbb{R}^2) \oplus H^2(\mathbb{R}^2) \oplus H^2 (\mathbb{R}^2) $.

The spectrum of $H$ is purely essential \cite{RSI} and consists of three bands, labeled $ \sigma = \pm, 0 $. They arise from the discrete spectrum of $ H (\ul{k}) \,, (\ul{k} \in \bbR^2) $, consisting of eigenvalues $ \omega_- (\ul{k}) < \omega_0 (\ul{k}) < \omega_+ (\ul{k}) $ (cf.\,Fig.\,\ref{fig:bulk}):
\begin{equation} \label{eq:Omega+}
	\omega_{\pm} (\ul{k}) = \pm \sqrt{k^2 + \left( f - \nu k^2 \right)^2} \,, \hspace{10pt} \omega_0 (\ul{k}) = 0 \,.
\end{equation}
The bands are separated by gaps of size $ f $ at $ \underline{k} = (0,0) $. Let moreover 
\begin{align}
	\omega^+ (k_x) \coloneqq \inf_{k_y} \omega_+ (k_x,k_y) = \omega_+ (k_x,0) = \sqrt{k_x^2 + \left( f - \nu k_x^2 \right)^2} \label{eq:BandRim} 
\end{align}
denote the lower rim of the upper band.

Momentum space $ \bbR^2 \ni \ul{k} $ is unbounded, reflecting that translations give rise to a continuous symmetry group. As we shall see in Sect.\,\ref{subsec:BulkDetails}, the regularization provided by $ \nu $ ensures the eigenspaces of $ \omega_\sigma (\ul{k}) $ converge as $ \ul{k} \to \infty $, thus allowing for a description based on the $1$-point compactification $ S^2 $ of $ \bbR^2 $.

We shall call \textit{Bloch bundle} the trivial vector bundle $ E = S^2 \times \bbC^3 $. It decomposes as
\begin{equation} \label{eq:BlochBundle}
	E = \bigoplus_{\sigma = \pm, 0} E_\sigma \,,
\end{equation}
where 
\begin{equation} \label{eq:BandBundle}
	E_\sigma = \{ (\underline{k},\hat{\psi}) \vert \underline{k} \in S^2, \hat{\psi} \in \operatorname{ran} (P_\sigma (\underline{k})) \} \,,
\end{equation}
is the eigenbundle determined by the eigenprojections $ P_\sigma (\ul{k}) $ of the eigenvalues $ \omega_\sigma (\ul{k}) $, i.e.\,, $ E_{\sigma, \ul{k}} = \op{ran} (P_\sigma (\ul{k})) $, cf.\,\cite{GJT21} for details. The Chern numbers $ \op{ch} ( P_{\pm} ) \equiv C_{\pm} $, $ \op{ch} ( P_0 ) \equiv C_0 $ associated to the bands are proper topological invariants, in view of the compactness of $ S^2 $. They amount to
\begin{equation}
	C_{\pm} = \pm 2 \,, \qquad C_0 = 0 \,, \label{eq:BulkIndices}
\end{equation}
as shown in \cite{GJT21}, Prop. 2.1.

The Hamiltonian $H$ enjoys an antiunitary symmetry, formally akin to (even) particle-hole (pseudo-) symmetry (PHS) \cite{ZirnbauerPHS},
\begin{gather}
	\Xi H(\ul{k}) \Xi^{-1} = - H (- \ul{k}) \,, \nonumber \\
	\Xi = \id_3 \cdot K \,, \label{eq:PHS}
\end{gather} 
with $K$ denoting complex conjugation. Clearly, $ \Xi^2 = \id $. The Hamiltonian enjoys no further symmetry among those contemplated in the AZK table \cite{TenfoldWay, KitaevTable}. It thus sits in class $D$.

In the following, we will only focus on the upper band, the middle one being trivial, and $ E_- = \Xi E_+ $. We will refer to 
\begin{equation}
	C_+ = \op{ch} (P_+) \label{eq:BulkIndex}
\end{equation}
as the \textit{bulk index}. Its value, $ +2 $, will be compared to that of other indices, that arise by restriction to a half-plane.

\subsection{Edge picture and boundary conditions} \label{subsec:BCs}

Let us now introduce the edge picture. Details that are not needed for the statement of the results will be returned upon in Sects.\,\ref{subsec:ScattTheo}, \ref{subsec:DetailsBCs}. \vspace{1\baselineskip}

A boundary is introduced by restricting the physical space to the upper half-plane
\begin{equation}
	\bbR^2_+ \coloneqq \{ (x,y) \in \mathbb{R}^2 \vert y > 0 \} = \mathbb{R} \times \mathbb{R}_+ \,. \label{eq:HalfPlane}
\end{equation}
Since translation invariance is retained in the $x$-direction only, normal modes of the formal differential operator \eqref{eq:HamGen} are of the form
\begin{equation}
	\psi = \tilde{\psi}(y;k_x, \omega) \eul^{\im (k_x x - \omega t)} \,, \label{eq:PsiTilde}
\end{equation}
where $ \tilde{\psi} $ is in turn a normal mode of the formal differential operator
\begin{equation} \label{eq:EdgeHamPrecursor}
	\setlength{\tabcolsep}{12pt}
	H (k_x) =
	\begin{pmatrix}
		0 & k_x & - \im \partial_y \\
		k_x & 0 & - \im ( f + \nu (\partial_y^2 - k_x^2) ) \\
		- \im \partial_y & \im ( f + \nu (\partial_y^2 -k_x^2) ) & 0 
	\end{pmatrix} \,, \qquad (k_x \in \bbR) \,. 
\end{equation}
These modes are more general than the modes \eqref{eq:NormalModes} in that they can be evanescent, i.e.\,decaying transversally.

The formal differential operator \eqref{eq:EdgeHamPrecursor} is likewise promoted, as in the previous section, to a self-adjoint operator $ H^\# $ on the Hilbert space $ \cal{H}^{\#} = L^2 (\bbR^2_+)^{\otimes 3} $, by way of specifying suitable boundary conditions that define its domain $ \cal{D} (H^\#) $.

Not all boundary conditions correspond to a self-adjoint realization of $ H^{\#} $. Von Neumann's theory of self-adjoint extensions builds on the deficiency spaces at $ z = \pm \im $ of the (closed) operator realizing the differential operator $H$ with minimal domain. It then expresses the extensions in terms of unitary maps between the two spaces; cf.\,(\cite{RSII}, Thm.\,X.2).

We shall determine all self-adjoint boundary conditions that are
\begin{itemize}
	\item[(i)] local,
	
	\item[(ii)] invariant under translation along the boundary, and
	
	\item[(iii)] involve derivatives $ \partial_x $ of the wave function of order 1 at most.
\end{itemize}
Let us first focus on conditions (i, ii). As a result, the deficiency spaces are those of a minimal realization of $ H(k_x) $, cf.\,\eqref{eq:EdgeHamPrecursor}. The defect indices of the latter are finite, and in fact $ n_\pm = 2 $. The extensions of the formal differential operator \eqref{eq:EdgeHamPrecursor} are thus in some correspondence with unitary matrices of order 2, $ U(k_x) \in U(2) $ (cf.\,Sect.\,\ref{subsec:DetailsBCs} and App.\,\ref{app:SelfAdj}). The same extensions can also be expressed in terms of two boundary conditions involving $ \tilde{\psi} (y) $ and $ \tilde{\psi}' (y) $ $ (' = \partial_y) $ at $ y = 0 $
\begin{equation} \label{eq:BoundaryCondition}
	A(k_x) \Psi = 0 \,, \qquad \Psi \coloneqq 
	\begin{pmatrix}
		\tilde{\psi} (0) \\
		\tilde{\psi}' (0)
	\end{pmatrix} \in \bbC^6 \,,
\end{equation}
whence $ A(k_x) $ is a $ 2 \times 6 $ matrix that, far from being arbitrary, satisfies conditions (to be discussed in Sect.\,\ref{subsec:DetailsBCs}) arising from the requirement of self-adjointness.

Any such $ A(k_x) $ determines $ U(k_x) $, but not conversely. Rather, $ U(k_x) $ determines the orbit $ [A(k_x)] $ of $ A(k_x) $ under multiplication from the left by $ \op{GL} (2,\bbC) $,
\begin{equation}
	\tilde{A} (k_x) \sim A(k_x) \ \Leftrightarrow \ \tilde{A} (k_x) = G(k_x) A(k_x) \,, \qquad G(k_x) \in \op{GL} (2,\bbC) \,. \label{eq:Cosets}
\end{equation}
This reflects the fact that the two boundary conditions are unaltered when replaced by linearly independent linear combinations.

By including requirement (iii) above, the boundary conditions are of the general form
\begin{equation}
	A_0 \psi + A_x \partial_x \psi + A_y \partial_y \psi = 0 \,, \qquad (A_j \in \op{Mat}_{2 \times 3} (\bbC) \,, \ j = 0,x,y ) \,, \label{eq:LocalBC}
\end{equation}
with
\begin{equation}
	A(k_x) = (A_0 + \im k_x A_x, A_y) \,. \label{eq:A0AxAy}
\end{equation}
Eq.\,\eqref{eq:LocalBC} then implies that the dependence $ k_x \mapsto A(k_x) $ in \eqref{eq:BoundaryCondition} is affine. Alternatively, $ k_x \mapsto U(k_x) $ will turn out to be a fractional linear transformation with coefficients in $ \op{Mat}_{2} (\bbC) $. Moreover, a linear recombination of the rows of Eq.\,\eqref{eq:LocalBC} amounts to $ G(k_x) $, cf.\,\eqref{eq:Cosets}, being a constant matrix.
\begin{remark} \label{rem:Orbits}
	Both the context of Eq.\,\eqref{eq:Cosets}, which is pointwise in $ k_x $, and that of \eqref{eq:LocalBC}, which is global, induce a notion of orbit, $ [A] $. They are the orbits under local vs global gauge transformations, i.e.\,$ G(k_x) $ vs $G$, acting by left multiplication. The former orbits are of course larger, and decompose into the latter. We take the point of view that only $A$'s arising from \eqref{eq:LocalBC} are to be considered, but any two of them shall be considered equivalent in the sense of \eqref{eq:Cosets}. This reflects the fact that this notion characterizes a same half-plane Hamiltonian $ H^\# $.
\end{remark}
The complete, intrinsic characterization of the objects $ A,U,G $ so obtained will be given in Sec.\,\ref{subsec:DetailsBCs}. The following operative definition will be useful. The cross-references to App.\,\ref{app:SelfAdj} fill the details that were left open up to now, but can be skipped still.
\begin{definition}[\textit{Boundary condition and von Neumann unitary}]
	\label{def:OperativeBC}
	An affine function
	\begin{equation}
		A : \bbR \longrightarrow \op{Mat}_{2 \times 6} (\bbC) \,, \qquad k_x \longmapsto A(k_x)
	\end{equation}
	satisfying (\ref{eq:RankCondition}, \ref{eq:CondSelfAdjKx}) is called a \emph{boundary condition}. Likewise, the function
	\begin{equation}
		U : \bbR \longrightarrow U(2) \,, \qquad k_x \longmapsto U(k_x)
	\end{equation}
	obtained from $ A $ by (\ref{eq:AulA}, \ref{eq:JuxtaposeA1A2}, \ref{eq:UFormula}) is called the \emph{von Neumann unitary} of the boundary condition.
\end{definition}
\begin{remark}
	Conditions (\ref{eq:RankCondition}, \ref{eq:CondSelfAdjKx}) are equivalent to self-adjointness of $ H^\# $ in the context (i-iii).
\end{remark}
We partition the set of self-adjoint boundary condition into the following four families.
\begin{definition}[\textit{Families of boundary conditions}] \label{def:Families}
	A self-adjoint boundary condition $ k_x \mapsto A(k_x) $, cf.\,Def.\,\ref{def:OperativeBC}, with $ A (k_x) $ as in \eqref{eq:A0AxAy}, belongs to family
	\begin{itemize}[align=parleft, itemindent=2cm, labelsep=1cm]
		\item[DD:] iff $ \op{rk} A_y = 0 $;
		
		\item[ND/DN:] iff $ \op{rk} A_y = 1 $;
		
		\item[NN:] iff $ \op{rk} A_y = 2 $.
	\end{itemize}
\end{definition}
Clearly, the conditions are compatible with \eqref{eq:Cosets}. The middle case will be subdivided into two (overlapping) families in Lm.\,\ref{lem:Dictionary}. As will also be explained in Sect.\,\ref{subsec:DetailsBCs}, the two families are related by a substitution, by way of which results can be carried over between them. As they remain structurally unaffected, only the case ND will be pursued.

The labeling by symbols D and N comes about as follows. We distinguish between \textit{generalized Neumann} (N, a.k.a.\,Robin or mixed) and \textit{Dirichlet} (D) boundary conditions (BC), in that they do, respectively do not, involve $ (\partial_y f) (0) $ for some linear combination $f$ of the field components $ \eta, u, v $.

The two occurring boundary conditions will involve two linear combinations $ f_1, f_2 $ of the components $ u,v $ only. As explained, the boundary conditions can be replaced by linearly independent linear combinations thereof. While making use of this freedom, in order to get rid of boundary conditions of type N to the extent possible, the four families of Def.\,\ref{def:Families} correspond to
\begin{itemize}[align=parleft, itemindent=2cm, labelsep=1cm]
	\item[DD:] No derivatives $ \partial_y u $, $ \partial_y v $ appear in the BC;
	
	\item[ND/DN:] Only one linear combination 
	\begin{equation}
		a \partial_y u + b \partial_y v \,, \quad \big( (a,b) \neq 0 \big)
	\end{equation} 
	\hspace{2cm}appears in the BC;
	
	\item[NN:] Both BCs must contain linear combinations of $ \partial_y u $, $ \partial_y v $.
\end{itemize}

Let us provide examples of boundary conditions \eqref{eq:BoundaryCondition}, and in fact of types DD and DN respectively.
\begin{itemize}
	\item Dirichlet boundary conditions
	\begin{equation} \label{eq:Dirichlet}
		\begin{cases}
			u(y) \rvert_{y=0} = 0 \\
			v(y) \rvert_{y=0} = 0
		\end{cases}
		\,,  \qquad A_D (k_x) = 
		\begin{pmatrix}
			0 & 1 & 0 & 0 & 0 & 0 \\
			0 & 0 & 1 & 0 & 0 & 0 
		\end{pmatrix} \,.
	\end{equation}
	
	\item A family of boundary conditions parametrized by $ a \in \bbR $:
	\begin{equation} \label{eq:GrafBC}
		\begin{cases}
			v(y) \rvert_{y=0} = 0 & \\
			\partial_x u(y) + a \partial_y v(y) \rvert_{y=0} = 0
		\end{cases}
		\,, \qquad A (k_x) = 
		\begin{pmatrix}
			0 & 0 & 1 & 0 & 0 & 0 \\
			0 & \im k_x & 0 & 0 & 0 & a 
		\end{pmatrix} \,.
	\end{equation}
	Clearly, $ \partial_x $ acts as $ \im k_x $ on states \eqref{eq:PsiTilde}, whence the expression for $ A (k_x) $. See Figure \ref{fig:bulk}, right panel, for the spectrum of the corresponding Hamiltonian when $ a = 4 $.
\end{itemize}
\begin{figure}[hbt]
	\centering
	\includegraphics[width=0.7\linewidth]{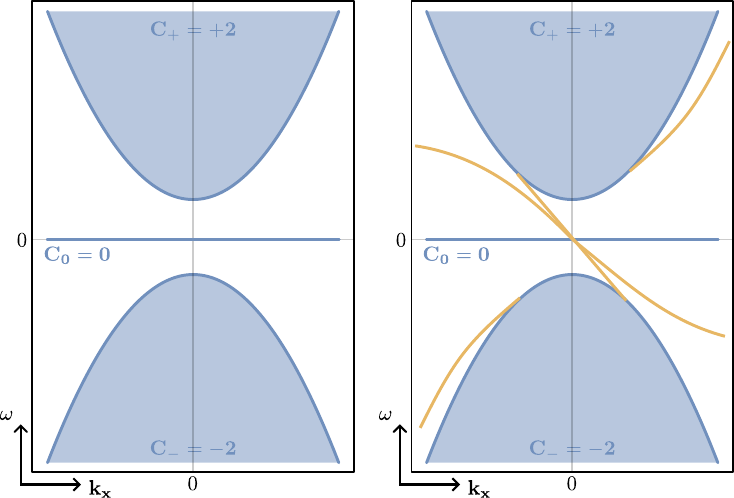}
	\caption{Left panel: Bulk spectrum, projected along the $ k_y $-direction, and Chern numbers of each band, cf.\,\eqref{eq:BulkIndices}. Right panel: Spectrum of $ H^\# $ with boundary conditions as in Eq.\,\eqref{eq:GrafBC} and $a=4$.}
	\label{fig:bulk}
\end{figure}
The above conditions (i-iii) will apply throughout the article. 
\begin{remark} \label{rem:LossSA}
	It should be moreover remarked that a given self-adjoint realization $ H^\# $ on the half-plane induces operators $ H^\# (k_x), \, (k_x \in \bbR) $ on the half-line, that are self-adjoint at all $ k_x $ except possibly at isolated points.
	
	Boundary conditions where such a failure occurs are themselves exceptional, and excluded from the further discussion, as a rule. 
\end{remark}
\begin{remark} \label{rem:TranslationsKx}
	Writing $ H^\# $ fiberwise, as in the previous remark, makes another fact apparent: The global half-plane Hamiltonian $ H^\# $ is left unchanged by a reparametrization $ k_x \mapsto \tilde{k}_x \equiv k_x + \tau $, $ (\tau \in \bbR) $. Indeed, it amounts to a relabeling of the fibers $ H^\# (k_x) $, cf.\,Rem.\,\ref{rem:LossSA}. Under this shift, a local boundary condition $ A(k_x) $, cf.\,\eqref{eq:A0AxAy}, transforms as 
	\begin{equation}
		A (k_x) \mapsto \tilde{A} (k_x) := (\tilde{A}_0 + \im k_x \tilde{A}_x \,, \ \tilde{A}_y) \,, \label{eq:KxTranslationBC}
	\end{equation}
	where
	\begin{equation}
		\tilde{A}_0 = A_0 + \im \tau A_x \,, \qquad \tilde{A}_x = A_x \,, \qquad \tilde{A}_y = A_y \,.
	\end{equation}
	From the point of view of global properties of $ H^\# $, two boundary conditions $ A, \tilde{A} $ related by \eqref{eq:KxTranslationBC} shall thus effectively be identified. More properly, in view of Rem.\,\ref{rem:Orbits}, two distinct orbits $ [A], [\tilde{A}] $ may still produce the same global operator $ H^\# $. 
\end{remark}
There is a further condition worth considering. In view of the fact that the bulk model enjoys particle-hole symmetry, it pays to single out those boundary conditions that preserve it. We thus define:
\begin{definition} \label{def:PHS}
	\begin{itemize}
		\item[(i)] (\textit{Particle-hole conjugation.}) Let $ H^\# $ be a self-adjoint realization, as described above and having domain $ \cal{D} (H^\#) $.
		
		We define its particle-hole conjugate by
		\begin{equation}
			H^\#_\Xi \coloneqq \Xi H^\# \Xi^{-1} \,,
		\end{equation} 
		where $ \Xi $ is as in \eqref{eq:PHS}. In particular, $ \psi \in \cal{D} (H^\#_\Xi) $ iff $ \Xi \psi \in \cal{D} (H^\#) $.
		
		\item[(ii)] (\textit{Particle-hole symmetry.}) $ H^\# $ is particle-hole symmetric if $ H^\# = H^\#_\Xi $. In particular
		\begin{equation}
			\psi \in \cal{D} (H^\#) \ \Longleftrightarrow \ \Xi \psi \in \cal{D} (H^\#) \,.
		\end{equation}
	\end{itemize}
\end{definition}
Let $ k_x \mapsto A(k_x) $ be the boundary condition, cf.\eqref{eq:LocalBC}, defining $ H^\# $. Then, $ \overline{A (-k_x)} $ defines $ H^\#_\Xi $, because $ A(k_x) \Psi (k_x) = 0 $ implies $ \overline{A (-k_x)} \ \overline{\Psi (k_x)} = 0 $, cf.\,(\ref{eq:BoundaryCondition}, \ref{eq:A0AxAy}).

Hence $ H^\# $ is particle-hole symmetric iff
\begin{equation}
	U(k_x) = \overline{U (-k_x)} \,. \label{eq:USymmetry}
\end{equation}
Indeed, symmetry states that the two maps, $ A(k_x) $ and $ \overline{A (-k_x)} $, determine the same boundary condition for $ H^\# (k_x) $ at any $ k_x $, namely that their orbits are the same point-wise in $ k_x $, i.e.\,, \eqref{eq:USymmetry}.

We will specialize to the symmetric case on and off, for each family of boundary conditions separately.

\subsection{Edge index and associated integers} \label{subsec:Integers}

As mentioned earlier, the bulk model enjoys \textit{continuous} translational symmetry and its momentum space is thus unbounded, at least at first. That sets the model apart from most models in condensed matter physics, where translations are discrete, being those of a lattice, and momentum space compact as a result, and typically a torus. In Sect.\,\ref{subsec:Bulk} that issue was addressed and its remedy indicated, namely compactification of momentum space.

A similar issue shows up in the edge problem, i.e.\,when restricting the model to a half-space. Then only the longitudinal momentum $ k_x $ remains a good quantum number, no matter whether translations are continuous or discrete, at least if the boundary conditions share that symmetry. In the discrete case, however, the range of $ k_x $ is compact to begin with and the spectrum is of the form seen in Fig.\,\ref{fig:mergers} (first panel): It consists of bands (shaded regions) and curves (in yellow). The energy ranges of the former give rise to the bulk spectrum, collectively in $k_x$; not so the latter, which represent the dispersion relations of states propagating along the edge. When considered at fixed $ k_x $, edge states correspond to bound states of $ H^\# (k_x) $. 
\begin{figure}[hbt]
	\centering
	\includegraphics[width=0.9\linewidth]{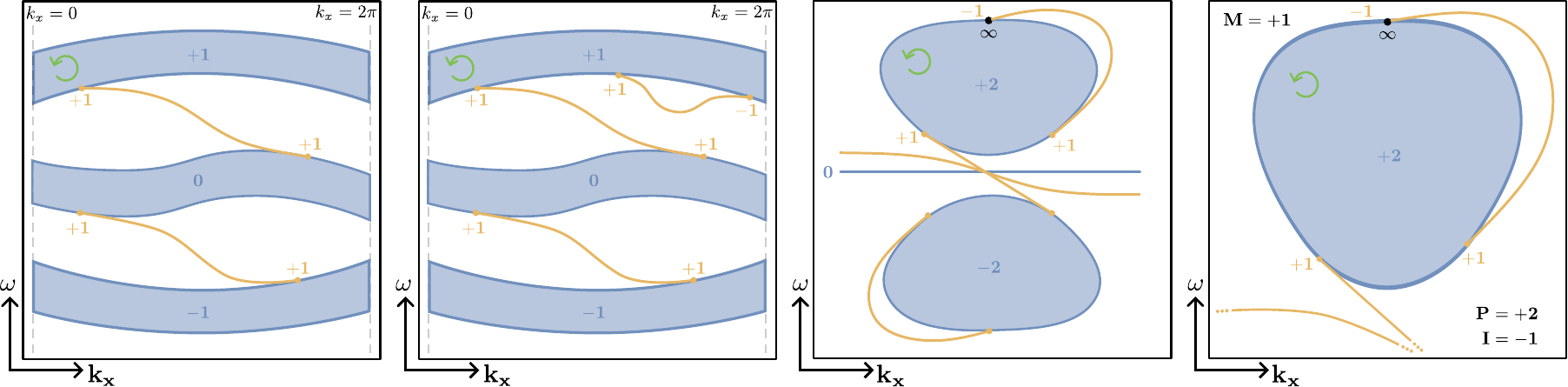}
	\caption{Panel 1: Typical edge spectrum of a model with discrete translation symmetry. The green circle represents the natural orientation for the boundary of the bands, inherited from the $ (k_x, \omega) $-plane. Panel 2: Same spectrum with addition of a \enquote{topologically trivial} edge state. Panel~3: Compactified spectrum of the SWM, with boundary conditions as in \eqref{eq:GrafBC} with $ a=4 $. Panel~4: Proper and improper mergers with the top band for the spectrum of Panel 3. Bulk-edge correspondence, as by Eq.\,\eqref{eq:BEC}, \textit{does not hold}: $ \op{ch} (P) = 2 $, $ N = 1 $.}
	\label{fig:mergers}
\end{figure}

Bulk-edge correspondence \cite{Hatsugai93PRL,Hatsugai93PRB} relates the Chern number $ \op{ch} (P) $ of a band, cf.\,\eqref{eq:BandBundle}, with the mergers of edge states with that band:
\begin{equation}
	\op{ch} (P) = N \,, \label{eq:BEC}
\end{equation}
where $N$ is the signed count of states leaving (+), resp.\,joining (-) the bulk spectrum, as meant w.r.t.\,the natural orientation of its boundary. That orientation, to be clear, is induced from the one of the plane $ (k_x,\omega) $, see panels of Fig.\,\ref{fig:mergers} in green, on the boundaries of the shaded regions. Equivalently, if $ k_x $ is taken to run in positive direction, it parametrizes the upper and lower boundary of the band oppositely, as compared to its natural orientation. With that convention,
\begin{equation}
	N = N_+ - N_- \,,
\end{equation}
where $ N_+ $ counts both the edge states leaving the boundary at the bottom and those joining at the top; similarly for $ N_- $.

The second panel of Fig.\,\ref{fig:mergers} shows a cancellation between a state leaving and then joining a band. Likewise, such a cancellation should be preserved in the SWM. This, and more, can be achieved by a certain compactification of the $ (k_x, \omega) $-plane, illustrated in panels three and four of Fig.\,\ref{fig:mergers}. In contrast to Fig.\,\ref{fig:bulk}, the infinity inside the bulk spectrum is compactified to a single point (the point at infinity $\infty$), whereas directions moving to infinity inside the gap, either at $ k_x \to + \infty $ or $ - \infty $, give rise to a continuum of points at infinity, labeled by $ 0 < h < + \infty $, if they attain a horizontal asymptote of that height. In lack of a horizontal asymptote, convergence is to $ \infty $.

So prepared, we may address edge eigenvalues connected to the top band in relation to their behavior in the finite or at infinity. We say:
\begin{itemize}
	\item An edge eigenvalue merging with that band at finite momentum $ k_x $ has a \textit{proper merger};
	
	\item An edge eigenvalue diverging for $ k_x \to \infty $ has an \textit{improper merger};
	
	\item An edge eigenvalue having a horizontal asymptote at $ k_x \to + \infty $ or $ - \infty $ has an \textit{escape}.
\end{itemize}
Viewed through the compactification, proper and improper mergers look alike, and shall simply be called \textit{mergers}. We remark that in \cite{GJT21} a somewhat different terminology was used.
\begin{definition}[\textit{Number of proper and improper mergers, number of escapes}]
	We denote by $P,I,E$ the signed number of proper mergers, improper mergers and escapes, respectively. Let moreover
	\begin{equation}
		M \coloneqq P + I  \label{eq:NumMergers}
	\end{equation}
	denote the signed number of \textit{mergers}. \label{def:Integers}
\end{definition} 

Let us introduce one last integer of interest. We recall that any (self-adjoint) boundary condition gives rise to a map $ U: \bbR \to U(2) $, $ k_x \mapsto U(k_x) $ with limits $ U(+\infty) = U(-\infty) $. That prompts the following definition.
\begin{definition}[\textit{Boundary winding}]
	For $U$ as above, we call $ B(U) $ its winding number, i.e.
	\begin{equation}
		B (U) = + \frac{1}{2 \pi \im} \int_{-\infty}^{+ \infty} \diff k_x \op{tr} \left( U^{-1} (k_x) \partial_{k_x} U(k_x) \right) \,. \label{eq:BoundWinding}
	\end{equation} \label{def:BoundWinding}
\end{definition}
In the next section, our main results are reported, expressed in terms of the integers $ P,I,E,B $ only. Given our choice of edge index, we set:
\begin{definition} \label{def:BEC}
	Bulk-edge correspondence (BEC) holds true if 
	\begin{equation}
		M = C_+ (= +2) \,. \label{eq:BECNumbers}
	\end{equation}
\end{definition}
Accordingly, violations occur for $ M \neq 2 $.
\begin{remark} \label{rem:IntVector}
	Given that the half-space model is determined by the boundary condition, all four integers $ P,I,E,B $ are functions of just $U$, i.e.\,, $ \cal{V} = \cal{V} (U) $, with
	\begin{equation}
		\cal{V} \coloneqq \left( P,I,E,B \right) \,. \label{eq:IntVector}
	\end{equation}
	Moreover, $\cal{V} $ is a property of $ (H^\# (k_x))_{k_x \in \bbR} $, and thus invariant under translations of $ k_x $, cf.\,Rem.\,\ref{rem:TranslationsKx}. Equivalently, $ \cal{V} $ is a function of $U$, upon identification of the unitaries that differ by translation of $ k_x $.
\end{remark}

\section{Overview of results} \label{sec:Results}

This section reports the main results of this work and addresses the questions \ref{Q1}--\ref{Q4} raised in the introduction. At first, \textit{spectral events} that occur while continuously changing the boundary condition will be defined and listed. They are linked to the integers $ P,I,E,B $ in Defs.\,\ref{def:Integers}, \ref{def:BoundWinding}, and more properly to their changes. As such, they matter for Question \ref{Q3}. The number $M$ of mergers, cf.\,\eqref{eq:NumMergers}, turns out to be more stable than others, as boundary conditions change; indeed, there is only one kind of change that prevents its invariance. This addresses \ref{Q2}. The more technical Question \ref{Q1} is addressed for its larger part in Sects.\,\ref{subsec:DetailsBCs}, \ref{sec:DetailedResults} and App.\,\ref{app:SelfAdj}. Question \ref{Q4} will be dealt with by charting the values of $ P,I,E,B $ in a phase diagram of boundary conditions. Violations of BEC are typical in all families except for DD, where they do not appear at all. \vspace{1\baselineskip}   

The first result connects changes in the value of our integers to \textit{spectral events}, while exploring the manifold of self-adjoint boundary conditions.
\begin{figure}[h]
	\centering
	\includegraphics[width=0.9\linewidth]{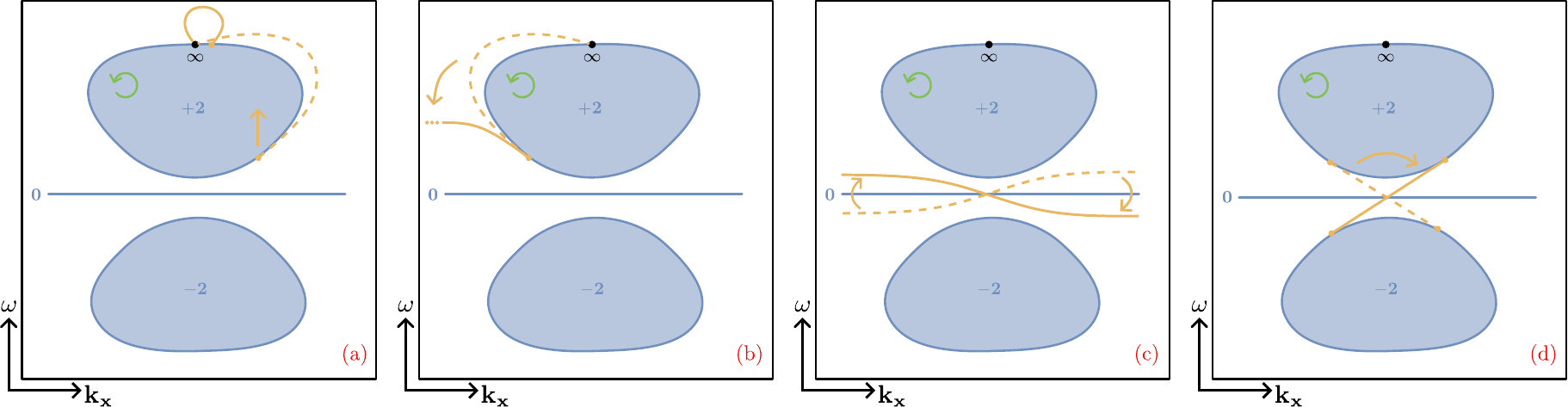}
	\caption{The four spectral events (a--d) described in Lemma \ref{thm:transitions}.}
	\label{fig:Transitions}
\end{figure}
\begin{theorem} \label{thm:transitions}
	Let the manifold of self-adjoint boundary conditions be partitioned into subsets, or regions, $R$ according to the values of $ \cal{V} $, cf.\,\eqref{eq:IntVector}. They are \textit{regular}, in the sense that they satisfy $ \partial \mathring{R} = \partial R $. Any transition between regions can be achieved by composition of elementary ones, which occur across boundaries between just two regions, possibly upon arbitrarily small deformation. Transitions are described as follows in terms of \textit{spectral events} (see Fig.\,\ref{fig:Transitions}):
	\begin{itemize}
		\item[(a)] A merging point reaches or leaves $ \infty $; as it does so, $ P $ and $ I $ change at once, but oppositely: $ \Delta P = - \Delta I = \pm 1 $ ($ \Delta M = 0 $);
		
		\item[(b)] An asymptotically parabolic states turns flat or vice-versa; $ I $ and $ E $ change at once, but oppositely: $ \Delta I = - \Delta E = \pm 1 $ ($ \Delta M = \pm 1 $);
		
		\item[(c)] A one-sided flat state disappears from the gap by crossing the zero-energy band, while another one appears at the opposite end; $E$ changes, $ | \Delta E | = +2 $.
	\end{itemize}
	Within the above manifold, we single out the exceptional ones, where self-adjointness fails at a point $ k_x $ in the sense of Rem.\,\ref{rem:LossSA}, and consider an elementary transition across them. Then
	\begin{itemize}
		\item[(d)] On approach, an edge eigenvalue turns vertical at the point $ k_x $ of failure; $B$ is ill-defined, yet it does not change.
	\end{itemize}
	Integers not mentioned in relation to a given event are left unchanged.
\end{theorem}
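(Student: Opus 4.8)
The plan is to combine the scattering description of edge eigenvalues from Sect.~\ref{subsec:ScattTheo} --- whereby the edge dispersion curves are exactly the pole loci of the scattering amplitude $S(k_x,\omega)$, which depends jointly real-analytically on $k_x$ and on the parameters of the boundary condition --- with the explicit parametrization of the manifold of self-adjoint boundary conditions and of the attached von Neumann unitaries $k_x\mapsto U(k_x)$ from Sect.~\ref{subsec:DetailsBCs}. In these terms, $P$ and $I$ are the signed counts of intersections of the pole locus with the band rim $\omega^+$, at finite $k_x$ respectively in the compactified limit $k_x\to\infty$; $E$ is read off from the asymptotic heights of the flat edge curves as $k_x\to\pm\infty$; and $B$ is the winding of the loop $k_x\mapsto U(k_x)$, which closes up because $U(+\infty)=U(-\infty)$. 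The formulas proved in Sects.~\ref{subsec:ComputingP}--\ref{subsec:BoundaryWinding} exhibit all four quantities as piecewise-constant functions of the boundary parameters, so the theorem amounts to a catalogue of the codimension-one loci across which they can jump, together with the jump of $\cal{V}$ across each.

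First I would settle the regularity statement $\partial\mathring R=\partial R$. From the parametrization of Sect.~\ref{subsec:DetailsBCs} the manifold of boundary conditions is (real-)analytic, and each region $R$ is carved out of it by the finitely many analytic equalities and inequalities that enter the above read-offs; hence the $R$ are subanalytic. The point is then that every defining inequality that can be saturated along a hypersurface can also be strictly violated, on both of its sides, arbitrarily close to that hypersurface --- a fact one checks directly from the explicit formulas --- so no region contains a stratum of codimension $\geq 1$ that is not a limit of its own interior. Consequently adjacent regions meet, generically, along a smooth hypersurface, and any path joining two regions can be homotoped, after an arbitrarily small perturbation removing the higher-codimension strata where several events coincide, so as to cross such hypersurfaces transversally one at a time. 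This reduces the theorem to analyzing a single transversal crossing of each type of hypersurface.

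Second I would classify those hypersurfaces by which read-off degenerates, matching each to one of the events (a)--(d). (i) A pole of $S$ sits on the band rim while its $k_x$-coordinate runs off to $\pm\infty$: this is the configuration in which a finite merger migrates to, or emerges from, $k_x=\infty$, so in the compactification a proper and an improper merger swap roles, i.e.\ event (a) with $\Delta P=-\Delta I=\pm1$, $\Delta M=0$. (ii) The leading large-$k_x$ behaviour of an edge curve --- governed by $U(\pm\infty)$ and, through \eqref{eq:A0AxAy}, by the coefficient matrices $A_0,A_x,A_y$ --- switches between quadratic growth (hence $\infty$ in the compactification, an improper merger) and a finite horizontal asymptote (an escape): this is event (b), $\Delta I=-\Delta E=\pm1$, and it is the unique mechanism producing $\Delta M\neq0$, which is precisely the stability property singling out $M$ as the edge index. (iii) The asymptotic height $h$ of a one-sided flat curve reaches the zero-energy band $\omega_0\equiv 0$: as $h\to0$ the curve is absorbed into the flat band at one end $k_x\to\pm\infty$, while the analysis of Sect.~\ref{subsec:FlatStates} forces a compensating flat curve to appear at the opposite end, so $|\Delta E|=2$ with $P,I,M$ unchanged --- event (c). (iv) The exceptional locus of Rem.~\ref{rem:LossSA}: there the fractional-linear map $k_x\mapsto U(k_x)$ develops a singularity at an isolated $k_x$, the corresponding edge curve becomes vertical there, and the winding integral \eqref{eq:BoundWinding} loses meaning; a short computation with the fractional-linear form of $U$ shows nonetheless that the winding agrees on the two sides, giving event (d). In each case the jump of $\cal{V}$ is obtained by differencing the formulas of Sects.~\ref{subsec:ComputingP}--\ref{subsec:BoundaryWinding} across the stratum, and integers not named are visibly untouched.

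The main obstacle is exhaustiveness: showing that (a)--(d) are the only ways $\cal{V}$ can change. The finite mergers are controlled by tracking the pole locus of $S$, which is routine once joint analyticity in $(k_x,\text{parameters})$ is in hand; the delicate part is the regime $k_x\to\infty$, where proper mergers, improper mergers and escapes all live and where the odd-viscous term $f-\nu k^2$ of \eqref{eq:BulkHam} decides which of the four families DD/ND/DN/NN one is in and hence which asymptotic profiles are available. Establishing that, within each family, the large-$k_x$ normal form of the edge curves admits only the transitions (b) and (c) --- and in particular that a merger can be neither created nor destroyed ``at infinity'' except through (a) and (b) --- is where the bulk of the work lies, and it is carried out family by family in Sects.~\ref{subsec:ParabolicStates}, \ref{subsec:FlatStates}; throughout, the sign bookkeeping dictated by the orientation of the band boundary (Fig.~\ref{fig:mergers}) must be tracked carefully.
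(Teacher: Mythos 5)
Your overall strategy coincides with the paper's: both reduce the theorem to the explicit piecewise-constant charts of $(P,I,E,B)$ over the parameter spaces of the families (the content of Props.\,\ref{prop:NewMap2}, \ref{prop:NewMap4}), identify the codimension-one transition surfaces, check which indices share which surfaces and by how much they jump, and only then interpret each jump as a spectral event via Levinson's theorem, the parabolic/flat dichotomy at $k_x\to\infty$, and the uniqueness of one-sided flat asymptotes. Your reading of events (a)--(c) and of the role of $M$'s stability matches the paper's part (ii). Be aware, though, that the phrase \enquote{the jump of $\cal{V}$ is obtained by differencing the formulas across the stratum} hides the actual content of items (a)--(c): what must be verified is the non-obvious coincidence that \emph{every} transition surface of $P$ (namely $q=\pm\sqrt{2}$ in ND, $\Delta^2=\cal{M}_\pm$ in NN) is also a transition surface of $I$ with the opposite unit jump, that the remaining surfaces of $I$ ($q=\pm(m+1)$, $\Delta^2=\cal{I}_\pm$) are exactly shared with $E$, and that $E$'s and $B$'s leftover surfaces are shared with nobody. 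This pairing is read off region by region from \eqref{eq:PND}--\eqref{eq:BND} and Tabs.\,\ref{tab:MergeNN3AndNN4}, \ref{tab:EscapesNN}; it is mechanical once the charts are proven, but it is where exhaustiveness actually gets established, not in a separate large-$k_x$ normal-form argument.

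The one genuine gap is item (d). Your claim that \enquote{a short computation with the fractional-linear form of $U$ shows the winding agrees on the two sides} does not engage with what actually happens there. The paper's proof has two ingredients you omit: first, it shows that at an exceptional fiber $k_x^*$ with $\op{rk}\ul{A}(k_x^*)<2$ the purely evanescent wave alone satisfies the boundary condition for \emph{every} $\omega$ in the gap, so the whole gap fills with point spectrum at that fiber --- this is the precise sense in which \enquote{the edge eigenvalue turns vertical}, and it is an eigenvalue statement, not a statement about $U$. Second, the invariance of $B$ is not generic: it rests on the factorization \eqref{eq:PTilde}, $P_\pm=\big((a_1\wedge a_1')+k_x(a_1\wedge b_1)\big)\tilde{P}_\pm$, which shows that the factor vanishing at $k_x^*$ is common to $P_+$ and $P_-$ and hence cancels in $B=N(P_-)-N(P_+)$, together with the fact (Lm.\,\ref{lem:MaxRank}) that in family NN self-adjointness never fails fiberwise, so no such transition occurs there at all. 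Without these two family-specific facts, the assertion that $B$ is unchanged across the exceptional locus is unsupported.
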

\begin{remark} \label{rem:WeirdETrans}
	The theorem does not rule out further spectral events, not contemplated above, that do not correspond to a transition between level sets of $ \cal{V} $. An example is a simultaneous appearance (or disappearance) of two flat states at opposite ends. Hence $ \Delta E = 0 $.
\end{remark}
\begin{remark} \label{rem:NonElementary}
	Within the restricted manifold of boundary conditions corresponding to the particle-hole symmetric case, cf.\,Def.\,\ref{def:PHS}, transitions between regions exhibit more than one spectral event, thus failing to be elementary. More than that, the restoration of that property by slight deformation may require that the symmetry is forgone. However, the property $ \partial \mathring{R} = \partial R $ of the regions survives the restriction.
\end{remark}
An answer to Question \ref{Q3} follows as a corollary.
\begin{corollary} \label{cor:ViolationMechanism}
	Let $ U_1,  U_2 $ be two distinct boundary conditions, and let $ M_j = M (U_j) $, $ (j = 1,2) $ be the corresponding indices \eqref{eq:NumMergers}. If $ M_1 \neq M_2 $, then any path joining $ U_1 $ to $ U_2 $ must undergo at least one transition of type (b) (cf.\,Thm.\,\ref{thm:transitions}).
\end{corollary}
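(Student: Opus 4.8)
The plan is to read the statement off directly from Theorem~\ref{thm:transitions}, whose classification of spectral events already contains everything needed. The one fact that does all the work is that, among the elementary events (a)--(d), \emph{only} type (b) changes $M=P+I$: in a type (a) event $\Delta P=-\Delta I$, so $\Delta M=0$; a type (c) event moves only $E$; and a type (d) event affects only whether $B$ is well defined, leaving $P,I,E$ untouched (the closing clause of the theorem). Equivalently, $M$ is locally constant on the manifold of self-adjoint boundary conditions away from the locus where type-(b) events occur. Remark~\ref{rem:WeirdETrans} only adds further spectral events that do not move $\cal{V}$ at all, hence are harmless here.

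To turn this into a proof I would proceed as follows. Write $\cal{U}$ for the manifold of (local, self-adjoint) boundary conditions, partitioned by Theorem~\ref{thm:transitions} into regions $R$ on each of which $\cal{V}$, and in particular $M$, is constant; let $\Sigma_{\mathrm b}\subset\cal{U}$ be the set of boundary conditions at which a type-(b) event takes place, a portion of the union $\bigcup_R\partial R$ of region walls (and, by convention, including its intersections with walls of the other types). The key lemma is: $M$ is locally constant on $\cal{U}\setminus\Sigma_{\mathrm b}$. For the proof, fix $p\notin\Sigma_{\mathrm b}$; if $p$ lies in some $\mathring{R}$ there is nothing to do, and if $p$ lies on a wall then, by the regularity $\partial\mathring{R}=\partial R$ and the exhaustiveness of the event list, every boundary condition sufficiently close to $p$ is reached from $p$ by a finite composition of elementary transitions, possibly after an arbitrarily small deformation (both clauses of Theorem~\ref{thm:transitions}); since $p\notin\Sigma_{\mathrm b}$ these can be taken of types (a), (c), (d) only, none of which changes $M$, so $M\equiv M(p)$ on a neighbourhood of $p$.

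The corollary then follows by a connectedness argument. If a path $\gamma\colon[0,1]\to\cal{U}$ from $U_1$ to $U_2$ undergoes no transition of type (b), then $\gamma([0,1])\subset\cal{U}\setminus\Sigma_{\mathrm b}$, so $M\circ\gamma$ is locally constant on the connected set $[0,1]$, hence constant, whence $M_1=M_2$; contrapositively, $M_1\neq M_2$ forces $\gamma$ to meet $\Sigma_{\mathrm b}$, i.e.\ to undergo at least one type-(b) transition. Since $M=P+I$ does not involve $B$, exceptional boundary conditions (where $B$ is ill defined) pose no problem — type (d) enters only to certify that crossing them leaves $M$ unchanged.

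The step I expect to be the real work is the local lemma, and within it the geometry of $\bigcup_R\partial R$ near a non-type-(b) point: one needs that a point sitting on, say, an intersection of a type-(a) wall and a type-(b) wall is placed inside $\Sigma_{\mathrm b}$, so that $\cal{U}\setminus\Sigma_{\mathrm b}$ genuinely decomposes into pieces linked by type-(a), (c), (d) moves alone. This is precisely what the ``regular regions'' and ``arbitrarily small deformation'' parts of Theorem~\ref{thm:transitions} are there to supply; granting them, the remainder is routine.
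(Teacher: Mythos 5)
Your proposal is correct and rests on exactly the same observation as the paper's proof: by Theorem \ref{thm:transitions}, type (a) events have $\Delta M=\Delta P+\Delta I=0$, types (c) and (d) leave $P$ and $I$ untouched, so only type (b) can change $M$, and a path with $M_1\neq M_2$ must therefore cross at least one such event. The paper states this in three lines and leaves the local-constancy/connectedness bookkeeping implicit; your more careful unpacking of it is sound but not a different argument.
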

In plain language, parabolic-to-flat transitions are the only spectral events capable of altering $M$. Regions on the BC manifold where BEC is respected or violated must thus be separated by \enquote{surfaces} of type (b) transitions.

Further details on existence and prevalence of BEC are gained by separately charting $ \cal{V} (U) $ for all $ U $'s within each of the families DD, ND, DN, NN, starting with DD. In this specific case, no charting is actually needed, as illustrated by the following Proposition.
\begin{proposition}[Index charts, DD case] \label{prop:Map1}
	Any boundary condition $ A $ (as by Def.\,\ref{def:OperativeBC}) within the family DD belongs to
	\begin{equation}
		[ A_D ] \,,
	\end{equation}
	where $ A_D $ are the Dirichlet boundary conditions, cf.\,\eqref{eq:Dirichlet}, and $ [A_D] $ their orbit, cf.\,\eqref{eq:Cosets}. As a result, the family corresponds to a single von Neumann unitary $U$. 
	
	The value of $ \cal{V} (U) $, cf.\,Eq.\,\eqref{eq:IntVector}, for that unitary is
	\begin{equation}
		\cal{V} (U) = (2,0,-1,0) \,,
	\end{equation}
	whence $ M = +2 = C_+ $ and BEC holds, cf.\,(\ref{eq:NumMergers},\ref{eq:BECNumbers},\ref{eq:IntVector}).
\end{proposition}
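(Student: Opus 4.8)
The statement has two parts: (i) every boundary condition in family DD lies in the single orbit $[A_D]$, so that DD is carried by one von Neumann unitary $U$; and (ii) for that $U$ one has $\cal{V}(U)=(2,0,-1,0)$. I would treat (i) by squeezing the self-adjointness constraint (\ref{eq:CondSelfAdjKx}) together with $\op{rk}A_y=0$, and (ii) by feeding the Dirichlet problem through the index machinery of Sects.\,\ref{subsec:ScattTheo}, \ref{subsec:TopScatt} and \ref{subsec:ComputingP}--\ref{subsec:BoundaryWinding} (the case is also essentially the one recorded in \cite{GJT21}).

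\textbf{Reducing DD to $[A_D]$.} The plan is to read (\ref{eq:CondSelfAdjKx}) as the statement that, at each $k_x$, $\ker A(k_x)\subset\bbC^6$ is a Lagrangian subspace of the sesquilinear boundary form $Q$ obtained by integrating $H(k_x)$, cf.\,\eqref{eq:EdgeHamPrecursor}, by parts over $\bbR_+$. First I would record $Q$ explicitly: a short integration by parts shows that, in coordinates $\Psi=(\eta(0),u(0),v(0),\eta'(0),u'(0),v'(0))$, the form $Q$ pairs $\eta(0)$ with $v(0)$ and (weighted by $\nu$) $u(0)$ with $v'(0)$ and $v(0)$ with $u'(0)$, and does not involve $\eta'(0)$ at all, so its radical on $\bbC^6$ is two-dimensional and its Lagrangians have dimension $4$. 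Now in family DD one has $A_y=0$, hence $A(k_x)=(A_0+\im k_x A_x\mid 0)$ constrains only $\tilde\psi(0)$; by the rank condition (\ref{eq:RankCondition}), $\ker A(k_x)=\ker(A_0+\im k_x A_x)\times\bbC^3$ has dimension $1+3=4$ and contains the whole $\tilde\psi'(0)$-plane. Testing isotropy of $\ker A(k_x)$ against vectors supported on that plane kills the $\nu$-weighted pairings and forces the surviving line $\ker(A_0+\im k_x A_x)\subset\bbC^3$ to have vanishing $u$- and $v$-components, i.e.\ to equal $\bbC\cdot(1,0,0)$; equivalently $\ker A(k_x)=\{u(0)=v(0)=0\}=\ker A_D$, and this holds for every $k_x$. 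Having the same kernel at each $k_x$, $A(k_x)$ and $A_D$ differ by a (pointwise) $\op{GL}(2,\bbC)$ factor, so $[A]=[A_D]$ in the sense of \eqref{eq:Cosets}; by the correspondence between pointwise orbits and von Neumann unitaries recalled in Sect.\,\ref{subsec:BCs}, the whole family DD is carried by the single unitary $U$ attached to $A_D$.

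\textbf{Evaluating $\cal{V}(U)=(2,0,-1,0)$.} With DD reduced to the Dirichlet unitary I would compute the four entries separately. For $B$: the von Neumann unitary of $A_D$, read off from (\ref{eq:AulA}, \ref{eq:JuxtaposeA1A2}, \ref{eq:UFormula}), has trivial winding as a loop in $U(2)$, so $B=0$. For $I$ and $E$: solve the half-line eigenvalue problem $H^\#(k_x)\tilde\psi=\omega\tilde\psi$ under $u(0)=v(0)=0$ explicitly, matching the exponentially decaying solutions of the $\partial_y$-ODE against the two Dirichlet conditions; a large-$|k_x|$ analysis of the resulting dispersion relation shows that no edge branch diverges (so $I=0$) while exactly one edge branch develops a finite horizontal asymptote, contributing $E=-1$ with the orientation convention fixed in Sect.\,\ref{subsec:Integers}. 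Finally, since $I=0$ one has $M=P$, and $M$ is obtained from the scattering amplitude $S$ through the topological identity of Sect.\,\ref{subsec:TopScatt}, which ties the poles/winding of $S$ simultaneously to $C_+$ and to the edge count; for the Dirichlet problem this evaluates to $M=C_+=+2$, hence $P=2$. Collecting, $\cal{V}(U)=(2,0,-1,0)$, and $M=P+I=2=C_+$, so BEC holds by Def.\,\ref{def:BEC}.

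\textbf{Main obstacle.} The reduction to $[A_D]$ is essentially bookkeeping once the boundary form $Q$ is in hand, and $B=0$, $I=0$, $E=-1$ follow from an elementary if slightly lengthy ODE computation; the real content is $P=2$, i.e.\ the claim that BEC genuinely holds in the Dirichlet case. That is exactly where the scattering-theoretic, Levinson-type count of Sects.\,\ref{subsec:ScattTheo}--\ref{subsec:TopScatt} is needed, and checking that the relevant winding of $S$ equals the bulk Chern number $+2$ is the step I expect to be the crux.
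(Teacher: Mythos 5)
Your proposal is sound in outline and follows essentially the architecture of the paper's proof: reduce DD to the single Dirichlet orbit, then evaluate $P,I,E,B$ one at a time with the machinery of Sects.\,\ref{subsec:ComputingP}--\ref{subsec:BoundaryWinding}. The one genuinely different ingredient is your orbit reduction: you argue directly that isotropy of $\ker A(k_x)$ w.r.t.\,the boundary form, tested against the full $\tilde\psi'(0)$-plane contained in that kernel when $A_y=0$, forces the remaining line in $\bbC^3$ to be $\bbC\cdot(1,0,0)$. That computation is correct (pairing against the $u'$- and $v'$-directions kills the $v$- and $u$-components of the line, respectively) and is a legitimate standalone argument; the paper instead gets the same conclusion for free from Prop.\,\ref{prop:UBCs}, which shows every DD condition has the constant von Neumann unitary $U=J$, combined with the bijection between unitaries and orbits. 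Your route re-derives a special case of that classification from the form $\hat\Omega$ of \eqref{eq:NOmega}; the paper's route buys the result as a corollary of structure it needs anyway. Your computations of $B=0$, $I=0$ and $E=-1$ match the paper's (the paper reads $I=0$ off $G\equiv 1$ in Prop.\,\ref{prop:Varieties} and $E=-1$ off the two asymptotes $\omega_{\op{a},\pm}=\pm 1/(2\nu)$ of Prop.\,\ref{prop:EscapeHeight}, only the positive one counting; note there are two flat branches, not one, but only one lies in the relevant gap).

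The step you flag as the crux is indeed where your write-up is both incomplete and slightly miscast. The topological identity of Sect.\,\ref{subsec:TopScatt} is $C_+=W_0+W_\infty=P+W_\infty$, cf.\,\eqref{eq:CPlusW0WInfty}; it holds for \emph{every} boundary condition and does not by itself yield $M=C_+$ --- that equality is precisely BEC, which the paper shows can fail, so invoking the identity to "evaluate $M=C_+$" would beg the question. The Dirichlet-specific input you still need is $W_\infty=0$, and in the paper this is a one-line consequence of the DD entry of Prop.\,\ref{prop:GEps}: the leading order of the Jost function,
\begin{equation*}
g(\varepsilon,\varphi)=2\im\Bigl(a_1'+\tfrac{\cos\varphi}{\varepsilon}\,b_1\Bigr)\wedge\Bigl(a_2'+\tfrac{\cos\varphi}{\varepsilon}\,b_2\Bigr)+o(1)\,,
\end{equation*}
depends on the angle only through $\cos\varphi$, hence is even under $\varphi\mapsto-\varphi$ (i.e.\,$k_y\mapsto-k_y$), so $S=-g(\varepsilon,-\varphi)/g(\varepsilon,\varphi)$ is asymptotically constant near $\infty$ and its winding $W_\infty$ along $\cal{C}^\infty_{\epsilon,K}$ vanishes. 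Then $P=C_+-W_\infty=2$, and together with your $I=0$ this gives $M=2=C_+$. With that computation supplied, your argument closes.
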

\begin{figure}[hbt]
	\centering
	\includegraphics[width=0.9\linewidth]{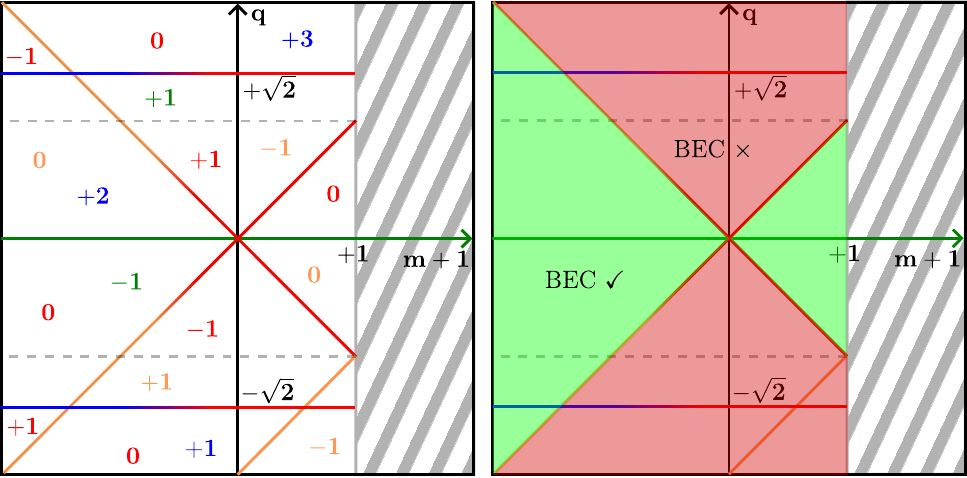}
	\caption{Map of integers for family ND. Drawn in blue, red, orange, green are $ P,I,E,B $, respectively. To each integer are associated transition lines. Lines of a specific colour pertain to one index only, like the orange $ \{ q = m-1 \} $ line where $ E $ alone varies. By contrast, if two indices change at once, the corresponding line interpolates between the two colors pertaining each index. The right panel highlights were BEC holds or fails.}
	\label{fig:IndicesII}
\end{figure}
\begin{proposition}[Index charts, ND case] \label{prop:Map2}
	The integer tuple $ \cal{V} $, cf.\,\eqref{eq:IntVector}, depends on the von Neumann unitary $U$ only through some real parameters $(m,q) \in (-\infty, 0] \times \bbR $ determined by the boundary condition, whence $ \cal{V} (U) = \cal{V} (m,q) $. A boundary condition is particle-hole symmetric iff $ m=-1 $. The values of $ \cal{V} $ are charted in Fig.\,\ref{fig:IndicesII} (left panel).
	
	Bulk-edge correspondence, namely $ M = +2 $, is true within the sector
	\begin{equation}
		\cal{C} \coloneqq \{ |q| < |m+1| \}
	\end{equation}
	(cf.\,right panel of Fig.\,\ref{fig:IndicesII}), and violated outside of it. In particular, BEC is always false on the particle-hole symmetric submanifold $ \{ m = -1 \} $.
\end{proposition}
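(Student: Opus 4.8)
The plan is to reduce everything to a two-parameter normal form for the von Neumann unitary in the ND family, then compute each of the four integers $P,I,E,B$ as an explicit function of $(m,q)$ using the dedicated machinery of Sects.~\ref{subsec:ComputingP}--\ref{subsec:BoundaryWinding}, and finally read off where $M=P+I=2$.

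\textbf{Step 1: Normal form.} I would first invoke Lm.~\ref{lem:Dictionary} and the parametrization of Sect.~\ref{subsec:DetailsBCs} to write the general ND boundary condition, modulo the equivalence \eqref{eq:Cosets} and the $k_x$-translation freedom of Rem.~\ref{rem:TranslationsKx}, in a canonical form depending on exactly two real parameters $(m,q)$. Concretely, an ND condition has $\op{rk} A_y=1$, so after a $\op{GL}(2,\bbC)$ recombination one boundary condition is pure Dirichlet in some combination of $u,v$ (say $v(0)=0$, using the reduction to the ND branch promised after Def.~\ref{def:Families}), while the other is generalized Neumann, of the form (a combination of $u,v$ at $0$) $+\,(a\partial_y u+b\partial_y v)(0)=0$ with $(a,b)\neq0$; normalizing the Neumann coefficient and absorbing a $k_x$-shift, the residual freedom collapses to a pair $(m,q)$. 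I expect $m\le 0$ to emerge as the sign constraint forced by self-adjointness (conditions \eqref{eq:RankCondition}, \eqref{eq:CondSelfAdjKx}), with $q\in\bbR$ unconstrained, matching the stated domain $(-\infty,0]\times\bbR$. The particle-hole condition \eqref{eq:USymmetry}, $U(k_x)=\overline{U(-k_x)}$, then becomes a single linear constraint on $(m,q)$; I would check it pins down $m=-1$ (with $q$ free), as claimed. That $\cal{V}$ depends on $U$ only through $(m,q)$ is then immediate from Rem.~\ref{rem:IntVector} once the normal form is established.

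\textbf{Step 2: Compute the four integers.} With $H^\#(k_x)$ in hand for each $(m,q)$, I would: (i) find the edge-eigenvalue dispersion curves $\omega(k_x)$ as poles of the scattering amplitude $S$ (Sect.~\ref{subsec:ScattTheo}), i.e.\ solve the secular equation obtained by imposing the boundary condition on the evanescent ansatz for \eqref{eq:EdgeHamPrecursor}; (ii) determine $P$ by counting, with sign, where these curves meet the lower rim $\omega^+(k_x)$ of the upper band, cf.\ \eqref{eq:BandRim} and Sect.~\ref{subsec:ComputingP}; (iii) determine $I$ via the large-$k_x$ asymptotics of the edge curves — whether they diverge (improper merger) or not — per Sect.~\ref{subsec:ParabolicStates}; (iv) determine $E$ from the existence of horizontal asymptotes as $k_x\to\pm\infty$, Sect.~\ref{subsec:FlatStates}; (v) compute $B$ directly from \eqref{eq:BoundWinding} using the fractional-linear form of $k_x\mapsto U(k_x)$, whose winding is a standard residue count in the two coefficients $(m,q)$. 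Each of these is a finite computation once the normal form is fixed; the transition lines in Fig.~\ref{fig:IndicesII} are precisely the loci where a merger passes through $\infty$ (type (a), $\{q=0\}$-like lines where $P\leftrightarrow I$), where a parabolic state flattens (type (b), the lines $|q|=|m+1|$), or where a flat state crosses the zero band (type (c)), consistent with Thm.~\ref{thm:transitions}.

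\textbf{Step 3: Locate BEC.} Summing, $M=P+I$ is constant on each connected component of the complement of the type-(b) lines (this is exactly Cor.~\ref{cor:ViolationMechanism}), so it suffices to evaluate $M$ on one point per component. On the Dirichlet-like degeneration one recovers $M=2$ as in Prop.~\ref{prop:Map1}, and tracking across the type-(b) surfaces $\{|q|=|m+1|\}$ gives $M=2$ exactly on $\cal{C}=\{|q|<|m+1|\}$. Since the line $\{m=-1\}$ forces $|m+1|=0$, hence $|q|<0$ is impossible, BEC fails everywhere on the particle-hole symmetric submanifold.

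\textbf{Main obstacle.} The genuinely delicate point is Step~2(iii)--(iv): deciding, for each $(m,q)$, the $k_x\to\pm\infty$ fate of the edge dispersion curves — parabolic divergence versus a finite horizontal asymptote versus absorption into a band — and getting the \emph{signs} right in the compactified $(k_x,\omega)$-plane. This requires careful control of the evanescent solutions of \eqref{eq:EdgeHamPrecursor} in the regime where the odd-viscosity term $\nu(\partial_y^2-k_x^2)$ dominates, and of how the boundary condition selects among them; the sign bookkeeping must be cross-checked against the orientation conventions fixed around \eqref{eq:BEC} and against the invariants $\Delta P=-\Delta I$, $\Delta I=-\Delta E$ of Thm.~\ref{thm:transitions}, which serve as consistency constraints on the final chart.
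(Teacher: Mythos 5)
Your plan follows the paper's route almost exactly: the paper proves this statement by first establishing the explicit formulas of Prop.\,\ref{prop:NewMap2} for $P,I,E,B$ as functions of $(m,q)$ (defined in \eqref{eq:MAndQ} from the orbit parameters $\alpha,\lambda$), computed precisely via the Jost-function winding at infinity for $P$, the arc-intersection count in the $(c_+,c_-)$-plane for $I$, the signs of the asymptote heights for $E$, and the winding of the degenerate parabolas $\tilde P_\pm$ for $B$, after which $M=P+I=2$ iff $|q|<|m+1|$ is read off. The only inaccuracies are cosmetic: the anchor value $M=2$ is obtained by direct evaluation at a point of ND rather than by a \enquote{Dirichlet-like degeneration} (DD is a separate family, not a limit within ND), and the type-(a) transition lines in this family are $q=\pm\sqrt{2}$, not $\{q=0\}$ (which is where $B$ jumps).
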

A comparison with \cite{GJT21} is in order. The boundary conditions of the 1-parameter family considered there are all of type DN and particle-hole symmetric, whence BEC is false throughout that family by our definition. That definition however differs from theirs, which was based on $P$ rather than $M$.

In case NN, a graphical representation akin to Figure \ref{fig:IndicesII} is not feasible, because of the sheer dimensionality of the associated BC manifold. Yet, restriction to the particle-hole symmetric submanifold makes a drawing possible.
\begin{figure}[hbt]
	\centering
	\includegraphics[width=0.9\linewidth]{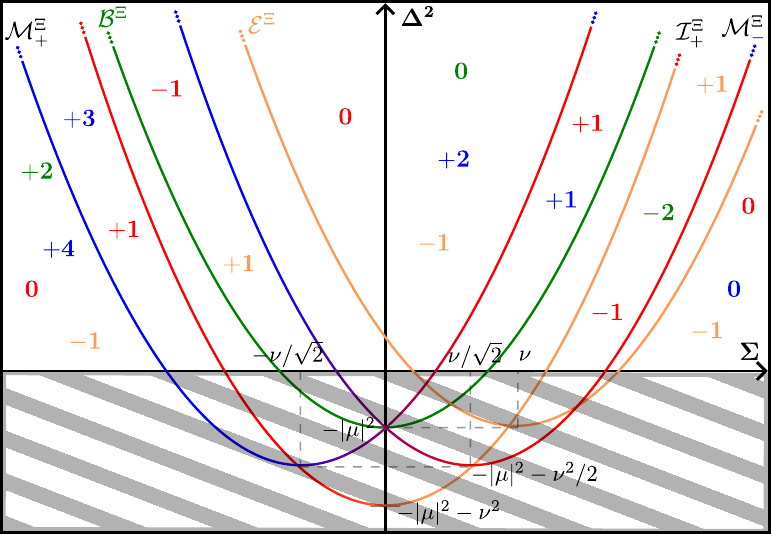}
	\caption{Map of indices for the PHS submanifold within the family NN. Color coding and transition lines follow the conventions of Fig.\,\ref{fig:IndicesII}. The transitions moreover illustrate Rem.\,\ref{rem:NonElementary}}.
	\label{fig:IndicesIV}
\end{figure}
\begin{proposition}[Index charts, NN case with symmetry] \label{prop:Map4}
	Within the particle-hole symmetric submanifold of NN, $ \cal{V} $ depends on $U$ only through parameters $ (\Sigma,\Delta^2, \mu) \in \bbR \times \bbR_+ \times \im \bbR $, whence $ \cal{V} (U) = \cal{V} (\Sigma, \Delta^2, \mu) $.
	
	Bulk-edge correspondence, namely $ M_+ = +2 $, is true within the region 
	\begin{equation}
		\{ \Delta^2 > \cal{I}^\Xi \coloneqq \Sigma^2 - | \mu |^2 - \nu^2 \} \,,
	\end{equation}
	and violated outside of it.
\end{proposition}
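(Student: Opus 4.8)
The plan is to follow the arc of the ND case (Prop.\,\ref{prop:Map2}), now inside the larger family NN and then restricted to its particle-hole symmetric submanifold. First I would specialize the description of self-adjoint local boundary conditions of Sect.\,\ref{subsec:DetailsBCs} to $\op{rk} A_y = 2$: using the gauge freedom \eqref{eq:Cosets} one brings the two conditions to a normal form in which the boundary data of $\tilde\psi$ at $y=0$ is tied together by a Hermitian $2\times 2$ matrix $S(k_x)$, whose dependence on $k_x$ is affine by condition (iii) (the $\partial_x$'s enter as $\im k_x$). Self-adjointness forces the diagonal part of $S$ to be real and $k_x$-independent, while the particle-hole condition \eqref{eq:USymmetry} cuts the remaining freedom down to a real symmetric constant part plus an $\im k_x$-linear real symmetric part. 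Collecting the surviving invariants, and using that $\cal{V}$ is insensitive to the overall scale of the two conditions and to an admissible change of origin in $k_x$ (Rems.\,\ref{rem:Orbits}, \ref{rem:TranslationsKx}), leaves the triple $(\Sigma, \Delta^2, \mu) \in \bbR \times \bbR_+ \times \im\bbR$ — with $\Sigma$ a trace-type datum, $\Delta^2 \geq 0$ a discriminant that enters only squared, and $\mu$ the off-diagonal datum, rendered purely imaginary by the symmetry. Hence $\cal{V}(U) = \cal{V}(\Sigma, \Delta^2, \mu)$.

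\textbf{Edge eigenvalues as poles of $S$.} For $k_x\in\bbR$ and $\omega$ in the gap below the rim $\omega^+(k_x)$, cf.\,\eqref{eq:BandRim}, the bulk equation $H(k_x)\hat\psi = \omega\hat\psi$ with $k_y^2$ replaced by $-\kappa^2$ is, via \eqref{eq:Omega+}, quadratic in $s \coloneqq k_x^2 - \kappa^2$, hence it has exactly two transversally decaying solutions $\hat\psi_j\,\eul^{-\kappa_j y}$, $\op{Re}\kappa_j > 0$, $j=1,2$ (cf.\,Sect.\,\ref{subsec:ScattTheo}). An edge eigenvalue is a pair $(k_x,\omega)$ for which some combination $\tilde\psi = c_1\hat\psi_1\eul^{-\kappa_1 y} + c_2\hat\psi_2\eul^{-\kappa_2 y}$ satisfies the boundary condition; this is a homogeneous $2\times 2$ system in $(c_1,c_2)$ whose determinant $D(k_x,\omega;\Sigma,\Delta^2,\mu)$ vanishes precisely along the edge dispersion curves, and which coincides, up to a nonvanishing factor, with the pole-defining function of the scattering amplitude $S$ of Sect.\,\ref{subsec:ScattTheo}. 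Feeding $D$ into the evaluation procedures of Sects.\,\ref{subsec:ComputingP}--\ref{subsec:BoundaryWinding} then expresses each of $P,I,E,B$ as an explicit function of $(\Sigma,\Delta^2,\mu)$.

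\textbf{Locating BEC.} By Cor.\,\ref{cor:ViolationMechanism}, $M_+ = P+I$ can change only across a type-(b) spectral event of Thm.\,\ref{thm:transitions}, i.e.\,when an asymptotically parabolic edge branch turns flat or vice versa; so the region where BEC holds is a union of level sets of $M_+$ separated by the type-(b) locus. To find it I would carry out the large-$|k_x|$ asymptotics of the edge branches: there $\kappa_j(k_x,\omega)$ and $\hat\psi_j$ admit convergent expansions, the boundary condition is dominated by its $\im k_x$-linear part, and the leading part of $D(k_x,\omega;\cdot)=0$ collapses to a quadratic built from $\Sigma,\mu,\nu$ whose discriminant decides whether a branch is asymptotically parabolic (an $I$) or flat (an $E$). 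That discriminant is, up to a positive factor, $\Delta^2 - (\Sigma^2 - | \mu |^2 - \nu^2)$, so within the particle-hole symmetric submanifold of NN the type-(b) locus is exactly $\{\Delta^2 = \cal{I}^\Xi\}$, $\cal{I}^\Xi \coloneqq \Sigma^2 - | \mu |^2 - \nu^2$, and it cuts the parameter space into two connected components of constant $M_+$. One reference evaluation — say Neumann-type data at an interior point of $\{\Delta^2 > \cal{I}^\Xi\}$, where $M_+ = +2$ is immediate, versus a point deep in $\{\Delta^2 < \cal{I}^\Xi\}$ — fixes the assignment, giving BEC ($M_+ = +2$) exactly on $\{\Delta^2 > \cal{I}^\Xi\}$.

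\textbf{Main obstacle.} The crux is the asymptotic matching just described: one must solve the quartic-in-$\kappa$ characteristic relation to enough orders in $1/k_x$, push the resulting boundary vectors through the affine boundary condition, and isolate $\Sigma^2 - | \mu |^2 - \nu^2$ as the discriminant governing the parabolic-to-flat transition, all while tracking signs and branch choices so that the two sides of $\{\Delta^2 = \cal{I}^\Xi\}$ are correctly matched to $M_+ = 2$ and $M_+ \neq 2$. A secondary, global point is to verify that no further type-(b) locus hides elsewhere in $(\Sigma,\Delta^2,\mu)$-space, so that $\{\Delta^2 = \cal{I}^\Xi\}$ is the \emph{entire} separating surface; this uses the full, not merely asymptotic, output of Sects.\,\ref{subsec:ComputingP}--\ref{subsec:BoundaryWinding}, and one should bear in mind that on the particle-hole symmetric submanifold a type-(b) transition may be accompanied by other spectral events (Rem.\,\ref{rem:NonElementary}), so that exhibiting it cleanly can require a slight excursion off that submanifold. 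The parametrization and the reduction to $D$, by contrast, are routine given the machinery already in place, and the reference evaluation is an easy special case.
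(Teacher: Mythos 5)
Your overall route --- parametrize the PHS subfamily of NN, express edge branches as zeros of a $2\times 2$ Jost determinant, and locate the BEC boundary by the large-$|k_x|$ asymptotics separating parabolic from flat branches --- is essentially the paper's route. The paper, however, proves this proposition as a one-line specialization of the \emph{unrestricted} NN result (Prop.\,\ref{prop:NewMap4}): it computes $P$ via the winding $W_\infty$ of the Jost function at the compactified point at infinity, $I$ by counting intersections of a hyperbola in the $(c_+,c_-)$-plane with the arcs $\cal{A}_\uparrow,\cal{A}_\downarrow$, and $E$ from the asymptote heights of Prop.\,\ref{prop:EscapeHeight}; setting $\mu_R=0$ then collapses $\cal{I}_+=\cal{I}_-$ to $\cal{I}^\Xi$ and Thm.\,\ref{thm:Typicality}'s observation that $P+I=2$ exactly on $\{\Delta^2>\cal{I}_\pm\}$ gives the claim. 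Your ``discriminant'' is indeed, up to a positive factor, the denominator $D_\pm$ in \eqref{eq:AsymptoteNN} whose vanishing is $\Delta^2=\cal{I}_\mp$, so the hard computation you defer is exactly the one the paper carries out.

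Three concrete gaps. First, and most seriously, your appeal to Cor.\,\ref{cor:ViolationMechanism} is circular in the paper's logical architecture: that corollary follows from Thm.\,\ref{thm:transitions}, whose proof is \emph{based on} Props.\,\ref{prop:NewMap2} and \ref{prop:NewMap4}, i.e.\,on the very index charts you are trying to establish. To use ``$M$ changes only at type-(b) events'' you must independently verify $\Delta P=-\Delta I$ across the $P$-transition surfaces $\Delta^2=\cal{M}_\pm$ (and that $P,I$ have no other transitions), which is tantamount to computing the full tables of Prop.\,\ref{prop:NewMap4}; the shortcut saves nothing. Second, your claim that $\{\Delta^2=\cal{I}^\Xi\}$ ``cuts the parameter space into two connected components'' is false: $\{\Delta^2<\cal{I}^\Xi\}$ requires $\Sigma^2>|\mu|^2+\nu^2$ and therefore splits into the two components $\Sigma>\nu$ and $\Sigma<-\nu$, so a single reference evaluation in the violated region does not suffice without a further argument (e.g.\,the $\Sigma\mapsto-\Sigma$ symmetry induced by $k_x\mapsto-k_x$ together with PHS, or simply a second evaluation). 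Third, the assertion that $\cal{V}$ depends on $\Delta$ only through $\Delta^2$ is stated but not justified; in the paper it is a computed fact (all formulas involve $\Delta^2$ or $\lambda_1\lambda_2=\Sigma^2-\Delta^2$), not a structural one, and your ``Hermitian $S(k_x)$'' normal form with ``real, $k_x$-independent diagonal'' does not match the actual constraint structure of Props.\,\ref{prop:BCs}, \ref{prop:PHSBCs}, where the $k_x$-linear part carries the real parameters $\lambda_1,\lambda_2$ and the PHS restriction acts on $\mu,\mu',\lambda_i'$ instead. Also, ``$M_+=+2$ is immediate'' at a reference point overstates the case: it still requires the $W_\infty$ computation for $P$ and the arc-intersection count for $I$ at that point.
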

Question \ref{Q4} is finally answered by specifying a notion of \textit{typicality}. We say that a property is typical on the manifold of boundary conditions if it holds true with positive measure. The notion is actually proper to the measure class of the Lebesgue measure induced by any local chart.
\begin{theorem}[Typicality of BEC and violations thereof]
	\label{thm:Typicality}
	Within the families ND and NN, both BEC and its violation are typical, namely both $ M = C_+ = 2 $ and its opposite $ M \neq C_+ $ occur on subsets of positive measure.
\end{theorem}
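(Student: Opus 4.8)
The plan is to deduce typicality directly from the index charts, exploiting only that ``typical'' means positive Lebesgue measure. This is an intrinsic notion on a smooth manifold: diffeomorphisms between local charts preserve the Lebesgue measure class, so the statement is chart-independent, and a \emph{nonempty open} subset of a manifold always carries positive measure. Hence it suffices, for each of the families ND and NN, to exhibit a nonempty open subset of the corresponding boundary-condition manifold on which $M = 2$ and another nonempty open subset on which $M \neq 2$.

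For family ND, Prop.\,\ref{prop:Map2} tells us that the index vector factors as $\cal{V}(U) = \cal{V}(m,q)$ through a parameter map $\pi \colon U \mapsto (m,q) \in (-\infty,0] \times \bbR$, with $M = 2$ exactly on $\cal{C} = \{ |q| < |m+1| \}$. First I would record that $\pi$, read in any local chart of the ND manifold, is continuous --- from the explicit parametrization of family ND in Sect.\,\ref{subsec:DetailsBCs}, the pair $(m,q)$ is a rational function of the boundary-condition data --- and that its image meets the interiors of both $\cal{C}$ and its complement (indeed $\pi$ is essentially surjective onto $(-\infty,0]\times\bbR$), again by inspection of that parametrization. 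Then $\pi^{-1}(\cal{C})$ is a nonempty open subset of the ND manifold, hence of positive measure, and $M = 2$ there; and $\pi^{-1}(\{ |q| > |m+1| \})$ is likewise nonempty and open, hence of positive measure, and $M \neq 2$ there. This settles ND.

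For family NN the same argument applies, \emph{provided} one uses the index chart for the \emph{full} NN family established in Sect.\,\ref{sec:DetailedResults}, rather than Prop.\,\ref{prop:Map4}: the latter concerns only the particle-hole symmetric submanifold, which has positive codimension and is therefore Lebesgue-null inside NN, so it cannot by itself witness typicality. With the full-NN chart in hand, $\cal{V}$ again factors through a continuous, essentially surjective map onto a finite-dimensional parameter space, and both $\{M = 2\}$ and $\{M \neq 2\}$ are cut out there by a strict inequality of the type $\Delta^2 > \cal{I}$, resp.\ $\Delta^2 < \cal{I}$, whose two sides are nonempty open regions; pulling back along the parameter map yields nonempty open --- hence positive-measure --- subsets of NN on which $M = 2$, resp.\ $M \neq 2$.

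Bookkeeping aside, the only genuine input is the index charts themselves, which I take as given, together with two elementary checks on the parameter maps: continuity, immediate from the rational parametrizations, and the fact that their images meet both the BEC region and its complement, which again follows by inspecting those parametrizations --- so no transversality, Sard-type, or submersion argument is required. The main point to be careful about is precisely the NN pitfall just flagged (one must not argue typicality in NN from the null PHS submanifold); a minor additional care is that the exceptional boundary conditions of Rem.\,\ref{rem:LossSA} and the parameter-space boundary $\{ m = 0 \}$ form a null set, which may simply be discarded.
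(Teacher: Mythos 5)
Your proposal is correct and follows essentially the same route as the paper's proof: read off $M=P+I$ from the index charts of Props.\,\ref{prop:NewMap2} and \ref{prop:NewMap4} (the latter for the \emph{full} NN family, exactly as the paper does), and observe that both the BEC region and its complement are nonempty and cut out by strict inequalities in continuous functions of the parameters, hence open and of positive measure. Your explicit remarks on chart-independence of the measure class, continuity of the parameter maps, and the inadequacy of the particle-hole symmetric submanifold for NN are sound elaborations of steps the paper leaves implicit.
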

In summary: Boundary conditions of the family DD exhibit bulk-edge correspondence, but the latter is typically violated in all the other families ND, DN and NN.  

In the following sections the above results will be made more precise, e.g.\,by determining the explicit dependence of $ \cal{V} $ on the boundary condition. It is in that version that they will be proven.

\section{Details of the setup} \label{sec:PreliminaryDetails}

\subsection{Momentum-space compactification and bulk eigensections} \label{subsec:BulkDetails}

In this section, we review \cite{GJT21} how the odd-viscous regularization enables the compactification $ \bbR^2 \mapsto S^2 $ of momentum space, mentioned in Sect.\,\ref{subsec:Bulk}. We moreover provide explicit sections $ \hat{\psi}^\zeta $ of the bundle $ E_+ $ of interest, associated to the positive eigenvalue $ \omega_+ $ of the bulk model. Such sections are regular everywhere but in a point $ \zeta \in S^2 $, hence the superscript. \vspace{1\baselineskip}

Above Eq.\,\eqref{eq:BlochBundle}, we argued that eigenspaces associated to $ \omega_\sigma (\ul{k}) \,, \ (\sigma = \pm,0) $ converge as $ k = | \ul{k} | \to \infty $, thus allowing a compactification of momentum space. That claim is justified below.

Consider the fiber $ H (\ul{k}) $, cf.\,Eq.\,\eqref{eq:BulkHam}, of the bulk Hamiltonian $H$, cf.\,\eqref{eq:HamGen}. It can be rewritten as
\begin{equation}
	H (\ul{k}) = \vec{d} (k) \cdot \vec{S} = (k_x, k_y, f -  \nu k^2) \cdot \vec{S} \,,
\end{equation}
where $ \vec{S} = (S_1, S_2, S_3) $, and the matrices
\begin{equation}
	S_1 = 
	\begin{pmatrix}
		0 & 1 & 0 \\
		1 & 0 & 0 \\
		0 & 0 & 0
	\end{pmatrix} \,, \qquad
	S_2 = 
	\begin{pmatrix}
		0 & 0 & 1 \\
		0 & 0 & 0 \\
		1 & 0 & 0
	\end{pmatrix} \,, \qquad
	S_3 = 
	\begin{pmatrix}
		0 & 0 & 0 \\
		0 & 0 & - \im \\
		0 & \im & 0
	\end{pmatrix} \,,
\end{equation}
are an irreducible spin-$1$ representation. The fiber $ H(\ul{k}) $ shares eigenprojections $ P_\pm (\ul{k}), P_0 (\ul{k}) $ with its flattened counterpart
\begin{equation}
	H' (\ul{k}) = \vec{e} (\ul{k}) \cdot \vec{S} \,, \qquad \vec{e} (\ul{k}) \coloneqq \frac{\vec{d} (\ul{k})}{| \vec{d} (\ul{k}) |} = \frac{1}{\omega_+ (\ul{k})} (k_x, k_y, f -  \nu k^2) \,. \label{eq:FlatHam}
\end{equation}
The eigenprojections read
\begin{equation}
	P_\pm (\ul{k}) = \frac{(\vec{e} (\ul{k}) \cdot \vec{S})^2 \pm \vec{e} (\ul{k}) \cdot \vec{S}}{2} \,, \qquad P_0 (\ul{k}) = \id - (\vec{e} (\ul{k}) \cdot \vec{S})^2 \,,
\end{equation}
and project onto the eigenspace of $ \omega_\pm (\ul{k}) $, $ \omega_0 (\ul{k}) $, cf.\,Eq.\,\eqref{eq:Omega+}, respectively.

The map $ \vec{e}: \ul{k} \in \bbR^2 \to S^2 $ is convergent for $ k = | \ul{k} | \to \infty $, and more precisely to
\begin{equation}
	\lim_{k \to \infty} \vec{e} (\ul{k}) = (0,0,-1) \,, 
\end{equation}
thanks to the odd-viscous regularization provided by the $ \nu k^2 $ term in \eqref{eq:FlatHam}. We stress that the limit is attained irrespective of the direction along which $ \ul{k} $ is sent to infinity. Accordingly, $ P_\sigma (\ul{k}) \,, \ (\sigma = \pm, 0) $ attains a well-defined limit for $ k \to \infty $, and so do the fibers
\begin{equation}
	E_{\sigma, \ul{k}} = \op{ran} P_\sigma (\ul{k})
\end{equation}
of the eigenbundle $ E_\sigma $, cf.\,Eq.\,\eqref{eq:BandBundle}. Along with the 1-point compactification of the plane $ \bbR^2 \ni \ul{k} $ to the (Riemann) sphere $ S^2 \cong \bbR^2 \cup \{ \infty \} $, the Bloch bundle extends as well.

Next, we shift our focus to the eigenbundle $ E_+ $ of interest, and provide sections that are global except for a single point.

An eigensection of $ E_+ $ is a smooth map $ \varphi: S^2 \to \bbC^2 $ that solves
\begin{equation}
	P_+ (\ul{k}) \varphi (\ul{k}) = \varphi (\ul{k}) \label{eq:SectionCondition}
\end{equation}
and does not vanish, where defined. We present two eigensections, namely
\begin{equation}
	\hat{\psi}^{\infty} (\ul{k}) = \frac{1}{k_x - \im k_y} \hat{\psi} (\ul{k}) \,, \qquad \hat{\psi}^{0} (\ul{k}) = \frac{1}{k_x + \im k_y} \hat{\psi} (\ul{k}) \,, \label{eq:PsiInfPsiZero}
\end{equation}
where
\begin{equation}
	\hat{\psi} (\ul{k}) \coloneqq
	\begin{pmatrix}
		k^2 / \omega_+ \\
		k_x - \im k_y q \\
		k_y + \im k_x q
	\end{pmatrix}
	\,, \qquad q (\ul{k}) \coloneqq \frac{f - \nu k^2}{\omega_+ (\ul{k})} \,. \label{eq:PsiHatAndQ}
\end{equation}
It is readily verified that they satisfy \eqref{eq:SectionCondition}. We have
\begin{equation}
	\begin{aligned}
		\hat{\psi} (\ul{k}) &=
		\begin{pmatrix}
			0 \\
			k_x + \im k_y \\
			k_y - \im k_x
		\end{pmatrix}
		+ o(|\ul{k}|^{0}) \,, \qquad (\ul{k} \to \infty) \,, \\
		\hat{\psi} (\ul{k}) &=
		\begin{pmatrix}
			0 \\
			k_x - \im k_y \\
			k_y + \im k_x
		\end{pmatrix}
		+ o(|\ul{k}|^2) \,, \qquad (\ul{k} \to 0) \,,
	\end{aligned}
\end{equation}
because $ q(\infty) = -1 $ and $ q(0) = 1 $. Hence, \eqref{eq:PsiInfPsiZero} extend smoothly to $ \ul{k} \in S^2 \setminus\{ \zeta \} $ with $ \zeta = \infty, 0 $, respectively. Indeed, and contrary to appearance, $ \hat{\psi}^\infty $ is smooth at $ \ul{k} = 0 $
\begin{equation}
	\hat{\psi}^\infty (0) = 
	\begin{pmatrix}
		0 \\
		1 \\
		\im
	\end{pmatrix} \,.
\end{equation}
Likewise,
\begin{equation}
	\hat{\psi}^0 (\infty) = 
	\begin{pmatrix}
		0 \\
		1 \\
		-\im
	\end{pmatrix} \,.
\end{equation}
We note that the two patches cover $ S^2 $ and that the transition function $t$,
\begin{equation}
	\hat{\psi}^\infty = t \hat{\psi}^0 \,, \qquad t = \frac{k_x + \im k_y}{k_x - \im k_y} \,,
\end{equation}
has winding $2$, coinciding with $ C_+ $ as expected.

\subsection{Scattering theory of shallow water waves} \label{subsec:ScattTheo}

This section is placed in the context where physical space is the upper half-plane $ \bbR^2_+ $, cf.\,Eq.\,\eqref{eq:HalfPlane}. The scattering of waves at the boundary $ y= 0 $ matters in various ways \cite{GJT21}. In this section, we review how solutions of the edge eigenvalue problem are found as \textit{scattering states}, namely as linear combinations of incoming, outgoing and evanescent \enquote{bulk} waves. At fixed boundary conditions, longitudinal momentum $k_x$, and incoming transverse momentum $ \kappa $, the (unique) scattering solution induces a \textit{scattering amplitude} $ S(k_x, \kappa) $, mapping incoming to outgoing waves. Eigenvalues $ \omega (k_x) $ in the discrete spectrum of $ H^\# (k_x) $ are found as poles of $S$ when analytically continued to the upper half-plane in $ \kappa \in \bbC $. The corresponding \textit{bound states} arise as outgoing (evanescent) waves without incoming counterpart. As $ k_x $ varies, the bound states are viewed as edge states of dispersion $ k_x \mapsto \omega (k_x) $.
\vspace{1\baselineskip}

Let a boundary condition $A$, cf.\,Def.\,\ref{def:OperativeBC}, be given. Scattering states are obtained as normal modes \eqref{eq:PsiTilde} of the formal differential operator $ H(k_x) $, cf.\,\eqref{eq:EdgeHamPrecursor}, that are asymptotic to incoming and outgoing plane waves. In the present context, the transverse momentum $ k_y $ is no longer a good quantum number. However, $ k_y $ remains a well-defined property of the asymptotic states $ \psi_{\op{in}} $ and $ \psi_{\op{out}} $, as they are of the form \eqref{eq:NormalModes}, and hence determined by their amplitude in $ E_{+,\ul{k}} $. We thus retain a parameter $ \kappa > 0 $ that describes the momenta $ - \kappa $ and $ + \kappa $ of the incoming and outgoing parts, $ \psi_{\op{in}} $ and $ \psi_{\op{out}} $, respectively. They are related by elastic reflection,
\begin{equation}
	s : (k_x, - \kappa) \longmapsto (k_x, \kappa) \,,
\end{equation}
because they then share the same frequency, $ \omega_+ (k_x, -\kappa) = \omega_+ (k_x, \kappa) \equiv \omega $, cf.\,\eqref{eq:Omega+}.

Moreover, there are two more solutions of $ \omega_+ (k_x, k_y) = \omega $, to be found in the analytic continuation of $ \omega_+ $. They are $ k_y = \pm \kappa_{\op{ev}} $, with $ \kappa_{\op{ev}} $ in the complex upper half-plane, and are related to $ \kappa > 0 $ by
\begin{equation}
	\kappa_{\op{ev}} (k_x, \kappa) = \im \sqrt{\kappa^2 + 2 k_x^2 + \frac{1 - 2 \nu f}{\nu^2}} \ \in \im \mathbb{R}_{+} \,. \label{eq:KappaEv}
\end{equation}
The two solutions are purely imaginary, and correspond to evanescent and divergent waves at $ y \to \infty $, respectively. In particular, $ \psi_{\op{ev}} $ participates in the scattering state $ \psi_s $ along with $ \psi_{\op{in}} $ and $ \psi_{\op{out}} $.

We summarize:
\begin{definition}[\textit{Scattering state}]
	\label{def:ScattState}
	For $ k_x \in \mathbb{R} $ and $ \kappa > 0 $, a \emph{scattering state} is a solution $ \psi_s = \tilde{\psi}_s (y; k_x, \kappa) e^{\im (k_x x - \omega t)} $, cf.\,\eqref{eq:PsiTilde}, of $ H (k_x) \tilde{\psi}_s (y; k_x, \kappa) = \omega(k_x, \kappa) \tilde{\psi}_s (y; k_x, \kappa)$ and of the form
	\begin{equation} \label{eq:ScattState}
		\psi_s \coloneqq \psi_{\op{in}} + \psi_{\op{out}} + \psi_{\op{ev}} \,,
	\end{equation}
	that moreover satisfies the boundary conditions. (The solution exists for every self-adjoint boundary condition $A$, and is unique up to multiples.)
\end{definition}
The scattering state \eqref{eq:ScattState} defines a \textit{scattering map}, acting between fibers $ E_{+, \ul{k}} = \op{ran} P_+ (\ul{k}) $ of the eigenbundle $ E_+ $ in Eq.\,\eqref{eq:BandBundle}, associated to the band $ \sigma = + \,$:
\begin{align}
	\mathscr{S} : E_{+,(k_x, -\kappa)} &\longrightarrow E_{+,(k_x, \kappa)} \,, \qquad (\kappa > 0) \nonumber \\
	\psi_{\op{in}} &\longmapsto \psi_{\op{out}} \label{eq:ScattMap} \,.
\end{align}
Schematically, the following diagram is commutative:
\begin{equation}
	\begin{tikzcd}
		E_+ \arrow{r}{\mathscr{S}} \arrow{d} & E_+ \arrow{d} \\
		\bbR^2 \arrow{r}{s} & \bbR^2 
	\end{tikzcd} \,. \label{eq:CommutingDiagram}
\end{equation}
Underlying the definition of $ \mathscr{S} $ is the identification of the amplitudes in $E_+$ with the waves seen in \eqref{eq:ScattState}.

In view of the fact that $ E_+ $ is nontrivial, sections will have to be local, cf.\,Sect.\,\ref{subsec:BulkDetails}. Let $ U_{\op{in}}, U_{\op{out}} \subset \bbR^2 $ be open subsets equipped with nowhere vanishing sections $ \hat{\psi}_{\op{in}}, \hat{\psi}_{\op{out}} $ of $ E_+ $. To be clear, $ U_{\op{in}} $ and $ U_{\op{out}} $ are not required to be subsets of the natural half-spaces (or hemispheres)
\begin{equation}
	S^2_{\op{out/in}} = \{ (k_x, \kappa) \ | \ \pm \kappa > 0 \} \,, \label{eq:Hemispheres}
\end{equation}
nor are they required to be related by the reflection $s$; even less $ \hat{\psi}_{\op{in}} $, $ \hat{\psi}_{\op{out}} $ to be related by \eqref{eq:ScattMap}. However, in case $ s(U_{\op{in}}) $ and $ U_{\op{out}} $ overlap, then a \textit{scattering amplitude} $ S = S (k_x, \kappa) \in \bbC \setminus \{0\} $ is defined for $ (k_x, \kappa) \in s(U_{\op{in}}) \cap U_{\op{out}} $ by
\begin{equation}
	\mathscr{S} \big( \hat{\psi}_{\op{in}} (k_x, -\kappa) \big) = S(k_x, \kappa) \hat{\psi}_{\op{out}} (k_x, \kappa) \,. \label{eq:SDef}
\end{equation}
An evanescent amplitude $T$ can likewise be defined, by considering an open set $ U_{\op{ev}} \subset \bbR \times \im \bbR $ of evanescent momenta $ (k_x, \kappa_{\op{ev}}) $. The role of the map $s$ is now taken by $ (k_x, \kappa) \mapsto (k_x, \kappa_{\op{ev}}) $, as in \eqref{eq:KappaEv}.

Bound states of the edge eigenvalue problem are now found by analytic continuation in $ \kappa \in \bbC $ of the scattering amplitude. Provided that the sections $ \hat{\psi}_\bullet $ above are analytic as well, $ \psi_{\op{out}} $ and $ \psi_{\op{in}} $ turn into exponentially decaying or growing solutions, as $ y \to + \infty $. More precisely,  $ \psi_{\op{in}} \coloneqq \hat{\psi} (k_x, - \kappa) e^{- \im \kappa y} $ is decaying for $ \op{Im} (\kappa) < 0 $, and the opposite holds for $ \psi_{\op{out}} $. The state \eqref{eq:ScattState} determines a bound state iff none of its terms is diverging in $ y \to + \infty $. That criterion is most useful for $ (k_x, \kappa) $ away from $ \infty $ and small $ \kappa \neq 0 $, thus exploring a complex neighborhood of the boundary of the bulk spectrum. There, the expression under the square root in \eqref{eq:KappaEv} remains bounded away from zero. Hence the criterion reduces to $ S(k_x, \kappa) = 0 $ and $ \op{Im} (\kappa) < 0 $. Equivalently, a bound state of energy $ \omega(k_x, \kappa) $ in the gap occurs iff
\begin{equation}
	S^{-1} (k_x, \kappa) = 0 \quad \op{and} \quad \op{Im} (\kappa) > 0 \,. \label{eq:BoundState}
\end{equation}
That amounts to the stance that \enquote{bound states are outgoing states with no incoming counterpart}.

\subsection{Topological content of the scattering amplitude and bulk-edge correspondence} \label{subsec:TopScatt}

In the previous section, we introduced the scattering amplitude based on overlapping open sets $ s(U_{\op{in}}), U_{\op{out}} $. Here, we will make more specific choices, such as $ U_{\op{out}} = U_{\op{in}} $.

First, however, we take $ U_{\op{out}} = S^2_{\op{out}} $ and $U_{\op{in}}$ to be the open set containing the closure of $ S^2_{\op{in}} $, which thus slightly protrudes into $ S^2_{\op{out}} $ and includes $\infty$. In particular, $ U_{\op{out}}, U_{\op{in}} $ form a cover of $ S^2 $. Both sets are depicted in Fig.\,\ref{fig:Levinson}, along with a loop $ \cal{C} $ running within their intersection and having the same orientation as $ \partial U_{ \op{out} } $.
\begin{figure}[hbt]
	\centering
	\includegraphics[width=0.9\linewidth]{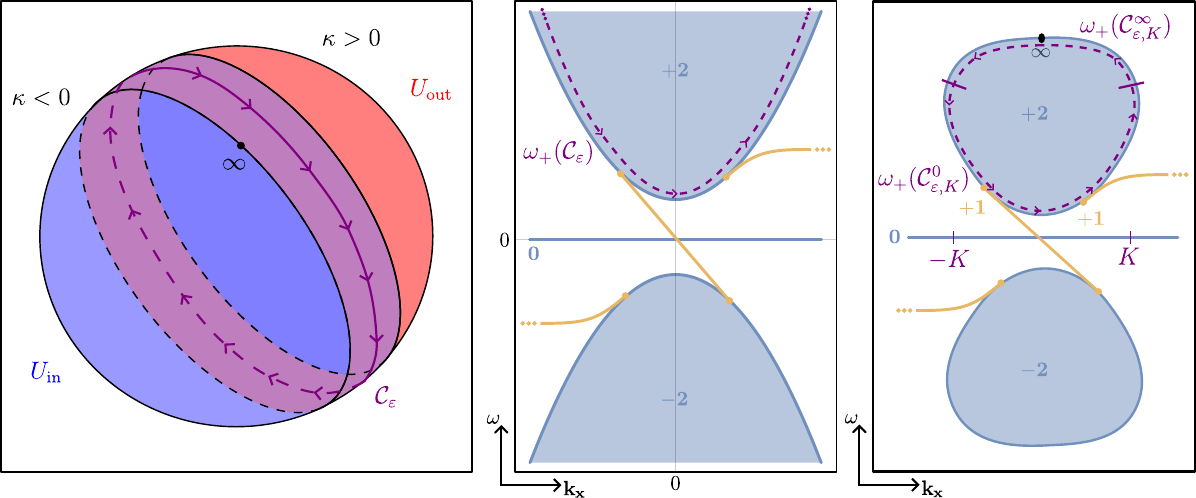}
	\caption{Left panel: compactified momentum space, patches $ U_{\op{in}} $ (blue) and $ U_{\op{out}} $ (red), contour $ \cal{C}_\epsilon $ (green) approximating the $ \{ k_y = 0 \} $ meridian as $ \epsilon \to 0 $. Central panel: image under $ \omega_+ $ of the loop $ \cal{C}_\epsilon $. Right panel: Finite and infinite portions of $ \cal{C}_\epsilon $, in the compactified $ (k_x, \omega) $-plane. \textcolor{red}{Modify figure: $ \varepsilon \mapsto \epsilon $!}}
	\label{fig:Levinson}
\end{figure}

Let $ \hat{\psi}_{ \op{in} } $ be a section on $ U_{ \op{in} } $. Then a section $ \hat{\psi}_{ \op{out} } $ is naturally defined on $ U_{ \op{out} } $ by \eqref{eq:ScattMap},
\begin{equation}
	\mathscr{S} (\hat{\psi}_{ \op{in} } (k_x, - \kappa)) = \hat{\psi}_{ \op{out} } (k_x, \kappa) \,, \qquad (\kappa > 0) \,.
\end{equation}
Since $ s(U_{\op{in}}) $ and $ U_{ \op{in} } $ overlap, we may take $ U_{ \op{in} } $ in the role of $ U_{ \op{out} } $ in \eqref{eq:SDef}, and obtain
\begin{equation}
	\mathscr{S} (\hat{\psi}_{ \op{in} } (k_x, - \kappa)) = S (k_x, \kappa) \hat{\psi}_{ \op{in} } (k_x, \kappa)
\end{equation}
for $ (k_x, \kappa) \in s(U_{\op{in}}) \cap U_{ \op{in} } = U_{ \op{out} } \cap U_{ \op{in} } $. Thus
\begin{equation}
	\hat{\psi}_{ \op{out} } (k_x, \kappa) = S(k_x, \kappa) \hat{\psi}_{ \op{in} } (k_x, \kappa) \label{eq:OperativeSDef}
\end{equation}
there, which exhibits the scattering amplitude as the transition function between local bundle charts of $ E_+ $. We conclude the following.
\begin{proposition}[\textit{Thm.\,2.7 of \cite{GJT21}, first item}]
	\label{prop:BulkScatteringCorrespondence}
	The scattering amplitude $S$, cf.\,\eqref{eq:SDef}, relates to the Chern number $ C_+ $ of $ E_+ $ by
	\begin{equation} \label{eq:SChernGM}
		\ds C_+ = \frac{1}{2 \pi \im} \oint_{\mathcal{C}} S^{-1} \mathrm{d} S =: W \,,
	\end{equation}
	where $ \cal{C} $ is any loop as described above.
\end{proposition}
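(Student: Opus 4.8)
The plan is to identify the winding integral $W$ in \eqref{eq:SChernGM} with the Chern number $C_+$ by recognizing $S$ as the transition function relating two local trivializations of the bundle $E_+$ over the two-element cover $\{U_{\op{out}}, U_{\op{in}}\}$ of $S^2$. Equation \eqref{eq:OperativeSDef} already establishes that $\hat\psi_{\op{out}} = S\,\hat\psi_{\op{in}}$ on the overlap $U_{\op{out}} \cap U_{\op{in}}$, so $S$ \emph{is} a transition cocycle for a line bundle. The first step is therefore to argue that this line bundle is precisely $E_+$ (or more precisely that it has the same first Chern class): the section $\hat\psi_{\op{out}}$ is genuinely a nowhere-vanishing section of $E_+$ over $U_{\op{out}} = S^2_{\op{out}}$, obtained by pushing $\hat\psi_{\op{in}}$ through the scattering map $\mathscr{S}$, which by construction (cf.\ \eqref{eq:ScattMap} and the commuting diagram \eqref{eq:CommutingDiagram}) is a bundle isomorphism covering the reflection $s$. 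Since $s$ is an orientation-reversing involution of $S^2$ fixing the equator, and $\mathscr{S}$ is linear and invertible fiberwise, $\hat\psi_{\op{out}}$ never vanishes on $U_{\op{out}}$; likewise $\hat\psi_{\op{in}}$ never vanishes on $U_{\op{in}}$. Two nowhere-vanishing local sections of $E_+$ over a cover of $S^2$, related by the $\bbC\setminus\{0\}$-valued function $S$, exhibit $E_+$ as the line bundle with Chern class computed from the winding of $S$ along the equatorial overlap loop.

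The second step is the standard computation: for a complex line bundle $L$ over $S^2$ trivialized over two sets $V_+, V_-$ with transition function $g$ on the overlap annulus (a neighborhood of the equator), one has
\begin{equation}
	\op{ch}(L) = \frac{1}{2\pi\im}\oint_{\cal C} g^{-1}\,\diff g\,,
\end{equation}
where $\cal C$ is any loop in the overlap homotopic to the equator, oriented as the boundary of $V_+$. This is nothing but the Čech--de Rham comparison (or: glue the two trivial connections with a partition of unity, the curvature integrates to the winding of $g$). Applying this with $V_+ = U_{\op{out}}$, $g = S$, and $\cal C$ the loop described before the Proposition (running in $U_{\op{out}}\cap U_{\op{in}}$ with the orientation of $\partial U_{\op{out}}$) gives exactly \eqref{eq:SChernGM}. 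One should check the orientation bookkeeping: the loop $\cal C$ must bound $U_{\op{out}}$ with the correct sign so that the formula produces $+C_+$ rather than $-C_+$; this is where a consistent choice (made already in the figure and the surrounding text) matters, and where comparison with the explicit transition function $t$ of winding $2$ from Sect.\ \ref{subsec:BulkDetails} serves as a sanity check.

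The third step is to confirm loop-independence: any two admissible loops $\cal C, \cal C'$ in the overlap are homotopic within the overlap (the overlap deformation-retracts onto the equator $S^1$), and $S^{-1}\diff S$ is closed on that overlap (it is $\diff \log S$ locally, $S$ being nowhere zero there), so $\oint_{\cal C} S^{-1}\diff S = \oint_{\cal C'} S^{-1}\diff S$ by Stokes. This justifies the phrase ``any loop as described above.''

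I expect the main obstacle to be \emph{not} the topological bookkeeping but the analytic input that makes $S$ well-defined, nonvanishing and smooth on the whole equatorial overlap — in particular at and near $\kappa \to 0$ (the equator itself, where incoming and outgoing momenta collide) and at $\infty$. One must know that the scattering state of Def.\ \ref{def:ScattState} exists, is unique up to scalars, and depends smoothly on $(k_x,\kappa)$ including in a neighborhood of $\kappa = 0$ and including the point at infinity, so that $\mathscr{S}$ is a genuine smooth bundle map over a neighborhood of the equator and $S \in \bbC\setminus\{0\}$ throughout. This is exactly the content flagged as reviewed in Sect.\ \ref{subsec:ScattTheo} (existence/uniqueness of $\psi_s$, good behavior of $\kappa_{\op{ev}}$ away from the square-root branch point); granting that, the proof is the short cohomological argument above. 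Since the statement is attributed to \cite{GJT21}, the cleanest write-up is to invoke that reference for the analytic regularity and give only the Čech--de Rham identification here.
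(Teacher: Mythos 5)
Your proposal is correct and follows essentially the same route as the paper: the text preceding the Proposition exhibits $S$ as the transition function between the trivializations $\hat{\psi}_{\op{in}}, \hat{\psi}_{\op{out}}$ of $E_+$ over the two-set cover $\{U_{\op{in}}, U_{\op{out}}\}$ of $S^2$ via \eqref{eq:OperativeSDef}, and then invokes the standard fact that the Chern number equals the winding of the transition function along a loop in the overlap, deferring the analytic regularity of $S$ to \cite{GJT21} exactly as you suggest. Your additional remarks on orientation bookkeeping, loop-independence, and the sanity check against the explicit transition function $t$ of winding $2$ are consistent with, and slightly more explicit than, the paper's treatment.
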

At the same time, the scattering matrix $ S $ is sensitive to branches of the discrete spectrum joining the upper band, as illustrated in this version of \textit{Levinson's theorem}, originally developed in \cite{GP13} and later adapted to unbounded Hamiltonians, particularly to the SWM, in \cite{GJT21}.
\begin{theorem}[\textit{Levinson's relative theorem}]
	\label{thm:Levinson}
	Let $ \epsilon > 0 $, and pick (finite) $ k^{(1)}_x < k^{(2)}_x $ that do not correspond to a merger of an edge mode branch with the bulk spectrum of $ H^\# $.	Then
	\begin{equation} \label{eq:LevFin}
		\lim_{\epsilon \rightarrow 0} \operatorname{arg} S (k_x, \epsilon) \Big\vert_{k^{(1)}_x}^{k^{(2)}_x} = 2 \pi n (k^{(1)}_x, k^{(2)}_x) \,,
	\end{equation}
	where $ \operatorname{arg} $ denotes a continuous argument and $ n(k^{(1)}_x,k^{(2)}_x) $ is the signed number of edge mode branches emerging ($+$) or disappearing ($-$) at the lower band limit between $ k^{(1)}_x $ and $ k^{(2)}_x $, as $ k_x $ increases.
\end{theorem}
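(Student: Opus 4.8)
The plan is to prove Theorem~\ref{thm:Levinson} by the argument principle, localizing the phase accumulation of $S$ to a neighborhood of each of the finitely many merger points lying in $(k_x^{(1)},k_x^{(2)})$. The analytic input is the following. Fix a non-exceptional boundary condition (cf.\,Rem.\,\ref{rem:LossSA}) and work in a complex neighborhood of $\kappa=0$; by \eqref{eq:KappaEv} the evanescent momentum $\kappa_{\op{ev}}$ stays bounded away from $0$ there, so the delicate data are the two propagating channels $\pm\kappa$ alone. From the construction of $\mathscr{S}$ in Sect.\,\ref{subsec:ScattTheo}, the map $(k_x,\kappa)\mapsto S(k_x,\kappa)$ is real-analytic, holomorphic in $\kappa$, and continues meromorphically across $\kappa=0$, its poles near $\kappa=0$ being simple and located exactly at the bound-state momenta of \eqref{eq:BoundState}, lying on $\im\bbR_{>0}$. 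Conservation of the transverse flux --- none carried to $y=+\infty$ by $\psi_{\op{ev}}$, none through $y=0$ by a self-adjoint boundary condition --- gives $|S(k_x,\kappa)|=1$ for real $\kappa\neq 0$ (with $\hat{\psi}_{\op{in}},\hat{\psi}_{\op{out}}$ flux-normalized near threshold, the normalization relevant for Levinson's theorem), while inverting $\mathscr{S}$ yields $S(k_x,-\kappa)=S(k_x,\kappa)^{-1}$; hence $S^{-1}(k_x,\cdot)$ has a simple zero at a bound-state momentum $\kappa_0(k_x)\in\im\bbR_{>0}$ and a simple pole at $-\kappa_0(k_x)$.

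\emph{Localization.} For fixed $\epsilon>0$ the function $k_x\mapsto\arg S(k_x,\epsilon)$ is continuous on $[k_x^{(1)},k_x^{(2)}]$, so its total variation is additive over a partition into pieces each containing at most one merger. On a closed piece $J$ free of mergers, $S(k_x,\kappa)\to S(k_x,0)$ uniformly as $\kappa\to 0^+$, and the threshold value $S(k_x,0)$ is there a fixed constant (no zero-energy resonance occurs away from exceptional $k_x$), so $\arg S(\cdot,\epsilon)\big|_J\to 0$ as $\epsilon\to 0$. It therefore suffices to show that across a single merger at $k_*\in(k_x^{(1)},k_x^{(2)})$ one has, for small $h>0$,
\[
	\lim_{\epsilon\to 0}\ \arg S(k_x,\epsilon)\Big\vert_{k_*-h}^{\,k_*+h}=\pm 2\pi,
\]
the sign being $+$ when the edge branch emerges from the band as $k_x$ increases and $-$ when it disappears.

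\emph{Local Blaschke model.} Since $\omega^{+}(k_x)=\omega_+(k_x,0)$ is the band bottom, an edge branch merging with it at $k_*$ is a zero $\kappa_0(k_x)$ of $S^{-1}(k_x,\cdot)$ with $\kappa_0(k_*)=0$. The implicit function theorem at the transversal merger (so $\partial_\kappa S^{-1}(k_*,0)\neq 0$ and the branch crosses $\{\op{Im}\kappa=0\}$ with nonzero speed --- the non-degeneracy inherent in a proper merger) gives $\kappa_0(k_x)=\im\alpha(k_x-k_*)+O((k_x-k_*)^2)$, $\alpha\in\bbR\setminus\{0\}$, with emergence corresponding to $\alpha>0$. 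Combining the simple zero at $\kappa_0$, the simple pole at $-\kappa_0$, and $|S^{-1}|=1$ on $\bbR$ yields near $(k_*,0)$ the factorization
\[
	S^{-1}(k_x,\kappa)=u(k_x,\kappa)\,\frac{\kappa-\kappa_0(k_x)}{\kappa+\kappa_0(k_x)},\qquad u\ \text{smooth and nowhere zero.}
\]
Then $\arg S(k_x,\epsilon)=-\arg u(k_x,\epsilon)-\arg\frac{\epsilon-\kappa_0(k_x)}{\epsilon+\kappa_0(k_x)}$; the first term varies by $O(h)$, while $\frac{\epsilon-\kappa_0}{\epsilon+\kappa_0}=e^{-2\im\arctan(\alpha(k_x-k_*)/\epsilon)}$, whose argument changes by $-2\pi\,\op{sgn}(\alpha)+o(1)$ as $k_x$ sweeps $[k_*-h,k_*+h]$ and $\epsilon\to 0$. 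Hence the merger contributes $+2\pi\,\op{sgn}(\alpha)$ to $\arg S$, i.e.\,$+2\pi$ for emergence and $-2\pi$ for disappearance; summing over all mergers in $(k_x^{(1)},k_x^{(2)})$ gives $2\pi(N_+-N_-)=2\pi\,n(k_x^{(1)},k_x^{(2)})$.

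\emph{Main obstacle.} The hard part is the threshold analysis underlying the first two steps: the analytic continuation of $S$ across $\kappa=0$, the unimodularity on $\bbR\setminus\{0\}$, the simple-pole structure at bound states, and --- most delicately --- the constancy, hence absence of net winding, of $S(k_x,0)$ between mergers, which is precisely what makes the right-hand side an integer multiple of $2\pi$ (rather than a spurious continuous drift). This is the Levinson-type analysis developed in \cite{GP13} and transplanted to the unbounded SWM in \cite{GJT21}, on which we rely; the transversality of a generic merger used in the Blaschke model must likewise be extracted from the explicit edge eigenvalue problem.
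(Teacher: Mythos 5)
The paper does not actually prove Theorem~\ref{thm:Levinson}: it is imported wholesale from the references (\enquote{originally developed in [GP13] and later adapted \dots to the SWM in [GJT21]}), so there is no in-paper proof to compare yours against. Judged on its own, your outline is the standard and correct route --- localize the phase accumulation to the finitely many mergers, model each one by a Blaschke factor $\frac{\kappa-\kappa_0(k_x)}{\kappa+\kappa_0(k_x)}$ with $\kappa_0(k_x)=\im\alpha(k_x-k_*)+O((k_x-k_*)^2)$, and read off a contribution $2\pi\,\op{sgn}\alpha$ per merger --- and it is essentially the argument of the cited sources. Two of the facts you list under \enquote{main obstacle} are in fact immediate from the paper's own Jost-function representation \eqref{eq:SWithG}, $S(k_x,\kappa)=-g(k_x,-\kappa)/g(k_x,\kappa)$: the reflection relation $S(k_x,-\kappa)=S(k_x,\kappa)^{-1}$ is a one-line consequence, and so is the constancy $S(k_x,0)=-1$ away from mergers (whenever $g(k_x,0)\neq 0$), which is what kills the drift on merger-free pieces and makes the answer an exact integer multiple of $2\pi$; note also that unimodularity of $S$ and flux normalization of the sections are not actually needed, since only $\arg S$ and the zero/pole structure of $g$ enter, and a change of section multiplies $S$ by $f(k_x,-\kappa)/f(k_x,\kappa)\to 1$ at threshold, so the statement is section-independent as posed.

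What remains genuinely deferred in your write-up --- and is the real analytic content carried by [GJT21] --- is (i) that $g(k_x,\cdot)$ extends analytically in $\kappa$ across $\kappa=0$ (which in this model follows from the explicit determinant formula \eqref{eq:gFromV} and the analyticity of the bulk sections \eqref{eq:PsiInfPsiZero}, the evanescent channel being harmless by \eqref{eq:KappaEv}), and (ii) that a generic merger is transversal, i.e.\ the threshold zero of $g$ is simple and $\alpha\neq 0$, with non-generic (tangential or higher-order) mergers excluded by the hypothesis of the theorem. Since the paper itself treats the theorem as an external input, deferring exactly these points to the references is acceptable, but a self-contained proof would have to supply them; as it stands your text is a correct proof sketch rather than a complete proof.
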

In simple terms, as we hover over the bottom of the positive band, the phase of $S$ jumps by $ + 2 \pi $ ($ - 2 \pi $) for each edge eigenvalue leaving (joining) the band. The theorem may be termed as a \textit{relative version} of Levinson's theorem \cite{RSIII}, since it compares the phases of the scattering matrix at threshold for two values of a parameter (here $ k_x $), rather than at two ends of an energy interval.

Eq.\,\eqref{eq:LevFin} accounts for the part of the winding \eqref{eq:SChernGM} that occurs because of proper mergers. More precisely, we consider a sequence of loops $ \cal{C}_\epsilon $ that tend to the \enquote{meridian} $ \{ (k_x, k_y) \ | \ k_y = 0 \} \cup \{ \infty \} $ in the limit $ \epsilon \to 0 $, cf.\,Fig.\,\ref{fig:Levinson}. Let
\begin{equation}
	\cal{C}_\epsilon = \cal{C}^0_{\epsilon,K} \cup \cal{C}^\infty_{\epsilon,K} \label{eq:ContourSplitting}
\end{equation}
be a partition into two arcs such that $ \cal{C}^0_{\epsilon,K} $ converges to the part $ |k_x| \leq K $ of the meridian, and $ \cal{C}^\infty_{\varepsilon,K} $ to the complementary part containing $ \{ \infty \} $. For any given boundary conditions, proper mergers do not accumulate at $ k_x \to \infty $. Thus
\begin{equation}
	\ds W_0 \coloneqq \lim_{\epsilon \rightarrow 0} \frac{1}{2 \pi \im} \int_{\mathcal{C}_{\epsilon,K}^{0}} S^{-1} \mathrm{d} S \,, \qquad W_\infty \coloneqq \lim_{\epsilon \rightarrow 0} \frac{1}{2 \pi \im} \int_{\mathcal{C}_{\epsilon,K}^{\infty}} S^{-1} \mathrm{d} S \,, \qquad W = W_0 + W_\infty \,, \label{eq:W0WInfty}
\end{equation}
are well-defined, are integers, and do not depend on $ K $, if sufficiently large.

Eq.\,\eqref{eq:LevFin} states
\begin{equation}
	W_0 = P \,.
\end{equation}
Recalling Eq.\,\eqref{eq:SChernGM} and $ W = W_0 + W_\infty $, we moreover conclude
\begin{equation}
	C_+ = W = W_0 + W_\infty = P + W_\infty \,. \label{eq:CPlusW0WInfty}
\end{equation}
Our Def.\,\ref{def:BEC} of bulk-edge correspondence, $ M = C_+ $, where $ M = I+P $ is the number of mergers, thus amounts to
\begin{equation}
	I = W_\infty \,.
\end{equation}
In other words, violations of BEC occur when the number of improper mergers $I$ differs from the winding of $ S $ at infinite momentum. Such events do occur.

\subsection{Details on self-adjoint boundary conditions and particle-hole symmetry thereof} \label{subsec:DetailsBCs}

We shall characterize the families DD, ND, DN, NN of self-adjoint boundary conditions (cf.\,Def.\,\ref{def:Families}) and their particle-hole symmetric subsets. Later, in Sect.\,\ref{sec:DetailedResults}, the main results of the paper will be formulated in the terms introduced here. The proofs of the statements in this section are found in App.\,\ref{app:SelfAdj}. \vspace{1\baselineskip}

Boundary conditions $ A $, cf.\,Def.\,\ref{def:OperativeBC}, are affine linear maps $ \bbR \ni k_x \mapsto A(k_x) \in \op{Mat}_{2 \times 6} (\bbC) $. As is proven in \cite{GJT21}, Lm.\,B.1-3, and as we shall review in App.\,\ref{app:SelfAdj}, self-adjointness of $ H^\# (k_x)$, see below Eq.\,\eqref{eq:EdgeHamPrecursor}, imposes various constraints on the entries of the $ 2 \times 6 $ matrix $A (k_x)$. For short, they are of two kinds, linear and non-linear. Quite generally \cite{KS99}, self-adjoint boundary conditions correspond to subspaces $V$ of boundary values $ \Psi = (\psi, \psi') $, cf.\,\eqref{eq:BoundaryCondition}, that are maximally isotropic w.r.t.\,some matrix $ \hat{\Omega} $, i.e.\,$ \hat{\Omega} V = V^\perp $.

In the present context, $ V \subset \bbC^6 $, but the rank of $ \hat{\Omega} $ is not maximal, i.e., not $6$, but $ \op{rk} \hat{\Omega} = 4 $. This comes about because the deficiency indices of the half-line problem are $ \langle 2,2 \rangle $, rather than $ \langle 3,3 \rangle $ as one might instead suspect in view of $3$ components of the field and the order 2 of the differential operator \eqref{eq:HamGen}. Note, indeed, that this differential operator is of first order in $ \eta $, but of second in $ u,v $. That accounts for the fact that the $ 2 \times 6 $ matrix $ A $ can be expressed by a $ 2 \times 4 $ matrix $ \ul{A} $, which solves the linear constraint. The non-linear constraint, now on $ \ul{A} $, comes from the isotropy.  

The $ 2 \times 4 $ matrix $ \ul{A} $ is related to the $ 2 \times 6 $ $A$ by
\begin{equation}
	A = \ul{A} M \,, \label{eq:AulA}
\end{equation}
where
\begin{equation}
	M =
	\begin{pmatrix}
		1 & 0 & 0 & 0 & - \nu & 0 \\
		0 & 1 & 0 & 0 & 0 & 0 \\
		0 & 0 & 1 & 0 & 0 & 0 \\
		0 & 0 & 0 & 0 & 0 & \nu
	\end{pmatrix} \,, \label{eq:MatM}
\end{equation}
as shall be justified in App.\,\ref{app:SelfAdj}. In this new notation, the boundary condition \eqref{eq:BoundaryCondition} is written as
\begin{equation} 
	\ul{A} (k_x) M \Psi = 0 \,. \label{eq:NewBC}
\end{equation}
By locality of $ A $, cf.\,\eqref{eq:LocalBC}, and \eqref{eq:AulA}, $ \ul{A} $ must take the form
\begin{gather} 
	\ul{A} (k_x) = \ul{A}_{c} + k_x \ul{A}_k \,, \nonumber\\
	\ul{A}_{c} = (a_1, \ a_1', \ a_2', \ a_2), \hspace{10pt} \ul{A}_k = (0, \ b_1, \ b_2, \ 0) \, \label{eq:NewA}
\end{gather}
and we shall also write it as
\begin{equation}
	\ul{A} (k_x) = \left( A_1 (k_x), A_2 (k_x) \right), \label{eq:JuxtaposeA1A2}
\end{equation}
with $ (\cdot, \cdot) $ denoting horizontal juxtaposition and $ A_1, A_2 $ square $ 2 \times 2 $ matrices given by 
\begin{gather} 
	A_i (k_x) = A^0_i + k_x B_i \,, \nonumber \\
	A^0_1 = (a_1, \ a_1') \,, \hspace{10pt} A^0_2 = (a_2', \ a_2) \,, \hspace{10pt} B_1 = (0, \ b_1) \,, \hspace{10pt} B_2 = (b_2, \ 0) \,, \label{eq:A1A2}
\end{gather}
$ a_i, a_i', b_i \in \bbC^2 $, $ (i = 1,2) $. 

The specific structure \eqref{eq:NewA} of the $ 2 \times 4 $ matrix
\begin{equation}
	\ul{A} (k_x) = (a_1, \ a_1' + k_x b_1, \ a_2' + k_x b_2, \ a_2) \label{eq:ExplicitUlA}
\end{equation}
determines the general form of the original $ 2 \times 6 $ boundary condition $ A $ by
\begin{equation}
	A (k_x) = \ul{A} (k_x) M = (a_1, \ a_1' + k_x b_1, \ a_2' + k_x b_2, \ 0, \ - \nu a_1, \ \nu a_2) \,, \label{eq:GeneralTwoBySix}
\end{equation} 
cf.\,\eqref{eq:MatM}. Note incidentally that the vanishing entry in \eqref{eq:GeneralTwoBySix} states that no boundary condition involves $ \partial_y \eta $, as was stated following Def.\,\ref{def:Families}, and in line with the above comment on deficiency indices.

By \eqref{eq:LocalBC}, the local boundary condition $ A(k_x) $ may moreover be written as
\begin{equation}
	A (k_x) = (A_0 + \im k_x A_x, A_y) \,. \label{eq:LocalBCAgain}
\end{equation}
Comparing Eqs.\,(\ref{eq:GeneralTwoBySix}, \ref{eq:LocalBCAgain}) then yields
\begin{equation}
	A_0 = (a_1, \ a_1', \ a_2') \,, \qquad A_x = (0, \ b_1, \ b_2) \,, \qquad A_y = (0, \ - \nu a_1, \ \nu a_2) \,. \label{eq:ExplicitA0AxAy}
\end{equation}

The definition of families, Def.\,\ref{def:Families}, is a distinction of cases. It affects the matrix $ A_y $ only, and thus the vectors $ a_i $, $ (i=1,2) $, but not $ a_i', b_i $. In the next lemma, we state the implications for $ a_1, a_2 $. In doing so, we separate the case ND/DN into two, depending on whether $ \op{rk} A_y \neq 0 $ is ensured by $ a_1 \neq 0 $ or $ a_2 \neq 0 $.
\begin{lemma} \label{lem:Dictionary}
	Let $ A(k_x) = (A_0 + \im k_x A_x , A_y) $ be a $ 2 \times 6 $ matrix such that $ A = \ul{A} M $ for some $ \ul{A} (k_x) $ as in \eqref{eq:ExplicitUlA}. If the associated boundary condition
	\begin{equation}
		(A_0 + A_x \partial_x + A_y \partial_y) \psi = 0\,,
	\end{equation} 
	cf.\,Eq.\,\eqref{eq:LocalBC}, belongs to family
	\begin{itemize}[itemindent=1em]
		\item[DD:] then $ a_1 = a_2 = 0 $;
		
		\item[ND:] then $ a_1 \neq 0 $ and $ a_2 = \alpha a_1 $ for some $ \alpha \in \bbC $;
		
		\item[DN:] then $ a_2 \neq 0 $ and $ a_1 = \alpha a_2 $ for some $ \alpha \in \bbC $;
		
		\item[NN:] then $ a_1, a_2 $ are linearly independent.
	\end{itemize}
\end{lemma}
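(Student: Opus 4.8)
The plan is to unwind the definitions: Lemma~\ref{lem:Dictionary} is essentially a bookkeeping statement, translating the rank conditions on $A_y$ from Def.~\ref{def:Families} into conditions on the vectors $a_1,a_2\in\bbC^2$ appearing in the parametrization \eqref{eq:ExplicitA0AxAy}. The starting point is the identity $A_y=(0,\ -\nu a_1,\ \nu a_2)$, so that $A_y$ is the $2\times 3$ matrix whose columns are $0$, $-\nu a_1$, $\nu a_2$. Since $\nu>0$, one has $\op{rk}A_y=\op{rk}(a_1\,|\,a_2)$, the rank of the $2\times 2$ matrix with columns $a_1,a_2$. So the whole statement reduces to the elementary linear-algebra dictionary between the rank of $(a_1\,|\,a_2)$ and the linear-dependence pattern of $\{a_1,a_2\}$ in $\bbC^2$.

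The key steps, in order, would be: (1) Record $A_y=(0,-\nu a_1,\nu a_2)$ from \eqref{eq:ExplicitA0AxAy} and note $\op{rk}A_y=\dim\op{span}\{a_1,a_2\}$ since $\nu\neq 0$. (2) Case DD, $\op{rk}A_y=0$: this forces $a_1=a_2=0$ directly. (3) Case NN, $\op{rk}A_y=2$: since $a_1,a_2\in\bbC^2$, rank $2$ means $\{a_1,a_2\}$ is a basis, i.e.\,linearly independent. (4) Case rank $1$: then exactly one of the following holds — $a_1\neq 0$ (and $a_2$ is a scalar multiple of $a_1$, since $\op{span}\{a_1,a_2\}$ is one-dimensional and contains $a_1$), or $a_2\neq 0$ (and $a_1$ a scalar multiple of $a_2$). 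These two possibilities are precisely the subdivision into ND (when $a_1\neq 0$) and DN (when $a_2\neq 0$), with the understanding — flagged in the paragraph preceding the lemma — that the two families overlap exactly when \emph{both} $a_1\neq 0$ and $a_2\neq 0$, i.e.\,when $a_2=\alpha a_1$ with $\alpha\neq 0$. One should remark that in the ND case $\alpha$ can be $0$ (giving $a_2=0$), whereas in the overlap it is nonzero; this is the sense in which ``ND/DN'' in Def.~\ref{def:Families} splits into two overlapping families.

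There is essentially no obstacle here; the only thing requiring a word of care is the compatibility of the statement with the equivalence relation \eqref{eq:Cosets}: multiplying $A$ on the left by $G(k_x)\in\op{GL}(2,\bbC)$ replaces $(a_1\,|\,a_2)$ by $G(a_1\,|\,a_2)$ and hence preserves its rank, so the four families (and the ND/DN subdivision) are well-defined on orbits $[A]$. One should also double-check that the representation $A=\ul A M$ with $\ul A$ of the special form \eqref{eq:ExplicitUlA} — which is what makes the columns of $A_y$ equal to $0,-\nu a_1,\nu a_2$ rather than something more general — is available, but this is exactly the hypothesis of the lemma (and is established in App.~\ref{app:SelfAdj}). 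So the proof is a short computation plus the rank-versus-linear-dependence trichotomy for two vectors in a $2$-dimensional space.

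\begin{proof}
By hypothesis $A=\ul A M$ with $\ul A(k_x)$ of the form \eqref{eq:ExplicitUlA}, so by \eqref{eq:GeneralTwoBySix}--\eqref{eq:ExplicitA0AxAy} we have $A_y=(0,\ -\nu a_1,\ \nu a_2)$. Since $\nu>0$, the columns $0,-\nu a_1,\nu a_2$ span the same subspace of $\bbC^2$ as $a_1,a_2$, hence
\begin{equation}
	\op{rk}A_y=\dim\op{span}_{\bbC}\{a_1,a_2\}.
\end{equation}
If the boundary condition is of family DD then $\op{rk}A_y=0$, so $\op{span}\{a_1,a_2\}=\{0\}$, i.e.\ $a_1=a_2=0$. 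If it is of family NN then $\op{rk}A_y=2$; since $a_1,a_2\in\bbC^2$, this forces $\{a_1,a_2\}$ to be a basis of $\bbC^2$, in particular linearly independent. Finally, if $\op{rk}A_y=1$ then $V\coloneqq\op{span}\{a_1,a_2\}$ is one-dimensional, and $a_1,a_2\in V$ cannot both vanish. If $a_1\neq 0$ then $a_1$ spans $V$, so $a_2=\alpha a_1$ for a unique $\alpha\in\bbC$, which is family ND; if instead $a_2\neq 0$ then $a_2$ spans $V$, so $a_1=\alpha a_2$ for a unique $\alpha\in\bbC$, which is family DN. (When both $a_1\neq 0$ and $a_2\neq 0$ the two descriptions coincide, the scalar being nonzero; this is the overlap of ND and DN.) The assignment is independent of the representative in the orbit $[A]$, since left multiplication of $A$ by $G\in\op{GL}(2,\bbC)$ replaces $(a_1\,|\,a_2)$ by $G\,(a_1\,|\,a_2)$, preserving its rank and linear-dependence pattern.
\end{proof}
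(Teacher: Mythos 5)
Your proof is correct and follows essentially the same route as the paper's: read off $A_y=(0,\,-\nu a_1,\,\nu a_2)$ from \eqref{eq:ExplicitA0AxAy}, identify $\op{rk}A_y$ with the rank of $(a_1\,|\,a_2)$, and apply the rank trichotomy for two vectors in $\bbC^2$. The only cosmetic difference is that the paper's appendix proof additionally spells out, via an explicit gauge matrix $G$, that the ND/DN split corresponds to whether a $\partial_y u$ or a $\partial_y v$ term survives row reduction — i.e., it justifies the labels rather than taking $a_1\neq 0$ vs.\ $a_2\neq 0$ as the definition, as you (consistently with the paragraph preceding the lemma) do.
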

The non-linear constraint on $ \ul{A} $ remains to be solved. It says:
\begin{equation} \label{eq:CondSelfAdjKxNew}
	\ul{A} (k_x) \Omega \ul{A}^* (k_x) = 0 \,, \hspace{12pt}
	\Omega =
	\begin{pmatrix}
		0 & \id_2 \\
		\id_2 & 0
	\end{pmatrix} \,,
\end{equation}
almost everywhere in $ k_x $, see App.\ref{app:SelfAdj}. Equivalently, using \eqref{eq:JuxtaposeA1A2},
\begin{equation} \label{eq:CondOnA}
	A_1 (k_x) A_2^* (k_x) + A_2 (k_x) A_1^* (k_x) = 0
\end{equation}
a.e.\,in $ k_x $, which is in turn to say
\begin{equation} \label{eq:CondSelfAdj2New}
	\begin{gathered}
		A^0_1 A^{0*}_2 + A^0_2 A^{0*}_1 = 0 \,, \\
		B_1 A^{0*}_2 + A^0_1 B_{2}^* + A^0_2 B_{1}^* + B_2 A^{0*}_1 = 0 \,,
	\end{gathered}
\end{equation}
cf.\,\eqref{eq:A1A2}. Those conditions translate into others for $ a_i', b_i $, $ (i=1,2) $. Remarkably, they amount to real linear equations. The next proposition states them, and in fact separately for the different families of BCs, cf.\,Lm.\,\ref{lem:Dictionary}.
\begin{proposition}[\textit{Families of boundary conditions}] \label{prop:BCs}
	Let $ \ul{A}: k_x \to \op{Mat}_{2 \times 4} (\bbC) $ be as in \eqref{eq:NewA}. Depending on the cases DD through NN, see Lm.\,\ref{lem:Dictionary}, Eq.\,\eqref{eq:CondSelfAdj2New} amounts to
	\begin{enumerate}[itemindent=1em]
		\item[DD:] $ a_1', \ a_2', \ b_1, \ b_2 \in \bbC^2 $ (arbitrary);
		
		\item[ND:] For some $ \lambda, \lambda' \in \bbR $,
		\begin{equation}
			\begin{array}{lcl}
				\bar{\alpha} a_1' + a_2' & = & \im \lambda' a_1  \\
				\bar{\alpha} b_1 + b_2 & = & \im \lambda a_1  \,.
			\end{array} \label{eq:ConditionsND}
		\end{equation}
		
		\item[DN:] Same as ND, but with $ a_1 $ and $ a_2 $ interchanged.
		
		\item[NN:] For some $ \mu, \mu' \in \bbC $ and $ \lambda_1, \lambda_2, \lambda_1', \lambda_2' \in \bbR $,
		\begin{equation}
			\begin{array}{lcll}
				a_1' & = & \mu' a_1 + \im \lambda_1' a_2 \\
				a_2' & = & - \bar{\mu}' a_2 + \im \lambda_2' a_1 
			\end{array} \label{eq:ConditionsNN1}
		\end{equation}
		and
		\begin{equation}
			\begin{array}{lcl}
				b_1 & = & \mu a_1 + \im \lambda_1 a_2 \\
				b_2 & = & - \bar{\mu} a_2 + \im \lambda_2 a_1 \,.
			\end{array} \label{eq:ConditionsNN2}
		\end{equation}
	\end{enumerate}
\end{proposition}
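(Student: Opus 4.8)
\emph{Proof proposal.} The plan is to expand the two matrix identities in \eqref{eq:CondSelfAdj2New} into identities among rank-one (dyadic) terms built from the vectors $a_i, a_i', b_i \in \bbC^2$, and then to dispatch the four families using two elementary linear-algebra facts together with Lm.~\ref{lem:Dictionary}. For the expansion I substitute the block shapes $A^0_1 = (a_1,a_1')$, $A^0_2 = (a_2',a_2)$, $B_1 = (0,b_1)$, $B_2 = (b_2,0)$ from \eqref{eq:A1A2} and use that, for $2\times2$ matrices, $(c_1,c_2)\begin{pmatrix} r_1 \\ r_2 \end{pmatrix} = c_1 r_1 + c_2 r_2$. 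A direct computation (only bookkeeping of conjugate transposes) turns the first equation of \eqref{eq:CondSelfAdj2New} into
\begin{equation*}
	a_1 (a_2')^* + a_1'(a_2)^* + a_2'(a_1)^* + a_2 (a_1')^* = 0 \,,
\end{equation*}
that is $X + X^* = 0$ with $X := a_1(a_2')^* + a_1'(a_2)^*$, and turns the second into $Y + Y^* = 0$ with $Y := a_1(b_2)^* + b_1(a_2)^*$, i.e.\ $a_1(b_2)^* + b_1(a_2)^* + b_2(a_1)^* + a_2(b_1)^* = 0$. The two have the same shape, with $(a_1',a_2')$ in the first replaced by $(b_1,b_2)$ in the second, so it suffices to analyse one of them in each family.

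Next I would record two lemmas. \emph{(i)} For fixed $a\neq 0$ in $\bbC^2$ and $c\in\bbC^2$: $ac^* + ca^* = 0$ iff $c = \im\lambda\,a$ for some $\lambda\in\bbR$ — applying the Hermitian matrix $ac^*+ca^*$ to $a$ forces $c$ to be proportional to $a$, say $c=\beta a$, and then $(\beta+\bar\beta)\,aa^*=0$ forces $\beta\in\im\bbR$. \emph{(ii)} If $a_1,a_2$ are linearly independent in $\bbC^2$, the four matrices $a_i a_j^*$ form a basis of $\op{Mat}_{2}(\bbC)$, being the image of the standard matrix units $E_{ij}$ under the linear isomorphism $T\mapsto P\,T\,P^*$ with $P=(a_1,a_2)$ the (invertible) matrix having columns $a_1,a_2$. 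Hence, writing $a_1' = p a_1 + q a_2$ and $a_2' = r a_1 + s a_2$ and expanding $X+X^*$ in that basis, vanishing of the four coefficients reads $r,q\in\im\bbR$ and $p+\bar s = 0$; setting $\mu':=p$, $\im\lambda_1':=q$, $\im\lambda_2':=r$ (so $s=-\bar\mu'$) reproduces \eqref{eq:ConditionsNN1}, and the identical computation for $Y+Y^*=0$ gives \eqref{eq:ConditionsNN2}.

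Finally I would run through the families using Lm.~\ref{lem:Dictionary}. In DD, $a_1=a_2=0$ makes both equations hold identically, so $a_1',a_2',b_1,b_2$ are unconstrained. In ND, $a_1\neq 0$ and $a_2=\alpha a_1$, so $a_1'(a_2)^* = \bar\alpha\,a_1'a_1^*$ and $a_2(a_1')^* = \alpha\,a_1(a_1')^*$, and the first equation regroups as $a_1 c^* + c a_1^* = 0$ with $c := a_2' + \bar\alpha a_1'$; Lemma \emph{(i)} then gives $\bar\alpha a_1' + a_2' = \im\lambda' a_1$, and the second equation likewise gives $\bar\alpha b_1 + b_2 = \im\lambda\,a_1$, which is \eqref{eq:ConditionsND}. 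DN follows by interchanging $a_1\leftrightarrow a_2$ (equivalently $A^0_1\leftrightarrow A^0_2$) throughout, so it needs no separate argument. In NN, $a_1,a_2$ form a basis and Lemma \emph{(ii)} applies verbatim, yielding \eqref{eq:ConditionsNN1}--\eqref{eq:ConditionsNN2}.

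The only genuine computation is the expansion in the first paragraph, and the single point requiring care is the ND regrouping: one must check that the contributions of $a_1'$ enter only through the combination $a_2' + \bar\alpha a_1'$, with no independent constraint on $a_1'$ alone — which is precisely where the collinearity $a_2 \parallel a_1$ furnished by Lm.~\ref{lem:Dictionary} is used (and symmetrically for DN). Everything else is routine manipulation of adjoints and of real and imaginary parts.
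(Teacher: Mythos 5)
Your proposal is correct and follows essentially the same route as the paper: expand the two conditions of \eqref{eq:CondSelfAdj2New} into sums of dyads $a_i(a_j')^*$, etc., and then invoke exactly the two facts the paper isolates as Claim~\ref{cl:Dirac} — that $ac^*+ca^*=0$ with $a\neq 0$ forces $c=\im\lambda a$, and that the four dyads built from a linearly independent pair are themselves linearly independent — with the same regrouping $c=a_2'+\bar\alpha a_1'$ in the ND case and the same basis expansion in the NN case. The only (cosmetic) difference is that your derivation of the proportionality $c\parallel a$ by applying the Hermitian matrix to $a$ is spelled out slightly more fully than the paper's version of Claim~\ref{cl:Dirac}(ii).
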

We shall also provide an explicit characterization of the particle-hole symmetric (cf.\,Def.\,\ref{def:PHS}) subsets of those families.
\begin{proposition}[\textit{Particle-hole symmetric families of boundary conditions}] \label{prop:PHSBCs}
	The map $ \ul{A}: k_x \to \op{Mat}_{2 \times 4} (\bbC) $ of \eqref{eq:NewA} encodes a particle-hole symmetric and self-adjoint boundary condition iff one of the following is true.
	\begin{enumerate}[itemindent=1em]
		\item[$ \op{DD}_{\Xi} $:] $ a_1 = a_2 = 0 $.
		
		\item[$ \op{ND}_{\Xi} $:] $ a_1 \neq 0 $, $ a_2 = \alpha a_1 $ for some $ \alpha \in \bbR $ and
		\begin{equation}
			\begin{array}{lcl}
				\alpha a_1' + a_2' & = & 0 \\
				\alpha b_1 + b_2 & = & \im \lambda a_1 \,,
			\end{array}
		\end{equation}
		for some $ \lambda \in \bbR $.
		
		\item[$ \op{DN}_{\Xi} $:] Same as $ \op{ND}_{\Xi} $, but with $ a_1 $ and $ a_2 $ interchanged.
		
		\item[$ \op{NN}_{\Xi} $:] $ a_1, a_2 $ linearly independent and
		\begin{equation}
			\begin{array}{lcl}
				a_1' & = & \mu' a_1 \\
				a_2' & = & - \mu' a_2 \\
				b_1 & = & \mu a_1 + \im \lambda_1 a_2 \\
				b_2 & = & \mu a_2 + \im \lambda_2 a_1 \,,
			\end{array} \label{eq:NNPHS}
		\end{equation}
		for some $ \mu' \in \bbR $, $ \mu \in \im \bbR $ and $ \lambda_1, \lambda_2 \in \bbR $.
	\end{enumerate}
\end{proposition}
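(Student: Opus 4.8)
The plan is to intersect the parametrization of self-adjoint boundary conditions from Prop.~\ref{prop:BCs} with the particle-hole constraint, treating the four families separately. Recall from \eqref{eq:Cosets} and \eqref{eq:USymmetry} that $H^\#$ is particle-hole symmetric exactly when $\ul{A}(k_x)$ and $\overline{\ul{A}(-k_x)}$ determine the same boundary condition pointwise, i.e.\ when there is $G(k_x)\in\op{GL}(2,\bbC)$ with
\begin{equation}
	\overline{\ul{A}(-k_x)}=G(k_x)\,\ul{A}(k_x)\qquad(\text{a.e. }k_x). \label{eq:PHSintertwine}
\end{equation}
Writing $\ul{A}(k_x)=(a_1,\ a_1'+k_xb_1,\ a_2'+k_xb_2,\ a_2)$ as in \eqref{eq:ExplicitUlA}, the outer columns $a_1,a_2$ are $k_x$-independent, so \eqref{eq:PHSintertwine} splits into the pair $\overline{a_1}=G(k_x)a_1$, $\overline{a_2}=G(k_x)a_2$ together with the matching of the two inner columns.

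For family DD one has $a_1=a_2=0$, and the rank condition \eqref{eq:RankCondition} makes the inner block $N(k_x):=(\,a_1'+k_xb_1\mid a_2'+k_xb_2\,)$ invertible for a.e.\ $k_x$; then $G(k_x):=\overline{N(-k_x)}\,N(k_x)^{-1}$ solves \eqref{eq:PHSintertwine} with no restriction, so $\op{DD}_\Xi=\op{DD}$. For family NN, $\{a_1,a_2\}$ is a basis of $\bbC^2$, so $\overline{a_i}=G(k_x)a_i$ forces $G(k_x)$ to be a \emph{constant} matrix $G$; normalizing $a_1,a_2$ to the standard basis by the gauge freedom of Rem.~\ref{rem:Orbits} makes $G=\id_2$, and \eqref{eq:PHSintertwine} reduces to ``$a_1',a_2'$ real and $b_1,b_2$ purely imaginary''. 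Substituting the self-adjointness relations \eqref{eq:ConditionsNN1}, \eqref{eq:ConditionsNN2}, which in this gauge read $a_1'=(\mu',\im\lambda_1')$, $a_2'=(\im\lambda_2',-\overline{\mu}')$, $b_1=(\mu,\im\lambda_1)$, $b_2=(\im\lambda_2,-\overline{\mu})$, these conditions are equivalent to $\lambda_1'=\lambda_2'=0$, $\mu'\in\bbR$, $\mu\in\im\bbR$, which is precisely \eqref{eq:NNPHS}; conversely any such boundary condition satisfies \eqref{eq:PHSintertwine} with the constant $G$ determined by $Ga_i=\overline{a_i}$, and gauge-covariance of \eqref{eq:PHSintertwine} gives the statement for arbitrary independent $a_1,a_2$.

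For family ND one has $a_1\neq0$ and $a_2=\alpha a_1$; applying $G(k_x)$ to the last column and using $G(k_x)a_1=\overline{a_1}\neq0$ forces $\overline{\alpha}=\alpha$. Eliminating $a_2',b_2$ by \eqref{eq:ConditionsND}, the third inner column equals $\im(\lambda'+k_x\lambda)a_1-\alpha(a_1'+k_xb_1)$; comparing its $G(k_x)$-image with the corresponding column of $\overline{\ul{A}(-k_x)}$ and reading off the $\overline{a_1}$-component forces $\lambda'=0$. What remains of \eqref{eq:PHSintertwine} merely prescribes $G(k_x)$ on $\{a_1,\,a_1'+k_xb_1\}$, which is a basis for a.e.\ $k_x$ — by the rank condition, once one notes that the third and fourth columns of $\ul{A}(k_x)$ lie in its span — so $G(k_x)$ exists and is invertible. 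Hence particle-hole symmetry is equivalent to $\alpha\in\bbR$ and $\lambda'=0$, i.e.\ $\op{ND}_\Xi$; the DN case follows from the $a_1\leftrightarrow a_2$ substitution introduced in Sect.~\ref{subsec:DetailsBCs}.

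I expect the one genuinely delicate point to be the handling of the a priori $k_x$-dependent intertwiner $G(k_x)$: one has to show it is forced to be constant in NN, that in ND/DN the single equation $G(k_x)a_1=\overline{a_1}$ together with the self-adjointness relations already fixes everything, and that the degenerate configurations (inner columns collinear with $a_1$, or $N(k_x)$ losing rank) occur only on a $k_x$-null set because $\ul{A}$ has generic rank $2$. The rest is routine linear algebra with the parametrization of Prop.~\ref{prop:BCs}.
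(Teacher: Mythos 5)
Your proof is correct, but it takes a different route from the paper's. The paper first computes the explicit von Neumann unitaries $U(k_x)$ for each family (Prop.\,\ref{prop:UBCs}(i)), imposes $U(k_x)=\overline{U(-k_x)}$ on those closed-form expressions to get the parameter constraints ($\lambda'=0$, $\alpha\in\bbR$; resp.\ $\lambda_1'=\lambda_2'=0$, $\mu'\in\bbR$, $\mu\in\im\bbR$), and then obtains Prop.\,\ref{prop:PHSBCs} in one line by pulling these back through the parametrization of Prop.\,\ref{prop:BCs}. You instead bypass $U$ entirely and encode pointwise orbit equality by an intertwiner $G(k_x)$ acting on the $2\times 4$ matrices, analyzing it column by column. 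The two characterizations of particle-hole symmetry are equivalent (by the bijection $[\ul{A}]\leftrightarrow U$ of Lm.\,\ref{lem:UpsilonBijection}), and your column analysis is sound: in NN the linear independence of $a_1,a_2$ forces $G$ constant and the inner-column conditions reproduce \eqref{eq:NNPHS}; in ND the fourth column forces $\alpha\in\bbR$, and substituting \eqref{eq:ConditionsND} into the third column and cancelling the $G(k_x)(a_1'+k_xb_1)$ contribution against $\overline{a_1'-k_xb_1}$ forces $\lambda'=0$, with $G(k_x)$ then freely and invertibly defined on the a.e.-basis $\{a_1,\,a_1'+k_xb_1\}$. What the paper's detour buys is reuse: the explicit $U$'s are needed anyway for the winding $B$ and for Fig.\,\ref{fig:IndicesII}, and once they are in hand the symmetry condition is a literal equality of matrix-valued functions. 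What your route buys is economy and self-containedness — no inversion of $A_1+A_2$ — at the cost of the bookkeeping you correctly identify as the delicate point, namely controlling the a priori $k_x$-dependent, non-unique $G(k_x)$ and the null set where the relevant column pairs degenerate.
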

We now elucidate how \textit{von Neumann unitaries} $ U $, cf.\,Def.\,\ref{def:OperativeBC}, relate to orbits (under left-action of $ \op{GL} (2, \bbC) $) of boundary conditions $ \ul{A} $. First of all, notice that $ \ul{A} $ and $ A $ are in one-to-one correspondence, upon imposing self-adjointness. Then, so are their orbits $ [A], [\ul{A}] $, cf.\,Eq.\,\eqref{eq:Cosets}.

The von Neumann unitary $ U $ associated with orbit $ [\ul{A}] $ (equivalently $ [A] $) is given by
\begin{equation} \label{eq:UFormula}
	\begin{array}{llcl}
		U : & \bbR & \longmapsto & U(2) \\
		& k_x & \longmapsto & U(k_x) = (A_1 (k_x) + A_2 (k_x))^{-1} (A_1 (k_x) - A_2 (k_x)) \,,
	\end{array} 
\end{equation}
where $ A_1, A_2 $ are as in Eq.\,\eqref{eq:JuxtaposeA1A2}. Eq.\,\eqref{eq:UFormula} yields a well-defined map $ \ul{A} \to U $, that descends to a map $ \upsilon: [\ul{A}] \mapsto U $. The map $ \upsilon $ is moreover one-to-one. These fine points are dealt with in App.\,\ref{app:SelfAdj}.

As we argued below Eq.\,\eqref{eq:Cosets}, two representatives of the same orbit $ [\ul{A}] $ encode for the same boundary conditions. The next proposition describes the orbits, and it does so for each family of boundary conditions at a time. Moreover, it does so in terms of the unitaries that are in bijective correspondence with those orbits, cf.\,Rem.\,\ref{rem:Orbits}.
\begin{proposition}[\textit{Families of von Neumann unitaries}] \label{prop:UBCs}
	(i) Let $ \ul{A} $ be a boundary condition parametrized as in Prop.\,\ref{prop:BCs} and Lm.\,\ref{lem:Dictionary}. The corresponding curves in $ U(2) $, cf.\,\eqref{eq:UFormula}, are of the form
	\begin{equation}
		U (k_x) = J + \hat{U} (k_x) \,, \qquad
		J = 
		\begin{pmatrix}
			-1 & 0 \\
			0 & 1
		\end{pmatrix} \,, \label{eq:JSummand}
	\end{equation}
	with $ \hat{U} $ depending on families as follows.
	\begin{enumerate}[itemindent=1em]
		\item[$ \op{DD} $:] $ \hat{U} (k_x) = 0 $.
		
		\item[$ \op{ND} $:] In terms of the parameters $ \alpha \in \bbC $ and $ \lambda, \lambda' \in \bbR $,
		\begin{equation}
			\hat{U} (k_x) =
			\frac{2}{1 + | \alpha|^2 + \im (k_x \lambda + \lambda')}
			\begin{pmatrix}
				1 & - \alpha \\
				\bar{\alpha} & - |\alpha|^2
			\end{pmatrix}  \,. \label{eq:UND}
		\end{equation}
		
		\item[$ \op{NN} $:] In terms of the parameters $ \mu, \mu' \in \bbC $ and $ \lambda_1, \lambda_2, \lambda_1', \lambda_2' \in \bbR $,
		\begin{equation}
			\begin{gathered}
				\hat{U} (k_x) = - \frac{2}{\det \tilde{U}} \, \tilde{U} (k_x) \,, \\
				\tilde{U} (k_x) = 
				\begin{pmatrix}
					1 + \im (\lambda_1' + k_x \lambda_1) & \mu' + k_x \mu \\
					\bar{\mu}' + k_x \bar{\mu} & -1 -\im (\lambda_2' + k_x \lambda_2)
				\end{pmatrix} \,.
			\end{gathered}
			\label{eq:UNN}
		\end{equation}
	\end{enumerate}
	The family DN is obtained by the substitution detailed in Prop.\,\ref{prop:BCs}.
	
	(ii) The particle-hole symmetric subfamilies are given by the same formal expressions, with restrictions on the parameters:
	\begin{enumerate}[itemindent=1em]
		\item[$ \op{DD}_{\Xi} $:] No changes, DD is always particle-hole symmetric;
		
		\item[$ \op{ND}_{\Xi} $:] $ \lambda' = 0 $ and $ \alpha \in \bbR $;
		
		\item[$ \op{NN}_{\Xi} $:] $ \lambda_1' = \lambda_2' = 0 $, $ \mu \in \im \bbR $ and $ \mu' \in \bbR $.
	\end{enumerate}
\end{proposition}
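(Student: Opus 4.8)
The plan is to reduce the whole statement to a single algebraic identity for the map $U$ of (\ref{eq:UFormula}) and then specialize it family by family. Writing $A_1(k_x)=(a_1,\ a_1'+k_xb_1)$ and $A_2(k_x)=(a_2'+k_xb_2,\ a_2)$ with columns displayed, as in (\ref{eq:JuxtaposeA1A2})--(\ref{eq:A1A2}), a direct column computation gives
\[ (A_1-A_2)-(A_1+A_2)J = 2\,(a_1,\,-a_2)\,, \]
with $J$ as in (\ref{eq:JSummand}). Left-multiplying by $(A_1+A_2)^{-1}$, which exists off the exceptional locus of Rem.\,\ref{rem:LossSA}, yields the master formula
\[ U(k_x)=J+\hat{U}(k_x)\,,\qquad \hat{U}(k_x)=2\,(A_1(k_x)+A_2(k_x))^{-1}(a_1,\,-a_2)\,. \]
This already exhibits the common summand $J$ and disposes of case DD, where $a_1=a_2=0$ by Lm.\,\ref{lem:Dictionary}, whence $\hat{U}\equiv 0$.

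For ND I would substitute the parametrization of Prop.\,\ref{prop:BCs}. Since $a_2=\alpha a_1$, the matrix $(a_1,-a_2)=a_1(1,\,-\alpha)$ (a column times a row) has rank one, so it suffices to evaluate the single vector $(A_1+A_2)^{-1}a_1$. Using (\ref{eq:ConditionsND}) to eliminate $a_2'$ and $b_2$ in favour of $a_1$ and $w:=a_1'+k_xb_1$, the columns of $A_1+A_2$ become $(1+\im(\lambda'+k_x\lambda))a_1-\bar{\alpha}w$ and $w+\alpha a_1$; wherever $A_1+A_2$ is invertible the vectors $a_1,w$ are automatically independent, and solving the resulting $2\times2$ system yields $(A_1+A_2)^{-1}a_1=(1+|\alpha|^2+\im(k_x\lambda+\lambda'))^{-1}(1,\bar{\alpha})^{\top}$, hence (\ref{eq:UND}). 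For NN the vectors $a_1,a_2$ form a basis; with $P:=(a_1\mid a_2)$ and (\ref{eq:ConditionsNN1})--(\ref{eq:ConditionsNN2}) one finds $A_1+A_2=PN$ for an explicit $2\times2$ matrix $N$, while $(a_1,-a_2)=P\op{diag}(1,-1)=-PJ$, so $\hat{U}=-2N^{-1}J$; expanding $N^{-1}$ through its adjugate and using the elementary identity $\det N=-\det\tilde{U}$ --- both sides equalling, up to sign, $(1+\im(\lambda_1'+k_x\lambda_1))(1+\im(\lambda_2'+k_x\lambda_2))+|\mu'+k_x\mu|^2$ --- reproduces exactly (\ref{eq:UNN}). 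Family DN then follows from ND by the $a_1\leftrightarrow a_2$ substitution recorded in Prop.\,\ref{prop:BCs}, the computation being structurally identical.

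For part (ii) I would feed the particle-hole symmetric parametrizations of Prop.\,\ref{prop:PHSBCs} into the formulas just derived and read off the surviving parameters: $\op{ND}_{\Xi}$ forces $\alpha\in\bbR$ and $\lambda'=0$, while $\op{NN}_{\Xi}$ forces $\lambda_1'=\lambda_2'=0$, $\mu'\in\bbR$ and $\mu\in\im\bbR$, and DD stays unconstrained. As an independent check, imposing the symmetry criterion $U(k_x)=\overline{U(-k_x)}$ of (\ref{eq:USymmetry}) directly on (\ref{eq:UND}) and (\ref{eq:UNN}), using that $J$ is real, reproduces the same restrictions.

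These calculations are routine once the master formula is in place; the step I expect to require most care is the bookkeeping at the exceptional values of $k_x$ where $A_1+A_2$ is singular and $U$ undefined. There one must argue that the displayed rational expressions --- with denominator $1+|\alpha|^2+\im(k_x\lambda+\lambda')$, of strictly positive real part, in the ND case, and $\det N$ in the NN case --- are precisely the continuous extension of $k_x\mapsto U(k_x)$, so that the stated formulas describe those curves on their full domain; this also identifies the zeros of $\det N$ with the failure points flagged in Rem.\,\ref{rem:LossSA}.
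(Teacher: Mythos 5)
Your proposal is correct, and it takes a genuinely different route from the paper's. The paper proves the proposition (in App.\,\ref{app:SelfAdj}) by expanding $U=(A_1+A_2)^{-1}(A_1-A_2)$ through the adjugate identity, $U=\det(A_1+A_2)^{-1}\big[(\det A_1-\det A_2)\id_2+\op{adj}(A_2)A_1-\op{adj}(A_1)A_2\big]$, and then computing the three determinants and the commutator-like term family by family, with $J$ emerging only a posteriori from the algebra (in DD via the separate observation $A_1-A_2=(A_1+A_2)J$). Your master formula $(A_1-A_2)-(A_1+A_2)J=2(a_1,-a_2)$, hence $U=J+2(A_1+A_2)^{-1}(a_1,-a_2)$, is a one-line column computation that I have checked; it isolates $J$ structurally from the outset and makes transparent why $\op{rk}\hat{U}$ equals $\op{rk}(a_1,-a_2)=\op{rk}A_y$, i.e.\,why the correction term vanishes in DD, is rank one in ND/DN, and is full rank in NN --- a conceptual link to Def.\,\ref{def:Families} that the paper's computation does not exhibit. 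The subsequent reductions (solving a $2\times2$ system in the basis $a_1,w$ for ND; factoring $A_1+A_2=PN$ and using $\det N=-\det\tilde{U}$ for NN) check out, as does the remark that invertibility of $A_1+A_2$ forces independence of $a_1,w$. Part (ii) is handled as in the paper, by imposing $U(k_x)=\overline{U(-k_x)}$ on the explicit formulas; your additional cross-check via Prop.\,\ref{prop:PHSBCs} is consistent. What the paper's approach buys is a ready-made expression, Eq.\,\eqref{eq:USmartFormula}, that is reused elsewhere; what yours buys is brevity and a structural explanation of the universal summand $J$. Your closing caveat about the exceptional $k_x$ where $A_1+A_2$ is singular matches the paper's treatment (conditions imposed a.e.\,in $k_x$, Rem.\,\ref{rem:LossSA}), and the observation that the ND denominator has real part $1+|\alpha|^2>0$ is correct.
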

\begin{remark} \label{rem:ConsequenceTranslationKx}
	Recall Rem.\,\ref{rem:TranslationsKx}. Shifts $ k_x \mapsto k_x + \tau \,, \ (\tau \in \bbR) $ induce corresponding shifts $ \lambda' \mapsto \lambda' + \tau \lambda $ in family ND. The parameter $ \lambda' $ can thus be set to $ \lambda' = 0 $, if $ \lambda \neq 0 $, with no effect on $ \cal{V} $. Similarly, in family NN, shifts $ \lambda_i' \mapsto \lambda_i' + \tau \lambda_i $, $ (i=1,2) $, $ \mu' \mapsto \mu' + \tau \mu $ are induced and they are expected to leave $ \cal{V} $ invariant. The former fact is explicitly seen in Prop.\,\ref{prop:NewMap2}, the latter in Prop.\,\ref{prop:NewMap4}.
\end{remark}

\section{Results} \label{sec:DetailedResults}

In this section, we deepen and extend some of the results of Sect.\,\ref{sec:Results}. More precisely, we deepen Prop.\,\ref{prop:Map2} (ND family) in that we relate the there not yet specified parameters $ m,q $ to the more fundamental ones ($ \alpha \in \bbC $, $ \lambda, \lambda' \in \bbR $), labeling the boundary conditions of that family; or rather their orbits, which is all that matters, cf.\,Prop.\,\ref{prop:BCs} and Prop.\,\ref{prop:UBCs}. Likewise, we deepen Prop.\,\ref{prop:Map4} (NN family) by relating the parameters $ \Sigma, \Delta^2 $ to the parameters $ \mu, \mu' \in \bbC $, $ \lambda_i, \lambda_i' \in \bbR $ $ (i=1,2) $, labeling the orbits of that family. Moreover, we extend Prop.\,\ref{prop:Map4}, in that we treat all boundary conditions of the family NN, rather than just the particle-hole symmetric ones. As a result, the charts mapping the integer tuple $ \cal{V} $ extend accordingly.

By contrast, Prop.\,\ref{prop:Map1} (family DD) does not require further details. The same applies to other results (Thm.\,\ref{thm:transitions}, Cor.\,\ref{cor:ViolationMechanism}, Thm.\,\ref{thm:Typicality}) spanning across families. \vspace{1\baselineskip}

As stated, we begin by rephrasing Prop.\,\ref{prop:Map2}.
\begin{proposition}[\textit{Prop.\,\ref{prop:Map2}, ND family, full statement}] \label{prop:NewMap2}
	Let $ \ul{A}: k_x \to \ul{A} (k_x) $ be a boundary condition in family ND, parametrized as in Prop.\,\ref{prop:BCs}. Let moreover
	\begin{equation}
		m \coloneqq - \frac{| 1 + \im \alpha |^2}{| 1- \im \alpha |^2} \in (-\infty, 0] \,, \qquad q \coloneqq - \frac{2 \lambda}{\nu | 1- \im \alpha |^2} \in \bbR \,. \label{eq:MAndQ}
	\end{equation}
	Then, the entries of the integer tuple $ \cal{V} = (P,I,E,B) $, cf.\,\eqref{eq:IntVector}, take the following values.
	\begin{align}
		P &= 
		\begin{cases}
			1 \,, & \quad q \leq - \sqrt{2} \\
			2 \,, & \quad - \sqrt{2} < q < \sqrt{2} \\
			3 \,, & \quad q \geq \sqrt{2} \,,
		\end{cases} \label{eq:PND} \\
		I &=
		\begin{cases}
			-1 \,, & \quad ( \sqrt{2} < q < - (m+1) ) \ \lor \ ( - \sqrt{2} < q < -|m+1| ) \\
			0 \,, & \quad ( |q| < |m+1| \ \land \ |q| < \sqrt{2} ) \ \lor \ ( |q| > |m+1| \ \land \ |q| > \sqrt{2} ) \\
			+1 \,, & \quad ( m+1 < q < - \sqrt{2} ) \ \lor \ ( |m+1| < q < \sqrt{2} ) \,,
		\end{cases} \label{eq:IND} \\
		E &= 
		\begin{cases}
			-1 \,, & \quad (q < m-1) \ \lor \ (q > |m+1|) \\
			0 \,, & \quad - |m+1| < q < |m+1| \\
			+1 \,, & \quad m-1 < q < - |m+1| \,,
		\end{cases} \label{eq:EValueND} \\
		B &= 
		\begin{cases}
			\op{sgn} q \,, & \quad q \neq 0 \\
			0 \,, & \quad q=0 \,.
		\end{cases} \label{eq:BND}
	\end{align}
\end{proposition}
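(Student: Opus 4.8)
The plan is to compute the four entries of $\cal{V}=(P,I,E,B)$ separately, each with the tool developed for it, and then to read off the statements about BEC and the particle-hole symmetric locus. A preliminary reduction streamlines everything. By Rem.\,\ref{rem:ConsequenceTranslationKx} the shift $k_x\mapsto k_x+\tau$ moves $\lambda'\mapsto\lambda'+\tau\lambda$ without affecting $\cal{V}$, so if $\lambda\neq 0$ we may take $\lambda'=0$, whereas if $\lambda=0$ (equivalently $q=0$) the unitary \eqref{eq:UND} is $k_x$-independent and the analysis degenerates. In either case one checks that \eqref{eq:UND}, hence each of $P,I,E,B$, depends on $(\alpha,\lambda,\lambda')$ only through the pair $(m,q)$ of \eqref{eq:MAndQ}; this is the content of the first sentence of Prop.\,\ref{prop:NewMap2}. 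Moreover $|1+\im\alpha|^2=|1-\im\alpha|^2\iff\alpha\in\bbR$, so $m=-1\iff\alpha\in\bbR$, which by Prop.\,\ref{prop:UBCs}(ii) is exactly the particle-hole symmetric subfamily (up to translation in $k_x$); the final assertion about $\{m=-1\}$ then follows once the value formulas are in hand.

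\textbf{The winding $B$.} This is immediate. From Def.\,\ref{def:BoundWinding}, $\op{tr}\!\left(U^{-1}\partial_{k_x}U\right)=\partial_{k_x}\log\det U$, so $B$ equals the winding number of $k_x\mapsto\det U(k_x)$. Writing $U=J+\hat U$ with $\hat U$ of rank one as in \eqref{eq:UND}, the matrix-determinant lemma gives $\det U=\det J\,(1+\op{tr}(J^{-1}\hat U))$, and a short computation (with $\lambda'=0$) yields
\begin{equation}
	\det U(k_x)=\frac{1+|\alpha|^2-\im k_x\lambda}{1+|\alpha|^2+\im k_x\lambda}\,,
\end{equation}
a M\"obius image of $k_x$ onto the unit circle. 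As $k_x$ runs from $-\infty$ to $+\infty$ this traverses the circle exactly once if $\lambda\neq 0$, with orientation $-\op{sgn}\lambda=\op{sgn}q$, and not at all if $\lambda=0$; this is \eqref{eq:BND}.

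\textbf{The integers $P$, $I$, $E$.} For $P$ one uses the scattering picture: by Prop.\,\ref{prop:BulkScatteringCorrespondence} and Thm.\,\ref{thm:Levinson}, $P=W_0$ is the threshold winding of the scattering amplitude along the finite part of the meridian. Concretely, one builds the scattering state \eqref{eq:ScattState} for the ND boundary condition \eqref{eq:ConditionsND}, extracts $S(k_x,\kappa)$, and tracks $\arg S(k_x,0^+)$ as $k_x$ sweeps $\bbR$; equivalently one solves the edge secular equation and counts proper mergers with the top band. The count turns out to be governed by the sign of a quantity vanishing at $q=\pm\sqrt2$, which is precisely where a branch gains or loses a proper merger, giving \eqref{eq:PND}. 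The integers $I$ and $E$ require the fine asymptotics of the edge branches as $k_x\to\pm\infty$, handled by the methods of Sects.\,\ref{subsec:ParabolicStates} (asymptotically parabolic states, contributing to $I$) and \ref{subsec:FlatStates} (asymptotically flat states, contributing to $E$): since $\hat U(k_x)\to 0$ as $k_x\to\pm\infty$, the boundary condition tends to the Dirichlet form $J$, and one classifies the surviving edge solutions according to whether $\omega(k_x)$ grows quadratically (improper merger) or tends to a finite height (escape), then assigns a sign via the orientation convention of Sect.\,\ref{subsec:Integers} (which end, which band rim). The thresholds $\pm|m+1|$ and $m-1$ in \eqref{eq:IND}, \eqref{eq:EValueND} mark, respectively, the parabolic-to-flat transition (event (b) of Thm.\,\ref{thm:transitions}) and a flat state crossing the zero band (event (c)), while $\pm\sqrt2$ reappears in $I$ because $M=P+I$ must remain continuous across the $P$-jumps of \eqref{eq:PND}.

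\textbf{Assembly and main obstacle.} Adding \eqref{eq:PND} and \eqref{eq:IND} case by case gives $M=P+I=2$ exactly when $|q|<|m+1|$ and $M\neq 2$ otherwise, which is the BEC claim of Prop.\,\ref{prop:Map2}; in particular $|m+1|=0$ on $\{m=-1\}$ forces $M\neq 2$ throughout the symmetric locus. The hard part is the $P$/$I$/$E$ computation: writing down and solving the edge secular equation explicitly for the ND family, correctly isolating the asymptotically parabolic versus flat branches, and---most error-prone---getting the signs right in the count of mergers and escapes under the chosen orientation of the boundary of the bulk spectrum. By contrast, $B$ reduces to the one-line determinant above, and the reduction to $(m,q)$ is bookkeeping on \eqref{eq:UND}.
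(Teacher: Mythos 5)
Your computation of $B$ is correct and is genuinely different from the paper's route: you evaluate $\det U$ directly from \eqref{eq:UND} via the rank-one update and get the M\"obius map $k_x\mapsto\bigl(1+|\alpha|^2-\im k_x\lambda\bigr)/\bigl(1+|\alpha|^2+\im k_x\lambda\bigr)$, whose winding is $-\op{sgn}\lambda=\op{sgn}q$. The paper instead works with $P_\pm=\det A_\pm$, factors out the non-winding piece $a_1\wedge(a_1'+k_xb_1)$, and applies the general Lemmas \ref{lem:CurveTraced}--\ref{lem:Winding} on degenerate parabolas; your argument is shorter and buys the same answer, at the cost of not generalizing to NN where $\det U$ is no longer a M\"obius function of $k_x$.

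For $P$, $I$ and $E$, however, there is a genuine gap: you describe which tools to use and assert where the thresholds and values land, but the case distinctions \eqref{eq:PND}--\eqref{eq:EValueND} are the actual content of the proposition and are never derived. Concretely: (i) for $P$ you propose to track $\arg S(k_x,0^+)$ over all of $\bbR$, i.e.\,to compute $W_0$ directly; this requires global information on $S$ and is not what makes the problem tractable. The paper instead uses $P=C_+-W_\infty=2-W_\infty$ and extracts $W_\infty$ from the asymptotic expansion of the Jost function (Prop.\,\ref{prop:GEps}); the delicate step — entirely absent from your sketch — is that $g(0,\varphi)$ passes through the origin at $\varphi=\pm\pi/2$, so one must resolve the avoided crossing at order $\varepsilon$ to see that the winding is $0$ or $\mp1$ according to the signs of $d(0)=2\lambda-\sqrt2\,\nu|1-\im\alpha|^2$ and $d(\pi)$, whence $q\lessgtr\pm\sqrt2$. (ii) For $I$ the paper counts signed intersections of the line $c_-=mc_++q$ with the arcs $\cal{A}_\uparrow,\cal{A}_\downarrow$; your alternative, deducing the $q=\pm\sqrt2$ transitions of $I$ from "continuity of $M=P+I$ across the $P$-jumps", is circular here, since that continuity is Thm.\,\ref{thm:transitions}(a), which the paper proves \emph{from} Prop.\,\ref{prop:NewMap2}. (iii) For $E$ the signs of $\omega_{\op{a},\pm}$ in \eqref{eq:AsymptoteDN} must actually be resolved region by region. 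Finally, your preliminary reduction overreaches: $U(k_x)$ in \eqref{eq:UND} does \emph{not} depend on $(\alpha,\lambda,\lambda')$ only through $(m,q)$ (the off-diagonal entries retain $\alpha$ itself, and $(m,q)$ fixes only two of the three genuine parameters); only the integers $\cal{V}(U)$ do, and that is a conclusion of the computation, not a hypothesis one can "check" on $U$.
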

The proof of this result, and of the following one, will be given in Sect.\,\ref{sec:Proofs}.
\begin{remark} \label{rem:GenuineParsND}
	As they should, the integers do depend on the boundary conditions \eqref{eq:ExplicitUlA} only through their orbits, and thus on $ \alpha \in \bbC $, $ \lambda, \lambda' \in \bbR $ at most, cf.\,Rem.\,\ref{rem:Orbits} and \ref{eq:UND}. Actually, they do not depend on $ \lambda' $, as argued in Rem.\,\ref{rem:ConsequenceTranslationKx}.
\end{remark}
Prop.\,\ref{prop:Map4} can similarly be rewritten, moreover considering the entire family NN, rather than its particle-hole symmetric subset.
\begin{proposition}[\textit{Prop.\,\ref{prop:Map4}, NN family, full statement}] \label{prop:NewMap4}
	Let $ \ul{A}: k_x \to \ul{A} (k_x) $ be a boundary condition in family NN, parametrized as in Prop.\,\ref{prop:BCs}. Let moreover
	\begin{equation}
		\Sigma \coloneqq \frac{\lambda_1 + \lambda_2}{2} \,, \qquad \Delta \coloneqq \frac{\lambda_1 - \lambda_2}{2} \,, \qquad \mu = \op{Re} \mu + \im \op{Im} \mu \equiv \mu_R + \im \mu_I \,. \label{eq:SigmaAndDelta}
	\end{equation}
	The integers $ P,I,E $, cf.\,\eqref{eq:IntVector}, depend only on $ \mu \in \bbC $, $ \Sigma, \Delta \in \bbR $ and take the following values.
	\begin{itemize}
		\item Let 
		\begin{equation}
			\cal{M}_{\pm} \coloneqq \Sigma^2 - | \mu |^2 \pm \nu \sqrt{2} (\mu_R + \Sigma) \,. \label{eq:MPlusMinus}
		\end{equation}
		Then
		\begin{equation}
			P = 
			\begin{cases}
				+2 \,, & \quad \Delta^2 > \cal{M}_+ \,, \ \cal{M}_- \\
				+1 \,, & \quad \cal{M}_- < \Delta^2 < \cal{M}_+ \\
				+3 \,, & \quad \cal{M}_+ < \Delta^2 < \cal{M}_- \\
				0 \,, & \quad \Delta^2 < \cal{M}_- < \cal{M}_+ \\
				+4 \,, & \quad \Delta^2 < \cal{M}_+ < \cal{M}_- \,.
			\end{cases} \label{eq:PNN}
		\end{equation}
		
		\item Let
		\begin{equation}
			\cal{I}_\pm \coloneqq \Sigma^2 - | \mu |^2 \pm 2 \nu \mu_R - \nu^2 \,. \label{eq:IPlusMinus}
		\end{equation}
		The value of $I$ stays constant within some regions of parameter space $ \bbR^4 \ni (\mu_R, \mu_I, \Sigma, \Delta) $. A first property distinguishing them is the value of $ \mu_R + \Sigma $, which leads to four tables. Within each of them, further inequalities determine the actual regions. Combinations of inequalities that are empty are denoted by $ \varnothing $, or are omitted. The values of $I$ are listed as entries.
		\begin{center}
			\begin{tabular}{ c"c|c|c}
				\xrowht[()]{10pt}
				$ \mu_R + \Sigma < - \nu $ & $ \Delta^2 < \cal{M}_+ < \cal{M}_- $ & $ \cal{M}_+ < \Delta^2 < \cal{M}_- $ & $ \Delta^2 > \cal{M}_- > \cal{M}_+ $ \\
				\thickhline\xrowht{10pt}
				$ \Delta^2 < \cal{I}_{\pm} $ & $0$ & $+1$ & $\varnothing$ \\
				\hline\xrowht{10pt}
				$ \cal{I}_- < \Delta^2 < \cal{I}_+ $ & $-1$ & $0$ & $+1$ \\
				\hline\xrowht{10pt}
				$ \cal{I}_+ < \Delta^2 < \cal{I}_- $ & $\varnothing$ & $0$ & $\varnothing$ \\
				\hline\xrowht{10pt}
				$ \Delta^2 > \cal{I}_\pm $ & $\varnothing$ & $-1$ & $0$ \\
				\multicolumn{4}{c}{\vspace{1\baselineskip}} \\
				\xrowht[()]{10pt}
				$ - \nu < \mu_R + \Sigma < 0 $ & $ \Delta^2 < \cal{M}_+ < \cal{M}_- $ & $ \cal{M}_+ < \Delta^2 < \cal{M}_- $ & $ \Delta^2 > \cal{M}_- > \cal{M}_+ $ \\
				\thickhline\xrowht{10pt}
				$ \Delta^2 < \cal{I}_{\pm} $ & $\varnothing$ & $\varnothing$ & $\varnothing$  \\
				\hline\xrowht{10pt}
				$ \cal{I}_- < \Delta^2 < \cal{I}_+ $ & $-1$ & $0$ & $+1$ \\
				\hline\xrowht{10pt}
				$ \cal{I}_+ < \Delta^2 < \cal{I}_- $ & $\varnothing$ & $-2$ & $-1$ \\
				\hline\xrowht{10pt}
				$ \Delta^2 > \cal{I}_\pm $ & $\varnothing$ & $-1$ & $0$ 
			\end{tabular}
			\label{tab:MergeNN1AndNN2}
		\end{center}
		\begin{center}
			\begin{tabular}{ c"c|c|c}
				\xrowht[()]{10pt}
				$ 0 < \mu_R + \Sigma < \nu $ & $ \Delta^2 < \cal{M}_- < \cal{M}_+ $ & $ \cal{M}_- < \Delta^2 < \cal{M}_+ $ & $ \Delta^2 > \cal{M}_+ > \cal{M}_- $ \\
				\thickhline\xrowht{10pt}
				$ \Delta^2 < \cal{I}_{\pm} $ & $\varnothing$ & $\varnothing$ & $\varnothing$ \\
				\hline\xrowht{10pt}
				$ \cal{I}_- < \Delta^2 < \cal{I}_+ $ & $\varnothing$ & $+2$ & $+1$ \\
				\hline\xrowht{10pt}
				$ \cal{I}_+ < \Delta^2 < \cal{I}_- $ & $+1$ & $0$ & $-1$ \\
				\hline\xrowht{10pt}
				$ \Delta^2 > \cal{I}_\pm $ & $\varnothing$ & $+1$ & $0$ \\
				\multicolumn{4}{c}{\vspace{1\baselineskip}} \\
				\xrowht[()]{10pt}
				$ \mu_R + \Sigma > \nu $ & $ \Delta^2 < \cal{M}_- < \cal{M}_+ $ & $ \cal{M}_- < \Delta^2 < \cal{M}_+ $ & $ \Delta^2 > \cal{M}_+ > \cal{M}_- $ \\
				\thickhline\xrowht{10pt}
				$ \Delta^2 < \cal{I}_{\pm} $ & $0$ & $-1$ & $\varnothing$  \\
				\hline\xrowht{10pt}
				$ \cal{I}_- < \Delta^2 < \cal{I}_+ $ & $\varnothing$ & $0$ & $\varnothing$ \\
				\hline\xrowht{10pt}
				$ \cal{I}_+ < \Delta^2 < \cal{I}_- $ & $+1$ & $0$ & $-1$ \\
				\hline\xrowht{10pt}
				$ \Delta^2 > \cal{I}_\pm $ & $\varnothing$ & $+1$ & $0$ 
			\end{tabular}
			\captionof{table}{Values of $I$ in different regions within family NN.}
			\label{tab:MergeNN3AndNN4}
		\end{center}

		\item Let
		\begin{equation}
			\cal{E} \coloneqq (\Sigma - \nu)^2 - | \mu |^2 \,. \label{eq:EPlusMinus}
		\end{equation}
		Then, the values of $E$, for each region of parameter space $ \bbR^4 \ni (\mu_R, \mu_I, \Sigma, \Delta) $ where it stays constant, are listed in the following table.
		\begin{center}
			\begin{tabular}{c"c|c}
				\xrowht[()]{10pt}
				& $ \Delta^2 > \cal{E} $ & $ \Delta^2 < \cal{E} $ \\
				\thickhline\xrowht{10pt}
				$ \Delta^2 < \cal{I}_{\pm} $ & $+1$ & $-1$ \\
				\hline\xrowht{10pt}
				$ \cal{I}_- < \Delta^2 < \cal{I}_+ $ & $0$ & $0$ \\
				\hline\xrowht{10pt}
				$ \cal{I}_+ < \Delta^2 < \cal{I}_- $ & $0$ & $0$\\
				\hline\xrowht{10pt}
				$ \Delta^2 > \cal{I}_\pm $ & $-1$ & $+1$ 
			\end{tabular}
			\captionof{table}{Values of $E$ in different regions within family NN.}
			\label{tab:EscapesNN}
		\end{center}
	\end{itemize}
	The winding $B$ of the von Neumann unitary $U$, cf.\,\eqref{eq:UNN}, takes values as follows. Let
	\begin{equation}
		\cal{B} \coloneqq \Sigma^2 - |\mu|^2 \,. \label{eq:DefB}
	\end{equation}
	Then, if $ \Delta^2 \neq \cal{B} $ and $ \Sigma \neq 0 $ (general case),
	\begin{equation}
		B =
		\begin{cases}
			0 \,, & \quad \Delta^2 > \cal{B} \\
			-2 \op{sgn} \Sigma \,, & \quad \Delta^2 < \cal{B} \,. \\
		\end{cases} \label{eq:BNN}
	\end{equation}
	By contrast, in the exceptional case where \textbullet\ $ \Delta^2 = \cal{B} $, \textbullet\ $ \Sigma \neq 0 $ or $ \lambda_2 \lambda_1' + \lambda_1 \lambda_2' \neq \overline{\mu} \mu' + \mu \overline{\mu}' $, and
	\begin{equation}
		d_{21} \coloneqq - (\lambda_1 + \lambda_2) (1 - \lambda_1' \lambda_2' + | \mu'|^2) + (\lambda_1' + \lambda_2') (\overline{\mu} \mu' + \mu \overline{\mu}' - \lambda_2 \lambda_1' - \lambda_1 \lambda_2') \neq 0 \,, \label{eq:d21NNAgain}
	\end{equation}
	where we recall $ (\lambda_1, \lambda_2) = (\Sigma + \Delta, \Sigma - \Delta) $, then
	\begin{equation}
		B = \op{sgn} d_{21} \,.
	\end{equation} 
	Finally, if
	\begin{equation}
		\mu = \lambda_1 = \lambda_2 = 0 
	\end{equation}
	or
	\begin{equation}
		\lambda_2 = - \lambda_1 \quad \op{and} \quad ( \mu \neq 0 \ \op{or} \ \lambda_1 \neq 0 ) \,,
	\end{equation}
	then $ B = 0 $. In the cases that are not covered, $B$ is ill-defined.
\end{proposition}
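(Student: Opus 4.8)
The plan is to compute the four integers $P,I,E,B$ entering $\cal{V}$ by four independent procedures, namely those developed in Sects.~\ref{subsec:ComputingP}, \ref{subsec:ParabolicStates}, \ref{subsec:FlatStates}, \ref{subsec:BoundaryWinding}, and to trace how their outcomes depend on the parametrization of family NN in Prop.~\ref{prop:BCs} and the associated von Neumann unitaries of Prop.~\ref{prop:UBCs}. A preliminary step is to record that, by Rem.~\ref{rem:Orbits}, each integer is a function of the orbit $[\ul{A}]$, equivalently of the unitary $U$ of Eq.~\eqref{eq:UNN}, and that, by Rem.~\ref{rem:ConsequenceTranslationKx}, the $k_x$-shift $\lambda_i'\mapsto\lambda_i'+\tau\lambda_i$, $\mu'\mapsto\mu'+\tau\mu$ leaves $\cal{V}$ unchanged. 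The sharper assertion that $P,I,E$ depend on $(\mu,\Sigma,\Delta)$ \emph{only} --- not on $\mu'$, $\lambda_1'$, $\lambda_2'$ at all --- has to be extracted from each of the three computations in turn; it holds because those parameters enter the scattering amplitude at threshold, and the edge dispersions at large $k_x$, only through terms that are subleading in the relevant limit.

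\emph{Computing $P$.} Here I would use the topological identity $C_+=W=W_0+W_\infty=P+W_\infty$ of Eq.~\eqref{eq:CPlusW0WInfty} together with Levinson's relative theorem, Thm.~\ref{thm:Levinson}, which identifies $P=W_0$ with the signed number of edge-mode branches meeting the band rim $\omega^+(k_x)$ at finite $k_x$. One builds the scattering state of Def.~\ref{def:ScattState} for $H(k_x)$, imposes the NN boundary conditions \eqref{eq:ConditionsNN1}--\eqref{eq:ConditionsNN2}, and solves the resulting $2\times2$ linear system for $S(k_x,\kappa)$. Near threshold, $\kappa\to0$, the condition $S^{-1}(k_x,\kappa)=0$ reduces to the vanishing of an explicit real function of $k_x$ with coefficients quadratic in the boundary data; its sign changes are the proper mergers, and they occur precisely as $\Delta^2$ crosses $\cal{M}_\pm$ of \eqref{eq:MPlusMinus}. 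Reading off the number and signs of these crossings for each ordering of $0,\cal{M}_-,\cal{M}_+$ relative to $\Delta^2$ yields the five cases of \eqref{eq:PNN}.

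\emph{Computing $I$ and $E$.} These are governed by the $k_x\to\pm\infty$ asymptotics of the edge branches. Inserting the asymptotically parabolic ansatz $\omega(k_x)\sim\nu k_x^2$, resp.\ the asymptotically flat ansatz $\omega(k_x)\to h$, into the half-line eigenvalue problem for $H(k_x)$ and imposing the boundary conditions produces, in each regime, a transcendental equation that is solved asymptotically; the number of admissible solutions, signed per the orientation conventions of Sect.~\ref{subsec:Integers}, is $I$, resp.\ $E$. The quadrics $\cal{M}_\pm$, $\cal{I}_\pm$ (Eq.~\eqref{eq:IPlusMinus}) and $\cal{E}$ (Eq.~\eqref{eq:EPlusMinus}) arise as the loci on which such a branch is created, annihilated, or absorbed by the zero-energy band, and the partition into the four tables according to the sign of $\mu_R+\Sigma$ reflects which candidate branch is actually present. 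The remaining labour is to fill in every entry and to certify that the combinations marked $\varnothing$ are genuinely vacuous.

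\emph{Computing $B$, and the main obstacle.} Since $\op{tr}(U^{-1}\partial_{k_x}U)=\partial_{k_x}\log\det U(k_x)$, the winding $B$ of Def.~\ref{def:BoundWinding} is the winding of the scalar rational function $k_x\mapsto\det U(k_x)$ over $\bbR\cup\{\infty\}$; from \eqref{eq:UNN} this is a ratio of polynomials of degree at most two, so $B$ equals the number of its zeros minus poles in the upper half $k_x$-plane. In the generic case both have degree two and the count gives \eqref{eq:BNN}, with the dichotomy $\Delta^2>\cal{B}$ versus $\Delta^2<\cal{B}$. The exceptional cases occur exactly when a leading coefficient of $\det\tilde U$ vanishes, so $\det U$ drops degree or its limits at $\pm\infty$ disagree; then the sign of the first surviving coefficient $d_{21}$ of \eqref{eq:d21NNAgain} fixes $B$, while the fully degenerate configurations either give $B=0$ or display the vertical edge branch of Rem.~\ref{rem:LossSA}, for which \eqref{eq:BoundWinding} is ill-defined. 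The genuinely hard part is not any single computation but the combinatorics of the stratification: parameter space is cut by the several quadrics above, and one must reach each nonempty cell, discard the empty ones, and keep all signs mutually consistent across the four recipes. I would run Thm.~\ref{thm:transitions} as a permanent consistency check --- every crossing of a quadric must realize one of the elementary events (a)--(d), which pins down the jumps $\Delta P,\Delta I,\Delta E,\Delta B$ and both shortens and validates the bookkeeping.
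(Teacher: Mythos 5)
Your toolkit for $I$, $E$ and $B$ matches what the paper actually does (parabolic ansatz and signed count of admissible branches for $I$, positive flat asymptotes for $E$, winding of the degree-two polynomials $\det A_\pm$ for $B$, with the exceptional cases governed by degenerations of the leading coefficients and by $d_{21}$). The genuine gap is in your computation of $P$. You quote the identity $C_+=P+W_\infty$ but then set it aside and propose to obtain $P$ directly, as the signed number of sign changes of a threshold function of $k_x$ at finite momentum. That step would fail as described: the threshold condition $S^{-1}(k_x,\kappa)=0$ as $\kappa\to0$ involves the full Jost function $g(k_x,0)$, whose real zeros cannot be enumerated globally in closed form, and a raw count of zeros is in any case not the signed count that Levinson's relative theorem produces. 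The paper's route is the opposite one: it never touches finite $k_x$, but computes $W_\infty$ from the leading-order expansion of $g(\varepsilon,\varphi)$ at the point at infinity (Prop.\,\ref{prop:GEps}), reads off the winding from the boundary values $g(0)=2(\Delta^2-\cal{M}_-)$, $g(\pi)=2(\Delta^2-\cal{M}_+)$ and $g(\hat\varphi)>0$, and only then sets $P=2-W_\infty$. In particular the surfaces $\Delta^2=\cal{M}_\pm$ are intrinsically asymptotic data (zeros of the Jost symbol at $\varphi=0,\pi$, i.e.\ $k_x\to\pm\infty$); your claim that the finite-$k_x$ sign changes ``occur precisely as $\Delta^2$ crosses $\cal{M}_\pm$'' cannot be extracted from the finite-$k_x$ analysis you propose, and is exactly the output of the computation at infinity that you omit.

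Two smaller points. First, the parabolic ansatz is not $\omega\sim\nu k_x^2$ with a fixed coefficient: the constraint is $\omega=\nu(1-c_+^2)k_x^2$ with $c_\pm$ ranging over an arc of the circle $c_+^2+c_-^2=2$, and fixing the coefficient would miss all improper mergers except the one at $c_+=0$; the reduction of $I$ to intersections of the hyperbola \eqref{eq:CurveNN} with the arcs $\cal{A}_\uparrow,\cal{A}_\downarrow$ is what makes the forty-eight-cell bookkeeping feasible. Second, using Thm.\,\ref{thm:transitions} as a ``permanent consistency check'' is fine heuristically but cannot carry logical weight here, since in the paper that theorem is itself proved from Props.\,\ref{prop:NewMap2} and \ref{prop:NewMap4}.
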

\begin{remark}
	A remark analogous to Rem.\,\ref{rem:GenuineParsND} applies here, too. The integers depend at most on the orbit of the boundary condition, and thus on $ \lambda_i, \lambda_i' $, $ (i = 1,2) $, $ \mu $, $ \mu' $, cf.\,Eq.\,\ref{eq:UNN}; and actually do not change under the shifts mentioned in Rem.\,\ref{rem:ConsequenceTranslationKx}. One indeed verifies that if $ \Delta^2 = \cal{B} $, i.e.\,$ |\mu|^2 = \lambda_1 \lambda_2 $, then \eqref{eq:d21NNAgain} is left invariant by them. \label{rem:GenuineParsNN}
\end{remark}
\begin{remark}
	Since Prop.\,\ref{prop:NewMap4} is an extension of Prop.\,\ref{prop:Map4}, the latter can be recovered from the former by restriction to the particle-hole symmetric subset of family NN. In practice, this is achieved by imposing the constraints on $ \mu, \mu' $ and $ \lambda_i, \lambda_i' \ (i=1,2) $ detailed in Prop.\,\ref{prop:UBCs}, point (ii). The thresholds $ \cal{M}_\pm, \, \cal{I}_\pm, \, \cal{E}, \, \cal{B} $ are specialized to PHS, thus becoming
	\begin{equation}
		\cal{M}^\Xi_\pm = \Sigma^2 - | \mu |^2 \pm \nu \sqrt{2} \Sigma \,, \qquad \cal{I}^\Xi_\pm \equiv \cal{I}^\Xi = \Sigma^2 - | \mu |^2 - \nu^2 \,, \qquad \cal{E}^\Xi = \cal{E} \,, \qquad \cal{B}^\Xi = \cal{B} \,. \label{eq:PHSTransNN}
	\end{equation}
	Fig.\,\ref{fig:IndicesIV} is then recovered by plotting the \textit{transition surfaces}
	\begin{equation}
		\{ \Delta^2 = \cal{M}^\Xi_\pm, \, \Delta^2 = \cal{I}^\Xi , \, \Delta^2 = \cal{E}^\Xi , \, \Delta^2 = \cal{B}^\Xi \}
	\end{equation}
	and decorating each region, defined by such surfaces, with the values of $ (P,I,E,B) $ found in the proposition above.
\end{remark}

\section{Evaluating the integers $ P $, $I$, $E$, $B$} \label{sec:Derivation}

\subsection{Number of proper mergers $P$} \label{subsec:ComputingP}

This section and the upcoming Sects.\,\ref{subsec:ParabolicStates}, \ref{subsec:FlatStates} and \ref{subsec:BoundaryWinding} illustrate how to calculate each entry of the integer tuple $ \cal{V} $, given a choice $ \ul{A} $ of $ 2 \times 4 $ self-adjoint boundary condition, cf.\,\eqref{eq:ExplicitUlA}. Here, we focus on the number of proper mergers $P$, cf.\,Def.\,\ref{def:Integers}. In view of the scattering matrix $S$ playing the role of a transition function between incoming and outgoing sections of $ E_+ $ (cf.\,Prop.\,\ref{prop:BulkScatteringCorrespondence}), $P$ shall be computed as
\begin{equation}
	P = C_+ - W_\infty = 2 - W_\infty \,,
\end{equation}
cf.\,\eqref{eq:CPlusW0WInfty}, where
\begin{equation}
	W_\infty = \lim_{\epsilon \rightarrow 0} \frac{1}{2 \pi \im} \int_{\mathcal{C}_{\epsilon,K}^{\infty}} S^{-1} \mathrm{d} S \,,
\end{equation}
cf.\,\eqref{eq:W0WInfty}, represents the winding of $S$ along a contour $ \mathcal{C}_{\epsilon,K}^{\infty} $, entirely contained in a neighborhood of $ \{ \infty \} $ and approximating the $ \{ k_y = 0 \} $ meridian of the compactified momentum plane as $ \epsilon \to 0 $.

Computing the winding number $ W_\infty $, and in turn $P$, thus requires: Connecting a self-adjoint boundary condition $ \ul{A} $ to its associated scattering matrix $S$; Expanding the latter around $ \{ \infty \} $, i.e., the north pole of the compactified momentum plane; Extracting from that asymptotic behavior the winding $ W_\infty $ along $ \mathcal{C}_{\epsilon,K}^{\infty} $. The first two steps, in this very order, are detailed in the paragraphs below. The third point, by contrast, is deferred to Sect.\,\ref{sec:Proofs}. \vspace{1\baselineskip}

We start by reviewing \cite{GJT21} how $ S $ is obtained from a given $ 2 \times 4 $ self-adjoint boundary condition $ k_x \mapsto \ul{A} (k_x) $. Recall the definition \ref{def:ScattState} of a scattering state
\begin{equation}
	\psi_s = \tilde{\psi}_s (y; k_x, \kappa) \eul^{\im (k_x x - \omega t)}
\end{equation}
of energy $ \omega = \omega_+ (k_x, \kappa) $. The scattering and evanescent amplitudes $ S $ and $T$, cf.\,Eq.\,\eqref{eq:SDef}, allow the following rewriting
\begin{align}
	\tilde{\psi}_s &\equiv \tilde{\psi}_{\op{in}} + \tilde{\psi}_{\op{out}} + \tilde{\psi}_{\op{ev}} \nonumber \\ 
	&=\hat{\psi}_{\op{in}} (k_x, - \kappa) \eul^{- \im \kappa y} + S(k_x, \kappa) \hat{\psi}_{\op{out}} (k_x, \kappa) \eul^{\im \kappa y} + T(k_x, \kappa) \hat{\psi}_{\op{ev}} (k_x, \kappa_{\op{ev}}) \eul^{\im \kappa_{\op{ev}} y} \,, \label{eq:ExplicitScattState}
\end{align}
where sections $ \hat{\psi}_j $, $ (j = \op{in}, \, \op{out}, \, \op{ev}) $ of $ E_+ $, cf.\,\eqref{eq:BandBundle}, have been chosen, and in particular so that $ \hat{\psi}_{\op{in}} : S^2 \to \bbC^3 $ is normalized for all $ (k_x, \kappa) \in S^2 $. The expressions for $S,T$, but not their topological content, depend on the choice of sections.

The scattering amplitude $S$ is completely determined by imposing the boundary conditions $ \ul{A} $ on the scattering state $ \psi_s $:
\begin{equation}
	\ul{A} M \Psi_s = 0 \,, \qquad \Psi_s = 
	\begin{pmatrix}
		\tilde{\psi}_s \\
		\partial_y \tilde{\psi}_s
	\end{pmatrix} \Big\rvert_{y=0} \,, \label{eq:BCScattState}
\end{equation}
cf.\,(\ref{eq:BoundaryCondition}, \ref{eq:AulA}). The resulting form of $S$ is specified in the next lemma.
\begin{lemma}
	Given sections $ \hat{\psi}_j = (\eta_j, \, u_j, \, v_j) $, $ (j = \op{in}, \, \op{out}, \, \op{ev}) $ of $ E_+ $, cf.\,\eqref{eq:BandBundle}, the scattering amplitude $S$ takes the form
	\begin{equation}
		S = - \frac{\det (\ul{A} V_{\op{in}})}{\det (\ul{A} V_{\op{out}})} \,, \label{eq:SwithAV}
	\end{equation}
	with $ V_{\op{in/out}} = (V_1^{\op{in/out}}, V_2^{\op{in/out}})^T $ and
	\begin{equation}
		V_1^{\op{in/out}} =
		\begin{pmatrix}
			\eta_{\op{i/o}} \pm \im \nu \kappa u_{\op{i/o}} & \eta_{\op{e}} - \im \nu \kappa_{\op{ev}} u_{\op{e}} \\
			u_{\op{i/o}} & u_{\op{e}} 
		\end{pmatrix}\,, \qquad 
		V_2^{\op{in/out}} = 
		\begin{pmatrix}
			v_{\op{i/o}} & v_{\op{e}} \\
			\mp \im \nu \kappa v_{\op{i/o}} & \im \nu \kappa_{\op{ev}} v_{\op{e}}  
		\end{pmatrix}\,. \label{eq:VinVout}
	\end{equation}
\end{lemma}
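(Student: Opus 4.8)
The plan is to obtain $S$ by direct substitution: insert the explicit scattering ansatz \eqref{eq:ExplicitScattState} into the self-adjoint boundary condition \eqref{eq:BCScattState}, reduce the latter to a $2\times2$ linear system for the two unknown amplitudes $S$ and $T$, and solve for $S$ by Cramer's rule. Formula \eqref{eq:SwithAV} then emerges as a ratio of $2\times2$ determinants, with the matrix $M$ of \eqref{eq:MatM} dictating exactly how the components $(\eta_j,u_j,v_j)$ of the chosen sections enter the blocks $V_1,V_2$.

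First I would evaluate the scattering state and its normal derivative at $y=0$. With $\hat\psi_j=(\eta_j,u_j,v_j)$ for $j\in\{\op{in},\op{out},\op{ev}\}$, the three plane-wave summands of \eqref{eq:ExplicitScattState} contribute $\hat\psi_{\op{in}}$, $S\hat\psi_{\op{out}}$, $T\hat\psi_{\op{ev}}$ to $\tilde\psi_s(0)$, while $\partial_y$ brings down the factors $-\im\kappa$, $\im\kappa$, $\im\kappa_{\op{ev}}$ from $\eul^{-\im\kappa y}$, $\eul^{\im\kappa y}$, $\eul^{\im\kappa_{\op{ev}}y}$, producing $-\im\kappa\,\hat\psi_{\op{in}}$, $\im\kappa\,S\hat\psi_{\op{out}}$, $\im\kappa_{\op{ev}}\,T\hat\psi_{\op{ev}}$ as the contributions to $\partial_y\tilde\psi_s(0)$. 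Stacking these into $\Psi_s\in\bbC^6$ and left-multiplying by $M$ — which sends $(\eta,u,v,\eta',u',v')^{T}$ to $(\eta-\nu u',\,u,\,v,\,\nu v')^{T}$ — a short computation gives
\begin{equation*}
	M\Psi_s = c_{\op{in}} + S\,c_{\op{out}} + T\,c_{\op{ev}}\,,
\end{equation*}
where $c_{\op{in}},c_{\op{out}}$ are the first columns of the $4\times2$ matrices $V_{\op{in}},V_{\op{out}}$ of \eqref{eq:VinVout} and $c_{\op{ev}}$ is their common second (evanescent) column. The asymmetric entries $\pm\im\nu\kappa$ in $V_1,V_2$ are precisely what $M$ produces when the combinations $\eta-\nu\partial_y u$ and $\nu\partial_y v$ act on the $(-\kappa)$-wave versus the $(+\kappa)$-wave.

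Next, since $\ul A\in\op{Mat}_{2\times4}(\bbC)$, the boundary condition $\ul A\,M\Psi_s=0$ becomes the square system
\begin{equation*}
	\bigl(\ul A\,c_{\op{out}}\ \ \ \ul A\,c_{\op{ev}}\bigr)\begin{pmatrix} S \\ T \end{pmatrix} \;=\; -\,\ul A\,c_{\op{in}}\,,
\end{equation*}
whose coefficient matrix is exactly $\ul A\,V_{\op{out}}$. By Def.~\ref{def:ScattState} the scattering state exists and is unique for every self-adjoint boundary condition, which is the statement that this $2\times2$ matrix is invertible, i.e.\ $\det(\ul A\,V_{\op{out}})\neq0$. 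Cramer's rule then gives
\begin{equation*}
	S \;=\; \frac{\det\bigl(-\ul A\,c_{\op{in}}\ \ \ \ul A\,c_{\op{ev}}\bigr)}{\det(\ul A\,V_{\op{out}})} \;=\; -\,\frac{\det\bigl(\ul A\,c_{\op{in}}\ \ \ \ul A\,c_{\op{ev}}\bigr)}{\det(\ul A\,V_{\op{out}})} \;=\; -\,\frac{\det(\ul A\,V_{\op{in}})}{\det(\ul A\,V_{\op{out}})}\,,
\end{equation*}
using $V_{\op{in}}=(c_{\op{in}},c_{\op{ev}})$; this is \eqref{eq:SwithAV}.

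No deep obstacle is expected here: the entire content is careful bookkeeping. The one point demanding vigilance is the sign of $\kappa$ carried by each wave (incoming $-\kappa$, outgoing $+\kappa$, evanescent $\kappa_{\op{ev}}\in\im\bbR_+$), combined with the $\partial_y$-factors and the non-diagonal action of $M$; together these pin down the precise locations of the $\pm\im\nu\kappa$ terms in \eqref{eq:VinVout} and must be checked component by component. One should also note that the companion equation determining $T$ is solved simultaneously and plays no role in the statement about $S$.
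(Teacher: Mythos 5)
Your proposal is correct and follows essentially the same route as the paper: evaluate the ansatz \eqref{eq:ExplicitScattState} and its $y$-derivative at the boundary, apply $M$, recast $\ul{A}M\Psi_s=0$ as a $2\times2$ system in $(S,T)$ with coefficient matrix $\ul{A}V_{\op{out}}$, and solve by Cramer's rule. Your explicit check of how $M$ produces the $\pm\im\nu\kappa$ entries is consistent with \eqref{eq:VinVout}, and your appeal to existence and uniqueness of the scattering state for the invertibility of $\ul{A}V_{\op{out}}$ matches the parenthetical remark in Def.\,\ref{def:ScattState}.
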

\begin{proof}
	Let
	\begin{equation}
		\Psi_j \coloneqq 
		\begin{pmatrix}
			\tilde{\psi}_j \\
			\partial_y \tilde{\psi}_j
		\end{pmatrix} \Big\rvert_{y=0} \,, \qquad (j = \op{in}, \, \op{out}, \, \op{ev}) \,,
	\end{equation}
	namely
	\begin{equation}
		\Psi_{\op{in/out}} = 
		\begin{pmatrix}
			\hat{\psi}_{\op{in/out}} \\
			\mp \im \kappa \hat{\psi}_{\op{in/out}}
		\end{pmatrix} \,, \qquad 
		\Psi_{\op{ev}} = 
		\begin{pmatrix}
			\hat{\psi}_{\op{ev}} \\
			\im \kappa_{\op{ev}} \hat{\psi}_{\op{ev}}
		\end{pmatrix} \,.
	\end{equation}
	By linearity, Eq.\,\eqref{eq:BCScattState} is equivalent to
	\begin{equation}
		\ul{A} M \big ( \Psi_{\op{in}} + S \Psi_{\op{out}} + T \Psi_{\op{ev}} ) = 0 \,,
	\end{equation}
	which we rearrange as
	\begin{equation}
		(\ul{A} M \Psi_{\op{out}}, \, \ul{A} M \Psi_{\op{ev}})
		\begin{pmatrix}
			S \\
			T
		\end{pmatrix}
		=
		- \ul{A} M \Psi_{\op{in}} \,,
	\end{equation}
	where $ \ul{A} M \Psi_{j} \ (j = \op{in}, \, \op{out}, \, \op{ev}) $ are $ 2 \times 1 $ matrices. Then, by Cramer's rule
	\begin{equation}
		S = - \frac{\det ( \ul{A} M \Psi_{\op{in}}, \, \ul{A} M \Psi_{\op{ev}} )}{\det ( \ul{A} M \Psi_{\op{out}}, \, \ul{A} M \Psi_{\op{ev}} )} \,.
	\end{equation}
	The result follows by observing
	\begin{equation}
		V_{\op{in/out}} = M ( \Psi_{\op{in/out}}, \, \Psi_{\op{ev}} ) \,,
	\end{equation}
	with $M$ as in \eqref{eq:MatM}.
\end{proof}
We further specialize the choice of sections to
\begin{equation}
	\hat{\psi}_{\op{in}} (k_x, - \kappa) = \hat{\psi}^0 (k_x, -\kappa) \,, \quad \hat{\psi}_{\op{out}} (k_x, \kappa) = \hat{\psi}^0 (k_x, \kappa) \,, \quad \hat{\psi}_{\op{ev}} (k_x, \kappa_{\op{ev}}) = \hat{\psi}^\infty (k_x, \kappa_{\op{ev}}) \,, \label{eq:ChoiceOfSections}
\end{equation}
cf.\,\eqref{eq:PsiInfPsiZero}, so as to sit in the case $ \hat{\psi}_{\op{out}} (k_x, \kappa) = \hat{\psi}_{\op{in}} (s (k_x, \kappa)) $ where $S$ acts as a transition function of the bundle $ E_+ $ (see Sect.\,\ref{subsec:TopScatt} for details). Eq.\,\eqref{eq:SwithAV} then takes the form
\begin{equation}
	S (k_x, \kappa) = - \frac{g (k_x, -\kappa)}{g (k_x, \kappa)} \,, \label{eq:SWithG}
\end{equation}
where the \textit{Jost function} $ g (k_x, \kappa) $ is given by
\begin{equation}
	g (k_x, \kappa) \coloneqq \op{det} (\ul{A} V)\,, \label{eq:gFromV}
\end{equation}
with $ V = V_{\op{out}} $, cf.\,\eqref{eq:VinVout}, evaluated at
\begin{equation}
	l_{\op{o}} \coloneqq l^0 (k_x, \kappa) \,, \qquad l_{\op{e}} \coloneqq l^\infty (k_x, \kappa_{\op{ev}}) \,, \qquad (l = \eta, u, v) \,. \label{eq:DefEtaOEtaE}
\end{equation}
\begin{remark} \label{rem:ProportionalSections}
	Notice that sections \eqref{eq:ChoiceOfSections} can be replaced by non-zero multiples thereof without changing the zeros of the Jost function $g$, cf.\,\eqref{eq:gFromV}. Those zeros are in turn relevant because they determine edge states, cf.\,\eqref{eq:BoundState}.
\end{remark}

Only the first few orders of the expansion of $ g(k_x, \kappa) $ around $ (k_x, \kappa) \to \infty $ are relevant for the winding number $ W_\infty $. We perform that expansion by first bringing $ \{ \infty \} $ to $0$ via the (orientation preserving) coordinate change
\begin{equation}
	(k_x , \kappa) = \Big( \frac{\cos \varphi}{\varepsilon} , - \frac{\sin \varphi}{\varepsilon} \Big) \,, \label{eq:ChangeOfVariables}
\end{equation}
and then expanding around $ \varepsilon \to 0 $. The leading orders of the resulting series are reported in the next proposition, separately for the families DD, ND, NN of Def.\,\ref{def:Families}. As customary, case DN is not pursued, cf.\,comment right below Def.\,\ref{def:Families}.
\begin{proposition}
	\label{prop:GEps}
	Consider the families DD, ND and NN of Def.\,\ref{def:Families}, moreover parametrized as in Lm.\,\ref{lem:Dictionary} and Prop.\,\ref{prop:BCs}. The Jost function $ g (\varepsilon, \varphi) $, cf.\,Eq.\,\eqref{eq:gFromV}, has, for each family, the following expansion around $ \varepsilon = 0 $.
	\begin{equation}
		\label{eq:Expansions}
		\begin{array}{llcl}
			\op{DD:} & g (\varepsilon, \varphi) & = & 2 \im \left( a_1' + \frac{\cos \varphi}{\varepsilon} b_1 \right) \wedge \left( a_2' + \frac{\cos \varphi}{\varepsilon} b_2 \right) + o (1) \\
			\op{ND:} & g (\varepsilon, \varphi) & = & \varepsilon^{-2} \left( 2 \lambda \cos \varphi + \im \nu | 1 + \im \alpha |^2 \sin \varphi - \nu | 1 - \im \alpha |^2 \sqrt{1 + \cos^2 \varphi} + 2 \lambda' \varepsilon \right) \cdot \\
			& & & a_1 \wedge \left( \cos \varphi b_1 + \varepsilon a_1' \right) + o(1) \\
			\op{NN:} & g (\varepsilon, \varphi) & = & \varepsilon^{-2} (a_1 \wedge a_2) [ A_1 (\varepsilon, \varphi) B_2 (\varepsilon, \varphi) - A_2 (\varepsilon, \varphi) B_1 (\varepsilon, \varphi) ] + o(1) \,,
		\end{array} 
	\end{equation}
	where, in case NN,
	\begin{equation}
		\label{eq:A1A2B1B2}
		\begin{aligned}
			A_1 (\varepsilon, \varphi) &= \varepsilon (\mu' + \lambda_2') + \im \nu \sin \varphi + (\mu + \lambda_2) \cos \varphi \,,  \\
			A_2 (\varepsilon, \varphi) &= \varepsilon (\mu' - \lambda_2') + \nu \sqrt{1 + \cos^2 \varphi} + (\mu - \lambda_2) \cos \varphi \,,  \\
			B_1 (\varepsilon, \varphi) &= \im \varepsilon (\lambda_1' + \bar{\mu}') - \nu \sin \varphi + \im (\lambda_1 + \bar{\mu}) \cos \varphi \,,  \\
			B_2 (\varepsilon, \varphi) &= \im \varepsilon (\lambda_1' - \bar{\mu}') - \im \nu \sqrt{1 + \cos^2 \varphi} + \im (\lambda_1 - \bar{\mu}) \cos \varphi \,. 
		\end{aligned}
	\end{equation}
\end{proposition}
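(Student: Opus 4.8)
The plan is to evaluate the Jost determinant $g=\det(\ul A V)$ by Cauchy--Binet, collapse the column part to a single wedge using the self‑adjointness relations of Prop.~\ref{prop:BCs}, insert the explicit bulk eigensections of Sect.~\ref{subsec:BulkDetails}, and expand in $\varepsilon$ after the change of variables \eqref{eq:ChangeOfVariables}. Concretely, I would write $\ul A(k_x)=(c_1,c_2,c_3,c_4)$ with $c_1=a_1$, $c_2=a_1'+k_xb_1$, $c_3=a_2'+k_xb_2$, $c_4=a_2$, cf.~\eqref{eq:ExplicitUlA}, and let $m_{ij}$ ($1\le i<j\le4$) be the $2\times2$ minor of $V=V_{\op{out}}$ on rows $i,j$, cf.~\eqref{eq:VinVout}. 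Cauchy--Binet gives $g=\sum_{i<j}(c_i\wedge c_j)\,m_{ij}$, the six wedges being polynomials of degree $\le2$ in $k_x$. In family DD, $a_1=a_2=0$ forces $c_1=c_4=0$, so only $c_2\wedge c_3=(a_1'+k_xb_1)\wedge(a_2'+k_xb_2)$ survives and $g=(c_2\wedge c_3)m_{23}$. In family ND, $a_2=\alpha a_1$ together with \eqref{eq:ConditionsND} let one rewrite $c_3=\im(\lambda'+k_x\lambda)a_1-\bar\alpha c_2$ and $c_4=\alpha c_1$, so that every $c_i\wedge c_j$ is a scalar multiple of $c_1\wedge c_2=a_1\wedge(a_1'+k_xb_1)$; collecting those scalars against the minors produces the prefactor in the ND line of \eqref{eq:Expansions}. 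In family NN, \eqref{eq:ConditionsNN1}--\eqref{eq:ConditionsNN2} express $c_2,c_3$ in the basis $(a_1,a_2)$ with polynomial‑in‑$k_x$ coefficients, so $c_i\wedge c_j=(\text{polynomial})(a_1\wedge a_2)$ for every pair and $g=(a_1\wedge a_2)\sum_{i<j}(\cdots)m_{ij}$, which one reorganizes into $\varepsilon^{-2}(a_1\wedge a_2)(A_1B_2-A_2B_1)$ with $A_i,B_i$ as in \eqref{eq:A1A2B1B2}. Case DN follows from ND by the substitution in Prop.~\ref{prop:BCs}, and the particle‑hole symmetric subfamilies by restricting parameters as in Prop.~\ref{prop:UBCs}(ii).

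It remains to evaluate the minors. Inserting the sections \eqref{eq:ChoiceOfSections}, namely $\hat\psi^0(k_x,\pm\kappa)$ for the in/out channels and $\hat\psi^\infty(k_x,\kappa_{\op{ev}})$ for the evanescent one, in components via \eqref{eq:PsiHatAndQ}, makes each $m_{ij}$ a rational function of $k_x,\kappa,\kappa_{\op{ev}}$ and of $q_o\coloneqq q(k_x,\kappa)$, $q_e\coloneqq q(k_x,\kappa_{\op{ev}})$. Under \eqref{eq:ChangeOfVariables} one has $k_x^2+\kappa^2=\varepsilon^{-2}$ and, from \eqref{eq:KappaEv}, $\kappa_{\op{ev}}=\frac{\im}{\varepsilon}\sqrt{1+\cos^2\varphi}\,(1+O(\varepsilon^2))$, so $k_x^2+\kappa_{\op{ev}}^2=-\varepsilon^{-2}+O(1)$; then \eqref{eq:Omega+} gives $\omega_+=\nu\varepsilon^{-2}(1+O(\varepsilon^2))$ in both channels, and a short expansion yields
\[
q_o=-1+\frac{\varepsilon^{2}}{2\nu^{2}}+O(\varepsilon^{4})\,,\qquad q_e=+1+\frac{\varepsilon^{2}}{2\nu^{2}}+O(\varepsilon^{4})\,,
\]
the branch of $\omega_+$ in the evanescent channel being fixed by analytic continuation from the scattering energy. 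With these asymptotics one obtains $m_{23}=2\im+O(\varepsilon^{4})$, while the remaining minors are $O(\varepsilon^{-1})$ (the $\eta$‑components of the sections being $O(\varepsilon)$), and their $O(\varepsilon^{-1})$ coefficients are the trigonometric quantities that, after the elementary identities $|1+\im\alpha|^{2}=1+|\alpha|^{2}-2\op{Im}\alpha$ and $|1-\im\alpha|^{2}=1+|\alpha|^{2}+2\op{Im}\alpha$ in ND, and the direct reading of $A_i,B_i$ in NN, assemble into the brackets of \eqref{eq:Expansions}. Multiplying the column part by the minor part and truncating then produces the displayed expansions, the wedge prefactor being $O(\varepsilon^{-1})$ in ND and $O(\varepsilon^{-2})$ in DD and in the NN minor sum, so that the formulas in \eqref{eq:Expansions} capture every term down to and including $\varepsilon^{0}$.

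The step I expect to be the main obstacle is controlling that truncation: a naive count of orders leaves an $O(1)$, not an $o(1)$, remainder, and the claimed $o(1)$ error holds only because the subleading corrections cancel. The transparent instance is DD, where the would‑be $O(\varepsilon^{2})\cdot O(\varepsilon^{-2})$ error vanishes precisely because $q_o$ and $q_e$ carry the \emph{same} $O(\varepsilon^{2})$ correction $\varepsilon^{2}/(2\nu^{2})$, so $q_e-q_o=2+O(\varepsilon^{4})$ and hence $m_{23}=2\im+o(\varepsilon^{2})$; the ND and NN cases require the analogous vanishing of the first would‑be error coefficient (in ND, of the $O(\varepsilon)$ term of the scalar carrying the bracket). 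Establishing this forces one to expand $q_o,q_e,\kappa_{\op{ev}},\omega_+$ one order beyond leading throughout, to keep the wedge identities exact rather than only asymptotic, and to check the cancellation family by family — that bookkeeping, not any conceptual ingredient, is the real work. Finally, by Rem.~\ref{rem:ProportionalSections} the overall normalization of the sections would be irrelevant for the \emph{zeros} of $g$, but here it is fixed by \eqref{eq:gFromV} and must be tracked through the whole expansion.
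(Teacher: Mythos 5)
Your proposal is correct and follows essentially the same route as the paper's proof (App.~\ref{app:ExpLambda}): expand the bulk sections around $\varepsilon=0$, use Lm.~\ref{lem:Dictionary} and Prop.~\ref{prop:BCs} to collapse the column part of $\det(\ul{A}V)$ to a single wedge per family, and retain the first two orders --- your Cauchy--Binet decomposition $g=\sum_{i<j}(c_i\wedge c_j)\,m_{ij}$ is merely a reorganization of the paper's Dirac-notation computation of the same determinant (note only that $m_{14}$ is $\mathcal{O}(\varepsilon^{-2})$, not $\mathcal{O}(\varepsilon^{-1})$, which is where the $\nu^2$-terms of \eqref{eq:A1A2B1B2} originate in case NN). Your treatment of the remainder is in fact more careful than the paper's own, which only records an $\mathcal{O}(1)$ error in \eqref{eq:ApproximateJost}; the cancellation you identify, $q_e-q_o=2+\mathcal{O}(\varepsilon^{4})$ via Vieta's formulae, is indeed the mechanism that upgrades it to $o(1)$.
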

The proof of this statement is deferred to App.\,\ref{app:ExpLambda}.

As the last item of this section, we connect the winding of $ S $ along the contour $ \cal{C}_{\epsilon, K}^\infty $ appearing in \eqref{eq:W0WInfty} to that of $ g (\varepsilon, \varphi) $ along a circumference of fixed radius around the point at infinity $ \varepsilon = 0 $.
\begin{lemma}
	\label{cl:NvsW}
	Let $ \ul{\lambda} \coloneqq (\lambda_x, \lambda_y) \ni \bbR^2_{\ul{\lambda}}$ be the reciprocal momentum variables, and let $ \tau: \bbR^2_{\ul{\lambda}} \to \bbR^2_{\ul{k}} $ be the coordinate change
	\begin{equation}
		\tau: (\lambda_x, \lambda_y) \mapsto \Big( \frac{\lambda_x}{\lambda^2} , \, - \frac{\lambda_y}{\lambda^2} \Big) = (k_x, k_y) \,, \qquad \lambda = \sqrt{\lambda_x^2 + \lambda_y^2} \,.
	\end{equation}
	Then, there exists an oriented contour
	\begin{equation}
		\gamma_R = \gamma_R ([0, 2 \pi]) \,, \qquad \gamma_R (\varphi) = (R \cos \varphi, R \sin \varphi) \in \bbR^2_{\ul{\lambda}}
	\end{equation}
	such that
	\begin{equation}
		W_\infty = \lim_{\epsilon \rightarrow 0} \frac{1}{2 \pi \im} \int_{\mathcal{C}_{\epsilon,K}^{\infty}} S^{-1} \mathrm{d} S = \lim_{R \rightarrow 0} \frac{1}{2 \pi \im} \int_{\gamma_R} g^{-1} \mathrm{d} g \,, \label{eq:WInftyEquality}
	\end{equation}
	where $g$ is the Jost function of (\ref{eq:SWithG}, \ref{eq:gFromV}).
\end{lemma}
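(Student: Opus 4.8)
The plan is to track how the contour $\mathcal{C}^\infty_{\epsilon,K}$ in the $(k_x,\kappa)$-plane, approximating the part of the $\{k_y=0\}$ meridian that contains $\{\infty\}$, is transported to the reciprocal plane $\bbR^2_{\underline\lambda}$ under the inversion $\tau$, and then to argue that the winding of $S^{-1}\diff S$ along it equals the winding of $g^{-1}\diff g$ along a small circle $\gamma_R$ around the origin $\underline\lambda = 0$ (the preimage of $\infty$). The two ingredients are: (i) the relation \eqref{eq:SWithG}, $S(k_x,\kappa) = -g(k_x,-\kappa)/g(k_x,\kappa)$, which expresses $S$ as a ratio of Jost functions at reflected arguments; and (ii) the geometry of the compactification, whereby $\mathcal{C}^\infty_{\epsilon,K}$ runs along the meridian slightly off the real axis (at transverse momentum parametrized by $\epsilon$) on the outgoing side, swings around $\infty$, and returns on the incoming side.

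First I would make the reflection structure explicit. Writing $S^{-1}\diff S = -\diff\!\big(g(k_x,-\kappa)\big)/g(k_x,-\kappa) + \diff\!\big(g(k_x,\kappa)\big)/g(k_x,\kappa)$, the integral of $S^{-1}\diff S$ over $\mathcal{C}^\infty_{\epsilon,K}$ splits into two pieces: the integral of $g^{-1}\diff g$ along $\mathcal{C}^\infty_{\epsilon,K}$ itself, minus its integral along the image $s(\mathcal{C}^\infty_{\epsilon,K})$ of the contour under the reflection $s:(k_x,\kappa)\mapsto(k_x,-\kappa)$. As $\epsilon\to 0$, both $\mathcal{C}^\infty_{\epsilon,K}$ and $s(\mathcal{C}^\infty_{\epsilon,K})$ converge to the same arc of the meridian containing $\infty$, traversed in \emph{opposite} orientations (because $s$ reverses the sign of $\kappa$ and hence the orientation inherited from the plane). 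Concatenating the first contour with the reverse of the second therefore produces, in the limit, a closed loop that encircles the point $\infty$ exactly once. Pushing this loop through $\tau$ — which sends $\infty \mapsto 0$ and is an orientation-preserving diffeomorphism away from $\{0,\infty\}$ (one checks the Jacobian sign of $(\lambda_x,\lambda_y)\mapsto(\lambda_x/\lambda^2,-\lambda_y/\lambda^2)$, the extra sign in the $\lambda_y$-slot being compensated by the inversion) — yields a small loop around the origin in $\bbR^2_{\underline\lambda}$, homotopic (in the punctured disk, away from zeros of $g$) to the circle $\gamma_R$ for $R$ small. Since $g^{-1}\diff g$ is closed on the complement of its zero set, the integral is homotopy-invariant there, and the two windings agree.

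The technical points requiring care, and where I expect the main obstacle to lie, are: (1) verifying that $g$ (equivalently $S$) has no zeros on the limiting meridian arc near $\infty$ other than those accounted for by genuine improper mergers, so that the homotopy from the transported $\mathcal{C}^\infty_{\epsilon,K}$-loop to $\gamma_R$ stays in the domain where $g^{-1}\diff g$ is defined — this is essentially the statement, used already in \eqref{eq:W0WInfty}, that proper mergers do not accumulate at $k_x\to\infty$, combined with the behavior of $\kappa_{\op{ev}}$ in \eqref{eq:KappaEv} which stays away from the real axis there; and (2) the bookkeeping of orientations through the change of variables \eqref{eq:ChangeOfVariables}, i.e.\ confirming that the parametrization $\gamma_R(\varphi) = (R\cos\varphi, R\sin\varphi)$ with $\varphi$ increasing corresponds to the positively-oriented small circle inherited from the construction, so that no stray sign enters \eqref{eq:WInftyEquality}. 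Both are checks rather than deep difficulties; once they are in place, the equality of the two winding integrals follows from the homotopy invariance of the logarithmic derivative and the already-established convergence of $\mathcal{C}_\epsilon$ to the meridian. I would also remark that the limit $R\to 0$ on the right-hand side is immaterial once $R$ is small enough (no zeros of $g$ in $0 < |\underline\lambda| \le R$ off the relevant locus), mirroring the $K$-independence noted after \eqref{eq:W0WInfty}.
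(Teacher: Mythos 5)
Your strategy coincides with the paper's: use $S=-\,(g\circ s)/g$ to rewrite $S^{-1}\diff S$ as a difference of two copies of $g^{-1}\diff g$, let the two contours close up into a small loop around the point at infinity, and transport that loop through $\tau$ to a circle about $\ul{\lambda}=0$. The paper simply makes this concrete by choosing straight-line contours $\gamma_{\pm\epsilon}(t)=(-t,\mp\epsilon)$ in $\bbR^2_{\ul{\lambda}}$ and checking separately that $\tau\circ\gamma_\epsilon$ is an admissible realization of $\cal{C}^{\infty}_{\epsilon,K}$; your appeal to homotopy invariance of $g^{-1}\diff g$ away from its zeros does the same job, and your point (1) is indeed already covered by the non-accumulation of mergers invoked around \eqref{eq:W0WInfty}.

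There are, however, two concrete slips precisely in the orientation bookkeeping you flagged as delicate, and together they flip the sign of the result. First, the decomposition is wrong by a sign: from $S(k_x,\kappa)=-g(k_x,-\kappa)/g(k_x,\kappa)$ one gets $S^{-1}\diff S=\diff\log g(k_x,-\kappa)-\diff\log g(k_x,\kappa)$, i.e.\ the \emph{reflected} copy enters with $+$ and the unreflected one with $-$, opposite to what you wrote; consequently the loop you assemble is $\cal{C}\cup(-s(\cal{C}))$ rather than $s(\cal{C})\cup(-\cal{C})$, which differs by overall orientation. Second, the two contours do \emph{not} converge to the meridian arc with opposite orientations: $s$ only flips $\kappa$, so both $\cal{C}^{\infty}_{\epsilon,K}$ and $s(\cal{C}^{\infty}_{\epsilon,K})$ are still run in the direction of increasing $k_x$; they approach the arc from opposite \emph{sides} ($\kappa\gtrless0$, hence $\lambda_y\lessgtr0$ under $\tau^{-1}$), and it is the reversal of one of them by hand that closes the thin strip between them into a loop encircling $\ul{\lambda}=0$. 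With the correct signs, increasing $k_x$ corresponds near the origin to decreasing $\lambda_x$, so $s(\cal{C})$ (upper half-plane, right to left) followed by $-\cal{C}$ (lower half-plane, left to right) is the positively oriented circle $\gamma_R$, and \eqref{eq:WInftyEquality} holds with no stray sign; with your signs you would instead prove the identity for $\gamma_R$ traversed clockwise, i.e.\ $W_\infty=-\lim_{R\to0}\frac{1}{2\pi\im}\int_{\gamma_R}g^{-1}\diff g$.
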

\begin{proof}
	We split the proof into two parts. First, we claim and prove that the winding of $ S $ along a certain contour $ \gamma_\epsilon \subset \bbR^2_{\ul{\lambda}} $ equals the winding of $g$ along $ \gamma_R $. Then, we show that $ \gamma_\epsilon $ and the original contour $\mathcal{C}_{\epsilon,K}^{\infty}$ achieve the same limit as $ \epsilon \to 0 $.
	
	Consider, for some $ \epsilon, T >0 $, the oriented contour
	\begin{equation}
		\gamma_\epsilon = \gamma_{\epsilon} ([-T,T]) \,, \qquad \gamma_\epsilon (t) = (-t, -\epsilon) \in \bbR^2_{\ul{\lambda}} \,,
	\end{equation}
	corresponding to a straight line, oriented right to left, in the $ \lambda_y < 0 $ half-plane of Fig.\,\ref{fig:WInftyEqMinN}. The winding of $S$ along $ \gamma_\epsilon $ is given by
	\begin{equation}
		\frac{1}{2 \pi \im} \int_{\gamma_\epsilon} \mathrm{d} \log S \,.
	\end{equation}
	By \eqref{eq:SWithG}, we observe that
	\begin{equation}
		\diff \log S (\ul{\lambda}) = \diff \log g (\lambda_x, -\lambda_y) -\diff \log g (\lambda_x, \lambda_y) \,.
	\end{equation}
	Then
	\begin{align}
		\frac{1}{2 \pi \im} \int_{\gamma_\epsilon} \mathrm{d} \log S &= \frac{1}{2 \pi \im} \Big( \int_{\gamma_\epsilon} \diff \log g (\lambda_x, -\lambda_y) - \int_{\gamma_\epsilon} \diff \log g (\lambda_x, \lambda_y) \Big) \nonumber \\
		&= \frac{1}{2 \pi \im} \Big( \int_{\gamma_{-\epsilon}} \diff \log g (\lambda_x, \lambda_y) - \int_{\gamma_\epsilon} \diff \log g (\lambda_x, \lambda_y) \Big) \nonumber \\
		&= \frac{1}{2 \pi \im} \int_{\gamma_{-\epsilon} - \gamma_{\epsilon}} \diff \log g \,.
	\end{align}
	As $ \epsilon \to 0 $, the difference of contours $ \gamma_{-\epsilon} - \gamma_{\epsilon} $ becomes topologically equivalent to a positively oriented circle $ \gamma_R $ (of small radius $ R > 0 $) around the origin of $ \bbR^2_{\ul{\lambda}} $, cf.\,Fig.\,\ref{fig:WInftyEqMinN}. Thus,
	\begin{equation}
		W_\infty = \frac{1}{2 \pi \im} \int_{\gamma_\epsilon} \mathrm{d} \log S = \frac{1}{2 \pi \im} \int_{\gamma_R} \diff \log g \,,
	\end{equation}
	which is the first claim.
	\begin{figure}[hbt]
		\centering
		\includegraphics[width=0.65\linewidth]{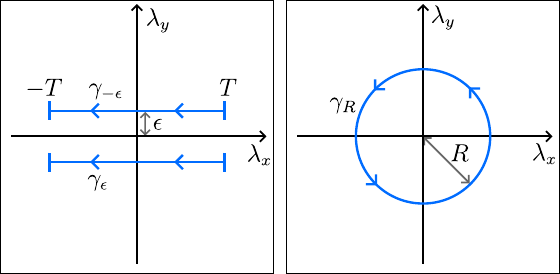}
		\caption{Winding contours at infinity. On the left, the $ \gamma_{\pm \epsilon} $ straight contours. On the right, the topologically equivalent, positively oriented circle $ \gamma_R $.} 
		\label{fig:WInftyEqMinN}
	\end{figure}
	
	For the second part of the proof, we recall that $ \mathcal{C}_{\epsilon,K}^{\infty} $ can be any contour, oriented in the direction of increasing $ k_x $, which approximates the infinite portion of the $ \{ k_y = 0 \} $ \enquote{meridian} as $ \epsilon \to 0 $:
	\begin{equation}
		\lim_{\epsilon \to 0} \mathcal{C}_{\epsilon,K}^{\infty} = \{ |k_x| > K \} \times \{ k_y = 0 \} \subset \bbR^2_{\ul{k}} \,.
	\end{equation}
	The candidate contour $ \gamma_\epsilon $ is thus suitable for the computation of $ W_\infty $ if
	\begin{equation}
		\lim_{\epsilon \to 0} \tau \circ \gamma_\epsilon ([-T,T]) = \{ |k_x| > K \} \times \{ k_y = 0 \} \,.
	\end{equation}
	However, 
	\begin{equation}
		\lim_{\epsilon \to 0} \tau \circ \gamma_\epsilon (t) = \lim_{\epsilon \to 0} \Big( \frac{-t}{t^2 + \epsilon^2} , \, \frac{\epsilon}{t^2 + \epsilon^2} \Big) = \Big( -\frac{1}{t}, 0 \Big) \,,
	\end{equation}
	whence the desired limit is achieved by picking $ T = K^{-1} $.
\end{proof}

\subsection{Number of improper mergers $I$} \label{subsec:ParabolicStates}

In this section, we show how to compute $I$, the number of improper mergers, given the boundary conditions. At first, we notice that $I$ is tantamount to the signed number of asymptotically parabolic edge eigenvalues. They correspond to poles of the scattering matrix $S$ in a region $ \omega \gtrsim k_x^2 $, $ k_x \to \infty $ of the $ (k_x, \omega) $-plane. There, both evanescent momenta $ (\kappa, \kappa_{\op{ev}}) $ diverge, and it pays to replace them by a point $ ( k_{y+}, k_{y-} ) $ on a Riemann surface. Working on this surface, we express poles of $S$ as zeros of $ G (k_x, k_{y\pm}) $, a close relative of the Jost function $ g $, cf.\,\eqref{eq:SWithG}. The required parabolic behavior of the associated edge states imposes the Ansatz $ k_{y\pm} = \pm \im c_\pm k_x $, $ c_\pm \in \bbR $, so that $ G(k_x, k_{y\pm} = \pm \im c_\pm k_x) \equiv G(c_\pm) = 0 $ defines a curve in the plane $ (c_+, c_-) $, whose type depends on the families DD, ND, NN of the boundary conditions. Further constraints restrict eigenvalues to lie on certain arcs $ \cal{A}_\uparrow $, $ \cal{A}_\downarrow $ of a circle in the $ (c_+,c_-) $-plane, and the value of $I$ is thus geometrically found as the signed number of intersections between the curve and the arcs. \vspace{1\baselineskip}

Above Def.\,\ref{def:Integers}, $I$ was defined as the signed number of edge eigenvalues diverging for $ k_x \to \infty $. Such eigenvalues are asymptotically quadratic in $ k_x \to \infty $: Faster divergences would make the eigenvalue hit the positive bulk band at finite $ k_x $, whereas slower ones would imply that the eigenvalue and band edge never actually meet. The problem of finding $I$ thus reduces to identifying all \textit{asymptotically parabolic edge eigenvalues} $ \omega (k_x) $, and counting them with sign. More precisely, this section considers a region
\begin{equation}
	\omega \gtrsim k_x^2 \label{eq:RegionI}
\end{equation}
of the $ (k_x, \omega) $-plane, meaning $ \omega \geq c k_x^2 $ for any fixed $ c > 0 $. Other regions, for example $ \omega \ll k_x^2 $ will be considered later.

As discussed above Eq.\,\eqref{eq:BoundState}, the edge states of interest shall be found as solutions, in $ \kappa $, of
\begin{equation}
	S^{-1} (k_x, \kappa) = 0 \label{eq:Poles}
\end{equation}
at $ |k_x| \to \infty $, moreover with energy
\begin{equation}
	\omega = \omega_+ (k_x, \kappa) \label{eq:OmegaEqOmegaPlus}
\end{equation}
lying in the desired region \eqref{eq:RegionI} and $ \op{Im} (\kappa) > 0 $. Those edge states involve two evanescent states of rates $ \kappa, \kappa_{\op{ev}} $ that determine each other, cf.\,\eqref{eq:KappaEv}. In the regimes \eqref{eq:RegionI} they will both diverge and it pays to treat them on the same footing.

Fix $ \omega $ and ignore $ k_x $ for the time being. Let $ X_\pm $ be the two solutions, in $X$, of
\begin{equation}
	\omega^2 = X + (f - \nu X)^2 \,, \label{eq:BranchFreeOmega}
\end{equation}
namely
\begin{equation}
	X_{\pm} = \frac{- (1 - 2 \nu f) \pm \sqrt{\Delta}}{2 \nu^2} \,, \qquad (\Delta = 1 - 4 \nu f + 4 \omega^2 \nu^2) \,, \label{eq:XPlusMinusExplicit}
\end{equation}
where we stress that $ \Delta > 0 $ by $ 4 \nu f < 1 $ (cf.\,comment below Eq.\,\eqref{eq:SWM}). The solutions are related by Vieta's formulae
\begin{gather}
	X_+ + X_- = - \frac{1 - 2 \nu f}{\nu^2} \,, \label{eq:FirstVieta} \\  
	X_+ X_- = \frac{f^2 - \omega^2}{\nu^2} \,. \label{eq:SecondVieta}
\end{gather}
A convenient way to view these equations is as follows. Eq.\,\eqref{eq:BranchFreeOmega} determines a map $ \omega \mapsto (X_+, X_-) $ with $ X_+ > X_- $ for $ \omega $ real. For $ \omega \to \infty $, we moreover have
\begin{equation}
	X_\pm = \pm \frac{\omega}{\nu} + \cal{O} (\omega^{-1}) \,. \label{eq:XOmega}
\end{equation}
Conversely, pairs $ (X_+, X_-) $ are constrained by \eqref{eq:FirstVieta}, one of Vieta's formulae, and $ \omega $ is then determined by the other. By extension, $ (k_x, \omega) $ determine $ (k_{y+}, k_{y-}) $ by
\begin{equation}
	k_x^2 + k_{y \pm}^2 = X_\pm \,. \label{eq:DefKYPlusMinus}
\end{equation}
We thus find
\begin{equation}
	k_{y+} = \sqrt{X_+ - k_x^2} \eqqcolon \kappa \,, \qquad k_{y-} = \im \sqrt{k_x^2 - X_-} = \kappa_{\op{ev}} (k_x, \kappa) \,, \label{eq:AnalyticContinuation}
\end{equation}
up to signs. Notice that $ \kappa \in \im \bbR $ iff $ X_+ < k_x^2 $, a condition equivalent to $ \omega $ sitting in a spectral gap: $ \omega^2 < \big( \omega^+ (k_x) \big)^2 $, cf.\,\eqref{eq:BandRim}. By contrast, $ \kappa_{\op{ev}} \in \im \bbR_+ $ is granted, even outside of the spectral gap, by $ X_- < 0 $, $ (\omega \in \bbR) $, cf.\,\eqref{eq:XPlusMinusExplicit}. Moreover,
\begin{equation}
	k_{y+}^2 - k_{y-}^2 = X_+ - X_- = \frac{2 \omega}{\nu} + \cal{O} (\omega^{-1}) \,. \label{eq:KyPMDiff}
\end{equation}
Again and conversely, we consider points $ (k_x, k_{y+}, k_{y-}) $, where $ k_{y\pm} $ are constrained by (\ref{eq:DefKYPlusMinus}, \ref{eq:FirstVieta}) and $ \omega $ is determined by \eqref{eq:SecondVieta}. Eq.\,\eqref{eq:DefKYPlusMinus} allows for two branches of $ k_{y\pm} $ each. At first we shall not distinguish between them, thus viewing $ (k_{y+}, k_{y-}) $ as a point on a Riemann surface depending on $ k_x $. 

We now consider functions on that surface, such as the scattering amplitude $ S (k_x, k_{y+}, k_{y-}) $ $ \equiv S (k_x, k_{y+}) $ whose poles we wish to identify, cf.\,\eqref{eq:Poles}. We moreover discuss such zeros only in the generic case where the index $B$ is well-defined. That entails that all the fibers $ H^\# (k_x) $ are self-adjoint one by one, cf.\,Rem.\,\ref{rem:LossSA}, and actually including at $ k_x = \pm \infty $ in a natural sense. In turn, that means that the Jost functions $g$, related to $S$ by \eqref{eq:SWithG}, have an easily identifiable asymptotics at $ k_x \to \pm \infty $. It allows to simplify the quotient
\begin{equation}
	S = - g (k_x, -k_{y+}) / g (k_x, k_{y+}) = - G (k_x, -k_{y+}) / G (k_x, k_{y+}) \,, \label{eq:Simplification}
\end{equation}
in favor of simpler expressions $ G(k_x, k_{y \pm}) $, that are found, for families DD-ND-NN, as follows. The expansions of Prop.\,\ref{prop:GEps} are still formally valid, upon replacement $ (\kappa, \kappa_{\op{ev}}) \rightsquigarrow (k_{y+}, k_{y-}) $, because $ |k_x| \to \infty $ was there explored along a parabola $ \cal{C}_\epsilon $, cf.\,central panel of Fig.\,\ref{fig:Levinson}, that lies within the region $ \omega \gtrsim k_x^2 $, here under study. However, there $ k_{y+} = k_{y+} (\omega) \in \bbR $ because $ \omega $ was in the bulk band; here, $ k_{y+} \in \im \bbR $ because $ \omega $ is chosen in the gap. The desired $G$'s are obtained, family by family, from those expansions by retaining the leading orders only and performing the simplifications alluded to in \eqref{eq:Simplification}.
\begin{proposition} \label{prop:Varieties}
	Consider the families DD, ND and NN of Def.\,\ref{def:Families}, moreover parametrized as in Lm.\,\ref{lem:Dictionary} and Prop.\,\ref{prop:BCs}. Poles of the scattering amplitude $S$, determined by the boundary conditions through (\ref{eq:SWithG}, \ref{eq:gFromV}), in region \eqref{eq:RegionI} as $ |k_x| \to \infty $ are encoded by $ G (k_x, k_{y+}) = 0 $, with $G$ as listed below, separately for each family.
	\begin{equation}
		\label{eq:Varieties}
		\begin{array}{llcl}
			\op{DD:} & G (k_x, k_{y+}) & = & 1 \,, \\
			\op{ND:} & G (k_x, k_{y+}) & = & 2 \lambda k_x - \im \nu | 1 + \im \alpha |^2 k_{y+} + \im \nu | 1 - \im \alpha |^2 k_{y-} \,, \\
			\op{NN:} & G (k_x, k_{y+}) & = & 2 (|\mu|^2 - \lambda_1 \lambda_2) k_x^2 - \im \nu (2 \mu_R - \lambda_1 - \lambda_2) k_x k_{y+} \\
			& & & - \im \nu (2 \mu_R + \lambda_1 + \lambda_2) k_x k_{y-} - 2 \nu^2 k_{y+} k_{y-} \,,
		\end{array}
	\end{equation}
	where $ \mu_R \coloneqq \op{Re} \mu $.
\end{proposition}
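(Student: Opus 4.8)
The plan is to read off each $G$ directly from the $\varepsilon\to 0$ expansions already established in Prop.\,\ref{prop:GEps}, after translating them into the variables $(k_x,k_{y+},k_{y-})$ and discarding the factors that cancel in the quotient $S=-g(k_x,-k_{y+})/g(k_x,k_{y+})$. The starting point is the criterion recalled above \eqref{eq:BoundState}: an edge eigenvalue in the regime \eqref{eq:RegionI} with $|k_x|\to\infty$ is a zero of $S^{-1}$, i.e.\ of the analytically continued Jost function $g(k_x,k_{y+})$, where $k_{y-}=\kappa_{\op{ev}}$ is fixed by $k_{y+}$ through its square, cf.\ \eqref{eq:AnalyticContinuation}. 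In particular $k_{y-}$ is an even function of $k_{y+}$, so any factor of $g$ that does not depend on $k_{y+}$ (or depends on it only through $k_{y+}^2$) drops out of that ratio; these are exactly the factors that get stripped in \eqref{eq:Simplification}.

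First I would fix the dictionary between $(\varepsilon,\varphi)$ and $(k_x,k_{y+},k_{y-})$. From the chart \eqref{eq:ChangeOfVariables} one has $\cos\varphi=\varepsilon k_x$ and $\sin\varphi=-\varepsilon\kappa$, while \eqref{eq:KappaEv} gives $\kappa_{\op{ev}}=\tfrac{\im}{\varepsilon}\sqrt{1+\cos^2\varphi}\,(1+O(\varepsilon^2))$, i.e.\ $\sqrt{1+\cos^2\varphi}=-\im\varepsilon\,\kappa_{\op{ev}}(1+O(\varepsilon^2))$. Hence the replacement $(\kappa,\kappa_{\op{ev}})\rightsquigarrow(k_{y+},k_{y-})$ described above amounts, inside the leading parts of \eqref{eq:Expansions}, to $\sin\varphi\rightsquigarrow-\varepsilon k_{y+}$ and $\sqrt{1+\cos^2\varphi}\rightsquigarrow-\im\varepsilon k_{y-}$, with $\cos\varphi=\varepsilon k_x$. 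That this transport is legitimate is because the loop $\cal{C}_\varepsilon$ along which Prop.\,\ref{prop:GEps} was proven satisfies, near $\infty$, $k^2=\varepsilon^{-2}$ and $\omega_+\sim\nu\varepsilon^{-2}$ while $k_x^2=\varepsilon^{-2}\cos^2\varphi$, so it lies inside the region $\{\omega\gtrsim k_x^2\}$ under study; by analyticity of $g$ in the evanescent momenta the expansion persists when $\kappa$ is continued into $\im\bbR_{+}$, i.e.\ into the gap $\omega^2<(\omega^+(k_x))^2$. I would also invoke the standing hypothesis that $B$ is well defined — self-adjointness of every fiber $H^\#(k_x)$, including at $k_x=\pm\infty$, cf.\ Rem.\,\ref{rem:LossSA} — which guarantees that the common factor extracted below does not vanish for large $|k_x|$, so that the simplification $g=(\text{common factor})\cdot G$ of \eqref{eq:Simplification} is genuine and zeros of $g(k_x,k_{y+})$ are, asymptotically, precisely zeros of $G(k_x,k_{y+})$.

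The family-by-family computation is then routine algebra. For DD, \eqref{eq:Expansions} has leading part $2\im\,(a_1'+k_x b_1)\wedge(a_2'+k_x b_2)$, which is independent of $k_{y+}$; hence $S\to-1$ as one approaches $\infty$ in $\{\omega\gtrsim k_x^2\}$, there are no poles, and $G\equiv1$, consistently with Prop.\,\ref{prop:Map1}. For ND, the second factor $a_1\wedge(\cos\varphi\,b_1+\varepsilon a_1')=\varepsilon\,a_1\wedge(k_x b_1+a_1')$ is $k_{y+}$-free and cancels, while the first factor, after the substitution and after dropping the $2\lambda'\varepsilon$ term which is subleading in $k_x$, is proportional to $2\lambda k_x-\im\nu|1+\im\alpha|^2 k_{y+}+\im\nu|1-\im\alpha|^2 k_{y-}$, the claimed $G$. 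For NN I would substitute into \eqref{eq:A1A2B1B2}: to leading order in $k_x$ (the $\varepsilon$-constants $\mu',\lambda_i'$ being subleading) one finds $A_1\to\varepsilon[(\mu+\lambda_2)k_x-\im\nu k_{y+}]$, $A_2\to\varepsilon[(\mu-\lambda_2)k_x-\im\nu k_{y-}]$, $B_1\to\varepsilon[\im(\lambda_1+\bar\mu)k_x+\nu k_{y+}]$, $B_2\to\varepsilon[\im(\lambda_1-\bar\mu)k_x-\nu k_{y-}]$, so that the prefactor $\varepsilon^{-2}(a_1\wedge a_2)\cdot\varepsilon^2=a_1\wedge a_2$ is a nonzero $k_{y+}$-free constant and cancels. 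Expanding the remaining $2\times2$ determinant $A_1B_2-A_2B_1$ and collecting powers of $k_x,k_{y\pm}$ gives $-\im$ times $2(|\mu|^2-\lambda_1\lambda_2)k_x^2-\im\nu(2\mu_R-\lambda_1-\lambda_2)k_x k_{y+}-\im\nu(2\mu_R+\lambda_1+\lambda_2)k_x k_{y-}-2\nu^2k_{y+}k_{y-}$; e.g.\ the $k_x^2$-coefficient comes out $\im[(\mu+\lambda_2)(\lambda_1-\bar\mu)-(\mu-\lambda_2)(\lambda_1+\bar\mu)]=2\im(\lambda_1\lambda_2-|\mu|^2)$ and the $k_{y+}k_{y-}$-coefficient $2\im\nu^2$. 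The global $-\im$ is again a $k_{y+}$-free constant and cancels, leaving the claimed $G$. In each case $G$ is homogeneous in $(k_x,k_{y+},k_{y-})$ — of degree $0$, $1$, $2$ for DD, ND, NN respectively — matching the scaling $k_{y\pm}\sim k_x$ of this regime, and $G(k_x,k_{y+})=0$ is precisely the leading-order form of $S^{-1}=0$.

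The main obstacle is not the algebra but the transport in the second step: arguing that the expansion of Prop.\,\ref{prop:GEps}, proven along the specific real-momentum loop $\cal{C}_\varepsilon$, actually controls $g$ throughout the complex region $\{\omega\gtrsim k_x^2\}$, uniformly enough that its leading homogeneous part determines the location of the zeros $k_{y+}$. This rests on the analyticity of the Jost function in the evanescent momenta and on the non-degeneracy supplied by the ``$B$ well defined'' hypothesis; making the error term uniform in the slope $\omega/k_x^2$ (equivalently in $\varphi$) and controlling it after the $k_x\to\infty$ rescaling is the delicate point. A secondary, purely bookkeeping obstacle is to verify in each family that the discarded factor — the whole leading term in DD, $a_1\wedge(k_x b_1+a_1')$ in ND, $a_1\wedge a_2$ together with the overall $-\im$ in NN — is indeed $k_{y+}$-independent, hence even in $k_{y+}$, so that it cancels in $g(k_x,-k_{y+})/g(k_x,k_{y+})$; this uses again that $k_{y-}$ enters $g$ only through $k_{y+}^2$.
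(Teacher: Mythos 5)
Your proposal is correct and follows essentially the same route as the paper's proof: retain the leading order of the expansions in Prop.\,\ref{prop:GEps}, pass to the variables $(k_x,k_{y+},k_{y-})$ via the same dictionary, and strip the $k_{y+}$-independent prefactors (nonvanishing by the self-adjointness-at-infinity hypothesis) in the spirit of \eqref{eq:Simplification}; your explicit coefficient checks in the NN case match \eqref{eq:Varieties}. The additional justification you give for transporting the expansion from the real contour $\cal{C}_\epsilon$ into the complex region $\omega\gtrsim k_x^2$ is handled in the paper by the discussion immediately preceding the proposition rather than inside the proof, so nothing is missing.
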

\begin{proof}
	We start by retaining the leading order in $ \varepsilon $ of the expansions \eqref{eq:Expansions}:
	\begin{equation}
		\label{eq:LeadingOrders}
		\begin{array}{llcl}
			\op{DD:} & g (\varepsilon, \varphi) & \simeq & 2 \im \left( \frac{\cos \varphi}{\varepsilon} b_1 \right) \wedge \left( \frac{\cos \varphi}{\varepsilon} b_2 \right) \,, \\
			\op{ND:} & g (\varepsilon, \varphi) & \simeq & a_1 \wedge \left( \frac{\cos \varphi}{\varepsilon} b_1 \right) \left( 2 \lambda \frac{\cos \varphi}{\varepsilon} + \im \nu | 1 + \im \alpha |^2 \frac{\sin \varphi}{\varepsilon} - \nu | 1 - \im \alpha |^2 \frac{\sqrt{1 + \cos^2 \varphi}}{\varepsilon} \right) \,, \\
			\op{NN:} & g (\varepsilon, \varphi) & \simeq & (a_1 \wedge a_2) \Big[ \left( \im \nu \frac{\sin \varphi}{\varepsilon} + (\mu + \lambda_2) \frac{\cos \varphi}{\varepsilon} \right) \left( - \im \nu \frac{\sqrt{1 + \cos^2 \varphi}}{\varepsilon} + (\lambda_1 - \bar{\mu}) \frac{\cos \varphi}{\varepsilon} \right) \\
			& & & - \left( \nu \frac{\sqrt{1 + \cos^2 \varphi}}{\varepsilon} + (\mu - \lambda_2) \frac{\cos \varphi}{\varepsilon} \right) \left( - \nu \frac{\sin \varphi}{\varepsilon} + \im (\lambda_1 + \bar{\mu}) \frac{\cos \varphi}{\varepsilon} \right) \Big] \,.
		\end{array} 
	\end{equation}
	We pass to a notation in terms of $ k_{y \pm} $ by
	\begin{equation}
		\frac{\cos \varphi}{\varepsilon} \to k_x \,, \qquad - \frac{\sin \varphi}{\varepsilon} \to k_{y+} \,, \qquad \im \frac{\sqrt{1 + \cos^2 \varphi}}{\varepsilon} \to k_{y-} \,.
	\end{equation}
	This results in
	\begin{equation}
		\label{eq:BranchFreeLeadOrd}
		\begin{array}{llcl}
			\op{DD:} & g (k_x, \pm k_{y+}) & \simeq & 2 \im k_x^2 ( b_1 \wedge b_2 ) \,, \\
			\op{ND:} & g (k_x, \pm k_{y+}) & \simeq & k_x (a_1 \wedge b_1) \big( 2 \lambda k_x \mp \im \nu | 1 + \im \alpha |^2 k_{y+} + \im \nu | 1 - \im \alpha |^2 k_{y-} \big) \,, \\
			\op{NN:} & g (k_x, \pm k_{y+}) & \simeq & - \im (a_1 \wedge a_2) \big( 2 (|\mu|^2 - \lambda_1 \lambda_2) k_x^2 \mp \im \nu (2 \mu_R - \lambda_1 -\lambda_2) k_x k_{y+} \\
			& & & + \im \nu (2 \mu_R + \lambda_1 + \lambda_2) k_x k_{y-} \mp 2 \nu^2 k_{y+} k_{y-} \big) \,.
		\end{array} 
	\end{equation}
	By the assumption of self-adjointness at infinity, equivalently $B$ well-defined there, the prefactors $ 2 \im k_x^2 ( b_1 \wedge b_2 ) $, $ k_x (a_1 \wedge b_1) $, $ - \im (a_1 \wedge a_2) $ are non-zero in family DD, ND and NN, respectively. They can thus be simplified in the spirit of \eqref{eq:Simplification}, leading to the expressions of $G$ listed in \eqref{eq:Varieties}.
\end{proof}
For all families DD-ND-NN, $ G $ is homogeneous in $ k_x, \ k_{y+}, \ k_{y-} $. This prompts the Ansatz
\begin{equation}
	k_{y \pm} = \pm \im c_{\pm} k_x \,, \label{eq:Ansatz}
\end{equation}
with $ c_\pm \in \bbR $, for solutions $ k_{y \pm} $ of $ G = 0 $ at $ |k_x| \to \infty $. The parameters $ c_\pm $ are constrained by consistency with (\ref{eq:BranchFreeOmega}, \ref{eq:DefKYPlusMinus}). In particular, solutions $ X_\pm $ of \eqref{eq:BranchFreeOmega} are related to one another by the first Vieta formula, cf.\eqref{eq:FirstVieta},
\begin{equation}
	X_+ + X_- = - \frac{1 - 2 \nu f}{\nu^2} \ \longleftrightarrow \ 2 k_x^2 + k_{y+}^2 + k_{y-}^2 = - \frac{1 - 2 \nu f}{\nu^2} \,.
\end{equation}
At $ |k_x| \to \infty $, the above and our Ansatz imply
\begin{equation}
	c_+^2 + c_-^2 = 2 \,. \label{eq:ConstraintCircle}
\end{equation}
Moreover, at $ \omega \to + \infty $, Eq.\,\eqref{eq:XOmega} is rewritten, through \eqref{eq:DefKYPlusMinus} and in terms of our Ansatz, as
\begin{equation}
	k_x^2 (1 - c_\pm^2 ) = \pm \frac{\omega}{\nu} \,,
\end{equation}
from which the further constraint 
\begin{equation}
	| c_\pm | \lessgtr 1 \,. \label{eq:ConstraintRectangle}
\end{equation}
Finally, the branches corresponding to bound states have $ \op{Im} (k_{y \pm}) > 0 $, namely $ c_+ > 0 $, $ c_- < 0 $ ($ c_+ < 0 $, $ c_- > 0 $) at $ k_x \to + \infty $ ($ k_x \to - \infty $). All in all, asymptotically parabolic edge eigenvalues at $ k_x \to + \infty $, or $ k_x \to - \infty $, are found as points $ (c_+, c_-) \in \bbR^2 $ belonging to the algebraic variety
\begin{equation}
	\{ (c_+, c_-) \ | \ G(k_x, k_{y\pm} = \pm \im c_\pm k_x) = 0 \} \,,
\end{equation}
and additionally satisfying the constraints
\begin{align}
	(c_+, c_-) \in \cal{A}_{\downarrow} &= \{ (c_+, c_-) \ | \ c_+^2 + c_-^2 = 2 \,, \ |c_\pm| \lessgtr 1  \,, \ c_\pm \gtrless 0 \} \nonumber \\
	&\equiv \{ (c_+, c_-) = \sqrt{2} ( \cos \theta, \sin \theta ) \ | \ \theta \in [- \pi/2, - \pi / 4] \} \label{eq:UpperArc}
\end{align}
for $ k_x \to + \infty $, or
\begin{align}
	(c_+, c_-) \in \cal{A}_{\uparrow} &= \{ (c_+, c_-) \ | \ c_+^2 + c_-^2 = 2 \,, \ |c_\pm| \lessgtr 1  \,, \ c_\pm \lessgtr 0 \} \nonumber \\
	&\equiv \{ (c_+, c_-) = \sqrt{2} ( \cos \theta, \sin \theta ) \ | \ \theta \in [\pi/2, 3 \pi / 4] \} \label{eq:LowerArc}
\end{align}
for $ k_x \to - \infty $.
\begin{figure}[hbt]
	\centering
	\includegraphics[width=0.33\linewidth]{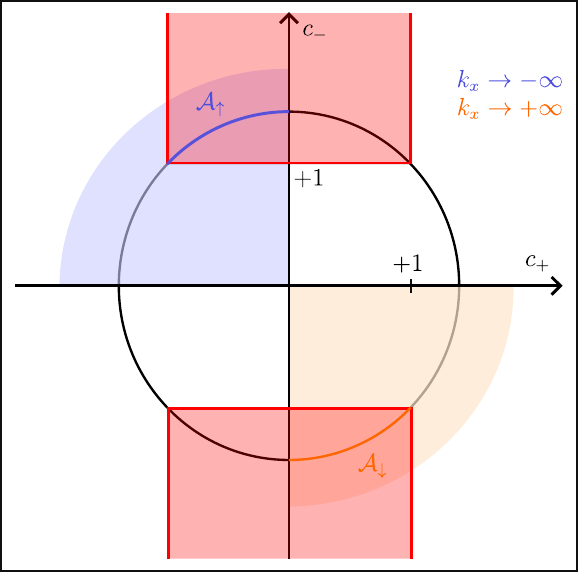}
	\caption{Graphical representation of the constraints of Eqs.\,(\ref{eq:ConstraintCircle}, \ref{eq:ConstraintRectangle}, \ref{eq:UpperArc}, \ref{eq:LowerArc}).}
	\label{fig:DetectionI}
\end{figure}

Solutions of $ G(k_x, k_{y\pm} = \pm \im c_\pm k_x) = 0 $ correspond to curves in the $ (c_+, c_-) $-plane, differing between families DD-ND-NN as detailed in the lemma below.
\begin{lemma} \label{lem:Curves}
	Consider the families DD-ND-NN of boundary conditions, cf.\,Def.\,\ref{def:Families}, parametrized as in Lm.\,\ref{lem:Dictionary} and Prop.\,\ref{prop:BCs}. For each family, let $ G (k_x, k_{y+}) $ be as in Prop.\,\ref{prop:Varieties}. Let moreover 
	\begin{equation}
		G (k_x, c_\pm) \coloneqq G ( k_x, k_{y\pm} = \pm \im c_\pm k_x ) \,.
	\end{equation}
	Then, solutions $ (c_+, c_-) $ of $ G (k_x, c_\pm) = 0 $ for $|k_x| \to \infty$ belong to:
	\begin{enumerate}
		\item[DD:] The empty set $ \varnothing $;
		
		\item[ND:] The line
		\begin{equation}
			c_- = - \frac{| 1 + \im \alpha |^2}{| 1 - \im \alpha |^2} c_+ - \frac{2 \lambda}{\nu |1 - \im \alpha|^2} = m c_+ + q \,, \label{eq:CurveND}
		\end{equation}
		with $ m,q $ as in \eqref{eq:MAndQ};
		
		\item[NN:] The hyperbola
		\begin{equation}
			c_- = \frac{2 (|\mu|^2 - \lambda_1 \lambda_2) + \nu (2 \mu_R - \lambda_1 - \lambda_2) c_+}{\nu (2 \mu_R + \lambda_1 + \lambda_2) + 2 \nu^2 c_+} \,, \label{eq:CurveNN}
		\end{equation}
		where again $ \mu_R = \op{Re} \mu $.
	\end{enumerate} 
\end{lemma}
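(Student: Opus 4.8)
The plan is to prove the lemma by a direct substitution of the Ansatz \eqref{eq:Ansatz} into the expressions for $G$ supplied by Prop.\,\ref{prop:Varieties}. First I would observe, by inspection of \eqref{eq:Varieties}, that in each of the families DD, ND, NN the function $G(k_x,k_{y+})\equiv G(k_x,k_{y+},k_{y-})$ is homogeneous in the triple $(k_x,k_{y+},k_{y-})$, of degree $0$, $1$, $2$ respectively. Consequently, inserting $k_{y\pm}=\pm\im c_\pm k_x$ yields $G(k_x,c_\pm)=k_x^{d}\,p(c_+,c_-)$ with $d=0,1,2$ and $p$ a polynomial in $(c_+,c_-)$ alone. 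Since we are in the regime $|k_x|\to\infty$, the prefactor $k_x^{d}$ is nonzero, so $G(k_x,c_\pm)=0$ is equivalent to $p(c_+,c_-)=0$, and it only remains to compute $p$ and solve it for $c_-$ in each case.

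Carrying this out case by case: for DD one has $G=1$, so $p\equiv 1$ never vanishes and the solution set is $\varnothing$. For ND, substituting $k_{y+}=\im c_+k_x$ and $k_{y-}=-\im c_- k_x$ reduces $G$ to $k_x\big(2\lambda+\nu|1+\im\alpha|^2 c_+ +\nu|1-\im\alpha|^2 c_-\big)$; setting the bracket to zero and solving gives $c_-=-\frac{|1+\im\alpha|^2}{|1-\im\alpha|^2}c_+-\frac{2\lambda}{\nu|1-\im\alpha|^2}$, which is precisely the line \eqref{eq:CurveND} after identifying $m$ and $q$ with the combinations defined in \eqref{eq:MAndQ}. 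For NN, the same substitution turns $G$ into $k_x^2\big(2(|\mu|^2-\lambda_1\lambda_2)+\nu(2\mu_R-\lambda_1-\lambda_2)c_+-\nu(2\mu_R+\lambda_1+\lambda_2)c_- -2\nu^2 c_+c_-\big)$, and solving the bracket for $c_-$ yields the hyperbola \eqref{eq:CurveNN}.

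There is no genuine obstacle here; the computation is routine and the only points demanding a little care are (i) the bookkeeping of the factors of $\im$ generated by the opposite sign conventions $k_{y+}=\im c_+k_x$ versus $k_{y-}=-\im c_- k_x$ — for instance the cross term $-2\nu^2 k_{y+}k_{y-}$ must come out as $-2\nu^2 c_+c_-k_x^2$ — and (ii) the legitimacy of dividing out the prefactors $2\im k_x^2(b_1\wedge b_2)$, $k_x(a_1\wedge b_1)$, $-\im(a_1\wedge a_2)$ that were already discarded in passing from $g$ to $G$ in the proof of Prop.\,\ref{prop:Varieties}; their non-vanishing is guaranteed there by the standing hypothesis of self-adjointness at infinity, equivalently $B$ well-defined. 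Finally I would note that the computation does not see the sign of $k_x$, so the curves \eqref{eq:CurveND}, \eqref{eq:CurveNN} govern the asymptotics at both $k_x\to+\infty$ and $k_x\to-\infty$; the distinction between these two limits enters only afterwards, through the selection of the admissible arc $\cal{A}_{\downarrow}$ or $\cal{A}_{\uparrow}$ on which the relevant points $(c_+,c_-)$ are required to lie.
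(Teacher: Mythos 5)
Your proposal is correct and follows exactly the paper's route: the paper's proof of this lemma is literally ``substitution of \eqref{eq:Ansatz} into \eqref{eq:Varieties} and elementary algebra,'' which is what you carry out (with the homogeneity observation and the remark on dividing out the nonvanishing prefactors already being part of the surrounding discussion in Sect.\,\ref{subsec:ParabolicStates}). The sign bookkeeping in your ND and NN computations checks out against \eqref{eq:CurveND} and \eqref{eq:CurveNN}.
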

\begin{proof}
	The claims follow by substitution of \eqref{eq:Ansatz} into \eqref{eq:Varieties} and elementary algebra.
\end{proof}
The task of computing the number $I$ of \textit{improper mergers} was thus reduced to counting the intersections (with sign) of the curves $ c_- (c_+) $ of Lm.\,\ref{lem:Curves} with the arcs $ \cal{A}_{\uparrow}, \, \cal{A}_{\downarrow} $. By the conventions illustrated in the third and fourth panel of Fig.\,\ref{fig:mergers}, intersections with $ \cal{A}_{\uparrow} $ ($ \cal{A}_{\downarrow} $), corresponding to parabolic bound states at $ k_x \to - \infty $ ($ k_x \to + \infty $), contribute $ +1 $ ($ -1 $) to the total number $I$ of \textit{improper mergers}. In summary,
\begin{equation}
	I = ( \text{\# of intersections of} \ c_-(c_+) \ \text{with} \ \cal{A}_{\downarrow} ) - ( \text{\# of intersections of} \ c_-(c_+) \ \text{with} \ \cal{A}_{\uparrow} ) \,. \label{eq:GeometricI}
\end{equation}
The value of $I$ in the families DD-ND-NN will be computed in Sect.\,\ref{sec:Proofs}, making use of \eqref{eq:GeometricI}.

\subsection{Number of escapes $E$} \label{subsec:FlatStates}

This section is similar to the previous one, in that we find edge states at $ |k_x| \to \infty $ as zeros of the Jost function $g$. It however differs from it, because we require $ |\omega| = |\omega (k_x) | \ll k_x^2 $, $ |k_x| \to \infty $. Since this regime has very small overlap with that of Sect.\,\ref{subsec:ParabolicStates}, cf.\,\eqref{eq:RegionI}, the expansions thereby found for $g$ are generally not valid. With no assumptions other than $ |\omega| \ll k_x^2 $, whence also $ X \ll k_x^2 $ by \eqref{eq:BranchFreeOmega}, we expand $ g $ around $ \delta \coloneqq X/k_x^2 \to 0 $, and study the zeros of its leading order for different behaviors of $ |\omega| $. If $ |\omega| \to 0, \infty $ as $ |k_x| \to \infty $, $ g=0 $ is true only exceptionally in the space of boundary conditions. By contrast, asymptotically flat solutions $ \omega \to \omega_{\op{a}, \pm} $ at $ k_x \to \pm \infty $ are typical. The height $ \omega_{\op{a}, \pm} $ of those asymptotes is separately computed and listed for families DD, ND and NN. \vspace{1\baselineskip}

Like in the previous section, we wish to find solutions $ \kappa $ with $ \op{Im} (\kappa) > 0 $ of
\begin{equation}
	S^{-1} (k_x, \kappa) = 0
\end{equation}
at $ |k_x| \to \infty $, corresponding to eigenstates of energy
\begin{equation}
	\omega = \omega_+ (k_x, \kappa) \,.
\end{equation}
Moreover, we again see $ S \equiv S (k_x, k_{y+}) $ as a function on the Riemann surface of $ (k_{y+}, k_{y-}) $, cf.\,\eqref{eq:DefKYPlusMinus}. Unlike before, however, we focus on edge eigenvalues $ \omega \equiv \omega (k_x) $ such that
\begin{equation}
	| \omega | \ll k_x^2 \,, \qquad (|k_x| \to \infty) \,. \label{eq:RegionE}
\end{equation}
This region only slightly overlaps \eqref{eq:RegionI}. The union of the two is moreover the whole $ (k_x, \omega) $-plane, ensuring that we explore all of the gap. Again in contrast with the previous section, we fix the branches of $ k_{y\pm} $ that grant $ \op{Im} (k_{y \pm}) > 0 $ from the very beginning:
\begin{equation}
	k_{y \pm} = \im |k_x| \sqrt{1 - \frac{X_{\pm}}{k_x^2}} \,.
\end{equation}
Consider $ X $ as in \eqref{eq:BranchFreeOmega}. By \eqref{eq:RegionE}, we have
\begin{equation}
	\delta \coloneqq \frac{X}{k_x^2} \to 0 \,, \qquad (|k_x| \to \infty) \,,
\end{equation}
whence
\begin{equation}
	k_{y \pm} = \im |k_x| \sqrt{1 - \frac{X_{\pm}}{k_x^2}} = \im |k_x| \Big( 1 - \frac{X_{\pm}}{2 k_x^2} + \cal{O} (\delta^2) \Big) \,. \label{eq:KyPMFlat}
\end{equation}
Since the regions \eqref{eq:RegionI} and \eqref{eq:RegionE} have a large symmetric difference, we cannot rely on the expansions of Sect.\,\ref{subsec:ParabolicStates} to determine poles of $S$ at $ |k_x| \to \infty $, corresponding to edge states. We thus proceed as follows.

Eq.\,\eqref{eq:SWithG} allows looking at zeros $ g (k_x, k_{y+}) = 0 $ of the Jost function $g$, rather than at poles of $S$. The condition \eqref{eq:RegionE} on $ \omega $ is enforced by expanding $ g (k_x, k_{y+} (\delta)) $ around $ \delta = 0 $, and retaining the leading order. Within it, we then look for zeros at $ |k_x| \to \infty $. These turn out to change depending on the  asymptotic behavior (in $ k_x $) of $ \omega $, and we thus find it advantageous to read $ g (k_x, k_{y+} (\omega)) \equiv g (k_x, \omega) = 0 $ as an implicit equation for $ \omega $ rather than $ k_{y+} $, cf.\,discussion below Eq.\,\eqref{eq:KyPMDiff}. If $ |\omega| \to 0, \infty $, $ g = 0 $ has solutions only exceptionally in the space of boundary conditions. By contrast, if $ \omega \to \omega_{\op{a}, \pm} < \infty $ at $ k_x \to \pm \infty $, solutions of $ g (k_x, \omega) =0 $ exist almost everywhere in the space of BCs. These claims are formalized in the next proposition, where we also provide the values $ \omega_{\op{a}, \pm} $ of the horizontal asymptotes, separately for each of the families DD, ND and NN.
\begin{proposition} \label{prop:EscapeHeight}
	Consider the families of boundary conditions DD, ND and NN (cf.\,Def.\,\ref{def:Families}), parametrized as in Lm.\,\ref{lem:Dictionary} and Prop.\,\ref{prop:BCs}. At $ |k_x| \to \infty $, edge eigenvalues with dispersion $ |\omega| \ll k_x^2 $ exist iff one of the following holds.
	\begin{itemize}
		\item DD. $ \omega \to \omega_{\op{a}, \pm} < \infty $ at $k_x \to \pm \infty$ with
		\begin{equation}
			\omega_{\op{a}, \pm} = \pm \frac{1}{2 \nu} \,. \label{eq:AsymptoteDD}
		\end{equation}
		
		\item ND. $ |\omega| \to \infty $ at $k_x \to \pm \infty$ and
		\begin{equation}
			\lambda \mp 2 \nu \alpha_I = 0 \ \longleftrightarrow \ q = \pm (m+1) \,; \label{eq:SubparInftyND}
		\end{equation}
		or $ |\omega| \to 0 $ at $ |k_x| \to \infty $ and
		\begin{equation}
			\lambda - \nu (1 + |\alpha|^2) = 0 \ \longleftrightarrow \ q = m-1 \,; \label{eq:ZeroInftyND}
		\end{equation}
		or $ \omega \to \omega_{\op{a}, \pm} < \infty $ at $ k_x \to \pm \infty $ with
		\begin{equation}
			\omega_{\op{a}, \pm} = \pm \frac{\lambda - \nu (1 + | \alpha |^2)}{2 \nu ( \lambda \mp 2 \nu \alpha_I )} \,. \label{eq:AsymptoteDN}
		\end{equation}
		The parameters $ m, q $ are as in \eqref{eq:MAndQ}.
		
		\item NN. $ |\omega| \to \infty $ at $k_x \to \pm \infty$ and
		\begin{equation}
			\Delta^2 = \cal{I}_\mp \,; \label{eq:SubparInftyNN}
		\end{equation}
		or $ |\omega| \to 0 $ at $ |k_x| \to \infty $ and
		\begin{equation}
			\Delta^2 = \cal{E} \,; \label{eq:ZeroInftyNN}
		\end{equation}
		or $ \omega \to \omega_{\op{a}, \pm} < \infty $ at $ k_x \to \pm \infty $ with
		\begin{equation}
			\omega_{\op{a}, \pm} = \frac{|\mu|^2 - \nu^2 + \nu (\lambda_1 + \lambda_2) - \lambda_1 \lambda_2}{4 \nu^2 \mu_R \pm 2 \nu ( |\mu|^2 + \nu^2 - \lambda_1 \lambda_2 )} \,. \label{eq:AsymptoteNN}
		\end{equation}
		See (\ref{eq:SigmaAndDelta}, \ref{eq:IPlusMinus}, \ref{eq:EPlusMinus}) for the definitions of $ \Delta $, $ \cal{I}_\pm $, $ \cal{E} $.
	\end{itemize}
	The asymptotically flat states contribute to $E$ provided $ \omega_{\op{a}, \pm} > 0 $.
\end{proposition}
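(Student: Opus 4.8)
\emph{Strategy.} The plan is to determine the edge spectrum in the regime \eqref{eq:RegionE} directly from the Jost function, in the spirit of Sect.\,\ref{subsec:ParabolicStates} but with an expansion tailored to $|\omega|\ll k_x^2$ rather than $\omega\gtrsim k_x^2$. Recall from \eqref{eq:SWithG}--\eqref{eq:gFromV} that $S=-g(k_x,-k_{y+})/g(k_x,k_{y+})$ with $g=\det(\ul A V)$, and that by \eqref{eq:BoundState} an edge eigenvalue of energy $\omega=\omega_+(k_x,\kappa)$ is a zero of $g(k_x,k_{y+})$ on the sheet $\op{Im}(k_{y\pm})>0$. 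For each of DD, ND, NN one inserts the parametrization of $\ul A$ from Prop.\,\ref{prop:BCs} and Lm.\,\ref{lem:Dictionary}, together with the sections \eqref{eq:PsiInfPsiZero}--\eqref{eq:PsiHatAndQ} evaluated with $k^2=X_\pm$ on the outgoing and evanescent legs and $\omega_+=\omega$; by Rem.\,\ref{rem:ProportionalSections} the sections may be rescaled to clear the prefactors $k_x\pm\im k_{y\pm}$. One then substitutes the region-\eqref{eq:RegionE} branches $k_{y\pm}=\im|k_x|\sqrt{1-X_\pm/k_x^2}=\im|k_x|\bigl(1-X_\pm/(2k_x^2)+\cal{O}(\delta^2)\bigr)$, cf.\,\eqref{eq:KyPMFlat}, with $\delta=X/k_x^2$, and expands $g$ about $\delta=0$: this is legitimate precisely because $|\omega|\ll k_x^2$ forces $X=o(k_x^2)$, whatever the (possibly unbounded) behavior of $\omega$.

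\emph{Finite asymptote.} Since $\kappa$ and $\kappa_{\op{ev}}$ share the leading term $\im|k_x|$, the top-order contributions to $g$ cancel, and the first surviving term of the $\delta$-expansion is a homogeneous expression in $k_x$ whose coefficients depend on $\omega$ only through $X_\pm(\omega)$, cf.\,\eqref{eq:BranchFreeOmega} and the Vieta relations \eqref{eq:FirstVieta}--\eqref{eq:SecondVieta}, and through $q_\pm=(f-\nu X_\pm)/\omega$. Dividing by the top power of $k_x$ and letting $k_x\to\pm\infty$ leaves a single equation for the asymptotic height; it is affine in $\omega$ for ND and of low degree for NN, and solving it yields \eqref{eq:AsymptoteDN} and \eqref{eq:AsymptoteNN}. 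Family DD carries no parameters, so this branch is always present and the reduction collapses to \eqref{eq:AsymptoteDD}, $\omega_{\op{a},\pm}=\pm 1/(2\nu)$.

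\emph{Degenerate sub-cases and the `iff'.} The limiting equation ceases to fix a finite nonzero $\omega_{\op{a},\pm}$ in exactly two circumstances: if the coefficient of the highest power of $\omega$ vanishes the root escapes to $|\omega|=\infty$, giving \eqref{eq:SubparInftyND} (resp.\,\eqref{eq:SubparInftyNN}); if the $\omega$-free term vanishes the root is drawn to $0$, giving \eqref{eq:ZeroInftyND} (resp.\,\eqref{eq:ZeroInftyNN}). Off both loci the equation has a unique finite nonzero solution (case (c)), and where the leading-order $g$ admits no zero there is no edge eigenvalue at all in this regime; this is the content of the `iff'. Rewriting the degeneracy loci in the fundamental parameters via \eqref{eq:MAndQ}, \eqref{eq:IPlusMinus}, \eqref{eq:EPlusMinus} produces the stated forms $q=\pm(m+1)$, $q=m-1$, $\Delta^2=\cal{I}_\mp$, $\Delta^2=\cal{E}$. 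Finally, the spectral gap flanking the top band from below is $\{0<\omega<\omega^+(k_x)\}$ --- recall $\omega_0\equiv0$ and $\omega^+(k_x)\to\infty$, cf.\,\eqref{eq:BandRim} --- so a flat branch of height $\omega_{\op{a},\pm}$ lies in that gap for large $|k_x|$ iff $\omega_{\op{a},\pm}>0$, while a branch with $\omega_{\op{a},\pm}<0$ sits in the lower gap and is not counted in $E$; this is the closing proviso.

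\emph{Main obstacle.} The substance of the proof is the bookkeeping of the $\delta$-expansion, above all in family NN: after inserting \eqref{eq:ConditionsNN1}--\eqref{eq:ConditionsNN2} the $2\times2$ determinant spreads into many monomials in $k_x$, $k_{y+}$, $k_{y-}$ and $\omega^{-1}$, and one must isolate those surviving once the leading terms cancel, then use $k_x^2+k_{y\pm}^2=X_\pm$ together with the Vieta relations to fold the result into the compact form behind \eqref{eq:AsymptoteNN}. A secondary technical point is to confirm that a zero of the leading-order $g$ genuinely perturbs (by the implicit function theorem) to a zero of the full Jost function, and that the branch choice $\op{Im}(k_{y\pm})>0$ is compatible with $\op{Im}(\kappa)>0$, so that the zero is an honest bound state rather than a spectral artifact on the wrong sheet.
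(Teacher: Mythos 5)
Your proposal follows essentially the same route as the paper's proof in App.~\ref{app:ProofEscapes}: rescale the evanescent sections to clear the prefactors, write $g=\det(\ul A V)$ with the branches \eqref{eq:KyPMFlat}, expand in $\delta=X/k_x^2$ so that the leading (antisymmetric) contributions cancel and a common factor proportional to $X_+-X_-$ emerges, then solve the surviving affine-in-$\omega$ equation for the asymptote, with the $|\omega|\to\infty$ and $|\omega|\to 0$ loci arising exactly where the leading or constant coefficient vanishes. The only substantive difference is that you defer the explicit nine-term determinant bookkeeping (which the paper carries out via the factors $F_1,\dots,F_9$ and Vieta's formulae), and you add a stability remark (implicit function theorem, branch compatibility) that the paper leaves tacit.
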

In the interest of readability, the proof of this result is deferred to App.\,\ref{app:ProofEscapes}.

Computing the number of escapes $E$, given some boundary conditions, was thus reduced to the following counting exercise. A flat asymptote of height $ \omega_{\op{a}, +} > 0 $ ($ \omega_{\op{a},-} > 0$) at $ k_x \to + \infty $ ($ k_x \to - \infty $) contributes $-1$ ($ +1 $) to $E$. This convention comes about by consistency with the counting of parabolic states, cf.\,Fig.\,\ref{fig:mergers}. By contrast, eigenvalues $ \omega_{\op{a}, \pm} < 0 $ give no contribution to $E$, because they do not pertain the positive bulk gap of interest. In summary,
\begin{equation}
	E = ( \text{\# of asymptotes} \ \omega_{\op{a},-} > 0 ) - ( \text{\# of asymptotes} \ \omega_{\op{a},+} > 0 ) \,. \label{eq:PracticalE}
\end{equation}
The value of $E$ stemming from \eqref{eq:PracticalE} is displayed, for each of the families DD, ND, NN, in Sect.\,\ref{sec:Proofs}, and leads to the proof of Eq.\,\eqref{eq:EValueND} in Prop.\,\ref{prop:NewMap2} and Tab.\,\ref{tab:EscapesNN} in Prop.\,\ref{prop:NewMap4}.

\subsection{Boundary winding $B$} \label{subsec:BoundaryWinding}

In this section, we discuss how to compute the winding number $B$ (dubbed \textit{boundary winding}, cf.\,Def.\,\ref{def:BoundWinding}) of the unitary $U$ associated with a given orbit $ [\ul{A}] $ of self-adjoint boundary conditions. Generally, Prop.\,\ref{prop:BCs} holds, and computing $B$ amounts to applying formula \eqref{eq:BoundWinding}. However, there exist exceptional (as in, \textit{atypical} in the sense of Thm.\,\ref{thm:Typicality}) boundary conditions, where $B$ takes values that differ from those given by the integral formula \eqref{eq:BoundWinding} applied to \eqref{eq:UND}, \eqref{eq:UNN}. Below, we tackle both general and exceptional cases. In the interest of readability, most proofs are deferred to App.\,\ref{app:ProofB}. \vspace{1\baselineskip}

Recall that
\begin{equation}
	U = (A_1 + A_2)^{-1} (A_1 - A_2 ) \eqqcolon A_+^{-1} A_- \,, \label{eq:APlusAMinus}
\end{equation}
cf.\,(\ref{eq:UFormula}, \ref{eq:JuxtaposeA1A2}). Observe moreover that $ \det U : k_x \mapsto \det ( U(k_x)) $ describes a curve in the complex plane. Its winding about the origin, which we denote by $ N (\det U) $, is such that
\begin{equation}
	B(U) = N (\det U) \,,
\end{equation}
as is seen by \eqref{eq:BoundWinding} and 
\begin{equation}
		\diff (\log \circ \det U) = \op{tr} (U^{-1} \diff U) \,,
\end{equation}
for any unitary $U$ in $ U(n) \ (n \in \bbN) $. By known properties of winding numbers and \eqref{eq:APlusAMinus}
\begin{equation}
	B (U) = N (\det U) = N (\det A_-) - N (\det A_+) \eqqcolon N (P_-) - N(P_+) \,, \label{eq:PPlusPMinus}
\end{equation}
where $ P_\pm \coloneqq \det A_\pm $. Moreover, by the affine linear structure of
\begin{equation}
	\ul{A} : k_x \mapsto (A_1 (k_x), \, A_2 (k_x)) \,,
\end{equation}
$ P_\pm $ is a polynomial of order (at most) two in $ k_x $, namely a parabola. For the exceptional boundary conditions mentioned in the introductory paragraph, the parabola $ P_\pm $ however degenerates to a line, or even a single point in the complex plane $ \bbC  \ni P_\pm $.

Before computing $ P_\pm $ for boundary conditions $ \ul{A} $ in families DD, ND, NN (cf.\,Def.\,\ref{def:Families}), let us lay down general formulae for the winding of a (possibly degenerate) parabola 
\begin{equation}
	P (k) = c_0 k^2 + c_1 k + c_2 \,, \label{eq:PofK}
\end{equation}
where $ c_0, \, c_1, \, c_2 \in \bbC $, around the origin of its complex plane. Let
\begin{equation}
	d_{10} \coloneqq \op{Im} (c_1 \overline{c_0}) \,, \qquad d_{20} \coloneqq \op{Im} (c_2 \overline{c_0}) \,, \qquad d_{21} \coloneqq \op{Im} (c_2 \overline{c_1}) \,, \label{eq:Dij}
\end{equation}
and
\begin{equation}
	c(P) \coloneqq d_{20}^2 - d_{21} d_{10} \,. \label{eq:cofP}
\end{equation}
These quantities are of importance in the lemmas below.
\begin{lemma}[I: Curve traced] \label{lem:CurveTraced}
	The curve $ \bbR \to \bbC $, $ k \mapsto P(k) $ traces out
	\begin{itemize}
		\item[(a)] a point, if $ c_0 = c_1 = 0 $;
		
		\item[(b)] a straight line, if $ c_0 = 0 $, $ c_1 \neq 0 $;
		
		\item[(c)] a half-line traced both ways if $ c_0 \neq 0 $, $ d_{10} = 0 $;
		
		\item[(d)] a parabola, if otherwise; i.e., if $ c_0 \neq 0 $, $ d_{10} \neq 0 $.
	\end{itemize}
	In each case, the converse also applies.
\end{lemma}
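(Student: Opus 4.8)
I would dispatch the four cases by completing the square, and then recover the ``converse also applies'' clause essentially for free: the four algebraic conditions $\{c_0=c_1=0\}$, $\{c_0=0,\ c_1\neq0\}$, $\{c_0\neq0,\ d_{10}=0\}$, $\{c_0\neq0,\ d_{10}\neq0\}$ partition the space of triples $(c_0,c_1,c_2)\in\bbC^{3}$, while the four geometric outcomes --- a point, a full straight line, a half-line traced both ways, and a nondegenerate parabola --- are pairwise distinct as figures in $\bbC$. Hence, once each algebraic cell is shown to land in the claimed geometric class, the induced correspondence between the two partitions is automatically a bijection, and both directions follow. I note in passing that only $d_{10}$ enters this lemma; $d_{20}$, $d_{21}$ and $c(P)$ will be relevant only for the winding counts in the following lemmas.

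Cases (a) and (b) are immediate: $c_0=c_1=0$ gives $P\equiv c_2$ (a point), and $c_0=0$, $c_1\neq0$ gives the affine injection $k\mapsto c_2+c_1 k$ of $\bbR$ onto the line through $c_2$ in direction $c_1$. For (c) and (d) I would assume $c_0\neq0$ and write
\begin{equation}
	P(k)=c_0\Bigl(k+\tfrac{c_1}{2c_0}\Bigr)^{2}+C\,,\qquad C\coloneqq c_2-\tfrac{c_1^{2}}{4c_0}\,,
\end{equation}
then set $\tfrac{c_1}{2c_0}=a+\im b$ with $a,b\in\bbR$. The key observation is that $b=\operatorname{Im}(c_1/c_0)/2=d_{10}/(2|c_0|^{2})$, so $b=0$ precisely when $d_{10}=0$. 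Substituting $s\coloneqq k+a\in\bbR$ gives
\begin{equation}
	P=c_0\bigl(s^{2}-b^{2}+2\im b s\bigr)+C\,.
\end{equation}

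When $d_{10}=0$ (i.e.\ $b=0$) this is $P=c_0 s^{2}+C$; as $s$ runs over $\bbR$, $s^{2}$ sweeps $[0,\infty)$, hitting each positive value twice and $0$ once, so the image is exactly the half-line $\{C+c_0 r:\ r\geq0\}$ issuing from the vertex $C$ in direction $c_0$, traced both ways --- this is (c). When $d_{10}\neq0$ (i.e.\ $b\neq0$), the curve $s\mapsto(s+\im b)^{2}=(s^{2}-b^{2})+\im(2bs)$ satisfies, writing $u$ and $w$ for its real and imaginary parts, the relation $u=\tfrac{w^{2}}{4b^{2}}-b^{2}$, a nondegenerate parabola since $b\neq0$; the invertible affine map $z\mapsto c_0 z+C$ then carries it to a nondegenerate parabola --- this is (d). The converse in each case follows from the partition argument above, the one point to verify being that the ray of (c) is genuinely distinguishable from the full line of (b), which it is, being unbounded only on one side.

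The main obstacle: there is essentially none --- this is routine once the square has been completed. The only care needed is the converse bookkeeping, which the exhaustive-and-disjoint partition argument handles cleanly, together with the elementary geometric distinction between a doubly-traced ray and a full line.
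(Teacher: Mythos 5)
Your proof is correct and follows essentially the same elementary route as the paper's (a direct case analysis of the quadratic $P(k)$, reducing by $c_0$ and reading off point/line/ray/parabola), with completing the square in place of the paper's normalization $c_0=1$ and tangent-vector remark. Your explicit partition argument for the ``converse'' clause is a nice touch that the paper leaves implicit, but the substance is the same.
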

\begin{lemma}[II: Avoidance of origin] \label{lem:Avoidance}
	The conditions for 
	\begin{equation}
		P(k) \neq 0 \,, \qquad (k \in \bbR) \label{eq:GeneralAvoidance}
	\end{equation}
	are, depending on the above cases,
	\begin{itemize}
		\item[(a)] $ c_2 \neq 0 $;
		
		\item[(b)] $ d_{21} \neq 0 $;
		
		\item[(c)] \textbullet\ $ d_{20} \neq 0 $, or \textbullet\ $ d_{20} = 0 $ and
		\begin{equation}
			4 c_2 \overline{c_0} |c_0|^2 > (c_1 \overline{c_0})^2 \,; \label{eq:AvoidanceC}
		\end{equation}
		
		\item[(d)] $ c(P) \neq 0 $.
	\end{itemize}
\end{lemma}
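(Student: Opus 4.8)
The plan is to dispatch the four cases of Lemma~\ref{lem:CurveTraced} in turn, handling (a) and (b) by inspection and unifying (c) and (d) around a single observation about common roots of $P$ and its conjugate polynomial.

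Cases (a) and (b) are immediate. In case (a), $P\equiv c_2$, so $P(k)\neq 0$ for all $k$ iff $c_2\neq 0$. In case (b), $P(k)=c_1k+c_2$ with $c_1\neq 0$ vanishes for some real $k$ iff $-c_2/c_1\in\bbR$, equivalently iff $c_2/c_1=c_2\overline{c_1}/|c_1|^2$ is real, i.e.\ iff $d_{21}=\op{Im}(c_2\overline{c_1})=0$; hence avoidance is $d_{21}\neq 0$.

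For the remaining cases I would use the key step that a real $k$ solves $P(k)=0$ iff it is a common root of $P$ and $\overline P(x):=\overline{c_0}x^2+\overline{c_1}x+\overline{c_2}$, because $\overline P(k)=\overline{P(\bar k)}$ and $\bar k=k$. Since $c_0\neq 0$ in both (c) and (d), the polynomials $P$ and $\overline P$ are genuine quadratics, so they share a root iff their resultant vanishes, and a short computation (rewriting each factor via $z-\bar z=2\im\op{Im}z$) gives
\begin{equation}
\op{Res}(P,\overline P)=(c_0\overline{c_2}-\overline{c_0}c_2)^2-(c_0\overline{c_1}-\overline{c_0}c_1)(c_1\overline{c_2}-\overline{c_1}c_2)=-4\bigl(d_{20}^2-d_{10}d_{21}\bigr)=-4\,c(P)\,.
\end{equation}
Moreover, if $k$ is a common root with $k\neq\bar k$ then $k$ and $\bar k$ are the two roots of $P$, so $P(x)=c_0(x-k)(x-\bar k)$ has $c_1/c_0,c_2/c_0\in\bbR$, i.e.\ $d_{10}=d_{20}=0$. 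Consequently, in case (d), where $d_{10}\neq 0$, any common root is real, hence $P$ has a real zero iff $\op{Res}(P,\overline P)=0$ iff $c(P)=0$, giving avoidance $\Leftrightarrow c(P)\neq 0$. In case (c), where $d_{10}=0$, write $c_1=\rho c_0$ with $\rho=c_1\overline{c_0}/|c_0|^2\in\bbR$ and complete the square, $P(k)=c_0\bigl((k+\rho/2)^2+\sigma\bigr)$ with $\sigma=c_2/c_0-\rho^2/4$. If $d_{20}\neq 0$ then $c_2/c_0\notin\bbR$, so $\sigma\notin\bbR$ and $P(k)\neq 0$ for every real $k$ --- the first bullet. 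If $d_{20}=0$ then $\sigma\in\bbR$ and $P(k)=0$ has a real solution iff $\sigma\le 0$, so avoidance is $\sigma>0$, which upon clearing the factor $|c_0|^2>0$ and using $\rho|c_0|^2=c_1\overline{c_0}$ becomes exactly \eqref{eq:AvoidanceC}. Every step being an equivalence, the converses follow at once.

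The one place requiring genuine care is the sub-case of (c) with $d_{20}=0$: there $c(P)=d_{20}^2=0$, so the resultant/common-root criterion no longer separates a real from a non-real common root, and one must fall back on the explicit sign analysis of $\sigma$ and check carefully that $\sigma>0$ rearranges into the homogeneous inequality \eqref{eq:AvoidanceC}. Everything else is bookkeeping.
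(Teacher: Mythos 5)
Your proof is correct, and for cases (c) and (d) it takes a genuinely different route from the paper's. Cases (a) and (b) coincide with the paper's argument. For the quadratic cases, the paper first normalizes $c_0=1$ and then reads everything off the curve directly: in (c) it notes $\op{Im}P(k)=d_{20}$ is constant and, when that vanishes, compares the minimum $(4c_2-c_1^2)/4$ of the real quadratic with zero; in (d) it solves $\op{Im}P(k)=0$ for the unique crossing of the real axis and evaluates $\op{Re}P$ there, which produces $c(P)$ up to the positive factor $(\op{Im}c_1)^{-2}$. You instead characterize real zeros of $P$ as common roots of $P$ and $\overline{P}$ and compute $\op{Res}(P,\overline{P})=-4c(P)$ (I checked the identity: the three bracketed differences equal $-2\im\, d_{20}$, $-2\im\, d_{10}$, $-2\im\, d_{21}$, so the resultant is $-4(d_{20}^2-d_{10}d_{21})$), then use the observation that a non-real common root forces $d_{10}=d_{20}=0$ to separate case (d) from case (c). Your completion of the square in (c) is essentially the paper's minimum computation in disguise, but your treatment of (d) is conceptually cleaner: the resultant identity explains \emph{why} $c(P)$ is the natural discriminant-like invariant, and it avoids dividing by $\op{Im}c_1$. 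What it gives up is the explicit location of the real-axis crossing and the sign of $\op{Re}P$ there, which the paper reuses immediately in part III (Lemma \ref{lem:Winding}) to determine on which side of the origin the parabola crosses; with your route that information would have to be recovered separately. You also correctly flag that the sub-case $d_{20}=0$ of (c) has $c(P)=0$ automatically, so the resultant criterion is silent there and the explicit sign analysis of $\sigma$ is genuinely needed.
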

\begin{remark}
	Ad (c): In the second case, both sides of \eqref{eq:AvoidanceC} are real by $ d_{10} = 0 $, $ d_{20} = 0 $.
\end{remark}
If the origin is avoided, cf.\,\eqref{eq:GeneralAvoidance}, the winding number $ N(P) $ about $ z=0 $ is defined in the obvious (counter clock-wise) sense. It takes a half-integer value if the curve happens to be a straight line.
\begin{lemma}[III: Winding about origin] \label{lem:Winding}
	If the conditions I, II are met, the winding number is
	\begin{alignat}{4}
		& \text{(a)} \quad && N(P)  && = \hspace{2pt} && 0 \,; \nonumber \\
		& \text{(b)} \quad && N(P)  && = \hspace{2pt} && - \frac{1}{2} \op{sgn} d_{21} \,; \label{eq:WindingLine} \\
		& \text{(c)} \quad && N(P)  && = \hspace{2pt} && 0 \,; \nonumber \\
		& \text{(d)} \quad && N(P)  && = \hspace{2pt} && \begin{cases} 0 \,, & (c(P) > 0) \\ - \op{sgn} d_{10} \,, & (c(P) < 0) \,. \end{cases} \label{eq:WindingParabola}
	\end{alignat}
\end{lemma}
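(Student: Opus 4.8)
The plan is to go through the four cases (a)--(d) separately, in each case reducing the winding number $N(P)$ --- which is well defined thanks to Lemmas \ref{lem:CurveTraced} and \ref{lem:Avoidance} --- to an elementary computation. Throughout I would compute $N(P)$ either directly from the total change of $\arg P(k)$ as $k$ sweeps $\bbR$, or, when the curve is asymptotically closed (i.e.\ $\arg P(k)$ tends to the same value at $k\to\pm\infty$), as the signed count of transversal crossings of the curve $k\mapsto P(k)$ with a fixed ray emanating from the origin.

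Case (a) is immediate: the image of $P$ is the single point $c_2\neq0$, so $N(P)=0$. For case (b) (so $c_0=0$, $c_1\neq0$) set $\zeta\coloneqq c_2/c_1$, whence $P(k)=c_1(k+\zeta)$ and $\arg P(k)=\arg c_1+\arg(k+\zeta)$; as $k$ runs over $\bbR$ the point $k+\zeta$ traverses the horizontal line at height $\op{Im}\zeta$, which is nonzero precisely by the avoidance condition \ref{lem:Avoidance}(b), so $\arg(k+\zeta)$ changes monotonically by $-\pi\op{sgn}(\op{Im}\zeta)$. Since $\op{Im}\zeta=d_{21}/|c_1|^2$, this yields $N(P)=-\tfrac12\op{sgn}d_{21}$, i.e.\ \eqref{eq:WindingLine}. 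For case (c) I would first divide $P$ by $c_0\neq0$; this changes neither $N(P)$ nor the signs of $d_{10},d_{20},d_{21},c(P)$ (each rescales by a positive power of $|c_0|$), so one may assume $c_0=1$. Then $d_{10}=\op{Im}c_1=0$ forces $c_1\in\bbR$, and completing the square gives $P(k)=(k+c_1/2)^2+c_2'$ with $c_2'\coloneqq c_2-c_1^2/4$. With $s=k+c_1/2\in\bbR$ the image of $P$ is the ray $\{c_2'+r:r\ge0\}$, traced in from infinity to its endpoint $c_2'$ and back out to infinity; as the origin lies off this ray (Lemma \ref{lem:Avoidance}(c)), $\arg P$ returns to its initial value with no net change, so $N(P)=0$.

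The substantive case is (d). Again normalize $c_0=1$ and write $c_1=a+\im b$, $c_2=p+\im q$ with $a,b,p,q\in\bbR$, so that $d_{10}=b\neq0$, $\op{Re}P(k)=k^2+ak+p$ and $\op{Im}P(k)=bk+q$. The curve meets the real axis at the unique point $k_\ast=-q/b$, and there $\op{Re}P(k_\ast)=(q^2-abq+pb^2)/b^2=c(P)/d_{10}^2$ --- a direct check of $c(P)=q^2-abq+pb^2$ against \eqref{eq:Dij}, \eqref{eq:cofP}. Moreover $P(k)\sim k^2$ as $k\to\pm\infty$, so $\arg P(k)\to0$ at both ends and the curve never approaches the negative real ray at infinity. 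Hence $N(P)$ equals the signed number of crossings of the ray $\{x<0,\ y=0\}$: if $c(P)>0$ the unique real-axis crossing sits on the positive axis, so there are no crossings and $N(P)=0$; if $c(P)<0$ it sits on the negative axis, and since $\op{Im}P(k)=bk+q$ passes from the lower to the upper half-plane (resp.\ upper to lower) as $k$ increases through $k_\ast$ when $b>0$ (resp.\ $b<0$), that crossing is counted with sign $-\op{sgn}b=-\op{sgn}d_{10}$, giving \eqref{eq:WindingParabola}.

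I expect the only point requiring genuine care is the orientation bookkeeping in case (d): matching the counter-clockwise sign convention for $N(P)$ with the direction in which the parabola sweeps across the chosen ray, together with the small verification that $\op{Re}P$ at the real-axis crossing is exactly $c(P)/d_{10}^2$. Everything else is routine algebra.
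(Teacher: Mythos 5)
Your proof is correct and follows essentially the same route as the paper's: in each case you locate the curve relative to the origin and read off the winding from elementary argument-tracking, with the key computation in case (d) being that the unique real-axis intersection has $\op{Re}P(k_\ast)=c(P)/d_{10}^2$ and that the crossing sign is $-\op{sgn}d_{10}$, exactly as in the paper. The only cosmetic difference is in case (b), where you factor $P(k)=c_1(k+\zeta)$ and track $\arg$ directly while the paper reads the orientation off the wedge product $\vec{c}_2\wedge\vec{c}_1=-d_{21}$; the two are equivalent.
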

The next lemma connects the coefficients $ c_j^{\pm} \ (j=0,1,2)$ of $ P_\pm $ to the boundary conditions $ \ul{A} $ of interest.
\begin{lemma} \label{lem:CoefficientsOfPpm}
	Given self-adjoint boundary conditions $ \ul{A} $ parametrized as in \eqref{eq:ExplicitUlA}, the coefficients $ c_j^{\pm} \ (j=0,1,2)$ of $ P_\pm $, cf.\,\eqref{eq:PPlusPMinus}, are
	\begin{equation}
		\begin{aligned}
			c_0^\pm &= \pm b_2 \wedge b_1 \,, \\
			c_1^{\pm} &= b_2 \wedge (a_2 \pm a_1') + (a_1 \pm a_2') \wedge b_1 \,, \\
			c_2^\pm &= (a_1 \pm a_2') \wedge (a_1' \pm a_2) \,. \label{eq:CoefficientsOfPpm}
		\end{aligned}
	\end{equation}
\end{lemma}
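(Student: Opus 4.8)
The plan is a direct determinant expansion; no deep input is needed. First I recall from \eqref{eq:APlusAMinus} and \eqref{eq:PPlusPMinus} that $A_\pm = A_1 \pm A_2$ and $P_\pm(k_x) = \det A_\pm(k_x)$, and from the parametrization \eqref{eq:A1A2}–\eqref{eq:ExplicitUlA} that $A_1(k_x)$ is the $2\times 2$ matrix with columns $a_1$ and $a_1' + k_x b_1$, while $A_2(k_x)$ has columns $a_2' + k_x b_2$ and $a_2$. Consequently $A_\pm(k_x)$ is the $2\times 2$ matrix whose first column is $(a_1 \pm a_2') \pm k_x b_2$ and whose second column is $(a_1' \pm a_2) + k_x b_1$.

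Second, writing $u \wedge v$ for the $2\times 2$ determinant of the matrix with columns $u,v \in \bbC^2$ (so that $u\wedge v$ is bilinear and antisymmetric), one expands
\begin{align*}
	P_\pm(k_x) &= \bigl[(a_1 \pm a_2') \pm k_x b_2\bigr] \wedge \bigl[(a_1' \pm a_2) + k_x b_1\bigr] \\
	&= (a_1 \pm a_2')\wedge(a_1' \pm a_2) + k_x\Bigl[(a_1 \pm a_2')\wedge b_1 \pm b_2 \wedge (a_1' \pm a_2)\Bigr] \pm k_x^2\, b_2 \wedge b_1 \,.
\end{align*}
Reading off the coefficients of $k_x^0,k_x^1,k_x^2$ gives $c_2^\pm = (a_1 \pm a_2')\wedge(a_1' \pm a_2)$, $c_0^\pm = \pm b_2 \wedge b_1$, and $c_1^\pm = (a_1 \pm a_2')\wedge b_1 \pm b_2 \wedge (a_1' \pm a_2)$. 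One last use of bilinearity, $\pm b_2 \wedge (a_1' \pm a_2) = b_2 \wedge (\pm a_1' + a_2) = b_2 \wedge (a_2 \pm a_1')$, brings $c_1^\pm$ into the form $c_1^\pm = b_2 \wedge (a_2 \pm a_1') + (a_1 \pm a_2')\wedge b_1$ stated in \eqref{eq:CoefficientsOfPpm}.

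The only point requiring care is the $\pm$-bookkeeping in the linear (cross) term; there is no genuine obstacle. Note in particular that the self-adjointness constraint \eqref{eq:CondSelfAdjKxNew} on $\ul{A}$ is not used in this computation — the result rests solely on the affine-linear shape \eqref{eq:ExplicitUlA} of $\ul{A}$ forced by locality — although self-adjointness is of course what makes $U$ in \eqref{eq:UFormula} a unitary and hence makes the windings $N(P_\pm)$ in \eqref{eq:PPlusPMinus} the objects of interest.
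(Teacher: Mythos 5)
Your proposal is correct and follows essentially the same route as the paper: both write $A_\pm = A_1 \pm A_2$ in terms of its two column vectors, identify $\det(v_1,v_2)$ with $v_1 \wedge v_2$, and expand bilinearly in $k_x$ to read off the coefficients. Your closing remark that self-adjointness plays no role here is accurate and harmless, though the paper does not comment on it.
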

\begin{proof}
	A straightforward computation based on (\ref{eq:JuxtaposeA1A2}, \ref{eq:ExplicitUlA}),
	\begin{gather}
		A_1 = (a_1, \, a_1' + k_x b_1) \,, \qquad A_2 = (a_2' + k_x b_2, a_2) \\
		A_\pm = \big( (a_1 \pm a_2') \pm k_x b_2, \, (\pm a_2 + a_1') + k_x b_1 \big) \,,
	\end{gather}
	and the observation \eqref{eq:DetTrick} that $ \op{det} (v_1, v_2) = v_1 \wedge v_2 $, $ v_i \in \bbC^2 $.
\end{proof}
The lemmas above, starting with the last one, can now be specialized to the families DD, ND, NN of boundary conditions, cf.\,Def.\,\ref{def:Families}. As customary, DN is not treated because of the substitution detailed in Prop.\,\ref{prop:BCs}.
\begin{lemma}
	When spelled out for families DD, ND, NN, Eq.\,\eqref{eq:CoefficientsOfPpm} reads
	\begin{itemize}
		\item[DD:]
		\begin{equation}
			\begin{aligned}
				c_0^\pm &= \pm b_2 \wedge b_1 \,, \\
				c_1^{\pm} &= \pm ( b_2 \wedge a_1' + a_2' \wedge b_1) \,, \\
				c_2^\pm &= \pm a_2' \wedge a_1' \,; \label{eq:CoefficientsOfPpmDD}
			\end{aligned}
		\end{equation}
		
		\item[ND:] 
		\begin{equation}
			\begin{aligned}
				c_0^\pm &= \pm \im \lambda a_1 \wedge b_1 \,, \\
				c_1^{\pm} &= \pm \im \lambda a_1 \wedge a_1' + (1 + |\alpha|^2 \pm \im \lambda') a_1 \wedge b_1 \,, \\
				c_2^\pm &= (1 + |\alpha|^2 \pm \im \lambda') a_1 \wedge a_1' \,; \label{eq:CoefficientsOfPpmND}
			\end{aligned}
		\end{equation}
		
		\item[NN:] 
		\begin{equation}
			\begin{aligned}
				c_0^\pm &= \pm (|\mu|^2 - \lambda_1 \lambda_2) a_1 \wedge a_2 \,, \\
				c_1^{\pm} &= \big( \im (\lambda_1 + \lambda_2) \mp (\lambda_2 \lambda_1' + \lambda_1 \lambda_2') \pm (\overline{\mu} \mu' + \mu \overline{\mu}') \big) a_1 \wedge a_2 \,, \\
				c_2^\pm &= \pm \big( (1 \pm \im \lambda_2') (1 \pm \im \lambda_1') + |\mu'|^2 \big) a_1 \wedge a_2 \,. \label{eq:CoefficientsOfPpmNN}
			\end{aligned}
		\end{equation}
	\end{itemize}
\end{lemma}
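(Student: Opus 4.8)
The plan is to obtain the three displays by substituting the family-by-family parametrizations into the single general formula \eqref{eq:CoefficientsOfPpm} of Lemma~\ref{lem:CoefficientsOfPpm}, and then expanding the wedge products bilinearly, using throughout that $v\wedge v=0$ for $v\in\bbC^2$ and $v_1\wedge v_2=-v_2\wedge v_1$. For DD this is immediate: by Lemma~\ref{lem:Dictionary} one has $a_1=a_2=0$, while $a_1',a_2',b_1,b_2$ are unconstrained by Prop.~\ref{prop:BCs}, so striking every term of \eqref{eq:CoefficientsOfPpm} that contains $a_1$ or $a_2$ leaves exactly \eqref{eq:CoefficientsOfPpmDD}, with the overall sign of each coefficient becoming $\pm$ since each surviving term carried one factor of $\pm$.

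For ND I would write $a_2=\alpha a_1$ and solve the linear self-adjointness constraints \eqref{eq:ConditionsND} for the ``primed'' and $b$-data that are tied to $a_1$, namely $a_2'=\im\lambda' a_1-\bar\alpha a_1'$ and $b_2=\im\lambda a_1-\bar\alpha b_1$. Feeding these into $c_0^\pm,c_1^\pm,c_2^\pm$ and expanding, every monomial of the form $b_1\wedge b_1$, $a_1'\wedge a_1'$ or $a_1\wedge a_1$ vanishes; the two terms proportional to $\bar\alpha\,(b_1\wedge a_1')$ that appear in $c_1^\pm$ (one from $b_2\wedge a_1'$, one from $a_2'\wedge b_1$) cancel against each other, and the remaining terms regroup into the prefactors $\pm\im\lambda$ and $1+|\alpha|^2\pm\im\lambda'$ multiplying $a_1\wedge a_1'$ and $a_1\wedge b_1$, giving \eqref{eq:CoefficientsOfPpmND}. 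The case DN follows by the substitution recorded in Prop.~\ref{prop:BCs}.

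For NN the vectors $a_1,a_2$ are linearly independent, hence $a_1\wedge a_2\neq0$ and it will factor out of every coefficient. I would substitute the four relations \eqref{eq:ConditionsNN1}–\eqref{eq:ConditionsNN2}, which express $a_1',a_2',b_1,b_2$ through $a_1,a_2$ and the scalars $\mu,\mu',\lambda_i,\lambda_i'$, expand bilinearly, discard the $a_i\wedge a_i=0$ terms, and use $a_2\wedge a_1=-a_1\wedge a_2$ to bring the surviving cross terms onto a common $a_1\wedge a_2$. In $c_0^\pm$ the $|\mu|^2$ and $\lambda_1\lambda_2$ contributions combine; in $c_1^\pm$ the $\bar\mu\mu'+\mu\bar\mu'$ and $\lambda_2\lambda_1'+\lambda_1\lambda_2'$ contributions assemble alongside the $\im(\lambda_1+\lambda_2)$ piece; and in $c_2^\pm$ the elementary identity $\pm1+\im\lambda_j'=\pm(1\pm\im\lambda_j')$ lets one pull out an overall $\pm$ and read off $(1\pm\im\lambda_1')(1\pm\im\lambda_2')+|\mu'|^2$, yielding \eqref{eq:CoefficientsOfPpmNN}.

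The computation is entirely mechanical, and the only genuine hazard is the simultaneous bookkeeping of the $\pm/\mp$ labels and the sign flip of the wedge product: for instance in the ND evaluation of $c_2^\pm$ one must note that $\mp\bar\alpha\cdot(\pm\alpha)=-|\alpha|^2$ and then $a_1'\wedge a_1=-a_1\wedge a_1'$, so that these two sign reversals conspire to leave a $+|\alpha|^2$ in front of $a_1\wedge a_1'$. Organizing the expansion so as to first collect the coefficient of the single relevant bivector ($a_1\wedge a_2$ for NN, and $a_1\wedge a_1'$ and $a_1\wedge b_1$ for ND) before simplifying keeps the sign accounting under control, and no further input is required.
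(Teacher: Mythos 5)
Your proposal is correct and is exactly the paper's route: the paper's proof is the one-line ``immediate consequence of \eqref{eq:CoefficientsOfPpm} and Prop.\,\ref{prop:BCs}'', i.e.\ substitute each family's parametrization into the general formula and expand the wedge products. Your spelled-out sign bookkeeping (the cancellation of the two $\bar\alpha\,(b_1\wedge a_1')$ terms in $c_1^\pm$ for ND, the $\mp\bar\alpha\cdot(\pm\alpha)$ and $a_1'\wedge a_1=-a_1\wedge a_1'$ double flip, and the identity $\pm1+\im\lambda_j'=\pm(1\pm\im\lambda_j')$ for NN) checks out in every case.
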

\begin{proof}
	An immediate consequence of \eqref{eq:CoefficientsOfPpm} and Prop.\,\ref{prop:BCs}.
\end{proof}
\begin{proposition}[Boundary winding $B$, family DD] \label{prop:BDD}
	Let $ \ul{A} $ be a self-adjoint boundary condition in family DD, cf.\,Def.\,\ref{def:Families}, moreover parametrized as in Lm.\,\ref{lem:Dictionary}, Prop.\,\ref{prop:BCs}. Let $U$ be the associated von Neumann unitary \eqref{eq:UFormula}, and let $ P_\pm $ be as in \eqref{eq:PPlusPMinus}. The index $ B(U) = N(P_-) - N(P_+) $ is well-defined whenever the curves $ P_\pm $ avoid the origin, cf.\,Lm.\,\ref{lem:Avoidance}, in which case
	\begin{equation}
		B(U) = 0 \,.
	\end{equation}
\end{proposition}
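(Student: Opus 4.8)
The plan is to exploit a symmetry between $P_+$ and $P_-$ that is special to family DD. By Lm.\,\ref{lem:Dictionary}, a boundary condition in DD has $a_1 = a_2 = 0$, so, writing $A_\pm = A_1 \pm A_2$ with $A_1, A_2$ as in \eqref{eq:JuxtaposeA1A2}, \eqref{eq:A1A2},
\begin{equation}
  A_+(k_x) = \big( a_2' + k_x b_2, \ a_1' + k_x b_1 \big) \,, \qquad A_-(k_x) = \big( -(a_2' + k_x b_2), \ a_1' + k_x b_1 \big) \,.
\end{equation}
Thus $A_-(k_x)$ is obtained from $A_+(k_x)$ by negating its first column; since $\det(v_1, v_2) = v_1 \wedge v_2$, this yields $P_-(k_x) = \det A_-(k_x) = - \det A_+(k_x) = - P_+(k_x)$ for every $k_x \in \bbR$. (The same identity is read off at once from the coefficients $c_j^\pm$ of $P_\pm$ listed for family DD right before this proposition, which satisfy $c_j^- = - c_j^+$ for $j = 0,1,2$.)

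With this in hand the argument is immediate. First, $P_+$ avoids the origin if and only if $P_-$ does, so the hypothesis of the proposition is a single, consistent condition. Next, the curve $k_x \mapsto P_-(k_x) = -P_+(k_x)$ is the image of $k_x \mapsto P_+(k_x)$ under the rotation $z \mapsto e^{\im \pi} z$ of $\bbC$; equivalently $\diff P_- / P_- = \diff P_+ / P_+$, so the total variation of the argument along $\bbR$ — hence the winding number about the origin — is the same for both curves. Therefore $N(P_-) = N(P_+)$, and by \eqref{eq:PPlusPMinus}
\begin{equation}
  B(U) = N(P_-) - N(P_+) = 0 \,.
\end{equation}

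I do not expect a genuine obstacle here; the only point to be careful about is that in the degenerate cases of Lm.\,\ref{lem:CurveTraced} (for instance when $P_\pm$ traces a straight line) the individual winding numbers $N(P_\pm)$ may be half-integers rather than integers. But the relation $P_- = -P_+$ forces the two half-integers to coincide — concretely one checks $d_{21}^- = \op{Im}(c_2^- \overline{c_1^-}) = \op{Im}(c_2^+ \overline{c_1^+}) = d_{21}^+$, and likewise $d_{10}^- = d_{10}^+$ and $c(P_-) = c(P_+)$ — so that the difference is $0$ and $B(U)$ is in particular a well-defined integer. Hence nothing beyond Lemmas \ref{lem:CurveTraced}--\ref{lem:Winding} and the coefficient formulae is needed.
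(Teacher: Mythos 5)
Your proposal is correct and follows essentially the same route as the paper: both derive $P_-(k_x) = -P_+(k_x)$ from $c_j^- = -c_j^+$ (equivalently, from $a_1 = a_2 = 0$ and the column structure of $A_\pm$), conclude $N(P_-) = N(P_+)$, and hence $B(U) = 0$. Your extra check of the degenerate cases of Lm.\,\ref{lem:CurveTraced} is a harmless refinement of the same argument.
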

\begin{proof}
	Eq.\,\eqref{eq:PPlusPMinus} applies. By $ c_j^- = - c_j^+ $ we have $ P_- (k_x) = - P_+ (k_x) $, whence $ N(P_-) = N(P_+) $, whenever one of them (and hence both) are well-defined. Thus $ B(U) = 0 $ follows.
\end{proof}
The just stated result for family DD does not come as a surprise, since $ U(k_x) = J $ is a constant, cf.\,Prop.\,\ref{prop:UBCs}. 

Family ND is of greater complexity. There, Eq.\,\eqref{eq:CoefficientsOfPpmND} implies
\begin{equation}
	\label{eq:PTilde}
	\begin{aligned}
		P_\pm (k_x) &= \pm \im \lambda (a_1 \wedge b_1) k_x^2 + \big( \pm \im \lambda (a_1 \wedge a_1') + (1 + |\alpha|^2 \pm \im \lambda') (a_1 \wedge b_1) \big) k_x \\
		&+ (1 + |\alpha|^2 \pm \im \lambda') (a_1 \wedge a_1') \\
		&= \big( (a_1 \wedge a_1') + k_x (a_1 \wedge b_1) \big) \big( \pm \im \lambda k_x + (1 + |\alpha|^2 \pm \im \lambda') \big) \\
		&\equiv \big( (a_1 \wedge a_1') + k_x (a_1 \wedge b_1) \big) \tilde{P}_\pm (k_x) \,. 
	\end{aligned}
\end{equation}
The first factor in the last member of the equation above does not contribute to the winding $ B(U) $ by
\begin{equation}
	B(U) = N(P_-) - N(P_+) = N(\tilde{P}_-) - N(\tilde{P}_+) \,. \label{eq:BDifferenceTilde}
\end{equation}
Its zeros are however of relevance when discussing avoidance of the origin. In fact, we shall show that $ \tilde{P}_\pm $ have no zeros $ k_x \in \bbR $, implying that crossings of the origin, and in turn failure of self-adjointess, only take place at the fiber $ k_x $ solving
\begin{equation}
	(a_1 \wedge a_1') + k_x (a_1 \wedge b_1) = 0 \,,
\end{equation}
cf.\,Lm.\,\ref{lem:MaxRank} for details. We thus focus on $ \tilde{P}_\pm $ only, and in fact only on the $+$ variant, dropping subscripts. The $-$ variant is recovered by the replacement
\begin{equation}
	(\lambda, \lambda') \rightsquigarrow (- \lambda, - \lambda') \,. \label{eq:ReplacementND}
\end{equation}
Below, 
\begin{equation}
	\tilde{P}_+ (k_x) \equiv \tilde{P} (k_x) = \im \lambda k_x + (1 + |\alpha|^2 + \im \lambda') \label{eq:ExplicitPTilde}
\end{equation}
shall be seen as a special case of \eqref{eq:PofK} with
\begin{equation}
	\label{eq:CoeffOfPTilde}
	\begin{aligned}
		c_0 &= 0 \,, \\
		c_1 &= \im \lambda \,, \\
		c_2 &= 1 + | \alpha |^2 + \im \lambda' \,, 
	\end{aligned}
\end{equation}
whence $ d_{10} = d_{20} = 0 $ and
\begin{equation}
	d_{21} = - \lambda (1 + |\alpha|^2) \,. \label{eq:d21ND}
\end{equation}
\begin{proposition}[Winding of $P$, family ND] \label{prop:BND}
	Let $ \ul{A} $ be a self-adjoint boundary condition in family ND, cf.\,Def.\,\ref{def:Families}, moreover parametrized as in Lm.\,\ref{lem:Dictionary}, Prop.\,\ref{prop:BCs}. Then, the polynomial $ \tilde{P}_+ \equiv \tilde{P} $, cf.\,\eqref{eq:ExplicitPTilde}, satisfies:
	
	I. (\textit{Curve traced}) The curve $ k_x \mapsto \tilde{P} (k_x) $ traces out
	\begin{itemize}
		\item[(a)] a point iff $ \lambda = 0 $;
		
		\item[(b)] a straight line iff $ \lambda \neq 0 $.
	\end{itemize}
	
	II. (\textit{Avoidance of origin}) Depending on the cases in I, the curve avoids the origin, cf.\,\eqref{eq:GeneralAvoidance}, according to
	\begin{itemize}
		\item[(a)] always;
		
		\item[(b)] always.
	\end{itemize}
	
	III. (\textit{Winding about origin}) If the conditions I, II are met, the winding number is as in Lm.\,\ref{lem:Winding}, cf.\,(\ref{eq:WindingLine}, \ref{eq:WindingParabola}), with the restatement
	\begin{equation}
		N(P) = \frac{1}{2} \op{sgn} (\lambda) \label{eq:NofPND}
	\end{equation}
	in case (b).
\end{proposition}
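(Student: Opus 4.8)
The plan is to read off Proposition~\ref{prop:BND} directly from the three general winding lemmas, specialised to the affine polynomial $\tilde{P}$ of \eqref{eq:ExplicitPTilde}. Recall its coefficients from \eqref{eq:CoeffOfPTilde}: $c_0 = 0$, $c_1 = \im \lambda$, $c_2 = 1 + |\alpha|^2 + \im \lambda'$, so that $d_{10} = d_{20} = 0$ and, by \eqref{eq:d21ND}, $d_{21} = -\lambda(1 + |\alpha|^2)$. The single structural fact that does all the work is the positivity $1 + |\alpha|^2 \ge 1 > 0$, valid for every $\alpha \in \bbC$; every non-degeneracy clause below reduces to it together with the (real) sign of $\lambda$.

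\emph{Part I (curve traced).} Since $c_0 = 0$ identically, only cases (a) and (b) of Lemma~\ref{lem:CurveTraced} are possible. Case (a) requires $c_0 = c_1 = 0$; as $c_1 = \im \lambda$ with $\lambda \in \bbR$, this holds exactly when $\lambda = 0$. Case (b) requires $c_0 = 0$ and $c_1 \ne 0$, i.e.\ $\lambda \ne 0$. This is the asserted dichotomy.

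\emph{Part II (avoidance of origin).} Apply Lemma~\ref{lem:Avoidance}. In case (a) the criterion is $c_2 \ne 0$; but $\op{Re} c_2 = 1 + |\alpha|^2 \ge 1$, hence $c_2 \ne 0$ unconditionally. In case (b) the criterion is $d_{21} \ne 0$; since $d_{21} = -\lambda(1 + |\alpha|^2)$ with $\lambda \ne 0$ by hypothesis and $1 + |\alpha|^2 > 0$, again $d_{21} \ne 0$ unconditionally. So the origin is avoided in both cases, which is the statement ``always''.

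\emph{Part III (winding).} Lemma~\ref{lem:Winding} yields $N(P) = 0$ in case (a) and $N(P) = -\tfrac12 \op{sgn} d_{21}$ in case (b), cf.\ \eqref{eq:WindingLine}. Substituting $d_{21} = -\lambda(1 + |\alpha|^2)$ and using $1 + |\alpha|^2 > 0$ gives $\op{sgn} d_{21} = -\op{sgn} \lambda$, hence $N(P) = \tfrac12 \op{sgn} \lambda$, which is the restatement \eqref{eq:NofPND}.

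I do not expect a genuine obstacle here: the proposition is a pure bookkeeping specialisation of Lemmas~\ref{lem:CurveTraced}--\ref{lem:Winding}, and the only thing one must notice is the harmless but essential positivity $1 + |\alpha|^2 > 0$, which at a stroke kills every degenerate sub-case (a half-line traced through the origin, a parabola through or tangent to it) that makes the generic lemmas intricate. As a sanity check, the companion polynomial $\tilde{P}_-$ is obtained by the substitution \eqref{eq:ReplacementND}, whence $N(\tilde{P}_-) = -\tfrac12 \op{sgn}\lambda$; combined with \eqref{eq:BDifferenceTilde} this gives $B(U) = N(\tilde{P}_-) - N(\tilde{P}_+) = -\op{sgn}\lambda = \op{sgn}q$ in the non-degenerate regime, matching the ND value of $B$ in Prop.~\ref{prop:NewMap2} via \eqref{eq:MAndQ}.
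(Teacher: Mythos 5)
Your proof is correct and coincides with the paper's own argument: both simply specialize Lemmas \ref{lem:CurveTraced}--\ref{lem:Winding} to the coefficients \eqref{eq:CoeffOfPTilde}, with the origin-avoidance in both subcases resting on $\op{Re} c_2 = 1+|\alpha|^2 > 0$ and $d_{21} = -\lambda(1+|\alpha|^2) \neq 0$ for $\lambda \neq 0$. The concluding sign computation $N(P) = -\tfrac12\op{sgn} d_{21} = \tfrac12\op{sgn}\lambda$ matches \eqref{eq:NofPND}, and the consistency check against Prop.~\ref{prop:NewMap2} is a nice, if unnecessary, addition.
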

\begin{proof}
	Given (\ref{eq:CoeffOfPTilde}, \ref{eq:d21ND}), I and III are immediate restatements of Lm.\,\ref{lem:CurveTraced}-\ref{lem:Winding}. II: (a) $ c_2 \neq 0 $ because $ \op{Re} (c_2) = 1 + |\alpha|^2 \neq 0 $ at all times. (b) $ d_{21} = - \lambda (1 + |\alpha|^2) \neq 0 $ by assumption $ \lambda \neq 0 $.
\end{proof}
We now present the specialization of Lms.\,\ref{lem:CurveTraced}, \ref{lem:Avoidance}, \ref{lem:Winding} to family NN. Just like in case ND, we focus on $ P_+ (k_x) = \op{det} A_+ (k_x) $. Analogous results for $ P_- = \op{det} A_- (k_x) $ are recovered via the substitution detailed prior to the proof of the next proposition, which shall be found in App.\,\ref{app:ProofB}.
\begin{proposition}[Winding of $P$, family NN] \label{prop:BNN}
	Let $ \ul{A} $ be a self-adjoint boundary condition in family NN, cf.\,Def.\,\ref{def:Families}, moreover parametrized as in Lm.\,\ref{lem:Dictionary}, Prop.\,\ref{prop:BCs}. The polynomial
	\begin{equation}
		P_+ (k_x) \equiv P(k_x) = c_0^+ k_x^2 + c_1^+ k_x + c_2^+ \,,
	\end{equation}
	with $ c_j^+ $, $ ( j = 0,1,2) $ as in \eqref{eq:CoefficientsOfPpmNN}, satisfies:
	
	I. (\textit{Curve traced}) The curve $ k_x \mapsto P (k_x) $ traces out
	\begin{itemize}
		\item[(a)] a point iff $ \mu = \lambda_1 = \lambda_2 = 0 $;
		
		\item[(b)] a straight line iff \textbullet\ $ | \mu |^2 = \lambda_1 \lambda_2 $ and \textbullet\ $ \lambda_2 \lambda_1' + \lambda_1 \lambda_2' \neq \overline{\mu} \mu' + \mu \overline{\mu}' $;
		
		\item[(c)] a half-line traced both ways iff \textbullet\ $ \lambda_2 = - \lambda_1 $ and \textbullet\ $ \mu \neq 0 $ or $ \lambda_1 \neq 0 $;
		
		\item[(d)] a parabola iff $ | \mu |^2 \neq \lambda_1 \lambda_2 $ and $ \lambda_1 + \lambda_2 \neq 0 $.
	\end{itemize}
	
	II. (\textit{Avoidance of origin}) Depending on the cases in I, the curve avoids the origin, cf.\,\eqref{eq:GeneralAvoidance}, according to
	\begin{itemize}
		\item[(a-d)] always.
	\end{itemize}
	
	III. (\textit{Winding about origin}) If the conditions I, II are met, the winding number is as in Lm.\,\ref{lem:Winding}, cf.\,(\ref{eq:WindingLine}, \ref{eq:WindingParabola}), with the restatement
	\begin{equation}
		N(P) = 
		\begin{cases}
			0 \,, & (|\mu|^2 > \lambda_1 \lambda_2) \\
			\op{sgn} (\lambda_1 + \lambda_2) \,, & (|\mu|^2 < \lambda_1 \lambda_2)
		\end{cases} \label{eq:NofPNN}
	\end{equation}
	in case (d).
\end{proposition}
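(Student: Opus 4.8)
The plan is to deduce Proposition~\ref{prop:BNN} from the three general Lemmas~\ref{lem:CurveTraced}, \ref{lem:Avoidance}, \ref{lem:Winding} by reading off the coefficients of $P_+$. From \eqref{eq:CoefficientsOfPpmNN} one gets $P_+(k_x) = (a_1 \wedge a_2)\,p(k_x)$ with $p(k_x) = \gamma_0 k_x^2 + \gamma_1 k_x + \gamma_2$, where $\gamma_0 = |\mu|^2 - \lambda_1\lambda_2 \in \bbR$, $\gamma_1 = \im(\lambda_1+\lambda_2) - (\lambda_2\lambda_1' + \lambda_1\lambda_2') + (\overline{\mu}\mu' + \mu\overline{\mu}')$ and $\gamma_2 = (1 + \im\lambda_1')(1 + \im\lambda_2') + |\mu'|^2$. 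Since $a_1, a_2$ are linearly independent, $w := a_1 \wedge a_2 \neq 0$; multiplication by the fixed nonzero scalar $w$ is an $\bbR$-linear bijection of $\bbC$ fixing the origin, so it changes neither the qualitative type of the traced curve, nor whether that curve meets $0$, nor its winding about $0$ (equivalently, in Lemmas~\ref{lem:CurveTraced}--\ref{lem:Winding} the quantities $d_{10}, d_{20}, d_{21}$ are scaled by $|w|^2 > 0$ and $c(P)$ by $|w|^4 > 0$, so all signs survive). Hence it suffices to run the three lemmas on $p$, with $c_0 = \gamma_0$, $c_1 = \gamma_1$, $c_2 = \gamma_2$.

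Next I would compute the auxiliary data. Separating parts, $\op{Re}\gamma_1 = \overline{\mu}\mu' + \mu\overline{\mu}' - (\lambda_2\lambda_1' + \lambda_1\lambda_2')$, $\op{Im}\gamma_1 = \lambda_1 + \lambda_2$, $\op{Re}\gamma_2 = 1 - \lambda_1'\lambda_2' + |\mu'|^2$, $\op{Im}\gamma_2 = \lambda_1' + \lambda_2'$, so (using that $\gamma_0$ is real) $d_{10} = \gamma_0(\lambda_1+\lambda_2)$, $d_{20} = \gamma_0(\lambda_1'+\lambda_2')$ and $d_{21} = (\lambda_1'+\lambda_2')\op{Re}\gamma_1 - (\lambda_1+\lambda_2)\op{Re}\gamma_2$, which coincides, up to the positive factor $|w|^2$, with \eqref{eq:d21NNAgain}. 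Part~I is then a direct translation of Lemma~\ref{lem:CurveTraced}: case~(a) is $\gamma_0 = \gamma_1 = 0$, and $|\mu|^2 = \lambda_1\lambda_2$ together with $\lambda_1 + \lambda_2 = 0$ forces $|\mu|^2 = -\lambda_1^2 \le 0$, hence $\mu = \lambda_1 = \lambda_2 = 0$; case~(d) is $\gamma_0 \neq 0$, $d_{10} \neq 0$, i.e. $|\mu|^2 \neq \lambda_1\lambda_2$ and $\lambda_1 + \lambda_2 \neq 0$; case~(c) is $\gamma_0 \neq 0$, $d_{10} = 0$, i.e. $\lambda_1 + \lambda_2 = 0$ with $|\mu|^2 \neq \lambda_1\lambda_2 = -\lambda_1^2$, equivalently $\lambda_2 = -\lambda_1$ and ($\mu \neq 0$ or $\lambda_1 \neq 0$); and case~(b) is the remaining $\gamma_0 = 0$, $\gamma_1 \neq 0$ configurations, further restricted to $\op{Re}\gamma_1 \neq 0$, i.e. $\lambda_2\lambda_1' + \lambda_1\lambda_2' \neq \overline{\mu}\mu' + \mu\overline{\mu}'$ (the excluded configurations, with $\gamma_0 = 0$, $\op{Re}\gamma_1 = 0$, $\gamma_1 \ne 0$, are still straight lines but are exceptional and land in the ``$B$ ill-defined'' case of Prop.~\ref{prop:NewMap4}).

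The heart of the proof is Part~II, avoidance of the origin, which I would check case by case against Lemma~\ref{lem:Avoidance}. Case~(a) needs $\gamma_2 \neq 0$, which always holds: $\gamma_2 = 0$ would force $\lambda_1' + \lambda_2' = 0$ and then $0 = \op{Re}\gamma_2 = 1 + \lambda_1'^2 + |\mu'|^2 \ge 1$, absurd. Case~(b) needs $d_{21} \neq 0$, i.e. $\gamma_2$ not a real multiple of $\gamma_1$; here the constraint $\gamma_0 = 0$, that is $|\mu|^2 = \lambda_1\lambda_2$, combined with $\op{Re}\gamma_1 \neq 0$ should rule out proportionality via a Cauchy--Schwarz-type estimate on $|\mu|^2 = \lambda_1\lambda_2$. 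Case~(c) needs $d_{20} \neq 0$ or, when $d_{20} = 0$ (i.e. $\lambda_1' + \lambda_2' = 0$, so also $\lambda_2' = -\lambda_1'$), the inequality \eqref{eq:AvoidanceC}; dividing by $|c_0|^2\gamma_0$ this becomes $4\gamma_0\op{Re}\gamma_2 > (\op{Re}\gamma_1)^2$, which written out is $(|\mu|^2 + \lambda_1^2)(1 + |\mu'|^2 + \lambda_1'^2) > (\op{Re}(\overline{\mu}\mu') + \lambda_1\lambda_1')^2$ and follows from Cauchy--Schwarz on $\bbC \times \bbR$ plus the strict positivity $|\mu|^2 + \lambda_1^2 > 0$ available in case~(c). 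Case~(d) needs $c(P) = d_{20}^2 - d_{21}d_{10} \neq 0$; I expect that after substituting the $d_{ij}$ this collapses to $c(P) = \gamma_0\, Q$ with $Q$ a manifestly positive expression in $\mu, \mu', \lambda_i, \lambda_i'$, so that $c(P) \neq 0$ and, crucially, $\op{sgn}c(P) = \op{sgn}\gamma_0 = \op{sgn}(|\mu|^2 - \lambda_1\lambda_2)$.

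Part~III is then immediate from Lemma~\ref{lem:Winding}: case~(a) gives $N(P) = 0$; case~(b) gives $N(P) = -\tfrac12\op{sgn}d_{21}$ as in \eqref{eq:WindingLine}; case~(c) gives $N(P) = 0$; case~(d) gives $N(P) = 0$ for $c(P) > 0$ and $N(P) = -\op{sgn}d_{10}$ for $c(P) < 0$. By the sign identity of Part~II, $c(P) > 0 \iff |\mu|^2 > \lambda_1\lambda_2$ and $c(P) < 0 \iff |\mu|^2 < \lambda_1\lambda_2$; and in the latter case $-\op{sgn}d_{10} = -\op{sgn}\!\big(\gamma_0(\lambda_1+\lambda_2)\big) = \op{sgn}(\lambda_1+\lambda_2)$ since $\gamma_0 < 0$, which is exactly the restatement \eqref{eq:NofPNN}. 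The analogous statements for $P_- = \det A_-$ follow from the substitution relating $P_-$ to $P_+$ (the NN analogue of \eqref{eq:ReplacementND}), which leaves $\gamma_0 = |\mu|^2 - \lambda_1\lambda_2$ unchanged. The main obstacle is Part~II: the classification I and the winding formulas III are bookkeeping once the coefficients are in hand, but showing the origin is genuinely avoided in cases~(b), (c), (d) requires recognizing the Cauchy--Schwarz / sum-of-squares structure hidden in the NN self-adjointness constraints \eqref{eq:ConditionsNN1}--\eqref{eq:ConditionsNN2}, in particular establishing $c(P_+) = (\text{positive})\cdot(|\mu|^2 - \lambda_1\lambda_2)$.
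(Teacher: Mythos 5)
Your proposal follows essentially the same route as the paper: normalize away the nonzero factor $a_1\wedge a_2$, compute $d_{10},d_{20},d_{21}$ from the explicit coefficients \eqref{eq:CoefficientsOfPpmNN}, and run Lemmas \ref{lem:CurveTraced}--\ref{lem:Winding} case by case. Your Part I, your case (a) and case (c) arguments in Part II, and your Part III bookkeeping all coincide with the paper's. The one genuine gap is that the decisive computation is asserted rather than performed: you write ``I expect that \dots $c(P)=\gamma_0\,Q$ with $Q$ manifestly positive.'' That identity is the crux of II(d) and of the sign statement driving \eqref{eq:NofPNN}, and the paper devotes a separate lemma to verifying it by explicit algebra, obtaining $c(P)=(|\mu|^2-\lambda_1\lambda_2)\,\tilde c(P)$ with
\begin{equation*}
\tilde c(P)=(\lambda_1+\lambda_2)^2+(\lambda_1\lambda_2'-\lambda_2\lambda_1')^2+\bigl|(\lambda_1'+\lambda_2')\mu-(\lambda_1+\lambda_2)\mu'\bigr|^2 \,,
\end{equation*}
which is strictly positive precisely because $\lambda_1+\lambda_2\neq0$ in case (d). Your intuition about a sum-of-squares structure is right, but without exhibiting it the proof of II(d) and III(d) is incomplete; this is not a routine verification one can wave at, since a priori $c(P)$ could vanish or have the wrong sign on a subvariety.

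Two further points. For II(b) you again only gesture at ``a Cauchy--Schwarz-type estimate''; the paper instead uses the $k_x$-translation freedom of Rem.\,\ref{rem:TranslationsKx} to set $\lambda_1'+\lambda_2'=0$, after which $d_{21}=-(\lambda_1+\lambda_2)(1+\lambda_1'^2+|\mu'|^2)\neq0$ is immediate. (Alternatively, the same sum-of-squares identity specialized to $c_0=0$ gives $-d_{21}(\lambda_1+\lambda_2)=\tilde c(P)>0$, so your instinct can be made to work -- but it has to be made to work.) Finally, your parenthetical remark that the configurations with $\gamma_0=0$, $\op{Re}\gamma_1=0$, $\gamma_1\neq0$ ``land in the $B$ ill-defined case'' is incorrect: by the identity just quoted, $d_{21}\neq0$ there as well, so the origin is avoided and the winding is well defined. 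Those configurations are simply not captured by the ``iff'' of I(b) as stated -- a blemish of the proposition itself that the paper's proof smooths over by noting that $c_0=0$ and $c_1\neq0$ force $\lambda_1+\lambda_2\neq0$ -- but your proposed patch is not the right one.
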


\section{Proofs of the main results} \label{sec:Proofs}

In this section, we provide proofs of the main results. We start with Prop.\,\ref{prop:Map1}, reporting the value of the integer tuple $ \cal{V} $, cf.\,\eqref{eq:IntVector}, on the single orbit making up family DD. It is shown as a direct consequence of the preparatory results Props.\,\ref{prop:GEps}, \ref{prop:Varieties}, \ref{prop:EscapeHeight}, \ref{prop:BDD}. We then proceed to prove Props.\,\ref{prop:NewMap2} and \ref{prop:NewMap4}, concerning the values of $\cal{V}$ within families ND and NN. The first (second) one immediately implies the \enquote{index map} of Prop.\,\ref{prop:Map2} (Prop.\,\ref{prop:Map4}). Jointly, the two also imply Thm.\,\ref{thm:Typicality} on typicality of bulk-edge correspondence and lack thereof. They are moreover essential in showing Thm.\,\ref{thm:transitions} on the kinds of transitions of $ \cal{V} $, and their relation to \textit{spectral events}. Thm.\,\ref{thm:transitions} in turn implies Cor.\,\ref{cor:ViolationMechanism} on the mechanism for violation of bulk-edge correspondence.  \vspace{1\baselineskip}

We follow the program above, and start by discussing Prop.\,\ref{prop:Map1}.
\begin{proof}[Proof of Prop.\,\ref{prop:Map1}]
	The proposition consists of two statements. The first one is that
	\begin{equation}
		[\ul{A}] = [\ul{A}_D] \,, \qquad (\ul{A} \in \op{DD}) \,,
	\end{equation}
	where $ \ul{A}_D $ denotes Dirichlet boundary conditions. They are in DD because they involve no $ \partial_y $ derivatives, cf.\,Def.\,\ref{def:Families}, or equivalently because their $ 2 \times 4 $ matrix form is
	\begin{equation}
		\ul{A}_D = 
		\begin{pmatrix}
			0 & 1 & 0 & 0 \\
			0 & 0 & 1 & 0
		\end{pmatrix} \,,
	\end{equation}
	cf.\,\eqref{eq:Dirichlet} and \eqref{eq:AulA}, whence $ a_1 = a_2 = 0 $ and $ \ul{A}_D \in \op{DD} $ by Lm.\,\ref{lem:Dictionary}. Moreover, DD consists in a single orbit, namely
	\begin{equation}
		[\ul{A}_1] = [\ul{A}_2] \qquad (\ul{A}_1, \ul{A}_2 \in \op{DD})
	\end{equation}
	by Prop.\eqref{prop:UBCs}. Thus, $ [\ul{A}] = [\ul{A}_D] $ holds true for all $ \ul{A} \in \op{DD} $.
	
	The second statement is that 
	\begin{equation}
		\cal{V} \equiv \cal{V} (U) = (P, \, I, \, E, \, B) = (2,0,-1,0)
	\end{equation}
	on the single orbit labeled by $U$. We proceed entry by entry.
	
	\textit{Number of proper mergers $P$.} We argued in Sect.\,\ref{subsec:ComputingP} that $P$ is obtained as
	\begin{equation}
		P = 2 - W_\infty \,,
	\end{equation}
	where $ W_\infty $ is the winding of the scattering amplitude $ S (k_x, \kappa) $ in a neighborhood of $ (k_x, \kappa) = (\cos \varphi / \varepsilon, \, - \sin \varphi / \varepsilon) \to \infty $ (i.e., $ \varepsilon \to 0 $). However, by Eq.\,\eqref{eq:Expansions},
	\begin{equation}
		S (\varepsilon, \varphi) = - \frac{g(\varepsilon, - \varphi)}{g (\varepsilon, \varphi)} \simeq 1 \,, \qquad (\varepsilon \to 0)
	\end{equation}
	at leading order, whence $ W_\infty = 0 $ and
	\begin{equation}
		P=2 \,.
	\end{equation}
	
	\textit{Number of improper mergers $I$.} By Eq.\,\eqref{eq:GeometricI}, this integer is the signed number of intersections of the curve $ c_- (c_+) $, implicitly defined by
	\begin{equation}
		G (k_x, c_\pm) \equiv G (k_x, k_{y \pm} = \pm \im c_{\pm} k_x) = 0 \,,
	\end{equation}
	with $ \cal{A}_\uparrow $, $ \cal{A}_\downarrow $, cf.\,(\ref{eq:UpperArc}, \ref{eq:LowerArc}). However, by \eqref{eq:Varieties}, 
	\begin{equation}
		G (k_x, c_\pm) = 1 \neq 0 \,,
	\end{equation}
	and the curve $ c_- (c_+) $ is empty. Thus, 
	\begin{equation}
		I = 0 \,.
	\end{equation}
	
	\textit{Number of escapes $E$.} Two asymptotically flat states exist, with asymptotic height
	\begin{equation}
		\omega_{\op{a}, \pm} = \pm \frac{1}{2 \nu} \qquad (k_x \to \pm \infty) \,,
	\end{equation}
	cf.\,\eqref{eq:AsymptoteDD}. By formula \eqref{eq:PracticalE},
	\begin{equation}
		E = -1 \,.
	\end{equation}
	
	\textit{Boundary winding $B$.} The value of $B$ was found in Prop.\,\ref{prop:BDD}, and reads $ B=0 $.
\end{proof}
For the proofs of Props.\,(\ref{prop:NewMap2}, \ref{prop:NewMap4}), we recall a few notions. The first one is that of \textit{orbit parameters}. The parametrization proposed in Prop.\,\ref{prop:BCs} for the families DD-NN, cf.\,Def.\,\ref{def:Families}, is redundant if the object of interest are orbits $ [\ul{A}] $ under $ GL (2, \bbC) $ action, rather than single BCs $ \ul{A} $. We call \textit{orbit parameters} those that survive the lift $ \ul{A} \to [\ul{A}] $. In practice, they are the ones explicitly appearing in the von Neumman unitary $U$ labeling the orbit, cf.\,Prop.\,\ref{prop:UBCs}. Even \textit{orbit parameters} are partly redundant when concerned with global properties of the edge Hamiltonian $ H^\# $, due to the freedom in reparametrizing $ k_x $, cf.\,Rem.\,\ref{rem:TranslationsKx}. Parameters invariant under this freedom are called \textit{genuine}.  Concretely, those are $ (\op{Re} \alpha, \op{Im} \alpha, \lambda) \equiv (\alpha_R, \alpha_I, \lambda) \in \bbR^3 $ in family ND, and $ (\mu_R, \mu_I, \op{Im}(\mu' \overline{\mu}), \lambda_1, \lambda_2) \in \bbR^5 $ in family NN, cf.\,Rems.\,\ref{rem:ConsequenceTranslationKx}, \ref{rem:GenuineParsND}, \ref{rem:GenuineParsNN}. The discussion does not apply to DD, since a single orbit is parametrized by a single, constant parameter. Genuine parameters are unconstrained, and were chosen real. We thus call the \textit{parameter space} of ND or NN $ \bbR^3 \ni (\alpha_R, \alpha_I, \lambda) $ or $ \bbR^5 \ni (\mu_R, \mu_I, \op{Im}(\mu' \overline{\mu}), \lambda_1, \lambda_2) $, respectively.

Alongside these notions, we shall use the following argument several times. Let $ \bbR^n \ni p \mapsto Z(p) \in \bbZ $ be a $ \bb{Z} $-valued function. Let $Z$ be continuous everywhere except on a finite number of \textit{transition surfaces}, namely manifolds in $ \bbR^n $ of codimension one, implicitly defined by
\begin{equation}
	f(p) = 0 \label{eq:TransitionSurface}
\end{equation}
for some $ f: \bbR^n \to \bbR $. The function $Z$ is uniquely characterized by the (transition) surfaces where it changes, and its value in the regions that have those surfaces as boundaries. A transition surface like \eqref{eq:TransitionSurface} partitions parameter space in two regions
\begin{equation}
	f(p) > 0 \,, \qquad f(p) < 0 \,.
\end{equation}
In presence of a second surface $ g(p) = 0 $, $ \bbR^n \ni p $ would overall be partitioned in four regions, identified by the four ways of combining $ f(p) \gtrless 0 $ and $ g(p) \gtrless 0 $. The reasoning generalizes immediately to an arbitrary number of transition surfaces. The value $ Z^* $ of $ Z $ in a region $ \cal{R} $ where it is constant is simply found by evaluating $ Z(p^*) $ for some $ p^* \in \cal{R} $.

As observed in Rem.\,\ref{rem:IntVector}, the integers $ P, I, E $ are global properties of $ H^\# $, and they hence depend on \textit{genuine parameters} only. They are moreover piece-wise constant functions $ \bbR^n \to \bbZ $ ($ n=3,5 $ for ND, NN respectively), and the above characterization applies.
\begin{remark} \label{rem:gOfZero}
	Recall Eq.\,\eqref{eq:WInftyEquality}:
	\begin{equation}
		W_\infty = \lim_{\epsilon \rightarrow 0} \frac{1}{2 \pi \im} \int_{\mathcal{C}_{\epsilon,K}^{\infty}} S^{-1} \mathrm{d} S = \lim_{R \rightarrow 0} \frac{1}{2 \pi \im} \int_{\gamma_R} g^{-1} \mathrm{d} g \,,
	\end{equation}
	where $ g = g (\varepsilon, \varphi) $. If the winding of $ g(0,\varphi) $ is well-defined, then it is stable in $ \varepsilon > 0 $ and provides $ W_\infty $. That need though not be the case, and in fact when $ g (0, \varphi) = 0 $ occurs for some $ \varphi $. The latter will turn out to be the case in family ND, but not in NN.
\end{remark}
\begin{proof}[Proof of Prop.\,\ref{prop:NewMap2}]
	We discuss the entries of the integer tuple $ \cal{V} $ one by one. Throughout the proof, we shall assume $ 1 + \im \alpha \neq 0 $. By continuity, the assumption can be dropped. \vspace{1\baselineskip}
	
	\textit{Number of proper mergers $P$.} The winding of $ g (0, \varphi) $ is not well-defined, cf.\,Rem.\,\ref{rem:gOfZero}, and does not dispense from computing that of $ g (\varepsilon, \varphi) $. We will nonetheless consider $ g(0, \varphi) $ first. The winding of $S = - g(\varepsilon, -\varphi) / g (\varepsilon, \varphi)$ is unchanged when replacing $ g \rightsquigarrow \varepsilon^2 g \eqqcolon g' $, whence by \eqref{eq:Expansions}, and dropping primes,
	\begin{equation}
		\begin{gathered}
			g ( \varepsilon, \varphi) = \left( d (\varphi) + 2 \lambda' \varepsilon \right) a_1 \wedge \left( \cos \varphi b_1 + \varepsilon a_1' \right) \,, \\
			d (\varphi) = 2 \lambda \cos \varphi + \im \nu | 1 + \im \alpha |^2 \sin \varphi - \nu | 1 - \im \alpha |^2 \sqrt{1 + \cos^2 \varphi} \,.
		\end{gathered}
	\end{equation}
	This means
	\begin{equation}
			g (0, \varphi) = d(\varphi) (a_1 \wedge b_1) \cos \varphi \,.
	\end{equation}
	We note that
	\begin{equation}
		\op{Im} d (\varphi) > 0 \,, \qquad (0 < \varphi < \pi)
	\end{equation}
	and that 
	\begin{equation}
		\begin{gathered}
			d(0) = 2 \lambda - \sqrt{2} \nu | 1 - \im \alpha |^2 \,, \\
			d (\pi) = - 2 \lambda - \sqrt{2} \nu | 1 - \im \alpha |^2 \,, \\
			d ( \pi / 2 ) = z_0 \,, \qquad d ( \pi / 2 ) = \overline{z_0} \,, \label{eq:ValuesOfd}
		\end{gathered}
	\end{equation}
	with
	\begin{equation}
		z_0 \coloneqq \nu ( \im | 1 + \im \alpha |^2 - | 1 - \im \alpha |^2 ) \label{eq:DefZ0}
	\end{equation}
	having $ \op{Im} (z_0) > 0 $. In Fig.\,\ref{fig:WindingND}, left panel, we draw $ g (0, \varphi) $ assuming $ d(0) < 0, \, d(\pi) < 0 $.
	\begin{figure}[hbt]
		\centering
		\includegraphics[width=0.9\linewidth]{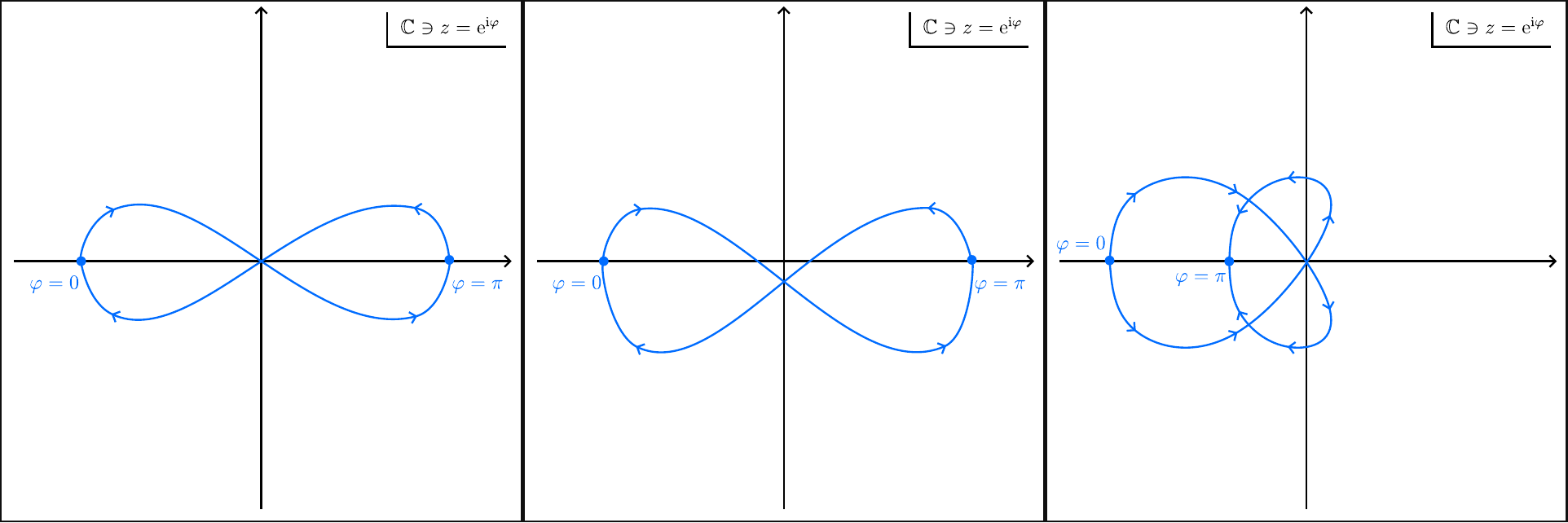}
		\caption{Left panel: $ g(0, \varphi) $ for $ d(0), \, d (\pi) < 0 $. Center panel: $ g(\varepsilon, \varphi) $ for $ d(0), \, d (\pi) < 0 $, avoided crossing with the origin. Right panel: $ g(0, \varphi) $ for $ d(0) <  0, \, d (\pi) > 0 $.}
		\label{fig:WindingND}
	\end{figure}
	
	Let 
	\begin{equation}
		\sigma \coloneqq \op{sgn} \op{Im} (a_1 \wedge a_1') \cdot \overline{(a_1 \wedge b_1)} \,.
	\end{equation}
	We note the invariance of $ \sigma $ under $ a_i' \to a_i' + \tau b_i $, cf.\,Rem.\,\ref{rem:TranslationsKx}, since it changes only the real part of the product. For now, let $ \sigma > 0 $.
	
	The curve in the figure visits the origin both at $ \varphi = \pm \pi/2 $. We claim it avoids it for small $ \varepsilon $, and in fact as shown in Fig.\,\ref{fig:WindingND}, center panel. The winding then vanishes. In order to establish this, we consider $ g (\varepsilon, \varphi) $ at first order near $ \varphi = \pm \pi /2 $, i.e., for small $ \gamma = \cos \varphi $:
	\begin{equation}
		g (\varepsilon, \varphi) = (\pm \im \nu |1 + \im \alpha|^2 - \nu |1 - \im \alpha|^2) a_1 \wedge (\gamma b_1 + \varepsilon a_1') + \cal{O} \big( (\varepsilon, \gamma)^2 \big) \,, \quad (\varphi \to \pm \pi/2) \,,
	\end{equation}
	namely
	\begin{equation}
		g (\varepsilon, \varphi) = \stackon[.1pt]{z}{\brabar}_0 a_1 \wedge (\gamma b_1 + \varepsilon a_1') + \cal{O} \big( (\varepsilon, \gamma)^2 \big) \,, \quad (\varphi \to \pm \pi/2) \,,
	\end{equation}
	cf.\,\eqref{eq:DefZ0}. Notice the invariance of the line $ \gamma \mapsto \gamma (a_1 \wedge b_1) + \varepsilon (a_1 \wedge a_1') $ (up to reparametrization) under the usual reparametrization of $ a_i' $.
	
	We keep $ \sigma > 0 $ for now. Then the line for $ \varepsilon > 0 $ runs to the \textit{left} of that for $ \varepsilon = 0 $, provided they are oriented according to increasing $ \gamma $. That is the case near $ \varphi = - \pi / 2 $, but not near $ \varphi = \pi / 2 $. Correspondingly, the origin will be found to the \textit{right} of $ g (\varepsilon, \varphi) $ near $ \varphi = - \pi/2 $ (respectively, left near $ \varphi = \pi/2 $). That confirms Fig.\,\ref{fig:WindingND}, center panel.
	
	Though $ \sigma < 0 $ has the curve move to the other side of the origin, the winding around it still vanishes. The same conclusion holds for $ d(0) > 0, \, d (\pi) > 0 $. It remains to be seen what happens when $ d(0) $, $ d(\pi) $ are of opposite signs, starting with $ d(0) < 0 , \, d(\pi) > 0 $ as in Fig.\,\ref{fig:WindingND}, right panel. The local situation near the origin is as above, and the same avoidance occurs at $ \varepsilon > 0 $. The winding number of $ g (\varepsilon, \varphi) $ equals $ -1 $. A similar picture for $ d(0) < 0 , \, d(\pi) > 0 $ yields the winding number $ +1 $. In summary,
	\begin{equation}
		2 - P = W_\infty =
		\begin{cases}
			1 \,, & d(\pi) < 0 < d(0) \\
			0 \,, & d(0), d(\pi) < 0 \\
			-1 \,, & d(0) < 0 < d(\pi) \,. \\
		\end{cases}
	\end{equation}
	Case $ d(0) > 0 , \ d(\pi) > 0 $ has been omitted because impossible: At least one in $ d(0) $, $ d (\pi) $ must be negative by $ d(0) + d(\pi) < 0 $, cf.\,\eqref{eq:ValuesOfd}.
	
	It remains to rephrase the cases as in \eqref{eq:PND}. The condition $ d(\pi) < 0 < d(0) $ is rephrased as
	\begin{equation}
		( q < - \sqrt{2} \ \op{and} \ q < \sqrt{2} ) \,,
	\end{equation}
	namely $ q < - \sqrt{2} $, using the definition \eqref{eq:MAndQ} of $ q $. Similarly, $d(0) < 0 < d(\pi)$ is equivalent to $ q > \sqrt{2} $. Finally, $ d(0), d(\pi) < 0 $ is rephrased as $ |q| < \sqrt{2} $. This concludes this part of the proof. 
	
	In closing, we highlight that $P$ has two \textit{transition surfaces}, namely
	\begin{equation}
		q (p) \mp \sqrt{2} = 0 \,, \qquad (p = (\alpha_I, \alpha_R, \lambda)) \,,
	\end{equation}
	where $ q(p) $ is given as a function of the genuine parameters in \eqref{eq:MAndQ}.
	\vspace{1\baselineskip}
	
	\textit{Number of improper mergers $I$.} We count intersections of
	\begin{equation}
		c_- = m c_+ + q \,, \label{eq:line}
	\end{equation}
	cf.\,Eq.\,\eqref{eq:CurveND}, with $ \cal{A}_\uparrow $ and $ \cal{A}_\downarrow $, cf.\,(\ref{eq:UpperArc}, \ref{eq:LowerArc}). Their signed number determines $I$ through \eqref{eq:GeometricI}, and $ \bbR^3 \ni (\alpha_R, \alpha_I, \lambda) = p \mapsto I(p) \in \bbZ $ is a piece-wise constant function on parameter space. We thus characterize it by: (a) its transition surfaces; (b) its value in regions that have those surfaces as boundaries.
	
	(a) Let
	\begin{equation}
		A_{\uparrow, -1} = (-1, +1) \,, \quad A_{\uparrow, 0} = (0, \sqrt{2}) \qquad \big( A_{\downarrow, 0} = (0, -\sqrt{2}) \,, \quad A_{\downarrow,+1} = (+1, -1) \big) \label{eq:EdgePoints}
	\end{equation}
	denote the \textit{edge points} of the target arc $ \cal{A}_\uparrow $ ($ \cal{A}_\downarrow $). The number of intersections of $ c_- (c_+) $ with $ \cal{A}_\uparrow $ only depends on the curve's height at $ c_+ = -1, 0 $, namely on the value of $ c_- (-1;p) $ and $ c_- (0;p) $. Since $ p \mapsto c_- (\xi;p) \ (\xi = -1, 0)$ are continuous functions, the number of intersections only changes when the \textit{edge points} $ A_{\uparrow, \xi} $ are met, i.e.
	\begin{equation}
		\begin{array}{llcl}
			A_{\uparrow, -1}: & c_- (-1;p) = +1 & \leftrightarrow & q = (m+1) \, \\
			A_{\uparrow, 0}: & c_- (0;p) = \sqrt{2} & \leftrightarrow & q = \sqrt{2} \,. \\
		\end{array} \label{eq:TransUpperArcND}
	\end{equation}
	Similarly, the number of intersections with $ \cal{A}_\downarrow $ changes at
	\begin{equation}
		\begin{array}{llcl}
			A_{\downarrow, 0}: & c_- (0;p) = -\sqrt{2} & \leftrightarrow & q = -\sqrt{2} \\
			A_{\downarrow, +1}: & c_- (+1;p) = -1 & \leftrightarrow & q = -(m+1) \,.  \\
		\end{array} \label{eq:TransLowerArcND}
	\end{equation}
	No transition surface other than (\ref{eq:TransUpperArcND}, \ref{eq:TransUpperArcND}) exists, and regions are defined by picking
	\begin{equation}
		q < - \sqrt{2} \,, \qquad - \sqrt{2} < q < \sqrt{2} \,, \qquad q > \sqrt{2} \,,
	\end{equation} 
	and one alternative for each of these two inequalities:
	\begin{equation}
		q \gtrless - (m+1) \,, \qquad q \gtrless (m+1) \,.
	\end{equation}
	Some of those regions are empty, and do not appear in Eq.\,\eqref{eq:IND}. For example, $ q < m+1 $ implies $ q < 1 $ by $ m \leq 0 $, and is thus incompatible with $ q > \sqrt{2} $.
	
	(b) Consider for example $ q < - \sqrt{2} $ and $ q < m+1 $. The third inequality is necessarily $ q < -(m+1) $, or equivalently $ q + m < -1 $, due to $ m \leq 0 $ and $ q < - \sqrt{2} $. Geometrically, cf.\,Fig.\,\ref{fig:MergingND}, the three inequalities mean
	\begin{equation}
		c_-(-1) < +1 \,, \qquad c_- (0) < - \sqrt{2} \,, \qquad c_- (+1) < -1 \,,
	\end{equation}
	whence the line $ c_- (c_+) $ avoids both $ \cal{A}_\uparrow $ and $ \cal{A}_\downarrow $, i.e., $ I=0 $. 
	
	By contrast, if $ q < - \sqrt{2} $ and $ q > m+1 $ ($ q < -(m+1) $ still), we have
	\begin{equation}
		c_-(-1) > +1 \,, \qquad c_- (0) < - \sqrt{2} \,, \qquad c_- (+1) < -1 \,.
	\end{equation}
	The arc $ \cal{A}_\uparrow $ is intersected while $ \cal{A}_\downarrow $ is still avoided, whence $ I = +1 $ by \eqref{eq:GeometricI}.
	\begin{figure}[hbt]
		\centering
		\includegraphics[width=0.4\linewidth]{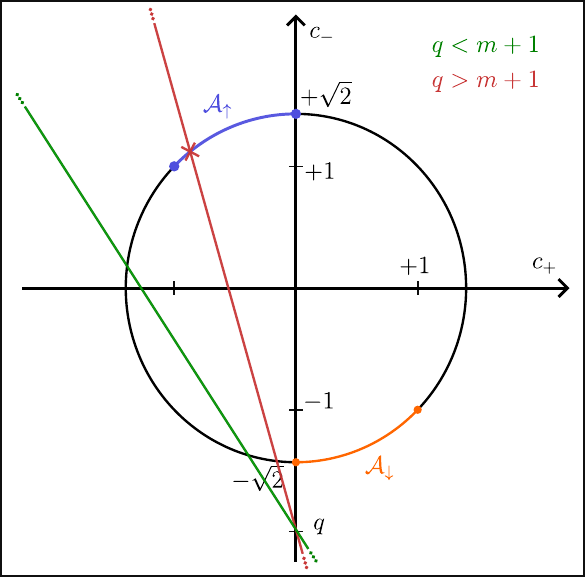}
		\caption{Lines $ c_- (c_+) $ with same intercept $ q < - \sqrt{2} $ and different slope $ m < 0 $. No intersection with $ \cal{A}_\downarrow $ is possible for $ q < - \sqrt{2} $, whereas $ \cal{A}_\uparrow $ is intersected for $ q > m+1 $.}
		\label{fig:MergingND}
	\end{figure}
	Repeating the analysis for all non-empty regions yields \eqref{eq:IND}. \vspace{1\baselineskip}
	
	\textit{Number of escapes $E$.} As per \eqref{eq:PracticalE}, we have to count the number of horizontal asymptotes of positive height, encoded by Eq.\,\eqref{eq:AsymptoteDN}:
	\begin{equation}
		\omega_{\op{a}, \pm} = \pm \frac{\lambda - \nu (1 + |\alpha|^2)}{2 \nu (\lambda \mp 2 \nu \alpha_I)} \equiv \pm \frac{N}{D_{\pm}} \,. \label{eq:HeightND}
	\end{equation}
	Zeros of $N$, $ D_+ $, $ D_- $ identify the transition surfaces
	\begin{equation}
		q - (m-1) = 0 \,, \qquad q + (m+1) = 0 \,, \qquad q - (m+1) = 0 \,,
	\end{equation}
	respectively. More precisely, the numerator $N$ is positive for
	\begin{equation}
		\lambda > \nu (1 + |\alpha|^2) \ \leftrightarrow \ q < m-1 \,.
	\end{equation}
	The denominators $ D_\pm $ are positive for
	\begin{equation}
		\lambda > \pm 2 \nu \alpha_I \ \leftrightarrow \ q < \mp (m+1) \,.
	\end{equation} 
	As a consequence,
	\begin{equation}
		\omega_{\op{a}, +}
		\begin{cases}
			>0 \,, & (q < m-1) \ \lor \ q > -(m+1) \\
			<0 \,, & m-1 < q < -(m+1)
		\end{cases} \label{eq:EPlusND}
	\end{equation}
	and
	\begin{equation}
		\omega_{\op{a}, -}
		\begin{cases}
			>0 \,, & m-1 < q < m+1 \\
			<0 \,, & ( q < m-1 ) \ \lor  ( q > m+1 ) \,.
		\end{cases} \label{eq:EMinusND}
	\end{equation}
	Eq.\,\eqref{eq:EValueND} follows. \vspace{1\baselineskip}
	
	\textit{Boundary winding $B$.} Notice that $ q=0 $ iff $ \lambda = 0 $ and $ \op{sgn} (q) = - \op{sgn} (\lambda) $. If $ q=0 $, point III (a) of Prop.\,\ref{prop:BND} implies $ B = 0 $, by
	\begin{equation}
		B = N (\tilde{P}_-) - N (\tilde{P}_+) \,,
	\end{equation}
	cf.\,\eqref{eq:BDifferenceTilde}, and the fact that $ \tilde{P}_\pm (k_x) $ both trace a single point. By contrast, if $ q \neq 0 $ both $ \tilde{P}_\pm (k_x) $ trace a line. Eq.\,\eqref{eq:NofPND} then states
	\begin{equation}
		N (\tilde{P}_+) = \frac{1}{2} \op{sgn} (\lambda) = - \frac{1}{2} \op{sgn} (q) \,.
	\end{equation}
	The $-$ variant is recovered by the replacement \eqref{eq:ReplacementND}:
	\begin{equation}
		N (\tilde{P}_-) = \frac{1}{2} \op{sgn} (-\lambda) = \frac{1}{2} \op{sgn} (q) \,.
	\end{equation}
	Thus,
	\begin{equation}
		B = N (\tilde{P}_-) - N (\tilde{P}_+) = \op{sgn} (q) \,,
	\end{equation}
	as claimed in Eq.\,\eqref{eq:BND}.
\end{proof}
\begin{proof}[Proof of Prop.\,\ref{prop:NewMap4}]
	\textit{Number of proper mergers $P$.} Like in the previous proof, we start by inspecting $ g (0, \varphi) $. In fact, since its winding is identical to $ \varepsilon^2 (a_1 \wedge a_2)^{-1} g (0, \varphi) $, and $ a_1 \wedge a_2 \neq 0 $ in family NN, we inspect the latter (while calling it $ g (0, \varphi) $ all the same). By Eq.\,\eqref{eq:Expansions}, it reads
	\begin{equation}
		g (0,\varphi) \equiv g (\varphi) = a (\varphi) \cos \varphi + \im b (\varphi) \sin \varphi \,,
	\end{equation}
	where
	\begin{equation}
		\begin{aligned}
			a(\varphi) &\coloneqq \nu (2 \mu_R + \lambda_1 + \lambda_2) \sqrt{1 + \cos^2 \varphi} + 2 (|\mu|^2 - \lambda_1 \lambda_2) \cos \varphi \,, \\
			b (\varphi) &\coloneqq 2 \nu^2 \sqrt{1 + \cos^2 \varphi} + \nu (2 \mu_R - \lambda_1 - \lambda_2) \cos \varphi \,.
		\end{aligned}
	\end{equation}
	The origin of the complex plane is met for no $ \varphi \in [0, 2\pi] $, if not for exceptional values of the parameters $ \mu \in \bbC $, $ \lambda_1, \lambda_2 \in \bbR $. There is thus no need to consider the full Jost function $ g (\varepsilon, \varphi) $, $ \varepsilon \neq 0 $, cf.\,Rem.\,\ref{rem:gOfZero}.
	
	The winding number of a loop $ \gamma (t) \in \bbC $, $ t \in [t_i, t_f] \subset \bbR $, about the origin is completely determined by: The number of intersections with the real axis, $ \op{Im} \gamma (t_j) = 0 $, $ j \in J $ ($J$ some index set); The position of $ \op{Re} \gamma (t_j) = \gamma (t_j) $ relative to each other and zero; The orientation of the curve. We exhibit the data above for $ g (0, \varphi) $, as a function of the parameters, upon noticing that 
	\begin{equation}
		g(\varphi) = \overline{g (2 \pi - \varphi)}
	\end{equation} 
	allows us to study $ \varphi \in [0, \pi] $ only.
	
	Zeros of $ \op{Im} g (\varphi) $ are: \textbullet\ $ \varphi = 0, \pi $ for all $ \mu, \lambda_1, \lambda_2 $; $ \varphi = \hat{\varphi} $ implicitly given by
	\begin{equation}
		\frac{\cos \hat{\varphi}}{\sqrt{1 + \cos^2 \hat{\varphi}}} = - \frac{2 \nu}{2 \mu_R - \lambda_1 - \lambda_2} \,,
	\end{equation}
	if solutions to this equation exist. Since the l.h.s.\,has range $[-1/\sqrt{2}, 1 / \sqrt{2}] $, that is the case for
	\begin{equation}
		| 2 \mu_R - \lambda_1 - \lambda_2 | \geq 2 \nu \sqrt{2} \ \leftrightarrow \ | \mu_R - \Sigma | \geq \nu \sqrt{2} \,,
	\end{equation}
	cf.\,\eqref{eq:SigmaAndDelta} for the definition of $ \Sigma $. We moreover notice that $ \hat{\varphi} \in (0, \pi/2) $ if $ (\mu_R - \Sigma) < - \nu \sqrt{2} $ and $ \hat{\varphi} \in (\pi/2, \pi) $ if $ (\mu_R - \Sigma) > \nu \sqrt{2} $.
	
	The orientation is fixed by specifying $ g (\tilde{\varphi}) $ for some $ \tilde{\varphi} \neq 0, \pi, \hat{\varphi} $, e.g.
	\begin{equation}
		g (\pi/2) = 2 \im \nu^2 \,.
	\end{equation}
	
	At the intersection points, the value of $ g (\varphi) $ is
	\begin{align}
		g (0) &= 2 (|\mu|^2 - \lambda_1 \lambda_2) + \nu \sqrt{2} (2 \mu_R + \lambda_1 + \lambda_2) \nonumber \\
		&= 2 (|\mu|^2 + \Delta^2 - \Sigma^2) + 2 \nu \sqrt{2} (\mu_R + \Sigma) = 2 (\Delta^2 - \cal{M}_-) \,, \\
		g (\pi) &= 2 (|\mu|^2 - \lambda_1 \lambda_2) - \nu \sqrt{2} (2 \mu_R + \lambda_1 + \lambda_2) \nonumber \\
		&= 2 (|\mu|^2 + \Delta^2 - \Sigma^2) - 2 \nu \sqrt{2} (\mu_R + \Sigma) = 2 (\Delta^2 - \cal{M}_+) \,, \\
		g (\hat{\varphi}) &= \frac{2 \nu^2 (4 \mu_I^2 + (\lambda_1 - \lambda_2)^2)}{(2 \mu_R - \lambda_1 - \lambda_2)^2 - 4 \nu^2} = \frac{2 \nu^2 (\mu_I^2 + \Delta^2)}{(\mu_R - \Sigma)^2 - \nu^2} \,,
	\end{align}
	cf.\,(\ref{eq:SigmaAndDelta}, \ref{eq:MPlusMinus}). Notice that $ g (\hat{\varphi}) > 0 $, whenever $ \hat{\varphi} $ exists as a solution of $ \op{Im} g (\hat{\varphi}) = 0 $, namely when $ |\mu_R - \Sigma| > \nu \sqrt{2} $.
	
	The winding number of $g$ is thus uniquely fixed if we specify a region with the following (candidate) transition surfaces as boundaries
	\begin{equation}
		\mu_R - \Sigma \pm \nu \sqrt{2} = 0 \,, \qquad g(0) = 0 \,, \qquad g(\pi) = 0 \,.
	\end{equation}
	Let us inspect one such region. The others are in all respects analogous.
	
	\textbullet\ $ g (\pi) < g(0) < 0 $ and $ (\mu_R - \Sigma) < - \nu \sqrt{2} $.
	
	To make contact with Eq.\,\eqref{eq:PNN}, we rephrase the constraints in the notation of that equation:
	\begin{equation}
		\begin{array}{lcl}
			g(\pi) < 0 & \leftrightarrow & \Delta^2 < \cal{M}_+ \\
			g(0) < 0 & \leftrightarrow & \Delta^2 < \cal{M}_- \\
			g(\pi) < g(0) & \leftrightarrow & \cal{M}_- < \cal{M}_+ \,,
		\end{array}
	\end{equation}
	whence we sit in case
	\begin{equation}
		\Delta^2 < \cal{M}_- < \cal{M}_+ \quad \land \quad \mu_R - \Sigma < - \nu \sqrt{2} \,.
	\end{equation}
	Drawing a curve according to this data, cf.\,left panel of Fig.\,\ref{fig:MergingNN}, reveals
	\begin{equation}
		W_\infty = + 2 \ \leftrightarrow \ P = 0 \,.
	\end{equation}
	Repeating the reasoning for all cases, some of which actually correspond to \textit{empty} regions in parameter space, leads to the following table, whose entries represent values of $P$ in a given region.
	\begin{center}
		\begin{tabular}{ c|c"c|c|c }
			\multicolumn{2}{c"}{\xrowht[()]{10pt}} & ${\mu_R - \Sigma < - \nu \sqrt{2}}$ & ${|\mu_R - \Sigma| < \nu \sqrt{2}}$ & ${\mu_R - \Sigma > \nu \sqrt{2}}$ \\
			\thickhline
			\multicolumn{2}{c"}{\xrowht{10pt} $ {\Delta^2 > \cal{M}_+, \, \cal{M}_-} $} & $ 2 $ & $ 2 $ & $ 2 $ \\
			\hline
			\multicolumn{2}{c"}{\xrowht{10pt} $ {\cal{M}_+ > \Delta^2 > \cal{M}_-} $} & $ 1 $ & $1$ & $\varnothing$ \\
			\hline
			\multicolumn{2}{c"}{\xrowht{10pt} $ {\cal{M}_+ > \Delta^2 > \cal{M}_-} $} & $\varnothing$ & $3$ & $3$ \\
			\hline
			\multicolumn{2}{c"}{\xrowht{10pt} $ {\Delta^2 < \cal{M}_- < \cal{M}_+} $} & $0$ & $\varnothing$ & $\varnothing$ \\
			\hline
			\multicolumn{2}{c"}{\xrowht{10pt} $ {\Delta^2 < \cal{M}_+ < \cal{M}_-} $} & $\varnothing$ & $\varnothing$ & $4$ \\
		\end{tabular}
		\label{tab:WindNN}
	\end{center}
	By inspection of this table,
	\begin{equation}
		\mu_R - \Sigma \pm \nu \sqrt{2} = 0
	\end{equation}
	were, in hindsight, not transition surfaces. Eq.\,\ref{eq:PNN} follows. \vspace{1\baselineskip}
	
	\textit{Number of improper mergers $I$.} The curve $ c_- (c_+) $ whose intersections with $ \cal{A}_\uparrow, \, \cal{A}_\downarrow $ yield $I$ is now a hyperbola. More precisely,
	\begin{align}
		c_- (c_+) &= \frac{2 (|\mu|^2 - \lambda_1 \lambda_2) + \nu (2 \mu_R - \lambda_1 - \lambda_2)c_+}{\nu (2 \mu_R + \lambda_1 + \lambda_2) + 2 \nu^2 c_+} \nonumber \\
		&= \frac{(|\mu|^2 + \Delta^2 - \Sigma^2) +  \nu (\mu_R - \Sigma) c_+}{\nu (\mu_R + \Sigma) + \nu^2 c_+} \equiv c_- (c_+;p) \,,
	\end{align}
	cf.\,\eqref{eq:CurveNN} and \eqref{eq:SigmaAndDelta}, where $ p = (\mu_R, \mu_I, \Sigma, \Delta) \in \bbR^4 $ represents a point in parameter space (the \textit{genuine parameter} $ \op{Im} (\mu' \overline{\mu}) $ is not reported, since it does not appear). Two facts will be important in the following. First, the center $C$ of the hyperbola lies at
	\begin{equation}
		C = (x_c, y_c) = \Big( - \frac{(\mu_R + \Sigma)}{\nu}, \, \frac{\mu_R - \Sigma}{\nu} \Big) \,.
	\end{equation}
	Second, the hyperbola always sits in the first and third quadrant w.r.t.\,its (horizontal and vertical) asymptotes, namely 
	\begin{equation}
		c_-'(c_+;p) \coloneqq \frac{\diff}{\diff c_+} c_- (c_+;p) \leq 0 \qquad (c_+ \in \bbR, \ p \in \bbR^4) \,.
	\end{equation}
	
	Like in the previous proof, we determine $ I(p) $ by: (a) identifying its the \textit{transition surfaces}; (b) evaluating it in each region with the transition surfaces as boundaries.
	
	(a) Once again, the number of intersections of the hyperbola with $ \cal{A}_\uparrow $ only depends on its height at $ c_+ = -1, 0 $, namely on the value of $ c_- (-1;p) $ and $ c_- (0;p) $. If $ p \mapsto c_- (\xi;p) \ (\xi = -1, 0)$ were continuous functions, like in family ND, the number of intersections would only change when the \textit{edge points} $ A_{\uparrow, \xi} $ are met, i.e.
	\begin{equation}
		\begin{array}{llcl}
			A_{\uparrow, -1}: & c_- (-1;p) = +1 & \leftrightarrow & \Delta^2 = \cal{I}_+ \\
			A_{\uparrow, 0}: & c_- (0;p) = \sqrt{2} & \leftrightarrow & \Delta^2 = \cal{M}_+ \,. \\
		\end{array} \label{eq:TransUpperArc1}
	\end{equation}
	However, such functions are discontinuous for $ x_c \equiv x_c (p) = \xi $ $ (\xi = -1,0) $, namely 
	\begin{equation}
		\begin{array}{ll}
			x_c = -1: & \mu_R + \Sigma = \nu \\
			x_c = 0: & \mu_R + \Sigma = 0 \,.
		\end{array} \label{eq:TransUpperArc2}
	\end{equation}
	Jumps in the number of intersections with $ \cal{A}_\uparrow $ must occur across the transition surfaces \eqref{eq:TransUpperArc1}, and could occur across \eqref{eq:TransUpperArc2}.
	
	Similarly, the number of intersections with $ \cal{A}_\downarrow $ may jump at
	\begin{equation}
		\begin{array}{llcl}
			A_{\downarrow, 0}: & c_- (0;p) = -\sqrt{2} & \leftrightarrow & \Delta^2 = \cal{M}_- \\
			A_{\downarrow, +1}: & c_- (+1;p) = -1 & \leftrightarrow & \Delta^2 = \cal{I}_-  \\
			x_c = 0: & \mu_R + \Sigma = 0 & & \\
			x_c = 1: & \mu_R + \Sigma = - \nu  & &
		\end{array}
	\end{equation}
	(the $ x_c = 0 $ transition is shared with $ \cal{A}_\uparrow $).
	
	A region whose boundaries are transition surfaces is thus specified by picking
	\begin{equation}
		\mu_R + \Sigma > \nu \,, \quad 0 < \mu_R + \Sigma < \nu \,, \quad - \nu < \mu_R + \Sigma < 0 \,, \quad \mu_R + \Sigma < - \nu
	\end{equation}
	and one alternative
	\begin{equation}
		\Delta^2 \gtrless \chi  
	\end{equation}
	for each $ \chi \in \{ \cal{M}_+, \cal{M}_-, \cal{I}_+, \cal{I}_- \} $. This entails a grand total of $ 4 \times 2^4 = 64 $ regions, an example of which could be
	\begin{equation}
		\cal{R}_1 \coloneqq \{ (\mu_R, \mu_I, \Sigma, \Delta) \in \bbR^4 \ | \ \mu_R + \Sigma < - \nu \,, \ \cal{M}_+ < \Delta^2 < \cal{M}_- \,, \ \Delta^2 < \cal{I}_\pm \} \,. \label{eq:Region1}
	\end{equation}
	The tally can be lowered to the forty-eight cases of Tab.\,\ref{tab:MergeNN1AndNN2} by observing that
	\begin{equation}
		\cal{M}_+ > \cal{M}_- \ \leftrightarrow \ \mu_R + \Sigma > 0 \,.
	\end{equation}
	As a consequence, picking $ \cal{M}_+ > \Delta^2 > \cal{M}_- $ and $ - \nu < \mu_R + \Sigma < 0 $ or $ \mu_R + \Sigma < - \nu $ at once would specify an empty region. The choice of $ \Delta^2 \gtrless \cal{I}_\pm $ is still free, so this rules out $ 4 \times 2 = 8 $ cases. Out of the remaining $ 56 $, another $ 8 $ are similarly eliminated by
	\begin{equation}
		\cal{M}_+ < \cal{M}_- \ \leftrightarrow \ \mu_R + \Sigma < 0 \,. 
	\end{equation}
	
	(b) Given a region $ \cal{R} $ as specified above, we evaluate $ I(p) $ by simply picking $ p \in \cal{R} $, drawing $ c_- (c_+;p) $, and counting its signed number of intersections with the target arcs. Consider for example $ \cal{R}_1 $, cf.\,\ref{eq:Region1}. The constraints geometrically read
	\begin{equation}
		x_c (p) > 1 \,, \quad - \sqrt{2} < c_- (0;p) < \sqrt{2} \,, \quad c_- (-1;p) > +1 \,, \quad c_- (+1;p) > -1 \,, \label{eq:HyperbolaConstraints}
	\end{equation}
	for all $ p \in \cal{R}_1 $, upon noticing that $ c_- (-1;p) > +1 $ is equivalent to
	\begin{equation}
		( \mu_R + \Sigma > \nu \ \land \ \Delta^2 > \cal{I}_+ ) \ \lor \ ( \mu_R + \Sigma < \nu \ \land \ \Delta^2 < \cal{I}_+ ) \,,
	\end{equation}
	$ c_- (+1;p) > -1 $ to
	\begin{equation}
		( \mu_R + \Sigma > -\nu \ \land \ \Delta^2 > \cal{I}_- ) \ \lor \ ( \mu_R + \Sigma < -\nu \ \land \ \Delta^2 < \cal{I}_- ) \,,
	\end{equation}
	and $ c_- (0;p) < \pm \sqrt{2} $ to
	\begin{equation}
		( \mu_R + \Sigma < 0 \ \land \ \Delta^2 > \cal{M}_\pm ) \ \lor \ ( \mu_R + \Sigma > 0 \ \land \ \Delta^2 < \cal{M}_\pm ) \,.
	\end{equation}
	A representative curve $ c_- (c_+;p) $ is drawn in Fig.\,\ref{fig:MergingNN}.
	\begin{figure}[hbt]
		\centering
		\includegraphics[width=0.65\linewidth]{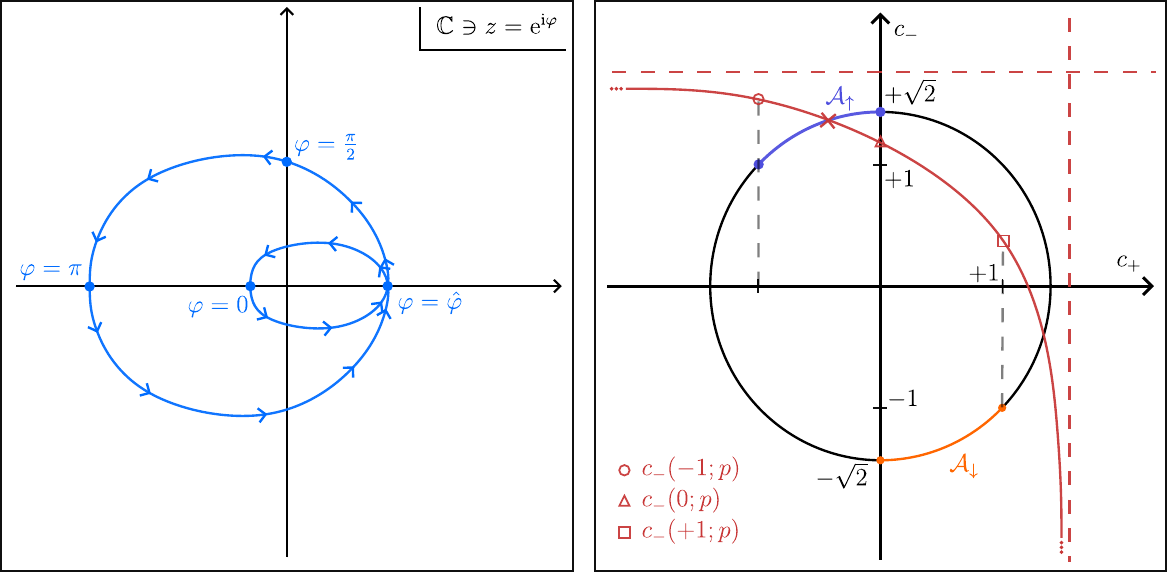}
		\caption{Left panel: Curve $ g(\varphi) $ for $\Delta^2 < \cal{M}_- < \cal{M}_+ $, $ \mu_R - \Sigma < - \nu \sqrt{2}$. Notice that $ g(\hat{\varphi}) > 0 $, and that the orientation is fixed by $ \op{Im} g (\pi/2) > 0 $ and $ 0 < \hat{\varphi} < \pi/2 $. Right panel: Hyperbola $ c_- (c_+;p) $ for $ p \in \cal{R}_1 $. Highlighted are the points $ c_- (\xi; p) $, $ ( \xi = -1,0,1 ) $, upon which constraints are imposed.}
		\label{fig:MergingNN}
	\end{figure}
	By $ c_-' (c_+;p) \leq 0 $ and the constraints \eqref{eq:HyperbolaConstraints}, the lower branch of the hyperbola intersects $ \cal{A}_\uparrow $ and avoids $ \cal{A}_\downarrow $. By \eqref{eq:GeometricI}, the only intersection with $ \cal{A}_\uparrow $ contributes $ +1 $, whence
	\begin{equation}
		I = +1 \,.
	\end{equation}
	Similar counting for the other 47 regions provides the entries of Tab.\,\ref{tab:MergeNN3AndNN4}. \vspace{1\baselineskip}
	
	\textit{Number of escapes $E$}. Just like in the proof of Prop.\,\ref{prop:NewMap2}, we need to count the signed number of positive horizontal asymptotes. This time,
	\begin{equation}
		\omega_{\op{a}, \pm} = \frac{|\mu|^2 - \nu^2 + \nu (\lambda_1 + \lambda_2) - \lambda_1 \lambda_2}{4 \nu^2 \mu_R \pm 2 \nu ( |\mu|^2 + \nu^2 - \lambda_1 \lambda_2 )} = \frac{| \mu |^2 + \Delta^2 - (\Sigma - \nu)^2}{4 \nu^2 \mu_R \pm 2 \nu (|\mu|^2 + \nu^2 + \Delta^2 - \Sigma^2)} \equiv \pm \frac{N}{D_\pm} \,, \label{eq:AsymptoteHeightNewVars}
	\end{equation}
	cf.\,Eq.\,\eqref{eq:AsymptoteNN}. $E$ changes across the transitions surfaces
	\begin{equation}
		N = 0 \ \leftrightarrow \ \Delta^2 = (\Sigma - \nu)^2 - | \mu |^2 \equiv \cal{E} \,,
	\end{equation}
	cf.\,\eqref{eq:EPlusMinus}, and
	\begin{equation}
		D_\pm = 0 \ \leftrightarrow \ \Delta^2 = \Sigma^2 - | \mu |^2 \mp 2 \nu \mu_R - \nu^2 \equiv \cal{I}_\pm \,.
	\end{equation}
	The last two were already encountered in the discussion of $I$.
	
	Tab.\,\ref{tab:EscapesNN} now stems from observing that $ \omega_{\op{a}, \pm} > 0 $ for
	\begin{equation}
		(\Delta^2 > \cal{E} \ \land \ \Delta^2 \gtrless \cal{I}_\mp) \quad \lor \quad (\Delta^2 < \cal{E} \ \land \ \Delta^2 \lessgtr \cal{I}_\mp) \,,
	\end{equation}
	and recalling that $ \omega_{\op{a}, +} > 0 $ ($ \omega_{\op{a}, -} > 0 $) contributes $ -1 $ ($+1$) to $E$, or zero otherwise.
	
	\textit{Boundary winding $B$.} As far as $B$ is concerned, Prop.\,\ref{prop:NewMap4} is a mere restatement of Prop.\,\ref{prop:BNN}, if not for the fact that 
	\begin{equation}
		B = N(P_-) - N(P_+)
	\end{equation}
	is reported explicitly. Recall \eqref{eq:NofPNN}, which says (re-instating subscripts)
	\begin{equation}
		\begin{aligned}
			N(P_+) &= 
			\begin{cases}
				0 \,, & (|\mu|^2 > \lambda_1 \lambda_2) \\
				\op{sgn} (\lambda_1 + \lambda_2) \,, & (|\mu|^2 < \lambda_1 \lambda_2)
			\end{cases} \\
			&\equiv 
			\begin{cases}
				0 \,, & \Delta^2 > \cal{B} \\
				\op{sgn} (\Sigma) \,, & \Delta^2 < \cal{B} \,,
			\end{cases}
		\end{aligned}
	\end{equation}
	cf.\,(\ref{eq:SigmaAndDelta}, \ref{eq:DefB}). The $-$ variant is obtained via the replacement \eqref{eq:SecondReplacement}. It entails $ \Sigma \rightsquigarrow - \Sigma $, $ \Delta^2 \rightsquigarrow \Delta^2 $, $ \cal{B} \rightsquigarrow \cal{B} $, whence
	\begin{equation}
		N(P_-) = 
		\begin{cases}
			0 \,, & \Delta^2 > \cal{B} \\
			-\op{sgn} (\Sigma) \,, & \Delta^2 < \cal{B} \,,
		\end{cases}
	\end{equation}
	and
	\begin{equation}
		B = N(P_-) - N(P_+) = -2 \op{sgn} (\Sigma) \,.
	\end{equation}
	This completes the proof.
\end{proof}
\begin{proof}[Proof of Prop.\,\ref{prop:Map2}]
	Trivial consequence of Prop.\,\ref{prop:NewMap2}.
\end{proof}
\begin{proof}[Proof of Prop.\,\ref{prop:Map4}]
	Follows by specializing Prop.\,\ref{prop:NewMap4} to the particle-hole symmetric subcase of family NN, via the restrictions detailed in Prop.\,\ref{prop:PHSBCs}. In particular, \eqref{eq:NNPHS} demands $ \mu_R \equiv \op{Re} \mu = 0 $, and the new transition manifolds are found by
	\begin{equation}
		\chi^\Xi = \chi \rvert_{\mu_R = 0} \qquad (\chi = \cal{M}_{\pm}, \cal{I}_\pm, \cal{E}_\pm, \cal{B}) \,.
	\end{equation}
	Notice that $ \cal{I}_+^\Xi = \cal{I}_-^\Xi \equiv \cal{I}^\Xi $, cf.\,also \eqref{eq:PHSTransNN}. Two transition manifolds of $I$ come to coincide when restricted to the particle-hole submanifold within family NN. As a consequence, $I$ undergoes a non-elementary transition while crossing $\cal{I}^\Xi$, cf.\,Thm.\,\ref{thm:transitions} and Rem.\,\ref{rem:NonElementary}. This is displayed in Fig.\,\ref{fig:IndicesIV}, where crossing $\cal{I}^\Xi$ yields $ \Delta I = \pm 2 $.
\end{proof}
\begin{proof}[Proof of Thm.\,\ref{thm:Typicality}]
	In family DD, there is no notion of typicality, since it consists in a single orbit. In family ND, typicality of violations (or not) of bulk-edge correspondence is already evident from Fig.\,\ref{fig:IndicesII}. In family NN, inspection of \eqref{eq:PNN} and Tab.\,\ref{tab:MergeNN3AndNN4} reveals that
	\begin{equation}
		C_+ = 2 = M = P + I \,,
	\end{equation}
	namely bulk-edge correspondence holds, for $ \Delta^2 > \cal{I}_\pm $. Conversely, it is violated in the complementary case. The proof is concluded by noticing that both sets are positive-measure.
\end{proof}
\begin{proof}[Proof of Thm.\,\ref{thm:transitions}]
	We divide the proof in two parts, (i) and (ii). In (i), we argue, based on Props.\,\ref{prop:NewMap2} and \ref{prop:NewMap4}, that the entries of $ \cal{V} $, cf.\,\ref{eq:IntVector}, only change in the specified combinations (a)-(d). In (ii), we connect such changes to spectral events.
	\begin{itemize}
		\item[(i)] \textit{Relations between transitions of integers.}
		\begin{itemize}
			\item[(a)] The claim is $ \Delta P = - \Delta I = \pm 1 $, and that $P$ does not change otherwise. Both facts are trivially true in family DD, where parameter space is a point and no transitions exist. We analyze ND and NN separately, while DN is recovered from ND by the substitution detailed in Lm.\,\ref{lem:Dictionary}. 
			
			ND: Transitions of $P$ only happen at
			\begin{equation}
				q (p) \mp \sqrt{2} = 0 \,, \qquad (p = (\alpha_R, \alpha_I, \lambda) \in \bbR^3) \,,
			\end{equation} 
			cf.\,Eq.\,\eqref{eq:PND}. As a consequence of Eq.\,\eqref{eq:IND}, they are shared by $I$. When $ q = \sqrt{2} $ ($ q = - \sqrt{2} $) is crossed in direction of increasing $ q $,
			\begin{equation}
				\Delta P = +1 \qquad (\Delta P = +1)
			\end{equation}
			by Eq.\,\eqref{eq:PND} or Fig.\,\ref{fig:IndicesII}. At the same time,
			\begin{equation}
				\Delta I = -1 \qquad (\Delta I = -1) \,,
			\end{equation}
			proving this part of the claim. 
			
			NN: By Eq.\,\eqref{eq:PNN}, transition surfaces of $P$ are
			\begin{equation}
				\Delta^2 - \cal{M}_\pm = 0 \,.
			\end{equation}
			They are shared by $I$, cf.\,Tab.\,\ref{tab:MergeNN3AndNN4}. Inspection of Eq.\,\eqref{eq:PNN} and Tab.\,\ref{tab:MergeNN3AndNN4} allows checking $ \Delta P = - \Delta I = \pm 1 $ within each of the regions defined by the boundaries
			\begin{equation}
				\mu_R + \Sigma = - \nu \,, \quad \mu_R + \Sigma = 0 \,, \quad \mu_R + \Sigma = \nu 
			\end{equation}
			and
			\begin{equation}
				\Delta^2 = \cal{I}_\pm \,.
			\end{equation}
			The statement $ \Delta P = - \Delta I = \pm 1 $ is seen to hold true while crossing $ \Delta^2 = \cal{M}_\pm $, for each such region separately, and thus generally for family NN.
			
			\item[(b)] The claim is that there exist transitions where $ I, E $ change jointly, and in that case $ \Delta I = \Delta E = \pm 1 $. Family DD is again trivial, and will be omitted from now on. In families ND and NN, the transition surfaces of $I$ that are not shared with $P$ are however shared with $E$. They read
			\begin{equation}
				q \mp (m+1) = 0 \,,
			\end{equation}
			in the former family, cf.\,\eqref{eq:IND}, and
			\begin{equation}
				\Delta^2 - \cal{I}_\pm = 0
			\end{equation}
			in the latter. That $ \Delta I = - \Delta E = \pm 1 $ while crossing the aforementioned surfaces follows from inspection of Fig.\,\ref{fig:IndicesII} in family ND, and of Tabs.\,\ref{tab:MergeNN3AndNN4}, \ref{tab:EscapesNN} in family NN. In the latter case, comparing the two tables is actually not immediate, and requires careful ordering of the surfaces in direction of increasing $ \Delta^2 $.
			
			\item[(c)] In family ND (NN), $E$ has a single transition surfaces
			\begin{equation}
				q = m-1  \qquad (\Delta^2 = \cal{E}) \,,
			\end{equation}
			not shared with other indices, cf.\,Eqs.\,(\ref{eq:EValueND}, \ref{eq:EPlusMinus}). That $ \Delta E = \pm 2 $ while crossing them is read by Eq.\,\eqref{eq:EValueND} (family ND), or by subtracting the two columns of Tab.\,\ref{tab:EscapesNN} (family NN) along the rows where $ \Delta E \neq 0 $. In the rows with $ \Delta E = 0 $, passing from one column to the other does not constitute, by the definition of Thm.\,\ref{thm:transitions}, a transition of $ \cal{V} $. Incidentally, we notice that this move across columns with $ \Delta E = 0 $ is associated to the spectral event discussed in Rem.\,\ref{rem:WeirdETrans}.
			
			\item[(d)] Neither in ND nor in NN are the transition surfaces of $B$, $ q = 0 $ and $ \Delta^2 = \cal{B} $ respectively (cf.\,(\ref{eq:BND}, \ref{eq:DefB})), shared by other indices. The integer $B$ therefore changes independently from the rest.
		\end{itemize}
		
		\item[(ii)] \textit{Correspondence with spectral events.}
		\begin{itemize}
			\item[(a)] Levinson's relative theorem, cf.\,\ref{thm:Levinson}, establishes a correspondence between $P$ and the signed number of mergers with the bulk spectrum. If $P$ changes, one such merger must appear of vanish. By continuity of edge eigenvalues $ \omega (k_x) $ in $ k_x $, such mergers cannot appear or disappear at will, but must rather be drawn from, or repelled to, infinity. Hence the claim.
			
			\item[(b)] In families DD (trivially), ND and NN, the only typical asymptotic behaviors of edge eigenvalues are parabolic, $ \omega \gtrsim k_x^2 $ (cf.\,\eqref{eq:RegionI}), or flat, $ \omega \to \omega_{\op{a}, \pm} $, $ (k_x \to \pm \infty) $, cf.\,Eqs.\,(\ref{eq:AsymptoteDD}, \ref{eq:AsymptoteDN}, \ref{eq:AsymptoteNN}). This fact is seen as follows. Parabolic behavior is typical because edge states with $ \omega \gtrsim k_x^2 $ exist at $ |k_x| \to \infty $ in positive-measure subsets of ND, NN, cf.\,Props.\,\ref{prop:NewMap2} and \ref{prop:NewMap4}. By now considering $ |\omega|^2 \ll k_x^2 $ at $ |k_x| \to \infty $, we cover the entire $ (k_x, \omega) $-plane. In this region, Prop.\,\ref{prop:EscapeHeight} proves that the only typical behavior is $\omega$ asymptotically flat. Different asymptotic behaviors $ \omega = \cal{O} (k_x^\gamma) $, $ \gamma \in (0,1) \cup (1,2) $, are admitted only exceptionally, and more specifically only on the transition surfaces of $ E $, cf.\,(\ref{eq:SubparInftyND}, \ref{eq:ZeroInftyND}, \ref{eq:SubparInftyNN}, \ref{eq:ZeroInftyNN}).
			
			If $ \Delta I = - \Delta E = \pm 1 $, a parabolic state is lost and a flat one gained, or vice-versa. Due to continuity and the typicality statement above, the loss and gain shall be ascribed to the same edge state turning from parabolic to flat, or vice-versa. 
			
			\item[(c)] If $E$ alone changes, the transition involves asymptotically flat states only. At $ k_x \to \pm \infty $, there can be at most one flat state at a time, cf.\,\eqref{eq:AsymptoteHeightNewVars}. These two facts, combined with $ \Delta E = \pm 2 $, cf.\,point (i), lead to the only logical conclusion that two asymptotes, one at $ k_x \to - \infty $ and the other one at $ k_x \to + \infty $, cross zero in opposite direction.
			
			\item[(d)] This case is different from the other one, in that the transition is not characterized by a change in $ \cal{V} $, but rather by loss of self-adjointness. This is to say, at transition there exists at least one $ k_x^* \in \bbR $ such that the boundary conditions $ k_x \mapsto \ul{A} (k_x) $, cf.\,\eqref{eq:ExplicitUlA}, have
			\begin{equation}
				\op{rk} \ul{A} (k_x^*) < 2 \,. \label{eq:LossSA}
			\end{equation}
			We shall assume that exactly one such point $ k_x^* $ exists, because the extension to finitely many (or even a countable infinity of) points is immediate.
			
			With this simplifying assumption, the claim has become that, if \eqref{eq:LossSA} holds, then
			\begin{equation}
				\bbR \subset \sigma (H^\# (k_x^*)) \,,
			\end{equation}
			and $B$ is ill-defined, yet it does not change while crossing the BC $ \ul{A} $.
			
			We prove the claim on the spectrum first. Since
			\begin{equation}
				\sigma_{\op{ess}} (H^\# (k_x^*)) = (-\infty, - \omega^+ (k_x^*)) \cup \{ 0 \} \cup (\omega^+ (k_x^*), +\infty) \,,
			\end{equation}
			cf.\,Eq.\,\eqref{eq:BandRim}, we need only prove
			\begin{equation}
				\bbR \setminus \sigma_{\op{ess}} (H^\# (k_x^*)) \subset \sigma (H^\# (k_x^*)) \,,
			\end{equation}
			Let $ \omega \in \bbR \setminus \sigma_{\op{ess}} (H^\# (k_x^*)) $ (namely, in the gap) and let $ \kappa $ be such that $ \omega = \omega_+ (k_x^*, \kappa) $. Then, a candidate eigenstate of energy $ \omega $ is the scattering state $ \psi_s = \tilde{\psi}_s (y; k_x^*, \kappa) \eul^{\im (k_x^* x - \omega t)} $, with
			\begin{equation}
				\tilde{\psi}_s = \Big( \alpha \hat{\psi}_{ \op{in} } \eul^{- \im \kappa y} + \beta \hat{\psi}_{ \op{out} } \eul^{\im \kappa y} + \gamma \hat{\psi}_{ \op{in} } \eul^{- \im \kappa_{\op{ev}} y} \Big) \,,
			\end{equation}
			cf.\,\eqref{eq:ScattState} with some choice of \textit{normalized} bulk sections $ \hat{\psi}_j $, $ (j = \op{in}, \ \op{out}, \ \op{ev}) $. W.l.o.g., we can set $ \gamma = 1 $, just like we set $ \alpha = 1 $ in Eq.\,\eqref{eq:ExplicitScattState}, while defining the scattering amplitude $ S $. The scattering state satisfies the given boundary conditions iff
			\begin{equation}
				\ul{A} M \Psi_s = 0 \ \longleftrightarrow \ \alpha \ul{A} M \Psi_{\op{in}} + \beta \ul{A} M \Psi_{\op{out}} + \ul{A} M \Psi_{\op{ev}} = 0 \,,
			\end{equation}
			cf.\,\eqref{eq:BCScattState}. We rearrange this equation as
			\begin{equation}
				\begin{pmatrix}
					\ul{A} M \Psi_{\op{in}} & \ul{A} M \Psi_{\op{out}}
				\end{pmatrix}
				\begin{pmatrix}
					\alpha \\
					\beta
				\end{pmatrix}
				= - \ul{A} M \Psi_{\op{ev}} \,.             
			\end{equation}
			By Cramer's rule,
			\begin{equation}
				\begin{gathered}
					\alpha = - \op{det} (\ul{A} M \Psi_{\op{ev}}, \, \ul{A} M \Psi_{\op{out}}) \,, \\
					\beta = - \op{det} (\ul{A} M \Psi_{\op{in}}, \, \ul{A} M \Psi_{\op{ev}}) \,.
				\end{gathered}
			\end{equation}
			By hypothesis, $ \op{rk} \ul{A} (k_x^*) < 2 $, whence
			\begin{equation}
				\alpha (k_x^*, \kappa) = \beta (k_x^*, \kappa) = 0 \,.
			\end{equation}
			The resulting state
			\begin{equation}
				\tilde{\psi}_s = \hat{\psi}_{\op{ev}} (k_x^*, \kappa) \eul^{\im \kappa_{\op{ev}} y}
			\end{equation}
			thus satisfies the boundary conditions, at $ k_x^* $, for all $ \omega $ in the gap, or equivalently all $ \kappa \in \bbC $ s.t. $ \omega = \omega (k_x^*, \kappa) $. It is moreover in $ L^2 (\bbR_+ \ni y) $ by $ \kappa_{\op{ev}} \in \im \bbR_+ $. This concludes this part of the proof.
			
			To prove the claim on $B$, we recall \eqref{eq:PPlusPMinus}:
			\begin{equation}
				B(U) = N(P_-) - N(P_+) \,,
			\end{equation}
			where $U$ is the orbit of $ \ul{A} $, $ N (\cdot) $ denotes the winding number of a curve $ \bbR \to \bbC $, and
			\begin{equation}
				P_\pm = \op{det} A_\pm = \op{det} (A_1 \pm A_2) \,,
			\end{equation}
			cf.\,\eqref{eq:APlusAMinus} and \eqref{eq:JuxtaposeA1A2}. By Claim \ref{claim:RkA1PlusA2}
			\begin{equation}
				\op{rk} \ul A < 2 \ \leftrightarrow \ \op{rk} A_\pm < 2 \,.
			\end{equation}
			Thus, $ \op{rk} \ul A (k_x^*) < 2 $ iff
			\begin{equation}
				P_\pm (k_x^*) = \op{det} (A_\pm (k_x^*)) = 0 \,,
			\end{equation}
			namely loss of self-adjointness is equivalent to both curves $ P_\pm : \bbR \to \bbC $ crossing the origin, in which case $B$ is ill-defined. This is the first claim. The second one is that $B$ does nonetheless not change in value, when crossing a locally non self-adjoint BC. We prove the statement family-wise. It is trivial in DD, so we focus on ND and NN. In ND, loss of self-adjointness occurs at the single (if existent) solution $ k_x^* $ of
			\begin{equation}
				(a_1 \wedge a_1') + k_x (a_1 \wedge b_1) = 0 \,,
			\end{equation}
			cf.\,\eqref{eq:SAAEND}. It however produces no change in $B$, due to \eqref{eq:PTilde}. In NN, all boundary conditions are everywhere self-adjoint, cf.\,Lm.\,\ref{lem:MaxRank} and Rem.\,\ref{rem:d21NeqZero}, and no transitions in the sense of this point thus occur.
		\end{itemize}
	\end{itemize}
\end{proof}
\begin{proof}[Proof of Cor.\,\ref{cor:ViolationMechanism}]
	Since bulk-edge correspondence was defined as $ C_+ = M $, cf.\,Def.\,\ref{def:BEC}, only a change in $ C_+ $ or $M$ can alter its status. However, $ C_+ $ is a constant, independent of boundary conditions. Thus, $ \Delta M \neq 0 $ is the only possible violation mechanism. Out of the allowed transitions (a)-(d) of Thm.\,\ref{thm:transitions}, only those of type (b) produce
	\begin{equation}
		\Delta M = \Delta P + \Delta I = 0 \pm 1 \neq 0 \,,
	\end{equation} 
	whence the claim.
\end{proof}

\section{Acknowledgements} \label{sec:Acknowledgements}

The authors heartily thank J. Kellendonk for numerous illuminating discussions, especially on the role of von Neumann unitaries, and C. Tauber for his careful reading of the manuscript and valuable remarks. They moreover thank G. C. Thiang and again C. Tauber for discussions at an early stage of the work.

\appendix

\section{Further facts on boundary conditions} \label{app:SelfAdj}

This appendix recalls the results of \cite{GJT21} on self-adjointness of $ 2 \times 6 $ boundary conditions $ A $, cf.\,Def.\,\ref{def:OperativeBC}, previously displayed in Sect.\,\ref{subsec:DetailsBCs}. Next, a proof is provided for Lm.\,\ref{lem:Dictionary} and Props.\,\ref{prop:BCs}, \ref{prop:PHSBCs}, \ref{prop:UBCs}. We later turn our attention to failure of self-adjointness at a single fiber $ k_x $, signaled by $ A(k_x) $ failing to be maximal rank. Whether a boundary condition $ A $ is maximal rank a.e.\,or everywhere in $ k_x $ is the question tackled in Prop.\,\ref{lem:MaxRank} below.
\vspace{1\baselineskip}

As anticipated, we start by reviewing the formalism and results of \cite{GJT21}. Recall Eq.\,\eqref{eq:BoundaryCondition}: A scattering state $ \psi_s = \tilde{\psi}_s (y) \eul^{\im (k_x x - \omega t)} $, cf.\,\eqref{eq:ScattState} in Def.\,\ref{def:ScattState}, satisfies the boundary condition $ A (k_x) $ if
\begin{equation} \label{eq:OrigBC}
	A (k_x) \Psi = 0 \,,
\end{equation} 
where $ \Psi = ( \tilde{\psi}_s (0), \tilde{\psi}_s' (0))^T $ and 
\begin{equation} \label{eq:A}
	A (k_x) = A_c + k_x A_k \,,
\end{equation}
as per Eq.\,\eqref{eq:NewA}. The matrix $ A $ has precisely two rows because two equations suffice to uniquely identify a solution $ \varphi \in L^2 (\bbR_+)^{\otimes 3} $ of $ H(k_x) \varphi = \omega \varphi $, cf.\,Eq.\,\eqref{eq:EdgeHamPrecursor}. On the other hand, one equation is not enough, so that
\begin{equation}
	\op{rk} A(k_x) \overset{!}{=} 2 \label{eq:RankCondition}
\end{equation}
a.e.\,in $ k_x $. Differently put, all self-adjoint boundary conditions are represented by matrices $ A $ of maximal rank.

It is proven in \cite{GJT21} that a given BC $ A: k_x \mapsto A(k_x) $, cf.\,again Def.\,\ref{def:OperativeBC}, encodes for a self-adjoint realization $ H^{\#} $ of the formal differential operator $ H(k_x) $ if and only if
\begin{subequations}
	\begin{gather} 
		A (k_x) N = 0 \,, \qquad A (k_x) \hat{\Omega} A^* (k_x) = 0 \,, \label{eq:CondSelfAdjKx} \\
		\ds N = 
		\begin{pmatrix}
			\nu & 0 \\
			0 & 0 \\
			0 & 0 \\
			0 & 1 \\
			1 & 0 \\
			0 & 0 
		\end{pmatrix}
		\,, \qquad \hat{\Omega} =
		\begin{pmatrix}
			0 & 0 & -\lambda & 0 & 0 & 0 \\
			0 & 0 & 0 & 0 & 0 & - \nu^{-1} \\
			- \lambda & 0 & 0 & 0 & \lambda \nu & 0 \\
			0 & 0 & 0 & 0 & 0 & 0 \\
			0 & 0 & \lambda \nu & 0 & 0 & 0 \\
			0 & - \nu^{-1} & 0 & 0 & 0 & 0 \\
		\end{pmatrix}
		\,, \qquad \lambda = \frac{1}{1 + \nu^2} \label{eq:NOmega}
	\end{gather}
\end{subequations}
almost everywhere in $ k_x $. Equivalently, cf.\,\eqref{eq:A},
\begin{subequations}
	\begin{gather} 
		A_c N = 0 \ , \hspace{12pt} A_k N = 0 \,, \label{eq:CondSelfAdj1} \\
		A_c \hat{\Omega} A^*_c = 0, \hspace{10pt} A_c \hat{\Omega} A_k^* - A_k \hat{\Omega} A_c^* = 0, \hspace{10pt} A_k \hat{\Omega} A_k^* = 0 \,. \label{eq:CondSelfAdj2}
	\end{gather}
\end{subequations}

Eq.\,\eqref{eq:CondSelfAdj1} implies that two out of the six columns of $ A $ are redundant, so that one may generically write
\begin{equation} \label{eq:AcAk}
	A_c = (a_1, \ a_1', \ a_2', \ 0, \ - \nu a_1, \ \nu a_2) \,, \hspace{13pt}
	A_k = (0, \ b_1, \ b_2, \ 0, \ 0, \ 0) \,,
\end{equation}
where $ a_1, a_2 , a_1', a_2', b_1, b_2  \in \bbC^2 $ and $ (\cdot, \cdot) $ denotes the horizontal juxtaposition of two (or more) matrices. This is to say, if $ X $ is an $ l \times m $ and $ Y $ is an $ l \times n $, $ Z = (X,Y) $ is an $ l \times (m+n) $ built up by \enquote{gluing} $ Y $ to the right of $ X $. Eq.\,\eqref{eq:AcAk} is equivalent to \eqref{eq:GeneralTwoBySix}, whence our claim \eqref{eq:ExplicitA0AxAy} on the general form of $ A_0, A_x, A_y $ follows.

The structure \eqref{eq:A} of $ A $, with only four independent columns, prompts the simpler rewriting as a $ 2 \times 4 $ matrix $ \ul{A} $
\begin{gather} 
	\ul{A} (k_x) = \ul{A}_{c} + k_x \ul{A}_k \,, \nonumber\\
	\ul{A}_{c} = (a_1, \ a_1', \ a_2', \ a_2), \hspace{10pt} \ul{A}_k = (0, \ b_1, \ b_2, \ 0) \,, \label{eq:AUnderlined}
\end{gather}
and the matrix $ M $ of \eqref{eq:MatM} is read off by direct comparison of $ A_c, \, A_k $ with $ \ul{A}_{c}, \ \ul{A}_{k} $, cf.\,(\ref{eq:AcAk}, \ref{eq:AUnderlined}).

Any BC $ A = \ul{A} M $, with $ \ul{A} $ as in Eq.\,\eqref{eq:AUnderlined}, satisfies \eqref{eq:CondSelfAdj1} by construction. On the other hand, \eqref{eq:CondSelfAdj2} is yet to be imposed on $ \ul{A} $. Again by $ A = \ul{A} M $, it translates to
\begin{equation}
	A (k_x) \hat{\Omega} A^* (k_x) = \ul{A} (k_x) M \hat{\Omega} M^* \ul{A}^* (k_x) \equiv - \ul{A} (k_x) \Omega \ul{A}^* (k_x) \overset{!}{=} 0 \label{eq:RecoveringOmega}
\end{equation}
almost everywhere in $ k_x $, where
\begin{equation}
	\Omega \coloneqq - M \hat{\Omega} M^* =
	\begin{pmatrix}
		0 & \id \\
		\id & 0
	\end{pmatrix} \,,
\end{equation}
the second identity following from Eqs.\,(\ref{eq:MatM}, \ref{eq:NOmega}) and matrix multiplication. By \eqref{eq:RecoveringOmega}, we have thus recovered \eqref{eq:CondSelfAdjKxNew}.

We can now move on to proving Lm.\,\ref{lem:Dictionary} and Props.\,\ref{prop:BCs}, \ref{prop:PHSBCs}, starting with the former.
\begin{proof}[Proof of Lm.\,\ref{lem:Dictionary}]
	Recall Eq.\,\eqref{eq:ExplicitA0AxAy}, namely
	\begin{equation}
		A_y = (0, \ - \nu a_1, \ \nu a_2) \,.
	\end{equation}
	Then
	\begin{itemize}[itemindent=1em]
		\item[DD:] $ \op{rk} (A_y) = 0 $ iff $ \op{rk} (- \nu a_1, \ \nu a_2) = 0 $, i.e.\,, $ a_1 = a_2 = 0 $;
		
		\item[NN:] $ \op{rk} (A_y) = 2 $ iff $ \op{rk} (- \nu a_1, \ \nu a_2) = 2 $, i.e.\,, $ a_1, a_2 $ linearly independent.
	\end{itemize}
	If $ \op{rk} (A_y) = 1 $, there exists an invertible matrix
	\begin{equation}
		G = 
		\begin{pmatrix}
			\bar{g}_1^T \\
			\bar{g}_2^T
		\end{pmatrix} \,, 
	\end{equation}
	with $ g_1, g_2 \in \bbC^2 $ and $ g_1, g_2 $ l.i.\,, such that
	\begin{equation}
		G (- \nu a_1, \ \nu a_2) = 
		\begin{pmatrix}
			- \nu g_1 \cdot a_1 & \nu g_1 \cdot a_2 \\
			- \nu g_2 \cdot a_1 & \nu g_2 \cdot a_2 
		\end{pmatrix}
	\end{equation}
	has one row equal to zero (w.l.o.g.\,, the second one), where $ \cdot $ denotes inner product in $ \bbC^2 $. If the second row is zero, $ a_1, a_2 \perp g_2 $ in $ \bbC^2 $, and thus $ a_1 \parallel a_2 $, i.e.\,, linearly dependent.
	
	The further distinction between ND and DN arises as follows. Since $ g_1 $ is l.i. from $ g_2 $ and non-zero, the inner products $ g_1 \cdot a_1, g_1 \cdot a_2 $ appearing in the first row are only zero for $ a_1 = 0 $ or $ a_2 = 0 $, respectively. If a term in $\partial_y u$ persists in the BC after elimination of the second row, it means that the $ [G A_y]_{12} \neq 0 $, whence $ g_1 \cdot a_1 \neq 0 $ and thus $ a_1 \neq 0 $. This characterizes family ND. Family DN is obtained in the same way, requiring $ \partial_y v $ to persist. 
\end{proof}
Prop.\,\ref{prop:BCs} reduces to two simpler propositions by means of the following reasoning. The condition to be met is
\begin{equation}
	A_1 (k_x) A_2^* (k_x) + A_2 (k_x) A_1^* (k_x) = 0 \label{eq:CondOnAAgain}
\end{equation}
a.e.\,in $ k_x $, cf.\,\eqref{eq:CondOnA}. Recall the decomposition $ A_i (k_x) = A^0_i + k_x B_i $, cf.\,\eqref{eq:A1A2}. If one lets $ |1\rangle = (1,0)^\text{T}, \ |2\rangle = (0,1)^\text{T} $, then
\begin{equation}
	A_1^0 = | a_1 \rangle \langle 1 | + | a_1' \rangle \langle 2  | \,, \qquad A_2^0 = | a_2 \rangle \langle 2 | + | a_2' \rangle \langle 1 | \,, \qquad B_1 = | b_1 \rangle \langle 2 | \,, \qquad B_2 = | b_2 \rangle \langle 1 | \,. \label{eq:A01B1Dirac}
\end{equation}
The l.h.s.\,of \eqref{eq:CondOnAAgain} can be rewritten as
\begin{equation} 
	A_1 (k_x) A_2^* (k_x) + A_2 (k_x) A_1^* (k_x) \equiv C_0 + C_1 k_x + C_2 k_x^2 \,, \label{eq:C0C1C2}
\end{equation}
and it equals zero a.e.\,in $ k_x $ only if the (matrix) coefficients 
\begin{equation}
	C_0 = A^0_1 A^{0*}_2 + \ \text{h.c.} \,, \qquad C_1 = B_1 A^{0*}_2 + A^0_1 B_2^* + \ \text{h.c.} \,, \qquad C_2 = B_1 B_2^* + \ \text{h.c.} \label{eq:NewC0C1C2}
\end{equation}
are identically zero (h.c.\,standing for \enquote{hermitian conjugate}). That $ C_2 \equiv 0, \ \forall b_1, b_2 $ is immediate from \eqref{eq:A01B1Dirac}. Whether or not $ C_0 = 0 $ ($ C_1 = 0 $) is the content of Prop.\,\ref{prop:C0isZero} (Prop. \ref{prop:C1isZero}). Their proof rests on the following preliminary claim, to be proven later.
\begin{claim}
	\label{cl:Dirac}
	\begin{enumerate}
		\item[(i)] If $ \varphi_i \in \bbC^2, \ (i = 1,2) $ are linearly independent, so are $ | \varphi_i \rangle \langle \varphi_j |, \ (i,j=1,2) $.
		
		\item[(ii)] Let $ \bbC^2 \ni \psi \neq 0 $. Then,
		\begin{equation}
			| \varphi \rangle \langle \psi | + | \psi \rangle \langle \varphi | = 0
		\end{equation}
		iff
		\begin{equation}
			| \varphi \rangle = \im \lambda | \psi \rangle \,,
		\end{equation}
		for some $ \lambda \in \bbR $.
	\end{enumerate}
\end{claim}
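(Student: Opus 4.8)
The plan is to dispatch the two items of Claim \ref{cl:Dirac} separately; both are elementary linear algebra in $\bbC^2$, so the write-up will be short.

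\emph{Part (i).} I would reduce to the standard basis. Let $e_1,e_2$ be the canonical basis of $\bbC^2$ and recall that $\{\,|e_i\rangle\langle e_j|\,\}_{i,j=1,2}$ is a basis of $\op{Mat}_2(\bbC)$. Since $\varphi_1,\varphi_2$ are linearly independent, the matrix $T=(\varphi_1,\varphi_2)$ is invertible with $\varphi_i=Te_i$, hence $\langle\varphi_j|=\langle e_j|T^*$ and
\[
|\varphi_i\rangle\langle\varphi_j| = T\,|e_i\rangle\langle e_j|\,T^*.
\]
The map $X\mapsto TXT^*$ on $\op{Mat}_2(\bbC)$ is a linear bijection (its inverse being $X\mapsto T^{-1}X(T^{-1})^*$), so it sends the basis $\{|e_i\rangle\langle e_j|\}$ to $\{|\varphi_i\rangle\langle\varphi_j|\}$, which is therefore itself a basis and in particular linearly independent. (Equivalently: feed the dual basis $\psi_k$ determined by $\langle\varphi_j|\psi_k\rangle=\delta_{jk}$ into a putative relation $\sum_{ij}c_{ij}|\varphi_i\rangle\langle\varphi_j|=0$ and read off $c_{ik}=0$ from linear independence of the $\varphi_i$.)

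\emph{Part (ii).} The implication $|\varphi\rangle=\im\lambda|\psi\rangle\ \Rightarrow\ |\varphi\rangle\langle\psi|+|\psi\rangle\langle\varphi|=0$ is immediate, since then $\langle\varphi|=-\im\lambda\,\langle\psi|$ and the two summands are $\pm\im\lambda\,|\psi\rangle\langle\psi|$. For the converse I would apply the vanishing operator to $\psi$ itself, obtaining $\norm{\psi}^2|\varphi\rangle=-\langle\varphi|\psi\rangle\,|\psi\rangle$; as $\psi\neq0$ this gives $|\varphi\rangle=c|\psi\rangle$ with $c=-\langle\varphi|\psi\rangle/\norm{\psi}^2\in\bbC$ (the degenerate case $\varphi=0$ being $c=0$). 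Substituting back yields $(c+\bar c)\,|\psi\rangle\langle\psi|=0$, and since $|\psi\rangle\langle\psi|\neq0$ we conclude $c+\bar c=0$, i.e.\ $c=\im\lambda$ with $\lambda\in\bbR$.

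I do not anticipate a genuine obstacle: both parts are mechanical. The one point warranting care is the antilinearity of the bra in part (ii) — it is the conjugation hidden in $\langle\varphi|$ (equivalently, the conjugate-linear dependence of $|\psi\rangle\langle\varphi|$ on $\varphi$) that turns ``$c$ has no real part'' into ``$c$ is purely imaginary'', producing the factor $\im$ in the statement; that sign bookkeeping is the only place an error could slip in.
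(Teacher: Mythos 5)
Your proof is correct. For part (i) you conjugate the standard matrix units by the invertible matrix $T=(\varphi_1,\varphi_2)$, whereas the paper pairs a putative vanishing combination $\sum_{ij}\lambda_{ij}|\varphi_i\rangle\langle\varphi_j|$ against the dual basis of $\{\varphi_i\}$ --- precisely the argument you relegate to a parenthesis --- so the two are interchangeable. The genuine difference is in part (ii): the paper forces $\varphi$ and $\psi$ to be proportional by appealing to part (i) (linearly independent vectors would yield linearly independent rank-one operators, whose sum cannot vanish) and then constrains the proportionality constant, while you obtain the proportionality directly by applying the vanishing operator to $|\psi\rangle$, which gives $\norm{\psi}^2|\varphi\rangle=-\langle\varphi|\psi\rangle\,|\psi\rangle$ in one stroke. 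Your route is marginally more self-contained (it does not lean on (i)) and treats the degenerate case $\varphi=0$ explicitly, whereas the paper's phrasing \enquote{assume $|\psi\rangle=\gamma|\varphi\rangle$} tacitly presupposes $\varphi\neq0$. The sign bookkeeping you flag --- the conjugate-linearity of $|\psi\rangle\langle\varphi|$ in $\varphi$, turning $c+\bar c=0$ into $c\in\im\bbR$ --- is the crux in both versions, and you handle it correctly.
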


\begin{proposition}
	\label{prop:C0isZero}
	Depending on the cases DD through NN, cf.\,Lm.\,\ref{lem:Dictionary}, the equation
	\begin{equation}
		C_0 = 0 \label{eq:C0isZero}
	\end{equation}
	amounts to
	\begin{itemize}
		\item[DD:] $ a_1', a_2' \in \bbC^2 $ (arbitrary);
		
		\item[ND:] For some $ \lambda' \in \bbR $,
		\begin{equation}
			\bar{\alpha} a_1' + a_2' = \im \lambda' a_1 \,;
		\end{equation} 
		
		\item[DN:] Same as ND up to $ a_1 \leftrightarrow a_2 $;
		
		\item[NN:] For some $ \mu' \in \bbC $ and $ \lambda_1', \lambda_2' \in \bbR $,
		\begin{equation}
			\begin{array}{lcl}
				a_1' & = & \mu' a_1 + \im \lambda_1' a_2 \\
				a_2' & = & - \bar{\mu}' a_2 + \im \lambda_2' a_1 \,.
			\end{array} 
		\end{equation} 
	\end{itemize}
\end{proposition}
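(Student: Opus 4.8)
The plan is to write $C_0$, defined in \eqref{eq:NewC0C1C2}, explicitly in the bra--ket notation of \eqref{eq:A01B1Dirac}, and then read off the vanishing condition $C_0 = 0$ separately for each of the four families, using Claim \ref{cl:Dirac} as the main tool.

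First I would expand. From \eqref{eq:A01B1Dirac} one has $A_2^{0*} = |2\rangle\langle a_2| + |1\rangle\langle a_2'|$, and since $\langle 1|2\rangle = 0$ the cross terms drop, so $A_1^0 A_2^{0*} = |a_1\rangle\langle a_2'| + |a_1'\rangle\langle a_2|$. Because $A_2^0 A_1^{0*} = (A_1^0 A_2^{0*})^*$, this gives
\begin{equation}
	C_0 = |a_1\rangle\langle a_2'| + |a_1'\rangle\langle a_2| + |a_2'\rangle\langle a_1| + |a_2\rangle\langle a_1'| \,.
\end{equation}
The DD case is then immediate, since $a_1 = a_2 = 0$ kills all four terms with $a_1', a_2'$ unconstrained. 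For ND I would substitute $a_2 = \alpha a_1$ and group the terms carrying $|a_1\rangle$ on the left with those carrying $\langle a_1|$ on the right, using antilinearity of the bra ($\alpha\langle a_1'| = \langle\bar\alpha a_1'|$) to rewrite $C_0 = |a_1\rangle\langle\varphi| + |\varphi\rangle\langle a_1|$ with $\varphi := a_2' + \bar\alpha a_1'$. Since $a_1 \neq 0$, Claim \ref{cl:Dirac}(ii) gives $C_0 = 0 \iff \varphi = \im\lambda' a_1$, i.e.\,$\bar\alpha a_1' + a_2' = \im\lambda' a_1$ for some $\lambda' \in \bbR$; DN follows by interchanging $a_1$ and $a_2$.

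The NN case is the only one requiring a genuine computation. Here $\{a_1, a_2\}$ is a basis of $\bbC^2$, so by Claim \ref{cl:Dirac}(i) the four operators $|a_i\rangle\langle a_j|$ ($i,j \in \{1,2\}$) are linearly independent in $\op{Mat}_2(\bbC)$. Writing $a_1' = p a_1 + q a_2$ and $a_2' = r a_1 + s a_2$ with $p,q,r,s \in \bbC$, and expanding $C_0$ in that basis, the coefficients of $|a_1\rangle\langle a_1|$, $|a_2\rangle\langle a_2|$, $|a_1\rangle\langle a_2|$, $|a_2\rangle\langle a_1|$ come out to be $2\op{Re} r$, $2\op{Re} q$, $\bar s + p$, $s + \bar p$ respectively (the last two complex conjugates of each other). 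Hence $C_0 = 0$ forces $r = \im\lambda_2'$, $q = \im\lambda_1'$ (with $\lambda_1', \lambda_2' \in \bbR$) and $p = -\bar s$; putting $\mu' := p$, so $s = -\bar\mu'$, reproduces $a_1' = \mu' a_1 + \im\lambda_1' a_2$ and $a_2' = -\bar\mu' a_2 + \im\lambda_2' a_1$, and since the computation is reversible this is exactly the claimed characterization.

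The only real obstacles are bookkeeping ones: performing the bra--ket expansion of $C_0$ without sign or conjugation errors, and, in the NN case, correctly matching the free parameters $(p,q,r,s)$ to the advertised $(\mu', \lambda_1', \lambda_2')$. Claim \ref{cl:Dirac}, which underlies both the ND/DN reduction (part (ii)) and the NN linear-independence step (part (i)), may be taken as stated earlier in the excerpt.
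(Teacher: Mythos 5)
Your proposal is correct and follows essentially the same route as the paper: the same Dirac-notation expansion of $C_0$, the same appeal to Claim \ref{cl:Dirac}(ii) for ND/DN after grouping the terms into $|a_1\rangle\langle\varphi|+|\varphi\rangle\langle a_1|$, and the same basis expansion of $a_1',a_2'$ in $\{a_1,a_2\}$ combined with Claim \ref{cl:Dirac}(i) for NN. The only difference is notational (your $(p,q,r,s)$ versus the paper's $(\mu_i,\gamma_i)$), and your coefficient bookkeeping checks out.
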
 
\begin{proof}
	Using the previously introduced Dirac notation, $ C_0 = 0 $ reads
	\begin{equation}
		| a_1' \rangle \langle a_2 | + | a_2 \rangle \langle a_1' | + | a_2' \rangle \langle a_1 | + | a_1 \rangle \langle a_2' | = 0 \,. \label{eq:C0ZeroDirac}
	\end{equation}
	Consider now cases DD--NN.
	\begin{itemize}[itemindent=1em]
		\item[DD:] Is evident;
		
		\item[ND:] Eq.\,\eqref{eq:C0ZeroDirac} becomes
		\begin{equation}
			( \bar{\alpha} | a_1' \rangle + | a_2' \rangle ) \langle a_1 | + | a_1 \rangle ( \alpha \langle a_1' | + \langle a_2' | ) = 0\,.
		\end{equation}
		By (ii) of Claim \ref{cl:Dirac}:
		\begin{equation}
			\bar{\alpha} | a_1' \rangle + | a_2' \rangle = \im \lambda' | a_1 \rangle \,,
		\end{equation}
		as wished for;
		
		\item[DN:] Same as ND up to $ a_1 \leftrightarrow a_2 $;
		
		\item[NN:] Here, $ a_1, a_2 $ are a basis of $ \bbC^2 $, whence
		\begin{align}
			| a_1' \rangle &= \mu_1 | a_1 \rangle + \gamma_1 | a_2 \rangle \nonumber \\
			| a_2' \rangle &= \mu_2 | a_2 \rangle + \gamma_2 | a_1 \rangle\,, \label{eq:a1pa2p}
		\end{align}
		$ \mu_i,\gamma_i \in \bbC $.
		
		Then
		\begin{equation}
			| a_1' \rangle \langle a_2 | + | a_2 \rangle \langle a_1' | = \mu_1 | a_1 \rangle \langle a_2 | + \bar{\mu}_1 | a_2 \rangle \langle a_1 | + (\gamma_1 + \bar{\gamma}_1) | a_2 \rangle \langle a_2 | \,,
		\end{equation}
		and likewise for the other two terms of Eq.\,\eqref{eq:C0ZeroDirac}.
		
		In view of linear independence of $ a_1, a_2 $, the equations
		\begin{equation}
			\mu_1 = - \bar{\mu}_2 \,, \qquad \gamma_1 + \bar{\gamma}_1 = 0 \,, \qquad \gamma_2 + \bar{\gamma}_2 = 0
		\end{equation}
		follow by (i) of Claim\ref{cl:Dirac}. Therefore, setting
		\begin{equation}
			\mu_1 = - \bar{\mu}_2 \equiv \mu' \,, \qquad \gamma_1 = \im \lambda_1' \,, \qquad \gamma_2 = \im \lambda_2' \,,
		\end{equation}
		the desired conclusion is read from \eqref{eq:a1pa2p}.
	\end{itemize}
\end{proof}
\begin{proposition}
	\label{prop:C1isZero}
	Depending on the cases DD through NN, cf.\,Lm.\,\ref{lem:Dictionary}, the equation
	\begin{equation}
		C_1 = 0 \,, \label{eq:C1isZero}
	\end{equation}
	with $ C_1 $ as in \eqref{eq:NewC0C1C2}, amounts to
	\begin{itemize}[itemindent=1em]
		\item[DD:] $ b_1, b_2 \in \bbC^2 $ (arbitrary);
		
		\item[ND:] For some $ \lambda' \in \bbR $,
		\begin{equation}
			\bar{\alpha} b_1 + b_2 = \im \lambda a_1 \,;
		\end{equation} 
		
		\item[DN:] Same as ND up to $ a_1 \leftrightarrow a_2 $;
		
		\item[NN:] For some $ \mu \in \bbC $ and $ \lambda_1, \lambda_2 \in \bbR $,
		\begin{equation}
			\begin{array}{lcl}
				b_1 & = & \mu a_1 + \im \lambda_1 a_2 \\
				b_2 & = & - \bar{\mu} a_2 + \im \lambda_2 a_1 \,.
			\end{array} 
		\end{equation} 
	\end{itemize}
\end{proposition}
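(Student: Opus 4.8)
The plan is to exploit the structural parallel between Eq.\,\eqref{eq:C1isZero} and the already-solved Eq.\,\eqref{eq:C0isZero}. First I would write $C_1$ out in the Dirac notation of \eqref{eq:A01B1Dirac}. Expanding $A_1(k_x)A_2^*(k_x)+A_2(k_x)A_1^*(k_x)$, the coefficient of $k_x$ is $B_1 A_2^{0*}+A_1^0 B_2^*+\text{h.c.}$, cf.\,\eqref{eq:NewC0C1C2}; since $B_1=|b_1\rangle\langle 2|$ and $B_2=|b_2\rangle\langle 1|$ and $\langle i|j\rangle=\delta_{ij}$, one finds $B_1 A_2^{0*}=|b_1\rangle\langle a_2|$ and $A_1^0 B_2^*=|a_1\rangle\langle b_2|$, with all remaining terms dropping out. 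Hence $C_1=0$ becomes
\begin{equation}
	|b_1\rangle\langle a_2| + |a_2\rangle\langle b_1| + |b_2\rangle\langle a_1| + |a_1\rangle\langle b_2| = 0 \,,
\end{equation}
which is formally \eqref{eq:C0ZeroDirac} under the replacement $a_1'\rightsquigarrow b_1$, $a_2'\rightsquigarrow b_2$ (together with the bookkeeping renaming $\lambda'\rightsquigarrow\lambda$, $\lambda_i'\rightsquigarrow\lambda_i$, $\mu'\rightsquigarrow\mu$).

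Second, I would run the same case analysis as in the proof of Prop.\,\ref{prop:C0isZero}, invoking Claim\,\ref{cl:Dirac}. In case DD one has $a_1=a_2=0$ by Lm.\,\ref{lem:Dictionary}, so the equation is void and $b_1,b_2$ are arbitrary. In case ND, $a_2=\alpha a_1$, and the equation factors as $(\bar\alpha|b_1\rangle+|b_2\rangle)\langle a_1|+|a_1\rangle(\alpha\langle b_1|+\langle b_2|)=0$; by part (ii) of Claim\,\ref{cl:Dirac} this is equivalent to $\bar\alpha b_1+b_2=\im\lambda a_1$ for some $\lambda\in\bbR$. Case DN follows by the $a_1\leftrightarrow a_2$ substitution. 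In case NN, $a_1,a_2$ form a basis, so I expand $b_1=\mu_1 a_1+\gamma_1 a_2$, $b_2=\mu_2 a_2+\gamma_2 a_1$; substituting and using part (i) of Claim\,\ref{cl:Dirac} (linear independence of the rank-one operators $|a_i\rangle\langle a_j|$) forces $\mu_1=-\bar\mu_2=:\mu$ and $\gamma_1,\gamma_2\in\im\bbR$, which after relabeling $\gamma_i=\im\lambda_i$ is exactly the stated parametrization.

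The main (and essentially only) obstacle is the bookkeeping in the first step: one must check carefully that $C_1$ contains no terms mixing the $b_i$ with the $a_j'$ — i.e.\,that contributions of type $B_1 A_1^{0*}$ are genuinely absent from $C_1$ — so that the reduction to the $C_0$-equation is exact and not merely suggestive. Once this is verified, the remainder is a verbatim repetition of the argument of Prop.\,\ref{prop:C0isZero}, and I would present it by reference rather than re-deriving the case distinctions.
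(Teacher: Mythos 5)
Your proposal is correct and follows essentially the same route as the paper: both reduce $C_1=0$ to the Dirac-notation identity $|b_1\rangle\langle a_2|+|a_2\rangle\langle b_1|+|b_2\rangle\langle a_1|+|a_1\rangle\langle b_2|=0$, observe that it is formally Eq.\,\eqref{eq:C0ZeroDirac} under $a_1'\rightsquigarrow b_1$, $a_2'\rightsquigarrow b_2$, and then invoke the case analysis of Prop.\,\ref{prop:C0isZero} via Claim\,\ref{cl:Dirac}. The bookkeeping point you flag is indeed the only thing to check, and it holds because $C_1$ as defined in \eqref{eq:NewC0C1C2} contains no $B_iA_i^{0*}$ cross-terms.
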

\begin{proof}
	Equation \eqref{eq:C1isZero} consists of two halves with interchanged indices $ 1 \leftrightarrow 2 $. It therefore suffices to study
	\begin{equation}
		B_1 A^{0*}_2 + A^0_1 B_2^* = 0
	\end{equation}
	in the various cases. Switching to Dirac notation, the previous identity reads
	\begin{equation}
		| b_1 \rangle \langle a_2 | + | a_2 \rangle \langle b_1 | + | b_2 \rangle \langle a_1 | + | a_1 \rangle \langle b_2 | = 0\,. \label{eq:C1ZeroDirac}
	\end{equation}
	The latter is formally equivalent to \eqref{eq:C0ZeroDirac} upon identification $ b_1 \leftrightarrow a_1' $, $ b_2 \leftrightarrow a_2' $. The claims follow by the same reasoning of Prop. \ref{prop:C0isZero}.
\end{proof}
Once the proof of Claim \ref{cl:Dirac} is given, considering Props.\,\ref{prop:C0isZero} and \ref{prop:C1isZero} together yields Prop.\,\ref{prop:BCs} as a corollary.
\begin{proof}[Proof of Claim \ref{cl:Dirac}]
	\begin{itemize}
		\item[(i)] Let $ \langle \varphi_i | $ be the dual basis of $ | \varphi_i \rangle $, i.e.
		\begin{equation}
			\langle \varphi_i | \varphi_j \rangle = \delta_{i j}.
		\end{equation}
		Then
		\begin{equation}
			\sum_{i,j} \lambda_{i j} | \varphi_i \rangle \langle \varphi_j | = 0
		\end{equation}
		implies $ \lambda_{i_0 j_0} = 0 $ by taking the matrix element between $ \langle \varphi_{i_0} | $ and $ | \varphi_{j_0} \rangle $.
		
		\item[(ii)] Let $ \psi \neq 0 $ and assume $ | \psi \rangle = \gamma | \varphi \rangle, \ \gamma \in \bbC $. Then
		\begin{equation}
			| \varphi \rangle \langle \psi | + | \psi \rangle \langle \varphi | = \bar{\gamma} | \varphi \rangle \langle \varphi | + \gamma | \varphi \rangle \langle \varphi | = 0 \ \Leftrightarrow \ \gamma = \im \lambda, \ \lambda \in \bbR.
		\end{equation}
		Thus $ | \varphi \rangle = \im \lambda | \psi \rangle $, implying
		\begin{equation}
			| \varphi \rangle \langle \psi | + | \psi \rangle \langle \varphi | = 0 \,.
		\end{equation}
		The other direction follows from point (i).
	\end{itemize}
\end{proof}

Prop.\,\ref{prop:PHSBCs} is a statement about the particle-hole symmetric subsets of families DD--NN. However, by Def.\,\ref{def:PHS} and Eq.\,\eqref{eq:USymmetry}, particle-hole symmetry is a property of the orbit $ [\ul{A}] $, cf.\,\eqref{eq:Cosets}, under point-wise (in $k_x$) multiplication by $ \op{GL} (2,\bbC) $, rather than of the single map $ k_x \mapsto \ul{A} (k_x) $. Equivalently, it is a property of the von Neumann unitary $ k_x \mapsto U(k_x) $, cf.\,Def.\,\ref{def:OperativeBC}, once the claimed bijective correspondence between $ U $ and $ [\ul{A}] $ is proven.

We thus adopt the following strategy. First, the correspondence $ U \leftrightarrow [\ul{A}] $ is shown. Then, the image of the four families DD--NN under $ \upsilon: [\ul{A}] \mapsto U $, cf.\,\eqref{eq:UFormula}, is found by explicit calculation. Particle-hole symmetry is imposed on the resulting $ k_x \mapsto U(k_x) $ by
\begin{equation}
	U (k_x) = \overline{U (-k_x)} \,, \label{eq:USymmetryAgain}
\end{equation}
cf.\,\eqref{eq:USymmetry}, completing the proof of Prop.\,\ref{prop:UBCs}. The constraints stemming from \eqref{eq:USymmetryAgain} are finally pulled back to $ k_x \mapsto \ul{A} (k_x) $, yielding Prop.\,\ref{prop:PHSBCs}.
\begin{lemma} \label{lem:UpsilonBijection}
	The map $ \upsilon: \ul{A} \mapsto U $, defined by Eq.\,\eqref{eq:UFormula}, between boundary conditions $ \ul{A} $ and von Neumann unitaries $ U $, cf.\,Def.\,\ref{def:OperativeBC}, is:
	\begin{itemize}
		\item[(i)] Well-defined as a map on orbits $ [\ul{A}] $, cf.\,Eq.\,\eqref{eq:Cosets};
		
		\item[(ii)] A bijection $ [\ul{A}] \leftrightarrow U $.
	\end{itemize}
\end{lemma}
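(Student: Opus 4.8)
The plan is to treat $\upsilon$ first as a map defined on individual boundary conditions $\ul A$ and only then pass to orbits. The first step is to verify that the right-hand side of \eqref{eq:UFormula} is well posed: writing $A_\pm(k_x):=A_1(k_x)\pm A_2(k_x)$, I need $A_+(k_x)$ invertible for a.e.\ $k_x$ and $U(k_x):=A_+(k_x)^{-1}A_-(k_x)\in U(2)$. Invertibility follows from the rank condition \eqref{eq:RankCondition} (which reads $\op{rk}\ul A(k_x)=2$ a.e., since $M$ has full row rank) together with Claim \ref{claim:RkA1PlusA2}, giving $\op{rk}\ul A(k_x)=2\Leftrightarrow\op{rk}A_\pm(k_x)=2$. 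Unitarity is then a one-line consequence of the self-adjointness constraint \eqref{eq:CondOnA}: expanding $A_-A_-^*$ and $A_+A_+^*$ and cancelling the cross terms $A_1A_2^*+A_2A_1^*$ gives $A_-A_-^*=A_+A_+^*=A_1A_1^*+A_2A_2^*$, hence $UU^*=A_+^{-1}(A_-A_-^*)(A_+^{-1})^*=A_+^{-1}A_+A_+^*(A_+^{-1})^*=\id$, and $U^*U=\id$ as well since $U$ is $2\times2$. At the finitely many exceptional fibres where a rank drops, $U$ is defined by continuity — it is locally a fractional linear transformation of $k_x$ taking values in the compact group $U(2)$, so the apparent poles are removable; I will flag this but not dwell on it.

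For item (i), well-definedness on orbits \eqref{eq:Cosets} is then immediate. If $\tilde{\ul A}(k_x)=G(k_x)\ul A(k_x)$ with $G(k_x)\in\op{GL}(2,\bbC)$, left multiplication respects the juxtaposition \eqref{eq:JuxtaposeA1A2}, so $\tilde A_i=GA_i$ and $\tilde A_\pm=GA_\pm$; hence $\tilde U=(GA_+)^{-1}(GA_-)=A_+^{-1}G^{-1}GA_-=A_+^{-1}A_-=U$. Thus $\upsilon$ descends to a map $[\ul A]\mapsto U$.

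For item (ii) I will invert the formula. From $A_1=\tfrac12(A_++A_-)$, $A_2=\tfrac12(A_+-A_-)$ and $A_-=A_+U$ one gets $\ul A(k_x)=\tfrac12\,A_+(k_x)\,(\,\id+U(k_x),\ \id-U(k_x)\,)$, so $U$ determines $\ul A$ exactly up to left multiplication by the invertible factor $\tfrac12A_+(k_x)$. Consequently, if $\upsilon(\ul A)=\upsilon(\tilde{\ul A})$ then both $\ul A$ and $\tilde{\ul A}$ equal the same $U$-built object up to a pointwise $\op{GL}(2,\bbC)$ factor, i.e.\ $\tilde{\ul A}=(\tilde A_+A_+^{-1})\,\ul A$ with $\tilde A_+A_+^{-1}\in\op{GL}(2,\bbC)$ a.e., so $[\ul A]=[\tilde{\ul A}]$: injectivity. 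Surjectivity of $\upsilon$ onto its image is tautological; the explicit identification of that image is the content of Prop.\ \ref{prop:UBCs} and is not needed here. The one genuinely delicate point is the bookkeeping at the exceptional fibres — checking that the continuous extensions of $U$ and $\tilde U$ still coincide there and that the a.e.\ identities above propagate to all of $\bbR$ — which is routine but is where the care lies; the algebraic content of the lemma is otherwise just the cancellation of $G$ and the inversion of the Cayley-type formula \eqref{eq:UFormula}.
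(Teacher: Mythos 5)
Your proposal is correct and follows essentially the same route as the paper: part (i) is the identical cancellation of $G(k_x)$, and your inversion $\ul{A} = \tfrac{1}{2}A_+\,(\id + U,\ \id - U)$ identifies the gauge factor $G = \tilde{A}_+ A_+^{-1}$, which is precisely the $G$ the paper constructs (via Claim \ref{claim:RkA1PlusA2}) before summing and subtracting to get $A_i = G\tilde{A}_i$. The only additions are your explicit checks of invertibility of $A_+$ and unitarity of $U$ from \eqref{eq:CondOnA}, which the paper treats as prerequisites established elsewhere rather than as part of this lemma.
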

This Lemma rests on the following claim.
\begin{claim} \label{claim:RkA1PlusA2}
	Let 
	\begin{equation}
		\ul{A} = (A_1, A_2) \in \op{Mat}_{2 \times 4} (\bbC) \,, \label{eq:FormOfA}
	\end{equation}
	with $ A_i \in \op{Mat}_{2} (\bbC) $, $ (i = 1,2) $ and
	\begin{equation}
		A_1 A_2^* + A_2 A_1^* = 0 \,. \label{eq:PropOfA}
	\end{equation} 
	Then any among $ \op{rk} \ul{A} = 2 $, $ \op{rk} (A_1 + A_2) = 2 $, $ \op{rk} (A_1 - A_2) = 2 $ implies the others.
\end{claim}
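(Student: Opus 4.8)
The plan is to reduce all three rank conditions to the invertibility of one and the same positive semidefinite $2\times 2$ matrix. The starting point is the elementary linear-algebra fact that for every complex matrix $M$ one has $\op{rk}(M)=\op{rk}(MM^*)$: indeed $MM^*x=0$ forces $\|M^*x\|^2=x^*MM^*x=0$, so $\ker(MM^*)=\ker(M^*)$, whence $\op{rk}(MM^*)=\op{rk}(M^*)=\op{rk}(M)$. I would state and use this as a one-line lemma.

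Next I would compute the three ``Gram-type'' products. Writing $\ul{A}=(A_1,A_2)$ one has immediately $\ul{A}\,\ul{A}^*=A_1A_1^*+A_2A_2^*$. For $A_\pm:=A_1\pm A_2$,
\begin{equation*}
A_\pm A_\pm^* = A_1A_1^* + A_2A_2^* \pm \bigl(A_1A_2^* + A_2A_1^*\bigr) = A_1A_1^* + A_2A_2^*,
\end{equation*}
the cross term vanishing precisely by the isotropy hypothesis $A_1A_2^*+A_2A_1^*=0$ (Eq.\,\eqref{eq:PropOfA}). Hence $\ul{A}\,\ul{A}^*=A_+A_+^*=A_-A_-^*=:P$, a single Hermitian $2\times 2$ matrix attached to $\ul{A}$.

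Applying the rank identity of the first step then gives $\op{rk}(\ul{A})=\op{rk}(P)$, $\op{rk}(A_+)=\op{rk}(A_+A_+^*)=\op{rk}(P)$, and $\op{rk}(A_-)=\op{rk}(A_-A_-^*)=\op{rk}(P)$; so the three ranks always coincide, and in particular each of the conditions $\op{rk}\ul{A}=2$, $\op{rk}(A_1+A_2)=2$, $\op{rk}(A_1-A_2)=2$ holds if and only if $\op{rk}P=2$, which proves the claim (recall $A_1\pm A_2$ are square, so rank $2$ is the same as invertibility). I do not anticipate any genuine obstacle: the only point that deserves an explicit line of justification is the identity $\op{rk}(MM^*)=\op{rk}(M)$ over $\bbC$, and the rest is the two-line computation above together with the observation that the isotropy condition is exactly what cancels the unwanted cross terms.
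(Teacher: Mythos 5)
Your proof is correct and coincides with the paper's own argument: both rest on the identity $\op{rk} M = \op{rk}(M M^*)$ and on the observation that $\ul{A}\,\ul{A}^* = (A_1 \pm A_2)(A_1 \pm A_2)^* = A_1 A_1^* + A_2 A_2^*$, the cross terms cancelling by \eqref{eq:PropOfA}. Nothing to add.
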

\begin{proof}
	For any matrix $M$, we have
	\begin{equation}
		\op{rk} M = \op{rk} (M M^*) \,, \label{eq:RankProperty}
	\end{equation}
	both being equal to $ \op{rk} M^* $. Moreover, for any matrix $M$ having two rows, we have
	\begin{equation}
		\op{rk} (M M^*) = 2 \ \iff \ \det (M M^*) \neq 0 \,,
	\end{equation}
	since $ M M^* $ is a square matrix of order $2$. We note that
	\begin{equation}
		\begin{gathered}
			\ul{A} \, \ul{A}^* = A_1 A_1^* + A_2 A_2^* \,, \\
			(A_1 \pm A_2) (A_1 \pm A_2)^* = A_1 A_1^* + A_2 A_2^* \,,
		\end{gathered}
	\end{equation}
	where the first equation comes from \eqref{eq:FormOfA} and the second one from \eqref{eq:PropOfA}.
	
	Applying now the above to $ M = \ul{A} $ and $ M = A_1 + A_2 $, or $ M = \ul{A} $ and $ M = A_1 - A_2 $, the claim follows.
\end{proof}
\begin{proof}[Proof of Lemma \ref{lem:UpsilonBijection}]
	\begin{itemize}
		\item[(i)] The map $ \upsilon $ is well-defined on orbits because
		\begin{align}
			\upsilon (G \ul{A}) &= (G A_1 + G A_2)^{-1} (G A_1 - G A_2) = (A_1 + A_2)^{-1} G^{-1} G (A_1 - A_2) \nonumber \\
			&= (A_1 + A_2)^{-1} (A_1 - A_2) = \upsilon (\ul{A}) \,,
		\end{align}
		namely $ \ul{A} $ and $ G \ul{A} $ share the same image through $ \upsilon $, for all $ G: k_x \mapsto G(k_x) $ with $ G(k_x) $ invertible a.e.\,in $ k_x $.
		
		\item[(ii)] By picking a target space coincident with its range, we make $ \upsilon: [\ul{A}] \mapsto U $ surjective by construction. All that is left to check is injectivity, which holds iff
		\begin{equation}
			\upsilon (\ul{A}) = U = \tilde{U} = \upsilon (\ul{\tilde{A}}) \ \Rightarrow \ \ul{A}, \ul{\tilde{A}} \in [\ul{A}] \,.
		\end{equation}
		The equality $ U = \tilde{U} $ means
		\begin{equation}
			(A_1 + A_2)^{-1} (A_1 - A_2) = (\tilde{A}_1 + \tilde{A}_2)^{-1} (\tilde{A}_1 - \tilde{A}_2) \,. \label{eq:UeqTildeU}
		\end{equation}
		By assumption of self-adjointness, $ \ul{A}, \ul{\tilde{A}} $ are maximal rank a.e.\,in $k_x$. Then, by Claim \ref{claim:RkA1PlusA2} so are $ A_1 + A_2 $ and $ \tilde{A}_1 + \tilde{A}_2 $. Therefore, there exists $G: k_x \mapsto G(k_x)$, $ G(k_x) $ invertible a.e.\,in $ k_x $, such that
		\begin{equation}
			(A_1 + A_2) = G (\tilde{A}_1 + \tilde{A}_2) \,. \label{eq:G}
		\end{equation}
		Indeed, $G$ is explicitly given by
		\begin{equation}
			G = (A_1 + A_2) (\tilde{A}_1 + \tilde{A}_2)^{-1} \,.
		\end{equation}
		Writing $ A_1+A_2 $ as in \eqref{eq:G} inside of \eqref{eq:UeqTildeU} results in
		\begin{gather}
			(\tilde{A}_1 + \tilde{A}_2)^{-1} G^{-1} (A_1 - A_2) = (\tilde{A}_1 + \tilde{A}_2)^{-1} (\tilde{A}_1 - \tilde{A}_2) \nonumber \\
			\longleftrightarrow \quad (A_1 - A_2) = G (\tilde{A}_1 - \tilde{A}_2) \,. \label{eq:G2}
		\end{gather}
		Summing and subtracting Eqs.\,\eqref{eq:G}, \eqref{eq:G2} yields
		\begin{equation}
			A_1 = G \tilde{A}_1 \quad \land \quad A_2 = G \tilde{A}_2 \,,
		\end{equation}
		namely $ \ul{A} = G \ul{\tilde{A}} $ or $ \ul{\tilde{A}} = G^{-1} \ul{A} $ with $ G $ invertible, as desired.
	\end{itemize}	
\end{proof}
The calculations leading to the proof of Prop.\,\ref{prop:UBCs} are simplified by recalling the following elementary facts in linear algebra, gathered in a claim. We recall the adjugate of a matrix $ M \in \op{Mat}_n (\bbC) $ for $ n=2 $:
\begin{equation}
	M =
	\begin{pmatrix}
		m_{11} & m_{12} \\
		m_{21} & m_{22}
	\end{pmatrix} \,, \qquad
	\op{adj} (M) =
	\begin{pmatrix}
		m_{22} & -m_{12} \\
		-m_{21} & m_{11}
	\end{pmatrix} \,. \label{eq:DefAlmostInverse}
\end{equation} 
\begin{claim} \label{claim:Inverse}
	The operation $ \op{adj} $ enjoys the following properties:
	\begin{enumerate}
		\item $ M^{-1} = (\det M)^{-1} \op{adj} (M) $, for all $M \in \op{GL} (n, \bbC) $;
		
		\item $ \op{adj} (M+N) = \op{adj} (M) + \op{adj} (N) $, for all $ M,N \in \op{Mat}_n (\bbC) $;
		
		\item Let $ M, N \in \op{Mat}_2 (\bbC) $ be given in terms of column vectors $ m_i, n_i \in \bbC^2 \ (i=1,2) $, i.e., $ M = (m_1,m_2) $ and $ N = (n_1, n_2) $. Then
		\begin{equation}
			\op{adj} (M) \cdot N - \op{adj} (N) \cdot M = 
			\begin{pmatrix}
				- m_1 \wedge n_2 - m_2 \wedge n_1 & - 2 m_2 \wedge n_2 \\
				2 m_1 \wedge n_1 & m_1 \wedge n_2 + m_2 \wedge n_1 
			\end{pmatrix} \,, \label{eq:AdjProp3}
		\end{equation}
		where 
		\begin{equation}
			z \wedge w = 
			\begin{pmatrix}
				z_1 \\
				z_2
			\end{pmatrix} \wedge
			\begin{pmatrix}
				w_1 \\
				w_2
			\end{pmatrix}
			\coloneqq \det (z,w) = z_1 w_2 - w_1 z_2 \,, \label{eq:DefWedge}
		\end{equation}
		is the wedge product $ \wedge: \bbC^2 \times \bbC^2 \to \bbC^2 $.
	\end{enumerate}
\end{claim}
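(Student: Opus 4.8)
The plan is to verify the three items of Claim~\ref{claim:Inverse} by direct computation, using that for $2\times 2$ matrices the adjugate, as given by \eqref{eq:DefAlmostInverse}, is an explicit linear function of the matrix entries. Nothing conceptual is involved; the only care needed is consistent index and sign bookkeeping, which will be the sole (very mild) obstacle, and it lives entirely in item~3.

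For item~1, I would multiply out $M\,\op{adj}(M)$ and $\op{adj}(M)\,M$ entrywise from \eqref{eq:DefAlmostInverse}: writing $M=\bigl(\begin{smallmatrix} m_{11} & m_{12}\\ m_{21} & m_{22}\end{smallmatrix}\bigr)$, both products equal $(m_{11}m_{22}-m_{12}m_{21})\,\id = (\det M)\,\id$. When $M\in\op{GL}(2,\bbC)$ we have $\det M\neq 0$, so dividing by $\det M$ gives $M^{-1}=(\det M)^{-1}\op{adj}(M)$. For item~2, I would simply note from \eqref{eq:DefAlmostInverse} that $\op{adj}(M)_{11}=m_{22}$, $\op{adj}(M)_{12}=-m_{12}$, $\op{adj}(M)_{21}=-m_{21}$, $\op{adj}(M)_{22}=m_{11}$, so each entry of $\op{adj}(M)$ is, up to sign, one entry of $M$; hence $M\mapsto\op{adj}(M)$ is $\bbC$-linear on $\op{Mat}_2(\bbC)$, and additivity is the desired special case.

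For item~3, I would set $m_1=(m_{11},m_{21})^{T}$, $m_2=(m_{12},m_{22})^{T}$ (and similarly for $n_i$), so that $\op{adj}(M)=\bigl(\begin{smallmatrix} m_{22} & -m_{12}\\ -m_{21} & m_{11}\end{smallmatrix}\bigr)$ and likewise for $N$, then compute $\op{adj}(M)\,N-\op{adj}(N)\,M$ entry by entry. Each of the four entries collapses to a combination of $2\times 2$ determinants after cancellation; for instance the $(2,1)$ entry is
\begin{equation}
  -m_{21}n_{11}+m_{11}n_{21}-\bigl(-n_{21}m_{11}+n_{11}m_{21}\bigr)=2\bigl(m_{11}n_{21}-n_{11}m_{21}\bigr)=2\,m_1\wedge n_1 \,,
\end{equation}
using the definition \eqref{eq:DefWedge}, and the $(1,1)$, $(1,2)$, $(2,2)$ entries are obtained in exactly the same way, yielding precisely the right-hand side of \eqref{eq:AdjProp3}. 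The whole argument is a short and routine verification; the only thing requiring attention is keeping the two columns of each matrix — and the signs entering the wedge products — straight throughout the $(1,1)$ and $(2,2)$ entries, which are sums rather than the simpler scalar multiples appearing in $(1,2)$ and $(2,1)$.
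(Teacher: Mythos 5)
Your proposal is correct and follows essentially the same route as the paper: items 1 and 2 are dismissed as elementary consequences of the explicit formula \eqref{eq:DefAlmostInverse}, and item 3 is a direct entrywise computation of $\op{adj}(M)N-\op{adj}(N)M$ (the paper computes $\op{adj}(M)N$ once and antisymmetrizes, which is the same calculation). The only cosmetic difference is the indexing convention for the columns $m_i$, which does not affect the result.
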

\begin{proof}
	1.\,is known and 2.\,is obvious.
	
	3.\,Upon writing $ m_i = (m_{i1}, m_{i2})^T $ in terms of components, the entries of $M$ end up transposed as compared to \eqref{eq:DefAlmostInverse}, and likewise for $N$. The property is then shown by direct calculation:
	\begin{align}
		\op{adj} (M) \cdot N &= 
		\begin{pmatrix}
			m_{22} & -m_{21} \\
			-m_{12} & m_{11} 
		\end{pmatrix}
		\begin{pmatrix}
			n_{11} & n_{21} \\
			n_{12} & n_{22} 
		\end{pmatrix} \nonumber \\
		&= 
		\begin{pmatrix}
			-m_2 \wedge n_1 & -m_2 \wedge n_2 \\
			m_1 \wedge n_1 & m_1 \wedge n_2 
		\end{pmatrix} \,,
	\end{align}	
	cf.\,Eqs.\,(\ref{eq:DefAlmostInverse}, \ref{eq:DefWedge}).
	
	By interchanging $ M,N $ and taking differences, we obtain \eqref{eq:AdjProp3}
\end{proof}
\begin{proof}[Proof of Prop.\,\ref{prop:UBCs}]
	(i) Using the adjugation map $ \op{adj} (\cdot) $, cf.\,\eqref{eq:DefAlmostInverse}, and its properties $1$ and $2$ in Claim \ref{claim:Inverse}, we notice that
	\begin{align}
		(A_1 + A_2)^{-1} (A_1 - A_2) &= \frac{1}{\det (A_1 + A_2)} ( \op{adj} (A_1) + \op{adj} (A_2) ) (A_1 - A_2) \label{eq:USmartFormula} \\
		&= \frac{1}{\det (A_1 + A_2)} \big( \op{adj} (A_1) A_1 - \op{adj} (A_1) A_2 + \op{adj} (A_2) A_1 - \op{adj} (A_2) A_2 \big) \nonumber \\
		&= \frac{1}{\det (A_1 + A_2)} \big( (\det A_1 - \det A_2) \id_2 + \op{adj} (A_2) A_1 - \op{adj} (A_1) A_2 \big) \,. \nonumber 
	\end{align}
	The calculation of
	\begin{equation}
		U = (A_1 + A_2)^{-1} (A_1 - A_2)
	\end{equation}
	thus proceeds by explicitly writing out $ A_1, A_2, A_1+A_2 $ and their determinants in cases DD--NN, and applying the formula \eqref{eq:USmartFormula}. The following expressions
	\begin{equation}
		\ul{A} (k_x) = (A_1 (k_x), \, A_2 (k_x)) \,, \quad A_1 (k_x) = (a_1, \, a_1' + k_x b_1) \,, \quad A_2 (k_x) = (a_2' + k_x b_2, \, a_2) \,,
	\end{equation}
	cf.\,(\ref{eq:JuxtaposeA1A2}, \ref{eq:ExplicitUlA}) will come in handy.
	\begin{itemize}[itemindent=1em]
		\item[DD:] The starting point is Eq.\,\eqref{eq:AUnderlined}. By Lm.\,\ref{lem:Dictionary}, family DD has $ a_1 = a_2 = 0 $. Self-adjointness is then achieved for all $ a_1',a_2',b_1,b_2 \in \bbC^2 $, cf.\,Prop.\,\ref{prop:BCs}. The general element $ \ul{A}: k_x \mapsto \ul{A} (k_x) $ thus has 
		\begin{equation} \label{eq:ACaseDD}
			\ul{A} (k_x) = 
			(0, \, a_1' + k_x b_1, \, a_2' + k_x b_2, \, 0 )
			\equiv (A_1 (k_x), \, A_2 (k_x)) \,.
		\end{equation}
		The associated von Neumann unitary $ U = \upsilon(\ul{A}): k_x \mapsto U(k_x) $ is given by
		\begin{equation}
			U (\ul{A} (k_x)) \coloneqq (A_1(k_x) +  A_2 (k_x))^{-1} (A_1 (k_x) - A_2 (k_x)) \,,
		\end{equation}
		cf.\,\eqref{eq:UFormula}, where explicitly
		\begin{align}
			A_1(k_x) +  A_2 (k_x) &= 
			(a_2' + k_x b_2, \, a_1' + k_x b_1) \,, \nonumber \\ 
			A_1(k_x) -  A_2 (k_x) &= 
			(-a_2' - k_x b_2, \, a_1' + k_x b_1) \,,
		\end{align}
		by Eq.\,\eqref{eq:ACaseDD}.
		
		In this simple case, there is no need to resort to the general formula \eqref{eq:USmartFormula}. Indeed, observing that
		\begin{equation}
			A_1(k_x) -  A_2 (k_x) = (A_1(k_x) +  A_2 (k_x)) \cdot 
			\begin{pmatrix}
				-1 & 0 \\
				0 & 1 \\
			\end{pmatrix} \,,
		\end{equation}
		immediately yields
		\begin{equation}
			U (k_x) = 
			\begin{pmatrix}
				-1 & 0 \\
				0 & 1
			\end{pmatrix} \equiv J \label{eq:UCaseDD}
		\end{equation}
		by Eq.\,\eqref{eq:UFormula}. Eq.\,\eqref{eq:UCaseDD} holds for all $ k_x \in \bbR $ and $ a_1',a_2',b_1,b_2 \in \bbC^2 $.
		
		\item[ND:] Lm.\,\ref{lem:Dictionary} and self-adjointness, cf.\,Prop.\,\ref{prop:BCs}, produce the general form
		\begin{equation}
			\ul{A} (k_x) = (a_1, \, a_1' + k_x b_1, \, \im (k_x \lambda + \lambda') a_1 - \bar{\alpha} (a_1' + k_x b_1), \, \alpha a_1 ) \equiv (A_1 (k_x), \, A_2 (k_x)) \label{eq:ACaseND}
		\end{equation}
		for $ \ul{A} (k_x) $, cf.\,\eqref{eq:AUnderlined}. It follows that
		\begin{equation}
			\begin{array}{lcl}
				A_1 (k_x) & = & (a_1 \,, \ a_1' + k_x b_1) \\
				A_2 (k_x) & = & (\im (k_x \lambda + \lambda') a_1 - \bar{\alpha} (a_1' + k_x b_1) \,, \ \alpha a_1) \\
				(A_1 + A_2) (k_x) & = & ((1 +\im (k_x \lambda + \lambda')) a_1 - \bar{\alpha} (a_1' + k_x b_1) \,, \ \alpha a_1 + (a_1' + k_x b_1) ) \,,
			\end{array}
		\end{equation}
		whence
		\begin{equation}
			\begin{array}{lcl}
				\det A_1 (k_x) & = & \chi \\
				\det A_2 (k_x) & = & + | \alpha |^2 \chi \\
				\det (A_1 + A_2) (k_x) & = & ( 1 + \im (k_x \lambda + \lambda') + | \alpha |^2 ) \chi \,,
			\end{array}
		\end{equation}
		where
		\begin{equation}
			\chi \coloneqq a_1 \wedge (a_1' + k_x b_1) \,,
		\end{equation}
		for the scope of this proof. Moreover,
		\begin{equation}
			\op{adj} (A_2) A_1 - \op{adj} (A_1) A_2 = \chi
			\begin{pmatrix}
				- \im (k_x \lambda + \lambda') & - 2 \alpha \\
				2 \bar{\alpha} & \im (k_x \lambda + \lambda')
			\end{pmatrix} \,.
		\end{equation}
		Eq.\,\eqref{eq:UND} now follows by \eqref{eq:USmartFormula} and singling out $J$, cf.\,\eqref{eq:JSummand}, as a separate summand.
		
		\item[NN:] Within this family, Lm.\,\ref{lem:Dictionary} and self-adjointness, cf.\,Prop.\,\ref{prop:BCs}, produce the general form
		\begin{align}
			\ul{A} (k_x) &= (a_1, \, (\mu' + k_x \mu) a_1 + \im (\lambda_1' + k_x \lambda_1) a_2, \, \im (\lambda_2' + k_x \lambda_2) a_1 - (\bar{\mu}' + k_x \bar{\mu}) a_2, \, a_2 ) \nonumber \\
			&\equiv (A_1 (k_x), \, A_2 (k_x)) \label{eq:ACaseNN} \,.
		\end{align}
		It follows that
		\begin{equation}
			\begin{array}{lcl}
				A_1 (k_x) & = & (a_1 \,, \ (\mu' + k_x \mu) a_1 + \im (\lambda_1' + k_x \lambda_1) a_2) \\
				A_2 (k_x) & = & (\im (\lambda_2' + k_x \lambda_2) a_1 - (\bar{\mu}' + k_x \bar{\mu}) a_2 \,, \ a_2) \\
				(A_1 + A_2) (k_x) & = & 
				\begin{pmatrix}
					(1 +\im (\lambda_2' + k_x \lambda_2)) a_1 - (\bar{\mu}' + k_x \bar{\mu}) a_2 \\
					(\mu' + k_x \mu) a_1 + ( 1 + \im (\lambda_1' + k_x \lambda_1) ) a_2
				\end{pmatrix}^T \,,
			\end{array}
		\end{equation}
		whence
		\begin{equation}
			\begin{array}{lcl}
				\det A_1 (k_x) & = & \im (\lambda_1' + k_x \lambda_1) (a_1 \wedge a_2) \\
				\det A_2 (k_x) & = & \im (\lambda_2' + k_x \lambda_2) (a_1 \wedge a_2) \\
				\det (A_1 + A_2) (k_x) & = & \big[ ( 1 + \im (\lambda_1' + k_x \lambda_1) ) (1 +\im (\lambda_2' + k_x \lambda_2)) \\
				& & + (\mu' + k_x \mu) (\bar{\mu}' + k_x \bar{\mu}) \big] (a_1 \wedge a_2) \,.
			\end{array}
		\end{equation}
		Moreover,
		\begin{align}
			&\op{adj} (A_2) A_1 - \op{adj} (A_1) A_2 = (a_1 \wedge a_2) \times \nonumber \\
			& 
			\begin{pmatrix}
				2 (1 + \im (\lambda_1' + k_x \lambda_1) ) - \det (A_1 + A_2) & 2 (\mu' + k_x \mu) \\
				2 (\bar{\mu}' + k_x \bar{\mu}) & -2 (1 + \im (\lambda_2' + k_x \lambda_2) ) + \det (A_1 + A_2)
			\end{pmatrix} \,.
		\end{align}
		It then follows by \eqref{eq:USmartFormula} that
		\begin{equation}
			U (k_x) = J + \hat{U} (k_x) \,,
		\end{equation}
		cf.\,\eqref{eq:JSummand}, where
		\begin{align}
			\hat{U} (k_x) =& \frac{2}{( 1 + \im (\lambda_1' + k_x \lambda_1) ) (1 +\im (\lambda_2' + k_x \lambda_2)) + (\mu' + k_x \mu) (\bar{\mu}' + k_x \bar{\mu})} \nonumber \\
			&
			\begin{pmatrix}
				1 + \im (\lambda_1' + k_x \lambda_1) & \mu' + k_x \mu \\
				\bar{\mu}' + k_x \bar{\mu} & - 1 - \im (\lambda_2' + k_x \lambda_2)
			\end{pmatrix} \,.
		\end{align}
		Noticing that 
		\begin{equation}
			\hat{U} (k_x) = - \frac{2}{\det \tilde{U} (k_x)} \tilde{U} (k_x) \,, \label{eq:HatUNN}
		\end{equation}
		where
		\begin{equation}
			\tilde{U} (k_x) = 
			\begin{pmatrix}
				1 + \im (\lambda_1' + k_x \lambda_1) & \mu' + k_x \mu \\
				\bar{\mu}' + k_x \bar{\mu} & - 1 - \im (\lambda_2' + k_x \lambda_2)
			\end{pmatrix}
		\end{equation}
		finally yields \eqref{eq:UNN}.
	\end{itemize}
	(ii) The second statement of the proposition concerns particle-hole symmetry of the von Neumann unitaries obtained above. It is proven by imposing
	\begin{equation}
		U (k_x) = \overline{U (-k_x)} \label{eq:USymmetryThird}
	\end{equation}
	to those unitaries.
	\begin{itemize}[itemindent=1em]
		\item[DD:] By \eqref{eq:UCaseDD}, $ U(k_x) $ is constant and real-valued for all $ k_x $. Eq.\,\eqref{eq:USymmetryThird} is thus always satisfied, irrespective of the $ a_1',a_2',b_1,b_2 $ that appeared in the $ \ul{A} (k_x) $ we started with, cf.\,\eqref{eq:ACaseDD}.
		
		\item[ND:] It is immediate from \eqref{eq:UND} that \eqref{eq:USymmetryThird} is met for all $ k_x $ only if
		\begin{equation}
			\lambda' = 0 \,, \qquad \alpha \in \bbR \,.
		\end{equation}
		
		\item[NN:] Given that $ J = \overline{J} $ is itself PH-symmetric, condition \eqref{eq:USymmetryThird} translates to
		\begin{equation}
			\hat{U} (k_x) = \overline{\hat{U} (-k_x)} \,, \label{eq:ChiSymmetry}
		\end{equation}
		a.e.\,in $ k_x $. Given moreover the structure \eqref{eq:HatUNN} of $ \hat{U} (k_x) $, the condition for particle-hole symmetry further reduces to
		\begin{equation}
			\tilde{U} (k_x) = \overline{\tilde{U} (-k_x)} \,.
		\end{equation}
		This is explicitly to say
		\begin{align}
			\tilde{U} (k_x) &= 
			\begin{pmatrix}
				1 + \im (\lambda_1' + k_x \lambda_1) & \mu' + k_x \mu \\
				\bar{\mu}' + k_x \bar{\mu} & -1 -\im (\lambda_2' + k_x \lambda_2)
			\end{pmatrix} \nonumber \\
			&=
			\begin{pmatrix}
				1 - \im (\lambda_1' - k_x \lambda_1) & \bar{\mu}' - k_x \bar{\mu} \\
				\mu' - k_x \mu & -1 + \im (\lambda_2' - k_x \lambda_2)
			\end{pmatrix}
			= \overline{\tilde{U} (-k_x)} \,,
		\end{align}
		recalling $ \lambda_i, \lambda_i' \in \bbR \ (i = 1,2) $ by hypothesis. It is thus readily seen that
		\begin{equation}
			\lambda_1' = 0 \,, \qquad \lambda_2' = 0 \,, \qquad \mu \in \im \bbR \,, \qquad \ \mu' \in \bbR \,,
		\end{equation}
		as claimed.
	\end{itemize}
\end{proof}
\begin{proof}[Proof of Prop.\,\ref{prop:PHSBCs}]
	The claims of this proposition are simply obtained by imposing the conditions detailed at the end of Prop.\,\ref{prop:UBCs} on the parametrizations DD--NN of Prop.\,\ref{prop:BCs}.
\end{proof}

We close this appendix by studying when $ \ul{A} (k_x) $ is maximal rank. Let us introduce the property
\begin{equation}
	P(k_x): \qquad \text{rk} (A_1 (k_x), A_2 (k_x)) = 2
\end{equation}
and consider the statements
\begin{itemize}
	\item[(a)] $ P(k_x) $ holds true a.e. in $ k_x $;
	
	\item[(b)] $ P(k_x) $ holds true for all $ k_x $.
\end{itemize}
\begin{lemma}
	\label{lem:MaxRank}
	With reference to the four cases of Lemma \ref{lem:Dictionary}:
	\begin{itemize}[itemindent=1em]
		\item[DD:] (a) fails iff all of the following are met:
		\begin{equation}
			a_1' \wedge a_2' = 0 \,, \qquad a_1' \wedge b_2 + b_1 \wedge a_2' = 0 \,, \qquad b_1 \wedge b_2 = 0 \,,
		\end{equation}
		where the $\wedge$ product in $ \bbC^2 $ is defined as $ z \wedge w \coloneqq z_1 w_2 - w_1 z_2 $, for $ z = (z_1,z_2) \in \bbC^2 $ (same for $ w $).
		
		(b) holds true iff 
		\begin{equation}
			D(k_x) = (a_1' \wedge a_2') + (a_1' \wedge b_2 + b_1 \wedge a_2') k_x +  (b_1 \wedge b_2) k_x^2
		\end{equation}
		has no real zeros $ k_x $.
		
		\item[ND:] (a) fails iff
		\begin{equation}
			a_1 \wedge a_1' = 0 \,, \qquad a_1 \wedge b_1 = 0\,.
		\end{equation}
		
		(b) holds true iff
		\begin{equation}
			E(k_x) = a_1 \wedge (a_1' + k_x b_1) = (a_1 \wedge a_1') + (a_1 \wedge b_1) k_x \label{eq:SAAEND}
		\end{equation}
		has no real zeros $k_x$.
		
		\item[DN:] likewise ($1 \leftrightarrow 2$).
		
		\item[NN:] (a,b) both hold true.
	\end{itemize}
\end{lemma}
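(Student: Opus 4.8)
The plan is to reduce the rank condition on the $2\times 4$ matrix $\ul{A}(k_x)=(A_1(k_x),A_2(k_x))$ to a statement about a single polynomial in $k_x$, namely $P_+(k_x):=\det A_+(k_x)$ with $A_+:=A_1+A_2$, and then read off the four cases from the explicit coefficients $c_j^\pm$ tabulated in \eqref{eq:CoefficientsOfPpmDD}--\eqref{eq:CoefficientsOfPpmNN}. The first step is the fiberwise equivalence $\op{rk}\ul{A}(k_x)=2\iff\det A_+(k_x)\neq 0$. The self-adjointness relation \eqref{eq:CondOnA}, $A_1A_2^*+A_2A_1^*=0$, holds a.e.\ in $k_x$; since its left-hand side is polynomial in $k_x$, it holds for \emph{every} $k_x$. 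Hence at each fixed $k_x$ the hypothesis \eqref{eq:PropOfA} of Claim \ref{claim:RkA1PlusA2} is met, and that claim gives $\op{rk}\ul{A}(k_x)=2\iff\op{rk}A_+(k_x)=2\iff\det A_+(k_x)\neq 0$. Since $P_+(k_x)=\det A_+(k_x)$ is a polynomial of degree $\leq 2$ (Lemma \ref{lem:CoefficientsOfPpm}), statement (a) — that $P(k_x)$ holds a.e. — is equivalent to $P_+\not\equiv 0$, i.e.\ to the non-vanishing of at least one of $c_0^+,c_1^+,c_2^+$; and statement (b) — that $P(k_x)$ holds everywhere — is equivalent to $P_+$ having no real zero.

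It then remains to specialize family by family. In case DD one does not even need the reduction: by Lemma \ref{lem:Dictionary}, $\ul{A}(k_x)=(0,\,a_1'+k_xb_1,\,a_2'+k_xb_2,\,0)$, so $\op{rk}\ul{A}(k_x)$ equals the rank of the $2\times 2$ block $(a_1'+k_xb_1,\,a_2'+k_xb_2)$, whose determinant is exactly $D(k_x)=(a_1'\wedge a_2')+(a_1'\wedge b_2+b_1\wedge a_2')k_x+(b_1\wedge b_2)k_x^2$; thus (a) fails iff $D\equiv 0$, i.e.\ iff all three wedge coefficients vanish, and (b) holds iff $D$ has no real zero (equivalently $P_+=-D$ via \eqref{eq:CoefficientsOfPpmDD}). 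In case ND I would invoke the factorization \eqref{eq:PTilde}: $P_+(k_x)=E(k_x)\,\tilde P_+(k_x)$ with $E(k_x)=a_1\wedge(a_1'+k_xb_1)=(a_1\wedge a_1')+(a_1\wedge b_1)k_x$ and $\tilde P_+(k_x)=\im\lambda k_x+(1+|\alpha|^2+\im\lambda')$. Since $\op{Re}\tilde P_+=1+|\alpha|^2>0$, the factor $\tilde P_+$ is neither identically zero nor ever zero on $\bbR$, so $P_+\equiv 0\iff E\equiv 0\iff a_1\wedge a_1'=a_1\wedge b_1=0$, and $P_+$ has no real zero iff $E$ does not; this gives (a) and (b), and case DN follows by the substitution $1\leftrightarrow 2$ of Prop.\,\ref{prop:BCs}.

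Finally, in case NN, substituting the parametrization of Prop.\,\ref{prop:BCs} into $A_+=A_1+A_2$ and using that $a_1,a_2$ are linearly independent (so $a_1\wedge a_2\neq 0$), one computes its two columns as $z_2a_1-\bar w a_2$ and $wa_1+z_1a_2$, where $z_j=1+\im(\lambda_j'+k_x\lambda_j)$ and $w=\mu'+k_x\mu$, whence $\det A_+(k_x)=(z_1z_2+|w|^2)(a_1\wedge a_2)$. For real $k_x$, writing $z_j=1+\im x_j$ gives $z_1z_2+|w|^2=(1-x_1x_2+|w|^2)+\im(x_1+x_2)$, which cannot vanish: vanishing of the imaginary part forces $x_2=-x_1$, and then the real part equals $1+x_1^2+|w|^2\geq 1>0$. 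Hence $\det A_+(k_x)\neq 0$ for all $k_x$, so $\op{rk}\ul{A}(k_x)=2$ everywhere and both (a) and (b) hold.

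The only delicate point is the bookkeeping in the first step — in particular the observation that \eqref{eq:CondOnA}, being a polynomial identity valid a.e., is valid everywhere, which is what licenses applying Claim \ref{claim:RkA1PlusA2} at every individual fiber $k_x$, including the exceptional ones where rank may drop; the rest is sign-chasing with the wedge product and the already-computed coefficients $c_j^\pm$, so I do not expect further obstacles.
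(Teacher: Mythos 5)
Your proof is correct, but it is organized differently from the paper's. The paper works directly on the $2\times 4$ matrix $\ul{A}(k_x)$: in DD the rank is visibly that of the middle $2\times2$ block; in ND it uses the self-adjointness relations to exhibit the column $a_2'+k_xb_2$ as a linear combination of $a_1$ and $a_1'+k_xb_1$ (and $a_2=\alpha a_1$), so the rank reduces to that of $(a_1,\,a_1'+k_xb_1)$, i.e.\ to $E(k_x)$; in NN it simply observes that $a_1$ and $a_2$ are two of the columns and are linearly independent, so the rank is $2$ for every $k_x$ with no computation at all. You instead funnel everything through $P_+=\det(A_1+A_2)$ via Claim \ref{claim:RkA1PlusA2}, which requires (and you correctly supply) the observation that the isotropy identity \eqref{eq:CondOnA}, being polynomial in $k_x$ and valid a.e., is valid at every fiber including the exceptional ones. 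This buys you the ability to recycle the coefficient formulas \eqref{eq:CoefficientsOfPpmDD}--\eqref{eq:CoefficientsOfPpmNN} and the factorization \eqref{eq:PTilde} wholesale (and there is no circularity in doing so, since those are derived purely from the parametrization of Prop.\,\ref{prop:BCs}), at the cost of an explicit determinant computation in case NN where the paper's one-line column argument suffices. Both routes land on the same polynomials $D$ and $E$, and your dichotomy \textquotedblleft(a) fails iff the degree-$\le2$ polynomial vanishes identically; (b) holds iff it has no real zero\textquotedblright\ is the same one the paper extracts from lower semicontinuity of the rank.
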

\begin{proof}
	By lower semicontinuity of the function $ k_x \mapsto \text{rk} (A_1 (k_x), A_2 (k_x)) $, the negation of (a) is \enquote{$ P(k_x) $ fails for all $ k_x $}. With this observation in mind, we compute the rank of
	\begin{equation}
		\ul{A}(k_x) = (a_1 \,, \ a_1' + k_x b_1 \,, \ a_2' + k_x b_2 \,, \ a_2)\,. \label{eq:ulAExplicit}
	\end{equation}
	\begin{itemize}[itemindent=1em]
		\item[DD:] In this case the rank is the same as that of $ (a_1' + k_x b_1, a_2' + k_x b_2) $. Thus $ P(k_x) $ holds true iff $ D(k_x) \neq 0 $, where
		\begin{align}
			D(k_x) &= (a_1' + k_x b_1) \wedge (a_2' + k_x b_2) \\
			&= (a_1' \wedge a_2') + (a_1' \wedge b_2 + b_1 \wedge a_2') k_x +  (b_1 \wedge b_2) k_x^2 \,.
		\end{align}
		By our first observation
		\begin{itemize}
			\label{obs:OnD}
			\item (a) fails iff $ D(k_x) \equiv 0 $;
			\item (b) holds iff $ D $ has no zeros.
		\end{itemize}
		The conclusions follow.
		
		\item[ND:] By Lm.\,\ref{lem:Dictionary} the vector
		\begin{align}
			a_2' + k_x b_2 &= (\im \lambda' a_1 - \bar{\alpha} a_1') + k_x (\im \lambda a_1 - \bar{\alpha} b_1) \\
			&= \im ( \lambda' + k_x \lambda) a_1 - \bar{\alpha} (a_1' + k_x b_1)
		\end{align}
		is a linear combination of two other vectors seen in \eqref{eq:ulAExplicit}. In view of $ a_2 = 0 $, the rank is that of $ (a_1, a_1' + k_x b_1) $. Thus $ P(k_x) $ holds true iff $ E(k_x) \neq 0 $, this time with
		\begin{equation}
			E(k_x) = a_1 \wedge (a_1' + k_x b_1) = (a_1 \wedge a_1') + (a_1 \wedge b_1) k_x \,.
		\end{equation}
		The conclusion then follows as in case (i).
		
		\item[DN:] By $ 1 \leftrightarrow 2 $.
		
		\item[NN:] By $ \text{rk} \ (a_1, a_2) = 2 $ the property $ P(k_x) $ applies to all $ k_x $.
	\end{itemize}
\end{proof}

\section{Infinite-momentum expansions of bulk eigensections and Jost functions} \label{app:ExpLambda}

Through the change of variables \eqref{eq:ChangeOfVariables}, we derive the infinite-momentum expansion of the outgoing and evanescent amplitudes $ \hat{\psi}^{0} (k_x, \kappa), \hat{\psi}^{\infty} (k_x, \kappa_{\op{ev}}) $, cf.\,\eqref{eq:ChoiceOfSections}, employed in the construction of the scattering state \eqref{eq:ExplicitScattState}. Such expansions are then used to prove Prop.\,\ref{prop:GEps}.
\vspace{1\baselineskip}

We start with the Taylor expansion of $ \hat{\psi}^{0}, \hat{\psi}^{\infty} $ around $ (k_x, k_y) \to \infty $. This limit is achieved as $ \varepsilon \to 0 $, given the coordinate change
\begin{equation}
	(k_x, k_y) = \left( \frac{\cos \varphi}{\varepsilon}, - \frac{\sin \varphi}{\varepsilon} \right) \,.
\end{equation}
As a preliminary step towards the desired expansions, the Taylor series of $ \omega_+ (k_x, \kappa) = \omega_+ (k_x, \kappa_{ev}) $, $ q_+ \equiv q (k_x, \pm \kappa) $ and $ q_- \equiv q(k_x, \kappa_{ev}) $ are found:
\begin{align} \label{eq:PrelExps}
	&\omega_+ (\varepsilon, \varphi) = \frac{\nu}{\varepsilon^2} + \frac{C \nu}{2} + \frac{1}{8 \nu^2} (4 \nu f - 1) \varepsilon^2 + o(\varepsilon^3) \,, \nonumber \\
	& q_+ (\varepsilon, \varphi) = -1 + \frac{\varepsilon^2}{2 \nu^2} + \frac{8 \nu f - 3}{8 \nu^4} \varepsilon^4 + o (\varepsilon^5) \,, \nonumber \\
	& q_- (\varepsilon, \varphi) = 1 + \frac{\varepsilon^2}{2 \nu^2} - \frac{\varepsilon^4}{8 \nu^4} + o(\varepsilon^5) \,,
\end{align}
where the last one has been determined with the help of 
\begin{align}
	\kappa_{\op{ev}} (\varepsilon, \varphi) \Big\vert_{\varepsilon = 0} = \im \left( \frac{\xi}{\varepsilon} + \frac{C}{2 \xi} \varepsilon - \frac{C^2}{8 \xi^3} \varepsilon^3 + \frac{C^3}{16 \xi^5} \varepsilon^5 - \frac{5 C^4}{128 \xi^7} \varepsilon^7 + o(\lambda^8) \right),
\end{align}
with $ C = (1-2 \nu f)/\nu^2 $ and $ \xi \equiv \xi (\varphi) = (1 + \sin^2 \varphi)^{1/2} $.

With this small tool kit, we find
\begin{align} 
	&\eta_{\op{o}} (\varepsilon, \varphi) \coloneqq \eta^0 (k_x (\varepsilon, \varphi), \kappa (\varepsilon, \varphi) ) = e^{\im \varphi} \left( \frac{\varepsilon}{\nu} - \frac{1 - 2 \nu f}{2 \nu^3} \varepsilon^3 + \frac{8 \nu^2 f^2 - 8 \nu f + 3}{8 \nu^5} \varepsilon^5 \right) + o (\varepsilon^6)  \nonumber \\
	&u_{\op{o}} (\varepsilon, \varphi) \coloneqq u^0 (k_x (\varepsilon, \varphi), \kappa (\varepsilon, \varphi) ) = 1 + \im \frac{\sin \varphi}{2 \nu^2} \varepsilon^2 + \im \sin \varphi \frac{8 \nu f - 3}{8 \nu^4} \varepsilon^4 + o (\varepsilon^5) \nonumber \\
	&v_{\op{o}} (\varepsilon, \varphi) \coloneqq v^0 (k_x (\varepsilon, \varphi), \kappa (\varepsilon, \varphi) ) = -\im \left( 1 + \frac{\sin \varphi}{2 \nu^2} \varepsilon^2 + \sin \varphi \frac{8 \nu f - 3}{8 \nu^4} \varepsilon^4 + o (\varepsilon^5) \right) \,, \label{eq:OutExp}
\end{align}
for the \textit{outgoing} amplitude, cf.\,\eqref{eq:DefEtaOEtaE}, as well as
\begin{align}
	\eta_{\op{e}} (\varepsilon, \varphi) &\coloneqq \eta^{\infty} ( k_x (\varepsilon, \varphi), \kappa_{\op{ev}} (\varepsilon, \varphi) ) \nonumber \\ 
	&= \frac{\chi}{\nu} \varepsilon - \frac{C}{2 \nu} \left( \frac{1}{\xi} +  \chi \right) \varepsilon^3 + \frac{1}{4 \nu} \left[ C^2 \chi - \frac{(4 \nu f - 1) \chi}{2 \nu^4} + \frac{C^2}{\xi} + \frac{C^2}{2 \xi^3} \right] \varepsilon^5 + o(\varepsilon^6) \nonumber \\
	u_{\op{e}} (\varepsilon, \varphi) &\coloneqq u^{\infty} ( k_x (\varepsilon, \varphi), \kappa_{\op{ev}} (\varepsilon, \varphi) ) = 1 + \frac{\xi}{2 \Gamma \nu^2} \varepsilon^2 + o( \varepsilon^3 ) \nonumber \\
	v_{\op{e}} (\varepsilon, \varphi) &\coloneqq v^{\infty} ( k_x (\varepsilon, \varphi), \kappa_{\op{ev}} (\varepsilon, \varphi) ) = \im \left( 1 + \frac{\cos \varphi}{2 \Gamma \nu^2} \varepsilon^2 + o(\varepsilon^3) \right) \,, \label{eq:EvExp}
\end{align}
with $ \chi \equiv \chi (\varphi) \coloneqq \sin \varphi - \xi (\varphi) $ and $ \Gamma \equiv \Gamma (\varphi) \coloneqq \sin \varphi + \xi (\varphi) $, for the \textit{evanescent} amplitude.

If needed, the \textit{incoming} amplitude $ l_{\op{i}} (\varepsilon, \varphi) \coloneqq l^0 (k_x (\varepsilon, \varphi), -\kappa (\varepsilon, \varphi) ) \ (l = \eta, u, v)$ can be obtained from the outgoing one by $ \varphi \leftrightarrow - \varphi $. Notice moreover that the $ u,v $ terms are constant in $ \varepsilon $, i.e., they are $ \mathcal{O} (1) $, whereas the $ \eta $ terms go to zero as $ \mathcal{O} (\varepsilon) $. This observation will be useful when computing the $ \varepsilon \to 0 $ expansion of the scattering amplitude $ S(\varepsilon,\varphi) $.

The expansions above are now employed to provide a proof of Prop.\,\ref{prop:GEps}.
\begin{proof}[Proof of Prop.\,\ref{prop:GEps}]
	Before specializing to the three families DD, ND, NN of interest, we bring $ \ul{A} V $, cf.\,\eqref{eq:gFromV}, in a form more suitable to expanding in $ \varepsilon \to 0 $.
	
	By Eq.\,\eqref{eq:A1A2}, we may rewrite
	\begin{equation}
		\ul{A} V = (A^0_1 + k_x B_1, A^0_2 + k_x B_2) 
		\begin{pmatrix}
			V_1 \\
			V_2
		\end{pmatrix}
		\eqqcolon \tilde{X} + k_x Y \,, \label{eq:APsiTilde}
	\end{equation}
	where
	\begin{equation}
		\begin{gathered}
			A^0_1 = | a_1 \rangle \langle 1 | + |a_1' \rangle \langle 2 | \,, \qquad A^0_2 = | a_2' \rangle \langle 1 | + |a_2 \rangle \langle 2 | \,, \qquad B_1 = | b_1 \rangle \langle 2 | \,, \qquad B_2 = | b_2 \rangle \langle 1 | \,, \\
			V_1 = | 1 \rangle \langle (\eta_{\op{o}} - \im \nu \kappa u_{\op{o}}, \, \eta_{\op{e}} - \im \nu \kappa_{\op{ev}} u_{\op{e}} ) | + | 2 \rangle \langle (u_{\op{o}}, \, u_{\op{e}}) | \,, \\
			V_2 = | 1 \rangle \langle (v_{\op{o}}, \, v_{\op{e}}) | + | 2 \rangle \langle (\im \nu \kappa v_{\op{o}}, \, \im \nu \kappa_{\op{ev}} v_{\op{e}}) | \,,
		\end{gathered}
	\end{equation}
	in Dirac notation,  with $ | 1 \rangle = (1,0)^{\op{T}}, \ | 2 \rangle = (0,1)^{\op{T}} $. As a consequence,
	\begin{equation}
		\begin{aligned}
			\tilde{X} \coloneqq A^0_1 V_1 + A^0_2 V_2 = &+ | a_1 \rangle \langle ( \eta_{\op{o}} - \im \nu \kappa u_{\op{o}}, \eta_{\op{e}} - \im \nu \kappa_{\op{ev}} u_{\op{e}}) | + | a_2 \rangle \langle (\im \nu \kappa v_{\op{o}}, \im \nu \kappa_{\op{ev}} v_{\op{e}}) | \\
			&+ | a_1' \rangle \langle (u_{\op{o}},u_{\op{e}}) | + | a_2' \rangle \langle (v_{\op{o}},v_{\op{e}}) | \,, \\
			Y \coloneqq B_1 V_1 + B_2 V_2 = &+ | b_1 \rangle \langle (u_{\op{o}},u_{\op{e}}) | + |b_2 \rangle \langle (v_{\op{o}},v_{\op{e}}) | \,. \label{eq:XTildeY}
		\end{aligned}
	\end{equation}
	As observed below \eqref{eq:EvExp}, $ u_j, v_j \ (j = \op{o}, \op{e})$ are $ \mathcal{O} (1) $ as $ \varepsilon \to 0 $, while $ \eta_j $ is $ \mathcal{O} (\varepsilon) $. Noticing moreover that $ k_x, \kappa, \kappa_{\op{ev}} = \mathcal{O} (\varepsilon) $, we conclude that $ \tilde{X} = \mathcal{O} (\varepsilon^{-1}) $, while $ Y = \mathcal{O} (1) $. We thus write their Taylor expansion as
	\begin{equation}
		\begin{aligned}
			\tilde{X} &= \tilde{X}_{-1} \varepsilon^{-1} + \tilde{X}_0 + \tilde{X}_1 \varepsilon + ... \\
			Y &= Y_0 + Y_1 \varepsilon + Y_2 \varepsilon^2 + ... \,, \label{eq:XTildeYExpansion}
		\end{aligned}
	\end{equation}
	whence
	\begin{equation}
		(\ul{A} V) (\varepsilon, \varphi) = \tilde{X} + \varepsilon^{-1} \cos \varphi Y \equiv M_{-1} (\varphi) \varepsilon^{-1} + M_0 (\varphi) + M_1 (\varphi) \varepsilon + ... \,,
	\end{equation}
	with
	\begin{equation}
		M_{-1} = \tilde{X}_{-1} + \cos \varphi Y_0 \,, \qquad M_0 = \tilde{X}_0 + \cos \varphi Y_1 \,, \label{eq:MMinusOneMZero}
	\end{equation}
	and so on. By inserting the expansions (\ref{eq:OutExp}, \ref{eq:EvExp}) of $ l_{\op{o}} \,, \ l_{\op{e}} \ (l = \eta, u, v) $ into (\ref{eq:XTildeYExpansion}, \ref{eq:MMinusOneMZero}), we obtain the leading and subleading orders of $ \ul{A} V $:
	\begin{align}
		M_{-1} = & \im \nu | a_1 + \im a_2 \rangle \langle 1| \sin \varphi + \nu | a_1 - \im a_2 \rangle \langle 2 | \sqrt{1 + \cos^2 \varphi} \nonumber \\
		&+ (|b_1 - \im b_2 \rangle \langle 1 | + |b_1 + \im b_2 \rangle \langle 2 | ) \cos \varphi \nonumber \\
		\equiv& - X \sin \varphi + \im X_{ev} \sqrt{1 + \cos^2 \varphi} +  Y \cos \varphi \,, \\
		M_0 = &+ | a_1' - \im a_2' \rangle \langle 1| + |a_1' + \im a_2' \rangle \langle 2 | \equiv X_0 \,.
	\end{align}
	The \textit{Jost function} $g$, cf.\,\ref{eq:gFromV}, hence reads
	\begin{align}
		g (\varepsilon, \varphi) = \op{det} (\ul{A} V) &= \varepsilon^{-2} \op{det} (M_{-1} + \varepsilon M_0 ) + \cal{O} (1) \nonumber \\
		&\equiv \varepsilon^{-2} \op{det} \chi (\varepsilon, \varphi) + \cal{O} (1) \,, \qquad \varepsilon \to 0 \,. \label{eq:ApproximateJost}
	\end{align}
	It is convenient to rewrite
	\begin{equation}
		\chi(\varepsilon, \varphi) =  X_0 \varepsilon - X \sin \varphi + \im X_{ev} \sqrt{1 + \cos^2 \varphi} + Y \cos \varphi \,, \label{eq:Chi}
	\end{equation}
	where
	\begin{equation}
		\label{eq:X0XXevY}
		\begin{array}{lcl}
			X_0 & \coloneqq & | a_1' - \im a_2' \rangle \langle 1 | + | a_1' + \im a_2' \rangle \langle 2 | \\
			X & \coloneqq & - \im \nu | a_1 + \im a_2 \rangle \langle 1 | \\
			X_{\op{ev}} & \coloneqq & - \im \nu |a_1 - \im a_2 \rangle \langle 2 | \\
			Y & \coloneqq & | b_1 - \im b_2 \rangle \langle 1 | + | b_1 + \im b_2 \rangle \langle 2 | \,. \\
		\end{array}
	\end{equation}
	In families DD--ND--NN, the first two orders in the expansion of $g$ are thus found by expressing $ X_0, X, X_{\op{ev}}, Y $ in the parametrization of Prop.\,\ref{prop:BCs}, and noticing that
	\begin{equation}
		\op{det} (|v_1 \rangle \langle w_1 | + |v_2 \rangle \langle w_2 | ) = \op{det} V \cdot \op{det} W \,, \qquad V = (v_1, v_2) \,, \hspace{6pt}
		W =
		\begin{pmatrix}
			w_1^T \\
			w_2^T
		\end{pmatrix}\,,
	\end{equation}
	where $ v_j, w_j $ $ (j=1,2) $ are column vectors in $ \bbC^2 $, so that in particular 
	\begin{equation}
		\op{det} (|v_1 \rangle \langle 1 | + |v_2 \rangle \langle 2 |) = \op{det} V = v_1 \wedge v_2 \,. \label{eq:DetTrick}
	\end{equation}
	The expressions for $ X_0, X, X_{\op{ev}}, Y $ in families DD, ND, NN are reported in the tables below.
	\begin{center}
		\begin{tabular}{ c"c|c}
			\xrowht[()]{10pt}
			 & $ X_0 $ & $X$ \\
			\thickhline\xrowht{10pt}
			DD & $ | a_1' - \im a_2' \rangle \langle 1 | + | a_1' + \im a_2' \rangle \langle 2 | $ & $ 0 $ \\
			\hline\xrowht{10pt}
			ND & $| (1 + \im \bar{\alpha}) a_1' + \lambda' a_1 \rangle \langle 1 | + | (1 - \im \bar{\alpha}) a_1' - \lambda' a_1 \rangle \langle 2 |$ & $ - \im \nu | (1 + \im \alpha) a_1 \rangle \langle 1 | $ \\
			\hline\xrowht{10pt}
			NN & $| (\mu' + \lambda_2') a_1 + \im (\lambda_1' + \bar{\mu}') a_2 \rangle \langle 1 | + | (\mu' - \lambda_2') a_1 + \im ( \lambda_1' - \bar{\mu}') a_2 \rangle \langle 2 |$ & $- \im \nu | a_1 + \im a_2 \rangle \langle 1 |$ \\
			\multicolumn{3}{c}{\vspace{1\baselineskip}} \\
			\xrowht[()]{10pt}
			 & $ Y $ & $ X_{\op{ev}} $ \\
			\thickhline\xrowht{10pt}
			DD & $ | b_1 - \im b_2 \rangle \langle 1 | + | b_1 + \im b_2 \rangle \langle 2 | $ & $ 0 $ \\
			\hline\xrowht{10pt}
			ND & $ | (1 + \im \bar{\alpha}) b_1 + \lambda a_1 \rangle \langle 1 | + | ( 1 - \im \bar{\alpha} ) b_1 - \lambda a_1 \rangle \langle 2 | $ & $ - \im \nu | (1 - \im \alpha) a_1 \rangle \langle 2 |$ \\
			\hline\xrowht{10pt}
			NN & $ | (\mu + \lambda_2) a_1 + \im ( \lambda_1 + \bar{\mu} ) a_2 \rangle \langle 1 | + | (\mu - \lambda_2) a_1 + \im ( \lambda_1 - \bar{\mu} ) a_2 \rangle \langle 2 | $ & $- \im \nu | a_1 - \im a_2 \rangle \langle 2 |$ \\
		\end{tabular}
		\label{tab:XZeroAndFriends}
	\end{center}
	The value of $ \chi (\varepsilon, \varphi) $ is now computed for each case DD, ND, NN. Its determinant yields the leading orders in the expansion of $g$, cf.\,\eqref{eq:ApproximateJost}.
	\begin{itemize}
		\item[DD:] By Tab.\,\ref{tab:XZeroAndFriends} and Eq.\,\eqref{eq:Chi}, we obtain
		\begin{align}
			\chi (\varepsilon, \varphi) = &| (a_1' \varepsilon + b_1 \cos \varphi) - \im (a_2' \varepsilon + b_2 \cos \varphi) \rangle \langle 1 | \nonumber \\
			&+ | (a_1' \varepsilon + b_1 \cos \varphi) + \im (a_2' \varepsilon + b_2 \cos \varphi) \rangle \langle 2 | \,.
		\end{align}
		The expression is already arranged so that its determinant immediately corresponds to the first row of \eqref{eq:Expansions}.
		
		\item[ND:] We similarly compute $ \chi $ and arrange it as
		\begin{align}
			\chi (\varepsilon, \varphi) =& | ( \im \nu (1 + \im \alpha) \sin \varphi + \lambda \cos \varphi + \lambda' \varepsilon) a_1 + (1 + i \bar{\alpha}) (\varepsilon a_1' + \cos \varphi b_1) \rangle \langle 1 | \nonumber \\
			&+ | ( \nu (1 - \im \alpha) \sin \varphi - \lambda \cos \varphi - \lambda' \varepsilon) a_1 + (1 - i \bar{\alpha}) (\varepsilon a_1' + \cos \varphi b_1) \rangle \langle 2 | \nonumber \\
			\equiv& | v_1 \rangle \langle 1 | + | v_2 \rangle \langle 2 | \,.
		\end{align} 
		By (\ref{eq:DetTrick}, \ref{eq:ApproximateJost}), then
		\begin{align}
			g (\varepsilon, \varphi) =& \varepsilon^{-2} \det \chi (\varepsilon, \varphi) + \cal{O} (1) \nonumber \\
			=& \varepsilon^{-2} a_1 \wedge (\varepsilon a_1' + \cos \varphi b_1) \big[ (1-\im \bar{\alpha}) ( \im \nu (1 + \im \alpha) \sin \varphi + \lambda \cos \varphi + \lambda' \varepsilon ) \nonumber \\
			&- (1 + \im \bar{\alpha}) ( \nu (1 - \im \alpha) \sqrt{1 + \cos^2 \varphi} - \lambda \cos \varphi - \lambda' \varepsilon ) ] + \cal{O} (1) \,,  
		\end{align}
		and the second row of \eqref{eq:Expansions} follows by simple algebra.
		
		\item[NN:] Using the notation of Eq.\,\eqref{eq:A1A2B1B2} and Tab.\,\ref{tab:XZeroAndFriends}, we obtain
		\begin{equation}
			\chi (\varepsilon, \varphi) = | A_1 (\varepsilon, \varphi) a_1 + A_2 (\varepsilon, \varphi) a_2 \rangle \langle 1 | + | B_1 (\varepsilon, \varphi) a_1 + B_2 (\varepsilon, \varphi) a_2 \rangle \langle 2 | \,.
		\end{equation}
		Then
		\begin{equation}
			\det \chi (\varepsilon, \varphi) = a_1 \wedge a_2 ( A_1 (\varepsilon, \varphi) B_2 (\varepsilon, \varphi) - A_2 (\varepsilon, \varphi) B_1 (\varepsilon, \varphi) ) \,,
		\end{equation}
		and the fourth row of \eqref{eq:Expansions} follows immediately from \eqref{eq:ApproximateJost}.
	\end{itemize}
\end{proof}

\section{Proof of Prop.\,\ref{prop:EscapeHeight}} \label{app:ProofEscapes}

\begin{proof}[Proof of Prop.\,\ref{prop:EscapeHeight}] \label{proof:EscapeHeight}
	Under the assumption $ |\omega| \ll k_x^2 $ at $ |k_x| \to \infty $, cf.\,\eqref{eq:RegionE}, we look for poles of the scattering matrix $S$ through zeros of the Jost function $g$, cf.\,\eqref{eq:SWithG}. Its computation, cf.\,\eqref{eq:gFromV}, relies on the two evanescent sections $ \hat{\psi}^0 (k_x, k_{y}) = \hat{\psi}_+ $, $ \hat{\psi}^\infty (k_x, k_{y}) =  \hat{\psi}_- $, cf.\,\eqref{eq:PsiInfPsiZero}, with $ k_y = k_{y+} $, $ k_y = k_{y-} $, respectively. For the scope of this proof, we rescale those sections by
	\begin{equation}
		\hat{\psi}_+ \mapsto (k_x + \im k_{y+}) \omega \hat{\psi}_+ \,, \qquad \hat{\psi}_- \mapsto (k_x - \im k_{y-}) \omega \hat{\psi}_- \,,
	\end{equation}
	which does not affect the zeros of $g$, cf.\,Rem.\,\ref{rem:ProportionalSections}. Thus, $ \hat{\psi}_\pm = (\eta_\pm, u_\pm, v_\pm) $ with
	\begin{equation}
		\label{eq:PMSections}
		\begin{aligned}
			\eta_\pm &= X_\pm \,, \\
			u_\pm &= k_x \omega - \im k_{y\pm} (f - \nu X_\pm) \,, \\
			v_\pm &= k_{y\pm} \omega + \im k_x (f - \nu X_\pm) \,.
		\end{aligned}
	\end{equation}
	The three components are functions of $ k_x, \omega $ by way of (\ref{eq:DefKYPlusMinus}, \ref{eq:BranchFreeOmega}), with the understanding that the evanescent branches of $ k_{y\pm} $ are chosen. Notice moreover that the sections vanish at $ \omega = \pm f $ if $ \mp k_x > 0 $, a fact that produces spurious zeros of $g$ that need to be ignored.
	
	In terms of our modified bulk sections, $ \ul{A} V $, cf.\,(\ref{eq:APsiTilde}, \ref{eq:XTildeY}), is rewritten as
	\begin{equation}
		\begin{aligned}
			\ul{A} V =& \tilde{X} +  k_x Y \\
			=& | a_1 \rangle \langle (\eta_+ - \im \nu k_{y+} u_+, \, \eta_- - \im \nu k_{y-} u_-) | + | a_2 \rangle \langle (\im \nu k_{y+} v_+, \, \im \nu k_{y-} v_-) | \\
			&+ | a_1' \rangle \langle (u_+,u_-) | + | a_2' \rangle \langle (v_+, v_-) | \\
			&+ k_x \big( | b_1 \rangle \langle (u_+, u_-) | + |b_2 \rangle \langle (v_+, v_-) | \big) \,. \label{eq:AVStupid}
		\end{aligned}
	\end{equation}
	Unlike in the proof of Prop.\,\ref{prop:GEps}, we refrain from making approximations at this stage, and rather calculate $ g = \det \ul{A} V $ in full. At first, we rearrange \eqref{eq:AVStupid} as
	\begin{equation}
		\begin{aligned}
			\ul{A} V =& | a_1 ( \eta_+ - \im \nu k_{y+} u_+ ) + a_2 (\im \nu k_{y+} v_+) + (a_1' + k_x b_1) u_+ + (a_2' + k_x b_2) v_+ \rangle \langle 1 | \\
			&+ | (+ \longleftrightarrow -) \rangle \langle 2 | \,, 
		\end{aligned}
	\end{equation}
	where the second summand is obtained from the first one by exchanging $+$ and $-$ subscripts. Then, we use \eqref{eq:DetTrick} to obtain
	\begin{align}
		\op{det} (\ul{A} V) =& (a_1 \wedge a_2) \Big( \im \nu ( k_{y-} \eta_+ v_- - k_{y+} \eta_- v_+ ) +  \nu^2 k_{y+} k_{y-} (u_+ v_- - u_- v_+) \Big) \nonumber \\
		&+ a_1 \wedge (a_1' + k_x b_1) \Big( (\eta_+ u_- - \eta_- u_+) - \im \nu u_+ u_- (k_{y+} - k_{y-}) \Big) \nonumber \\
		&+ a_1 \wedge (a_2' + k_x b_2) \Big( (\eta_+ v_- - \eta_- v_+) - \im \nu (k_{y+} u_+ v_- - k_{y-} u_- v_+) \Big) \nonumber \\
		&+ a_2 \wedge (a_1' + k_x b_1) \Big( \im \nu (k_{y+} u_- v_+ - k_{y-} u_+ v_-) \Big) \nonumber \\
		&+ a_2 \wedge (a_2' + k_x b_2) \Big( \im \nu v_+ v_- (k_{y+} - k_{y-}) \Big) \nonumber \\
		&+ (a_1' + k_x b_1) \wedge (a_2' + k_x b_2) \Big( u_+ v_- - u_- v_+ \Big) \label{eq:DetAVFull} \\
		\equiv& (a_1 \wedge a_2) \Big( \im \nu F_1 +  \nu^2 F_2 \Big) \nonumber + a_1 \wedge (a_1' + k_x b_1) \Big( F_3 - \im \nu F_4 \Big) \nonumber \\
		&+ a_1 \wedge (a_2' + k_x b_2) \Big( F_5 - \im \nu F_6 \Big) + a_2 \wedge (a_1' + k_x b_1) \Big( \im \nu F_7 \Big) \nonumber \\
		&+ a_2 \wedge (a_2' + k_x b_2) \Big( \im \nu F_8 \Big) + (a_1' + k_x b_1) \wedge (a_2' + k_x b_2) F_9 \,, \label{eq:DetAVShort}
	\end{align}
	where
	\begin{equation}
		\begin{array}{lcl}
			F_1 & = & k_{y-} \eta_+ v_- - k_{y+} \eta_- v_+ \,, \\
			F_2 & = & k_{y+} k_{y-} (u_+ v_- - u_- v_+) \,,  \\
			F_3 & = & \eta_+ u_- - \eta_- u_+  \,, \\
			F_4 & = & u_+ u_- (k_{y+} - k_{y-}) \,, \\
			F_5 & = & \eta_+ v_- - \eta_- v_+ \,, \\
			F_6 & = & k_{y+} u_+ v_- - k_{y-} u_- v_+ \,, \\
			F_7 & = & k_{y+} u_- v_+ - k_{y-} u_+ v_- \,, \\
			F_8 & = & v_+ v_- (k_{y+} - k_{y-}) \,, \\
			F_9 & = & u_+ v_- - u_- v_+ \,.
		\end{array}
	\end{equation}
	We expand the factors $ F_j \ (j=1,...,9) $ (one by one) around $ \delta = X/k_x^2 \to 0 $, retaining the leading order. Since the $ F_j $ are odd upon interchanging labels $ \pm $, we can extract a common factor $ (X_+ - X_-) $. As for the remaining factors, Vieta's formulae are used on $ X_+ + X_- $ and $ X_+ X_- $, reinstating explicit dependence on $\omega$. The resulting leading orders read
	\begin{equation}
		F_j = (X_+ - X_-) (f \pm \omega) \tilde{F}_j \,, \qquad \ (j = 1,...,9) \,, \label{eq:FPrefactor}
	\end{equation}
	where
	\begin{equation}
		\label{eq:FExpansion}
		\begin{array}{lcl}
			\tilde{F}_1 & = & - k_x |k_x| \big( 1 + \cal{O} (\delta) \big)  \,, \\
			\tilde{F}_2 & = & - \im  (2 \nu)^{-1} k_x |k_x| \big( \mp 1 + 2 \nu \omega + \cal{O} (\delta) \big) \,, \\
			\tilde{F}_3 & = & |k_x| \big( 1 + \cal{O} (\delta) \big) \,, \\
			\tilde{F}_4 & = & - \im (2 \nu)^{-1} |k_x| \big( 1 + \cal{O} (\delta) \big)  \,, \\
			\tilde{F}_5 & = & \pm \im |k_x|  \big( 1 + \cal{O} (\delta) \big) \,, \\
			\tilde{F}_6 & = & - \nu^{-1} |k_x| \big( \mp 1 + \nu \omega +  \cal{O} (\delta) \big) \,, \\
			\tilde{F}_7 & = & |k_x| \big( \omega + \cal{O} (\delta) \big) \,, \\
			\tilde{F}_8 & = & \im (2 \nu)^{-1} |k_x| \big( 1 + \cal{O} (\delta) \big) \,, \\
			\tilde{F}_9 & = & \pm \im (2 \nu)^{-1} \big( \mp 1 + 2 \nu \omega + \cal{O} (\delta) \big) \,. \\
		\end{array}
	\end{equation} 
	Above, the first (second) alternative refers to $ k_x \to + \infty $ ($ k_x \to - \infty $). Two comments are in order. First, the shared multiplicative factors $ (X_+ - X_-) $ and $ (f \pm \omega) $ do not contribute meaningful zeros to $g$. This is doubly true: On the one hand, they are shared with the numerator $ g(k_x, -k_{y+}) $ of $S$, and could thus be simplified in the spirit of \eqref{eq:Simplification}; on the other hand, $ X_+ \neq X_- $ for all $ \omega \neq 0 $, and $ \omega = \mp f $ corresponds to the identically zero edge state. Second, including the $ k_x $ prefactors, the leading order of \eqref{eq:DetAVFull} in $ k_x $ is $ \cal{O} (k_x^2) $. That is what we shall retain next, while specializing to families DD, ND, NN and searching zeros at $ |k_x| \to \infty $. \vspace{1\baselineskip}
	
	\textit{Specializing to family DD.} Here $ a_1 = a_2 = 0 $, cf.\,Lm.\,\ref{lem:Dictionary}. Eq.\,\eqref{eq:DetAVFull} reduces to
	\begin{equation}
		g = (a_1' + k_x b_1) \wedge (a_2' + k_x b_2) F_9 \,.
	\end{equation}
	The only meaningful solution of $ g=0 $ is given by $ \tilde{F}_9 = 0 $, namely
	\begin{equation}
		\omega_{\op{a}, \pm} = \pm \frac{1}{2 \nu} \,, \label{eq:AsymptoteDDProof}
	\end{equation}
	cf.\,\eqref{eq:FExpansion}. No solution with $ \omega \to 0, \infty $ is possible. Eq.\,\eqref{eq:AsymptoteDDProof} had already appeared in \cite{TDV20}. \vspace{1\baselineskip}
	
	\textit{Specializing to family ND.} Here $ a_2 = \alpha a_1 $ for some $ \alpha \in \bbC $, cf.\,Lm.\,\ref{lem:Dictionary}, and by Eq.\,\eqref{eq:ConditionsND}
	\begin{equation}
		(a_2' + k_x b_2) = - \bar{\alpha} (a_1' + k_x b_1) + \im (\lambda' + k_x \lambda) a_1 \,.
	\end{equation}
	Eq.\,\eqref{eq:DetAVFull} now becomes
	\begin{equation}
		g = a_1 \wedge (a_1' + k_x b_1) \Big( F_3 - \im \nu F_4 - \bar{\alpha} F_5 + \im \nu \bar{\alpha} F_6 + \im \nu \alpha F_7 - \im \nu |\alpha|^2 F_8 - \im k_x \lambda F_9 \Big) \,,
	\end{equation}
	having discarded subleading terms in $ k_x $. Inserting the leading orders of $ F_j \ (j=1,...,9) $, cf.\,(\ref{eq:FPrefactor}, \ref{eq:FExpansion}), we get
	\begin{equation}
		g = a_1 \wedge (a_1' + k_x b_1) (X_+ - X_-) (f \pm \omega) (2 \nu)^{-1} |k_x| \Big( \nu (1 + |\alpha|^2) - \lambda - 4 \nu^2 \alpha_I \omega \pm 2 \nu \lambda \omega \Big) 
	\end{equation}
	for $k_x \to \pm \infty$. If $ |\omega| \to \infty $, the constant terms in the big bracket are irrelevant, and the only zero of $ g $ is obtained by setting the coefficient of $ \omega $ to zero, namely
	\begin{equation}
		\lambda \mp 2 \nu \alpha_I = 0 \,.
	\end{equation}
	This equation defines a surface of codimension $1$ in the parameter space of boundary condition, which is of measure zero w.r.t.\,measures of Lebesgue class, as claimed.
	
	Similarly, if $ \omega \to 0 $, we find zeros of $g$ for
	\begin{equation}
		\lambda - \nu (1 + |\alpha|^2) = 0 \,.
	\end{equation}
	By contrast, generic zeros are found for $ \omega = \cal{O} (1) \to \omega_{\op{a}, \pm} $ at $ k_x \to \pm \infty $, and read
	\begin{equation}
		\omega_{\op{a}, \pm} = \pm \frac{\lambda - \nu (1 + | \alpha |^2)}{2 \nu ( \lambda \mp 2 \nu \alpha_I )} \,,
	\end{equation}
	as claimed. The particle-hole symmetric subfamily of ND is characterized by $ \alpha_I = 0 $, cf.\,Prop.\,\ref{prop:PHSBCs}. Notice how, in that case, $ \omega_{\op{a}, +} = - \omega_{\op{a},-} $, in agreement with the spectral symmetry $ (k_x, \omega) \leftrightarrow (-k_x, - \omega) $ induced by PHS. \vspace{1\baselineskip}
	
	\textit{Specializing to family NN.} This time 
	\begin{equation}
		\begin{aligned}
			(a_1' + k_x b_1) &= (\mu' + k_x \mu) a_1 + \im (\lambda_1' + k_x \lambda_1) a_2 \,, \\
			(a_2' + k_x b_2) &= -(\bar{\mu}' + k_x \bar{\mu}) a_2 + \im (\lambda_2' + k_x \lambda_2) a_1 \,,
		\end{aligned}
	\end{equation}
	and, upon discarding subleading terms, Eq.\,\eqref{eq:DetAVFull} goes to
	\begin{equation}
		\begin{aligned}
			g = (a_1 \wedge a_2) \Big( &\im \nu F_1 + \nu^2 F_2 \\
			+ &k_x \big( \im \lambda_1 (F_3 - \im \nu F_4) - \bar{\mu} (F_5 - \im \nu F_6) - \im \nu \mu F_7 + \nu \lambda_2 F_8 \big) \\
			+ &k_x^2 (\lambda_1 \lambda_2 - |\mu|^2) F_9 \Big) \,.
		\end{aligned}
	\end{equation}
	We now employ the expansions of $ F_j $ at $ k_x \to + \infty $ only. The result for $ k_x \to - \infty $ is deduced by symmetry. Those expansions and some algebra lead to
	\begin{equation}
		\begin{aligned}
			g = &(a_1 \wedge a_2) (X_+ - X_-) (f + \omega) (2 \nu)^{-1} k_x^2 \times \\
			&\Big( - \nu^2 + \nu (\lambda_1 + \lambda_2) + |\mu|^2 - \lambda_1 \lambda_2 + 2 \nu \omega (- \nu^2 - 2 \nu \mu_R + \lambda_1 \lambda_2 - |\mu|^2) \Big) \,,
		\end{aligned}
	\end{equation}
	whose only generic zeros are for asymptotically flat states $ \omega = \cal{O} (1) $ in $ k_x $, with asymptote
	\begin{equation}
		\omega_{\op{a},+} = \frac{|\mu|^2 - \nu^2 + \nu (\lambda_1 + \lambda_2) - \lambda_1 \lambda_2}{4 \nu^2 \mu_R + 2 \nu (|\mu|^2 + \nu^2 - \lambda_1 \lambda_2)}
	\end{equation}
	at $ k_x \to \pm \infty $. The $ | \omega| \to \infty, 0 $ cases are again exceptional, and realized when
	\begin{equation}
		- \nu^2 - 2 \nu \mu_R + \lambda_1 \lambda_2 - |\mu|^2 = 0 \ \longleftrightarrow \ \Delta^2 = \cal{I}_-
	\end{equation}
	or
	\begin{equation}
		- \nu^2 + \nu (\lambda_1 + \lambda_2) + |\mu|^2 - \lambda_1 \lambda_2 = 0 \ \longleftrightarrow \ \Delta^2 = \cal{E} \,,
	\end{equation}
	respectively.
	
	If $ \mu_R = 0 $, we sit in the particle-hole symmetric subfamily, and thus $ \omega_{0,-} \rvert_{\mu_R = 0} = - \omega_{0,+} \rvert_{\mu_R = 0} $. Since the symmetry is broken by the term in $ \mu_R $, the latter must appear with the same global sign in the two asymptotic regimes. All in all, we deduce
	\begin{equation}
		\omega_{\op{a},-} = - \frac{|\mu|^2 - \nu^2 + \nu (\lambda_1 + \lambda_2) - \lambda_1 \lambda_2}{-4 \nu^2 \mu_R + 2 \nu (|\mu|^2 + \nu^2 - \lambda_1 \lambda_2)} \,, \label{eq:AsymptoteNNMinusInfty}
	\end{equation}
	and thus
	\begin{equation}
		\omega_{\op{a}, \pm} = \frac{|\mu|^2 - \nu^2 + \nu (\lambda_1 + \lambda_2) - \lambda_1 \lambda_2}{4 \nu^2 \mu_R \pm 2 \nu (|\mu|^2 + \nu^2 - \lambda_1 \lambda_2)} \,,
	\end{equation}
	which is one of the claims. The zero for $ |\omega| \to \infty $ at $ k_x \to - \infty $ is moreover read by setting the denominator of \eqref{eq:AsymptoteNNMinusInfty} to zero,
	\begin{equation}
		- \nu^2 + 2 \nu \mu_R + \lambda_1 \lambda_2 - |\mu|^2 = 0 \ \longleftrightarrow \ \Delta^2 = \cal{I}_+ \,,
	\end{equation}
	whereas the case $ |\omega| \to 0 $ is left unchanged by $ k_x \leftrightarrow - k_x $ exchange. This completes the proof.
\end{proof}

\section{Proofs concerning boundary winding $B$} \label{app:ProofB}

This appendix contains the proofs of Lms.\,\ref{lem:CurveTraced}, \ref{lem:Avoidance}, \ref{lem:Winding} (proven together) and Prop.\,\ref{prop:BNN}. The latter will require a small preliminary lemma.

We start with the first three lemmas.
\begin{proof}[Proof of Lms.\,\ref{lem:CurveTraced}, \ref{lem:Avoidance}, \ref{lem:Winding}]
	We prove all three lemmas at once, in the order of cases (a-d).
	\begin{itemize}
		\item[(a)] Is clear on all counts I-III.
		
		\item[(b)] I is clear. II: In view of $ c_1 \neq 0 $, the equation $ P(k) = 0 $ is equivalent to $ |c_1|^2 k + c_2 \overline{c_1} = 0 $. The equation has no real solution $k$ iff $ d_{21} \equiv \op{Im} c_2 \overline{c_1} \neq 0 $. III: $ z = c_2 $ is a point of the curve ($k=0$), and $ c_1 $ is the tangent vector there. The orientation of the line is positive w.r.t.\,$ z=0 $ iff, when viewed as vectors in $ \bbR^2 $, i.e.\, $ c_i \equiv \vec{c}_i $, the pair $ (\vec{c}_2, \, \vec{c}_1) $ has positive wedge product. By
		\begin{equation}
			\vec{a} \wedge \vec{b} = a_1 b_2 - a_2 b_1 = - \op{Im} \big( (a_1 + \im a_2) (b_1 - \im b_2) \big) = - \op{Im} (a \overline{b}) \,,
		\end{equation}
		this is the case if $ \vec{c}_2 \wedge \vec{c}_1 = - \op{Im} c_2 \overline{c_1} = - d_{21} > 0 $.
		
		\item[(c)] Since $ c_0 \neq 0 $, the replacement of $ c_j $ by $ c_j' = c_j / c_0 = c_j \overline{c_0} | c_0 |^{-2}, \ (j = 0,1,2) $, can be considered, because it does not change any of the properties I-III, except for \eqref{eq:AvoidanceC} saying
		\begin{equation}
			4 c_2' > c_1'^2 \,.
		\end{equation}
		So, dropping primes, we have $ c_0 = 1 $ w.l.o.g., and $ d_{i0} = \op{Im} c_i \ (i=1,2) $. I: We have
		\begin{equation}
			\op{Im} P(k) = d_{20} \,, \label{eq:ImPCaseC}
		\end{equation}
		which shows that the trace of $P(k)$ is contained in a straight line. It is a half-line, because $ \op{Re} P(k) \to + \infty $, $ k \to \pm \infty $. II: By \eqref{eq:ImPCaseC}, the first alternative is clear; as for the second one, $ c_2 $ in \eqref{eq:PofK} is now also real, besides of $ c_0 = 1 $, $ c_1 $. The minimal value of $ P(k) $ is $ (4 c_2 - c_1^2)/4 $. III: is clear.
		
		\item[(d)] Again, $ c_0 = 1 $ w.l.o.g. I: We are in the generic case, i.e.\,the parabola, because the tangent vectors $ \dot{P} (k) = 2k + c_1 , \ (k \in \bbR) $ cannot be all parallel by $ \op{Im} c_1 \neq 0 $. II: The intersection of the parabola with the real axis occurs at $k$ such that $ \op{Im} P(k) = ( \op{Im} c_1 ) k + (\op{Im} c_2) = 0 $. There $ \op{Re} P (k) = k^2 + ( \op{Re} c_1 ) k + (\op{Re} c_2) $, and hence
		\begin{equation}
			\begin{aligned}
				(\op{Im} c_1)^2 (\op{Re} P(k)) &= (\op{Im} c_2)^2 - \big( (\op{Re} c_1) (\op{Im} c_2) - (\op{Im} c_1) (\op{Re} c_2) \big) (\op{Im} c_1) \\
				&= (\op{Im} c_2)^2 - (\op{Im} \overline{c_1} c_2) (\op{Im} c_1) = c (P) \,.
			\end{aligned}
		\end{equation} 
		So $ \op{Re} P(k) \gtrless 0 $ at the intersection point iff $ c(P) \gtrless 0 $.
		
		III: The parabola is horizontal, extends to the right, and $ \op{Im} \dot{P} (k) = \op{Im} c_1 $. Hence $ N(P) = 0 $ if $ c(P) > 0 $ and $ N(P) = - \op{sgn} (\op{Im} c_1) $ in the opposite case.
	\end{itemize}
\end{proof}
To specialize the general results of Lms.\,\ref{lem:CurveTraced}, \ref{lem:Avoidance}, \ref{lem:Winding} to family NN, we start from Eq.\,\eqref{eq:CoefficientsOfPpmNN} and recall that $ a_1, a_2 \in \bbC^2 $ are linearly independent by Lm.\,\ref{lem:Dictionary}, whence $ a_1 \wedge a_2 \neq 0 $. Moreover $ \lambda, \lambda' \in \bbR $ and $ \mu, \mu' \in \bbC $. We consider the variant $+$ only, and omit the subscript. The variant $-$ can be recovered by the replacement
\begin{gather}
	(a_1, a_2) \rightsquigarrow (-a_1, a_2) \,, \label{eq:FirstReplacement} \\
	(\lambda_1, \lambda_2) \rightsquigarrow (-\lambda_1, -\lambda_2) \,, \qquad (\lambda_1', \lambda_2') \rightsquigarrow (- \lambda_1', - \lambda_2') \,, \qquad (\mu, \mu') \rightsquigarrow (\mu, \mu') \,. \label{eq:SecondReplacement}
\end{gather}
By $ a_1 \wedge a_2 \neq 0 $, $ a_1 \wedge a_2 $ can moreover be replaced by $1$ for the purposes of Lms.\,\ref{lem:CurveTraced}, \ref{lem:Avoidance}, \ref{lem:Winding}, since all conditions are invariant under a common rescaling of the coefficients $ c_j, \ (j=0,1,2) $. In particular, the replacement \eqref{eq:FirstReplacement} can be omitted when passing to the $-$ variant. In summary, we have
\begin{align}
	c_0 &= | \mu|^2 - \lambda_1 \lambda_2 \,, \label{eq:c0NN} \\
	c_1 &= \im (\lambda_1 + \lambda_2) + ( \overline{\mu} \mu' + \mu \overline{\mu}' ) - (\lambda_2 \lambda_1' + \lambda_1 \lambda_2') \,, \label{eq:c1NN} \\
	c_2 &= \im (\lambda_1' + \lambda_2') + (1 - \lambda_1' \lambda_2' + |\mu'|^2) \,. \label{eq:c2NN}
\end{align}
Eqs.\,\eqref{eq:Dij} are moreover evaluated as
\begin{align}
	d_{10} &= (\lambda_1 + \lambda_2) (|\mu|^2 - \lambda_1 \lambda_2) \,, \label{eq:d10NN} \\
	d_{20} &= (\lambda_1' + \lambda_2') (|\mu|^2 - \lambda_1 \lambda_2) \,, \label{eq:d20NN} \\
	d_{21} &= -(\lambda_1 + \lambda_2) (1 - \lambda_1' \lambda_2' + |\mu'|^2) + (\lambda_1' + \lambda_2') (\overline{\mu} \mu' + \mu \overline{\mu}' - \lambda_2 \lambda_1' - \lambda_1 \lambda_2') \,. \label{eq:d21NN}
\end{align}
The following preliminary result is now needed for the proof of Prop.\,\ref{prop:BNN}.
\begin{lemma}
	Irrespective of case, we have (\ref{eq:d10NN}-\ref{eq:d21NN}) and
	\begin{gather}
		c(P) = (|\mu|^2 - \lambda_1 \lambda_2) \tilde{c} (P) \,, \label{eq:cofPFactorization} \\
		\tilde{c} (P) = (\lambda_1 + \lambda_2)^2 + (\lambda_1 \lambda_2' - \lambda_2 \lambda_1')^2 + | (\lambda_1' + \lambda_2') \mu - (\lambda_1 + \lambda_2) \mu' |^2 \,. \label{eq:cTildeofPNN}
	\end{gather}
\end{lemma}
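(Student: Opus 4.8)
The statement is a pair of purely algebraic identities, so the plan is a careful but routine verification in two stages. First I would establish (\ref{eq:d10NN})--(\ref{eq:d21NN}). Starting from the coefficients (\ref{eq:c0NN})--(\ref{eq:c2NN}), one reads off real and imaginary parts: $c_0=|\mu|^2-\lambda_1\lambda_2$ is real, $\op{Im}c_1=\lambda_1+\lambda_2$ and $\op{Re}c_1=\overline\mu\mu'+\mu\overline{\mu}'-\lambda_2\lambda_1'-\lambda_1\lambda_2'$ (the latter indeed real since $\overline\mu\mu'+\mu\overline{\mu}'=2\op{Re}(\overline\mu\mu')$), and $\op{Im}c_2=\lambda_1'+\lambda_2'$, $\op{Re}c_2=1-\lambda_1'\lambda_2'+|\mu'|^2$. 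Using $d_{ij}=\op{Im}(c_i\overline{c_j})$ from (\ref{eq:Dij}) together with $\op{Im}(z\overline w)=\op{Re}(z)\op{Im}(w)-\op{Re}(w)\op{Im}(z)$, and the fact that $c_0$ is real, one gets $d_{10}=c_0\,\op{Im}c_1$, $d_{20}=c_0\,\op{Im}c_2$, and $d_{21}=\op{Re}(c_1)\op{Im}(c_2)-\op{Re}(c_2)\op{Im}(c_1)$; substituting the expressions above reproduces (\ref{eq:d10NN})--(\ref{eq:d21NN}) verbatim.

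For the factorization I would plug these into $c(P)=d_{20}^2-d_{21}d_{10}$, cf.\,(\ref{eq:cofP}). Both $d_{20}^2=c_0^2(\lambda_1'+\lambda_2')^2$ and $d_{21}d_{10}=c_0(\lambda_1+\lambda_2)\,d_{21}$ carry a factor $c_0=|\mu|^2-\lambda_1\lambda_2$, which pulls out, leaving $\tilde c(P)=(|\mu|^2-\lambda_1\lambda_2)(\lambda_1'+\lambda_2')^2-(\lambda_1+\lambda_2)\,d_{21}$. Expanding $d_{21}$ and regrouping, I would organize the result into three packets: the genuinely $\mu$-dependent terms, which collect into $(\lambda_1'+\lambda_2')^2|\mu|^2-(\lambda_1'+\lambda_2')(\lambda_1+\lambda_2)(\overline\mu\mu'+\mu\overline{\mu}')+(\lambda_1+\lambda_2)^2|\mu'|^2=|(\lambda_1'+\lambda_2')\mu-(\lambda_1+\lambda_2)\mu'|^2$; the lone term $(\lambda_1+\lambda_2)^2$; and the purely-$\lambda$ remainder $-\lambda_1\lambda_2(\lambda_1'+\lambda_2')^2+(\lambda_1+\lambda_2)(\lambda_1'+\lambda_2')(\lambda_2\lambda_1'+\lambda_1\lambda_2')-(\lambda_1+\lambda_2)^2\lambda_1'\lambda_2'$. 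Summing the three packets gives exactly (\ref{eq:cTildeofPNN}).

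The only step requiring a moment's care is the collapse of the last packet: writing $a=\lambda_1$, $b=\lambda_2$, $c=\lambda_1'$, $d=\lambda_2'$, one checks $-ab(c+d)^2+(a+b)(c+d)(bc+ad)-(a+b)^2cd=a^2d^2+b^2c^2-2abcd=(ad-bc)^2$, a short expansion. As for the qualifier \enquote{irrespective of case}: the identities (\ref{eq:c0NN})--(\ref{eq:d21NN}) are relations in the parameters and make no reference to the geometric alternatives (a)--(d) of Lm.\,\ref{lem:CurveTraced}; moreover the passage to the $-$ variant via (\ref{eq:SecondReplacement}) sends each of $d_{10},d_{20},d_{21}$ to its negative, hence leaves $c(P)=d_{20}^2-d_{21}d_{10}$ invariant, and every summand of $\tilde c(P)$ in (\ref{eq:cTildeofPNN}) is a square or an invariant product, so no separate argument for that variant is needed. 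There is no genuine obstacle here — the content is a bookkeeping verification — the only mild pitfall being to track the $\mu$/$\mu'$ cross-terms correctly while isolating the three packets.
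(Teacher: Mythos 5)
Your proposal is correct and follows essentially the same route as the paper: read off $d_{10},d_{20},d_{21}$ from the real and imaginary parts of $c_0,c_1,c_2$ (using that $c_0$ is real), pull the common factor $c_0=|\mu|^2-\lambda_1\lambda_2$ out of $c(P)=d_{20}^2-d_{21}d_{10}$, and regroup $\tilde c(P)$ into the three packets $(\lambda_1+\lambda_2)^2$, $(\lambda_1\lambda_2'-\lambda_2\lambda_1')^2$ and the completed square in $\mu,\mu'$. One cosmetic slip: your quoted identity $\op{Im}(z\overline w)=\op{Re}(z)\op{Im}(w)-\op{Re}(w)\op{Im}(z)$ has the overall sign reversed (it should be $\op{Im}(z)\op{Re}(w)-\op{Re}(z)\op{Im}(w)$), but the expressions you actually derive for $d_{10},d_{20},d_{21}$ are the correct ones, so nothing downstream is affected.
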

\begin{proof}
	The expressions (\ref{eq:d10NN}, \ref{eq:d20NN}) for $ d_{10}, \, d_{20} $, cf.\,\eqref{eq:Dij}, are evident in view of $ c_1 $ real; that for $ d_{21} = - ( \op{Re} c_2 ) (\op{Im} c_1) + (\op{Im} c_2) (\op{Re} c_1) $ follows as well. We now turn to \eqref{eq:cofPFactorization}. A factor $ | \mu |^2 - \lambda_1 \lambda_2 $ can be extracted from both $ d_{10}, \, d_{20} $, and so from $ c(P) = d_{20}^2 - d_{21} d_{10} $, cf.\,\eqref{eq:cofP}. That yields \eqref{eq:cofPFactorization} with
	\begin{equation}
		\begin{aligned}
			\tilde{c} (P) =& d_{20} (\lambda_1' + \lambda_2') - d_{21} (\lambda_1 + \lambda_2) \\
			=& (\lambda_1' + \lambda_2')^2 (|\mu|^2 - \lambda_1 \lambda_2) \\
			&- (\lambda_1 + \lambda_2) (\lambda_1' + \lambda_2') ( \overline{\mu} \mu' + \mu \overline{\mu}' - \lambda_2 \lambda_1' - \lambda_1 \lambda_2' ) \\
			&+ (\lambda_1 + \lambda_2)^2 (1 - \lambda_1' \lambda_2' + | \mu'|^2) \,.
		\end{aligned}
	\end{equation}
	The term $1$ in the last bracket contributes $ ( \lambda_1 + \lambda_2 )^2 $ to $ \tilde{c} (P) $; all other terms not containing $ \mu, \mu' $ contribute (likewise)
	\begin{equation}
		\begin{aligned}
			&- (\lambda_1'^2 + 2 \lambda_1' \lambda_2' + \lambda_2'^2) \lambda_1 \lambda_2 \\
			&+ (\lambda_1 \lambda_1' + \lambda_1 \lambda_2' + \lambda_2 \lambda_1' + \lambda_2 \lambda_2') (\lambda_2 \lambda_1' + \lambda_1 \lambda_2') \\
			&- (\lambda_1^2 + 2 \lambda_1 \lambda_2 + \lambda_2^2) \lambda_1' \lambda_2' \\
			=& \lambda_1^2 \lambda_2'^2 + \lambda_2^2 \lambda_1'^2 - 2 \lambda_1 \lambda_2 \lambda_1' \lambda_2' \\
			=& (\lambda_1 \lambda_2' - \lambda_2 \lambda_1')^2 \,.
		\end{aligned}
	\end{equation}
	The terms containing $ \mu $ or $ \mu' $ are
	\begin{equation}
		\begin{aligned}
			& (\lambda_1' + \lambda_2')^2 |\mu|^2 - (\lambda_1 + \lambda_2) (\lambda_1' + \lambda_2') (\overline{\mu} \mu' + \mu \overline{\mu}') + (\lambda_1 + \lambda_2)^2 | \mu'|^2 \\
			=& | (\lambda_1' + \lambda_2') \mu - (\lambda_1 + \lambda_2) \mu' |^2 \,.
		\end{aligned}
	\end{equation}
	That proves \eqref{eq:cTildeofPNN}.
\end{proof}
\begin{proof}[Proof of Prop.\,\ref{prop:BNN}]
	Quite generally, $ c_0 = 0 $ and $ d_{10} = 0 $ translate to
	\begin{gather}
		|\mu|^2 = \lambda_1 \lambda_2 \,, \label{eq:MuSquare} \\
		(\lambda_1 + \lambda_2) ( | \mu |^2 - \lambda_1 \lambda_2 ) = 0 \,, \label{eq:d10EqualZeroNN}
	\end{gather}
	respectively. We shall prove the claims in order (a-d).
	\begin{itemize}
		\item[(a)] I: The stated conditions on $ \mu, \lambda_1, \lambda_2 $ (i.e., $=0$), clearly imply $ c_0 = 0 $, $ c_1 = 0 $. Conversely, we have \eqref{eq:MuSquare} and $ \op{Im} c_1 = 0 $, i.e.\,$ \lambda_1 + \lambda_2 = 0 $. Thus, $ | \mu |^2 = - \lambda_1^2 $, implying $ \mu = \lambda_1 = \lambda_2 = 0 $, as claimed. II: We have to show that $ c_2 = 0 $ is impossible. Indeed, that would imply $ \op{Im} c_2 = 0 $, i.e.\,$ \lambda_1' + \lambda_2' = 0 $, and then $ \op{Re} c_2 = 1 + \lambda_1'^2 + | \mu' |^2 > 0 $, which is the sought contradiction. III: Nothing to prove.
		
		\item[(b)] I: Lm.\,\ref{lem:CurveTraced} and (\ref{eq:c0NN}, \ref{eq:c1NN}) imply \textbullet\ $ | \mu |^2 = \lambda_1 \lambda_2 $ and \textbullet\ $ \lambda_1 + \lambda_2 \neq 0 $ or $ \lambda_2 \lambda_1' + \lambda_1 \lambda_2' \neq \overline{\mu} \mu' + \mu \overline{\mu}' $. However, $ \lambda_1 + \lambda_2 = 0 $ is not actually possible, because it would imply $ \lambda_2 = - \lambda_1 $ and in turn $ | \mu |^2 = - (\lambda_1)^2 $, which is only true when $ \mu = \lambda_1 = \lambda_2 $ and the curve traced is a point, cf.\,(a) I. Dropping the condition $ \lambda_1 + \lambda_2 \neq 0 $ yields the stated result. II: Lm.\,\ref{lem:Avoidance} states that the origin is avoided iff $ d_{21} \neq 0 $. The latter in turn reads
		\begin{equation}
			d_{21} = -(\lambda_1 + \lambda_2) (1 - \lambda_1' \lambda_2' + |\mu'|^2) + (\lambda_1' + \lambda_2') (\overline{\mu} \mu' + \mu \overline{\mu}' - \lambda_2 \lambda_1' - \lambda_1 \lambda_2') \,,
		\end{equation}
		cf.\,\eqref{eq:d21NN}. The shifts $ A_0 \mapsto A_0 + \im \tau A_x $, $ (\tau \in \bbR) $ of Rem.\,\ref{rem:TranslationsKx}, corresponding to reparametrizations of $ k_x \in \bbR $ that leave $ H^\# $ invariant, read
		\begin{equation}
			a_j' \mapsto a_j' + \tau b_j \,, \quad (j=1,2)
		\end{equation}
		in the parametrization \eqref{eq:ExplicitUlA}. In family NN, they in turn induce the following ones
		\begin{equation}
			\begin{aligned}
				\mu' &\mapsto \mu' + \tau \mu \,, \\
				\lambda_1' &\mapsto \lambda_1' + \tau \lambda_1 \,, \\
				\lambda_2' &\mapsto \lambda_2' + \tau \lambda_2 \,.
			\end{aligned}
		\end{equation}
		As it should, $ d_{21} $ is left invariant by those shifts (proof by direct substitution, making use of the hypothesis $ |\mu|^2 = \lambda_1 \lambda_2 $ from point I). We can then exploit the further hypothesis $ (\lambda_1 + \lambda_2) \neq 0 $, cf.\,point I, and w.l.o.g.\,pick a shift $ \tau $ such that
		\begin{equation}
			\lambda_1' + \lambda_2' = 0 \ \leftrightarrow \ \lambda_2' = - \lambda_1' \,.
		\end{equation}
		Then,
		\begin{equation}
			d_{21} = -(\lambda_1 + \lambda_2) (1 + (\lambda_1')^2 + |\mu'|^2) \neq 0 \,,
		\end{equation}
		i.e., the origin is always avoided, as claimed. III: Nothing to prove.

		\item[(c)] I: $ c_0 \neq 0 $ is equivalent to $ |\mu|^2 \neq \lambda_1 \lambda_2 $, at which point $ d_{10} = 0 $ states $ \lambda_1 + \lambda_2 = 0 $ by \eqref{eq:d10NN}, implying $ | \mu |^2 \neq - \lambda_1^2 $ and thus $ \mu, \lambda_1 $ not both zero. II: The first subcase in Lm.\,\ref{lem:Avoidance} (c) now says $ \lambda_1' + \lambda_2' \neq 0 $. In the complementary one, i.e.\,$ \lambda_1' + \lambda_2' = 0 $, we claim that \eqref{eq:AvoidanceC} holds true always, which would complete this part of the proof. Indeed, $ c_2 \overline{c_0} $ and $ c_1 \overline{c_0} $ are now real, cf.\,Rem.\,below \eqref{eq:AvoidanceC}, with
		\begin{equation}
			\begin{gathered}
				c_0 = | \mu |^2 + \lambda_1^2 \,, \\
				c_1 \overline{c_0} = ( 2 \lambda_1 \lambda_1' + \overline{\mu} \mu' + \mu \overline{\mu}' ) (|\mu|^2 + \lambda_1^2) \,, \\
				c_2 \overline{c_0} = ( 1 + \lambda_1'^2 + | \mu'|^2 ) (| \mu|^2 + \lambda_1^2) \,,
			\end{gathered}
		\end{equation}
		so that \eqref{eq:AvoidanceC} reduces to
		\begin{equation}
			4 (1 + \lambda_1'^2 + | \mu'|^2) (|\mu|^2 + \lambda_1^2) > ( 2 \lambda_1 \lambda_1' + \overline{\mu} \mu' + \mu \overline{\mu}' )^2 \,.
		\end{equation}
		It holds true since the l.h.s.\,is bounded below by
		\begin{equation}
			4 ( \lambda_1'^2 + | \mu' |^2 ) (\lambda_1^2 + | \mu |^2)
		\end{equation}
		and the r.h.s.\,is bounded above by
		\begin{equation}
			4 ( |\lambda_1| |\lambda_1'| + |\mu| |\mu'| )^2
		\end{equation}
		and the comparison is completed by the Cauchy-Schwarz inequality. III: Nothing to prove.
		
		\item[(d)] I: The conditions of Lm.\,\ref{lem:CurveTraced} (d) translate to $ | \mu|^2 \neq \lambda_1 \lambda_2 $, $ \lambda_1 + \lambda_2 \neq 0 $, cf.\,(\ref{eq:c0NN}, \ref{eq:d10NN}). II: We have to rule out $ c(P) = 0 $. Indeed, in that case, $ \tilde{c} (P) = 0 $ by (\ref{eq:cofPFactorization}, \ref{eq:cTildeofPNN}), and in turn $ \lambda_1 + \lambda_2 = 0 $, which is however ruled out by part I. III: We have $ c(P) \gtrless 0 $ if $ | \mu |^2 \gtrless \lambda_1 \lambda_2 $ by \eqref{eq:cofP} and $ \tilde{c} (P) > 0 $, as well as $ - \op{sgn} d_{10} = - \op{sgn} (\lambda_1 + \lambda_2) \cdot \op{sgn} (|\mu|^2 - \lambda_1 \lambda_2) $ by \eqref{eq:d10NN}. Eq.\,\eqref{eq:NofPNN} now follows from \eqref{eq:WindingParabola}.
	\end{itemize}
\end{proof}

\bibliography{bib}

\begin{thebibliography}{10}
\providecommand{\url}[1]{\texttt{#1}}
\providecommand{\urlprefix}{URL }

\bibitem{QHESeminal}
Klitzing, K.~v., Dorda, G. and Pepper, M.
\newblock \enquote{{New Method for High-Accuracy Determination of the
  Fine-Structure Constant Based on Quantized Hall Resistance}}.
\newblock \emph{Phys. Rev. Lett.}, 45: 494--497, 1980.
\newblock \urlprefix\url{http://dx.doi.org/10.1103/PhysRevLett.45.494}.

\bibitem{TDV20}
Tauber, C., Delplace, P. and Venaille, A.
\newblock \enquote{{Anomalous bulk-edge correspondence in continuous media}}.
\newblock \emph{Phys. Rev. Res.}, 2: 013147, 2020.
\newblock \urlprefix\url{http://dx.doi.org/10.1103/PhysRevResearch.2.013147}.

\bibitem{DMV17}
Delplace, P., Marston, J.~B. and Venaille, A.
\newblock \enquote{{Topological origin of equatorial waves}}.
\newblock \emph{Science}, 358~(6366): 1075--1077, 2017.
\newblock \urlprefix\url{http://dx.doi.org/10.1126/science.aan8819}.

\bibitem{Tauber19}
Tauber, C., Delplace, P. and Venaille, A.
\newblock \enquote{{A bulk-interface correspondence for equatorial waves}}.
\newblock \emph{J. Fluid. Mech.}, 868: R2, 2019.
\newblock \urlprefix\url{http://dx.doi.org/10.1017/jfm.2019.233}.

\bibitem{ProdanSB}
Prodan, E. and Schulz-Baldes, H.
\newblock \emph{{Bulk and Boundary Invariants for Complex Topological
  Insulators: From K-Theory to Physics}}.
\newblock Springer Cham, CH, 2016.

\bibitem{Avila12}
Avila, J.~C., Schulz-Baldes, H. and Villegas-Blas, C.
\newblock \enquote{{Topological Invariants of Edge States for Periodic
  Two-Dimensional Models}}.
\newblock \emph{Math. Phys., Analysis and Geometry}, 16~(2): 137–170, 2012.
\newblock \urlprefix\url{http://dx.doi.org/10.1007/s11040-012-9123-9}.

\bibitem{BK17}
Bourne, C., Kellendonk, J. and Rennie, A.
\newblock \enquote{{The K-Theoretic Bulk-Edge Correspondence for Topological
  Insulators}}.
\newblock \emph{Ann. Henri Poincaré}, 18: 1833--1866, 2017.
\newblock \urlprefix\url{http://dx.doi.org/10.1007/s00023-016-0541-2}.

\bibitem{Shapiro20}
Shapiro, J.
\newblock \enquote{{The bulk-edge correspondence in three simple cases}}.
\newblock \emph{Rev. Math. Phys.}, 32~(03): 2030003, 2020.
\newblock \urlprefix\url{http://dx.doi.org/10.1142/S0129055X20300034}.

\bibitem{Hatsugai93PRB}
Hatsugai, Y.
\newblock \enquote{{Edge states in the integer quantum {H}all effect and the
  Riemann surface of the Bloch function}}.
\newblock \emph{Phys. Rev. B}, 48: 11851--11862, 1993.
\newblock \urlprefix\url{http://dx.doi.org/10.1103/PhysRevB.48.11851}.

\bibitem{Hatsugai93PRL}
Hatsugai, Y.
\newblock \enquote{{Chern number and edge states in the integer quantum Hall
  effect}}.
\newblock \emph{Phys. Rev. Lett.}, 71: 3697--3700, 1993.
\newblock \urlprefix\url{http://dx.doi.org/10.1103/PhysRevLett.71.3697}.

\bibitem{Halperin82}
Halperin, B.~I.
\newblock \enquote{{Quantized Hall conductance, current-carrying edge states,
  and the existence of extended states in a two-dimensional disordered
  potential}}.
\newblock \emph{Phys. Rev. B}, 25: 2185--2190, 1982.
\newblock \urlprefix\url{http://dx.doi.org/10.1103/PhysRevB.25.2185}.

\bibitem{Elbau02}
Elbau, P. and Graf, G.~M.
\newblock \enquote{{Equality of Bulk and Edge Hall Conductance Revisited}}.
\newblock \emph{Commun. Math. Phys.}, 229~(3): 415--432, 2002.
\newblock \urlprefix\url{http://dx.doi.org/10.1007/s00220-002-0698-z}.

\bibitem{SchulzBaldes99}
Schulz-Baldes, H., Kellendonk, J. and Richter, T.
\newblock \enquote{{Simultaneous quantization of edge and bulk Hall
  conductivity}}.
\newblock \emph{J. Phys. A: Math. and Gen.}, 33~(2): L27--L32, 1999.
\newblock \urlprefix\url{http://dx.doi.org/10.1088/0305-4470/33/2/102}.

\bibitem{GP13}
Graf, G.~M. and Porta, M.
\newblock \enquote{{Bulk-Edge Correspondence for Two-Dimensional Topological
  Insulators}}.
\newblock \emph{Commun. Math. Phys.}, 324: 851--895, 2013.
\newblock \urlprefix\url{http://dx.doi.org/10.1007/s00220-013-1819-6}.

\bibitem{GS18}
Graf, G.~M. and Shapiro, J.
\newblock \enquote{{The Bulk-Edge Correspondence for Disordered Chiral
  Chains}}.
\newblock \emph{Comm. Math. Phys.}, 363~(3): 829–846, 2018.
\newblock \urlprefix\url{http://dx.doi.org/10.1007/s00220-018-3247-0}.

\bibitem{Elgart05}
Elgart, A., Graf, G. and Schenker, J.
\newblock \enquote{{Equality of the Bulk and Edge Hall Conductances in a
  Mobility Gap}}.
\newblock \emph{Commun. Math. Phys.}, 259~(1): 185--221, 2005.
\newblock \urlprefix\url{http://dx.doi.org/10.1007/s00220-005-1369-7}.

\bibitem{BSS23}
Bols, A., Schenker, A. and Shapiro, J.
\newblock \enquote{{Fredholm Homotopies for Strongly-Disordered 2D
  Insulators}}.
\newblock \emph{Comm. Math. Phys.}, 397: 1163--1190, 2023.
\newblock \urlprefix\url{http://dx.doi.org/10.1007/s00220-022-04511-w}.

\bibitem{RSIII}
Reed, M. and Simon, B.
\newblock \emph{{Methods of Modern Mathematical Physics, Vol. III: Scattering
  Theory}}.
\newblock Academic Press, Cambridge, MA, 1979.

\bibitem{GJT21}
Graf, G.~M., Jud, H. and Tauber, C.
\newblock \enquote{{Topology in Shallow-Water Waves: A Violation of Bulk-Edge
  Correspondence}}.
\newblock \emph{Commun. in Math. Phys.}, 383: 731--761, 2021.
\newblock \urlprefix\url{http://dx.doi.org/10.1007/s00220-021-03982-7}.

\bibitem{Fefferman16}
Fefferman, C.~L., Lee-Thorp, J.~P. and Weinstein, M.~I.
\newblock \enquote{{Edge states in honeycomb structures}}.
\newblock \emph{Ann. PDE}, 2~(12), 2016.
\newblock \urlprefix\url{http://dx.doi.org/10.1007/s40818-016-0015-3}.

\bibitem{Bal19}
Bal, G.
\newblock \enquote{{Continuous bulk and interface description of topological
  insulators}}.
\newblock \emph{J. Math.l Phys.}, 60~(8), 2019.
\newblock \urlprefix\url{http://dx.doi.org/10.1063/1.5086312}.

\bibitem{MS12}
Medhi, A. and Shenoy, V.~B.
\newblock \enquote{{Continuum theory of edge states of topological insulators:
  variational principle and boundary conditions}}.
\newblock \emph{J. Phys: Cond. Mat.}, 24~(35): 355001, 2012.
\newblock \urlprefix\url{http://dx.doi.org/10.1088/0953-8984/24/35/355001}.

\bibitem{Drouot19}
Drouot, A.
\newblock \enquote{{The bulk-edge correspondence for continuous honeycomb
  lattices}}.
\newblock \emph{Comm. PDEs}, 44~(12): 1406--1430, 2019.
\newblock \urlprefix\url{http://dx.doi.org/10.1080/03605302.2019.1643362}.

\bibitem{BalInterface}
Bal, G.
\newblock \enquote{{Topological invariants for interface modes}}.
\newblock \emph{Commun. Partial. Differ. Equ.}, 47~(8): 1636--1679, 2022.
\newblock \urlprefix\url{http://dx.doi.org/10.1080/03605302.2022.2070852}.

\bibitem{RT24}
Rossi, S. and Tarantola, A.
\newblock \enquote{{Topology of 2D Dirac operators with variable mass and an
  application to shallow-water waves}}.
\newblock \emph{J. Phys. A: Math. and Theor.}, 57~(6): 065201, 2024.
\newblock \urlprefix\url{http://dx.doi.org/10.1088/1751-8121/ad1d8e}.

\bibitem{TDV19}
Tauber, C., Delplace, P. and Venaille, A.
\newblock \enquote{{A bulk-interface correspondence for equatorial waves}}.
\newblock \emph{J. Fluid Mech.}, 868, 2019.
\newblock \urlprefix\url{http://dx.doi.org/10.1017/jfm.2019.233}.

\bibitem{GeophysicalFluids}
Cushman-Roisin, B. and Beckers, J.-M.
\newblock \emph{{Introduction to Geophysical Fluid Dynamics}}.
\newblock Academic Press, 2011.
\newblock
  \urlprefix\url{http://dx.doi.org/https://doi.org/10.1016/B978-0-12-088759-0.00032-8}.

\bibitem{Froehlich95}
Fr{\''o}hlich, J., Studer, U.~M. and Thiran, E.
\newblock \enquote{{Quantum Theory of Large Systems of Non-Relativistic
  Matter}}, 1995.
\newblock \urlprefix\url{https://arxiv.org/abs/cond-mat/9508062}.

\bibitem{DMV2017}
Delplace, P., Marston, J.~B. and Venaille, A.
\newblock \enquote{{Topological origin of equatorial waves}}.
\newblock \emph{Science}, 358~(6366): 1075--1077, 2017.
\newblock \urlprefix\url{http://dx.doi.org/10.1126/science.aan8819}.

\bibitem{TenfoldWay}
Altland, A. and Zirnbauer, M.~R.
\newblock \enquote{{Nonstandard symmetry classes in mesoscopic
  normal-superconducting hybrid structures}}.
\newblock \emph{Phys. Rev. B}, 55: 1142--1161, 1997.
\newblock \urlprefix\url{http://dx.doi.org/10.1103/PhysRevB.55.1142}.

\bibitem{KitaevTable}
Kitaev, A.
\newblock \enquote{{Periodic table for topological insulators and
  superconductors}}.
\newblock \emph{AIP Conf. Proc.}, 1134~(1): 22--30, 2009.
\newblock \urlprefix\url{http://dx.doi.org/https://doi.org/10.1063/1.3149495}.

\bibitem{AvronOddVis}
Avron, J.
\newblock \enquote{{Odd Viscosity}}.
\newblock \emph{J. Stat. Phys.}, 92: 543--557, 1998.
\newblock \urlprefix\url{http://dx.doi.org/10.1023/A:1023084404080}.

\bibitem{TT23}
Tauber, C. and Thiang, G.
\newblock \enquote{{Topology in Shallow-Water Waves: A spectral Flow
  Perspective}}.
\newblock \emph{Ann. Henri Poincaré}, 27: 107--132, 2023.
\newblock \urlprefix\url{http://dx.doi.org/10.1007/s00023-022-01209-6}.

\bibitem{Perez24}
Perez, N., Leclerc, A., Laibe, G. and Delplace, P.
\newblock \enquote{Topology of shallow-water waves on the rotating sphere},
  2024.
\newblock \urlprefix\url{https://arxiv.org/abs/2404.07655}.

\bibitem{BalYu24}
Bal, G. and Yu, J.
\newblock \enquote{{Topological Equatorial Waves and Violation (or not) of the
  Bulk Edge Correspondence}}, 2024.
\newblock \urlprefix\url{https://arxiv.org/abs/2404.13485}.

\bibitem{JT24}
Jud, H. and Tauber, C.
\newblock \enquote{Classifying bulk-edge anomalies in the Dirac Hamiltonian},
  2024.
\newblock \urlprefix\url{https://arxiv.org/abs/2403.04465}.

\bibitem{Cornean23}
Cornean, H.~D., Moscolari, M. and Sørensen, K.~S.
\newblock \enquote{{Bulk–edge correspondence for unbounded Dirac–Landau
  operators}}.
\newblock \emph{J. Math. Phys.}, 64~(2), 2023.
\newblock \urlprefix\url{http://dx.doi.org/10.1063/5.0119022}.

\bibitem{RSI}
Reed, M. and Simon, B.
\newblock \emph{{Methods of Modern Mathematical Physics, Vol. I: Functional
  Analysis}}.
\newblock Academic Press, Cambridge, MA, 1972.

\bibitem{ZirnbauerPHS}
Zirnbauer, M.~R.
\newblock \enquote{{Particle-hole symmetries in condensed matter}}.
\newblock \emph{J. Math. Phys.}, 62~(2), 2021.
\newblock \urlprefix\url{http://dx.doi.org/10.1063/5.0035358}.

\bibitem{RSII}
Reed, M. and Simon, B.
\newblock \emph{{Methods of Modern Mathematical Physics, Vol. II: Fourier
  Analysis, Self-Adjointness}}.
\newblock Academic Press, San Diego, CA, 1975.

\bibitem{KS99}
Kostrykin, V. and Schrader, R.
\newblock \enquote{{Kirchhoff’s rule for quantum wires}}.
\newblock \emph{J. Phys. A: Math. and Gen.}, 32~(4): 595–630, 1999.
\newblock \urlprefix\url{http://dx.doi.org/10.1088/0305-4470/32/4/006}.

\end{thebibliography}

\end{document}